\documentclass[reqno,12pt]{amsart}

\setlength{\topmargin}{0.00in} 
\setlength{\parskip}{0.04in}
\setlength{\lineskip}{0.02in}

% original: \usepackage[top=0.9in,left=0.85in,right=0.85in,bottom=0.9in]{geometry}
\usepackage[top=0.9in,left=1in,right=1in,bottom=0.9in]{geometry}
\usepackage{subcaption}
\captionsetup[subfigure]{labelfont=rm} %  set the subscription to small letters
\usepackage{amssymb}
\usepackage[dvipsnames]{xcolor}
\usepackage[colorlinks=true,linkcolor=Red,citecolor=Green]{hyperref}

\usepackage{tikz}
% \usepackage{environ} % allow definition of environment ``equations''
%     \usetikzlibrary{decorations.pathreplacing} %
\usepackage{autonum} % only number equations that are cited
\usepackage{dsfont} % such that \1 is available
\usepackage{extarrows} % for extralong equal 

% \usepackage{dblfloatfix} % ?
% \usepackage{floatrow} % ?
% \usepackage{tikz-cd} % ?
% \usepackage{slashed} % ?
% \usepackage{gensymb} % ?
% \usepackage{amsfonts} % ?
% \usepackage{calrsfs} % ? it makes \mathcal not right.
% \usepackage{fancybox} % ?

% from NEXT_V2
% \usepackage[usenames,dvipsnames]{xcolor}
% \usepackage{tikz}
% \usepackage{dblfloatfix} 
% \usepackage{amsmath,amssymb,amsthm,graphicx,url}
% \usepackage{mathrsfs}
% \usepackage[usenames,dvipsnames]{color}
% \usepackage[colorlinks=true,linkcolor=Red,citecolor=Green]{hyperref}
% \usepackage{floatrow}
% \usepackage{tikz} 
% \usepackage{tikz-cd} % usetikzlibrary need it
% \usetikzlibrary{decorations.pathreplacing}
% \usepackage{slashed}
% \usepackage{graphicx}
% \usepackage{gensymb}
% \usepackage{calrsfs}
\usepackage{environ} % for newenvironment
\usepackage{floatrow} % for floatbox
\usepackage{bm} % for \bm

% \DeclareMathAlphabet{\pazocal}{OMS}{zplm}{m}{n}

\newcommand{\R}{\mathbb{R}}

\newcommand{\N}{\mathbb{N}}

\newcommand{\p}{{\partial}}

\newcommand{\DD}{{\mathcal{D}}}

\newcommand{\Tr}{{\operatorname{Tr}}}

\newcommand{\te}{\theta}

\newcommand{\FF}{\mathcal{F}}

\newcommand{\supp}{\mathrm{supp}}

\newcommand{\tOmega}{\tilde{V}}

\newcommand{\Z}{\mathbb{Z}}
\newcommand{\Dd}{\mathbb{D}}

\newcommand{\CC}{\mathcal{C}}
\newcommand{\Ss}{\mathbb{S}}

\newcommand{\VV}{{\mathcal{V}}}

\newcommand{\CCC}{\mathcal{C}}

\newcommand{\II}{\mathcal{X}}

\newcommand{\UU}{{\mathcal{U}}}
\newcommand{\KK}{{\mathcal{K}}}

\newcommand{\Hh}{{\mathbb{H}}}

\newcommand{\ove}[1]{{\overline{#1}}}
\newcommand{\systeme}[1]{\left\{ \begin{matrix} #1 \end{matrix} \right.}

\newcommand{\Ind}{{\operatorname{Ind}}}

\newcommand{\C}{\mathbb{C}}

\renewcommand{\Im}{\operatorname{Im}}

\newcommand{\Aa}{\mathbb{A}}

\newcommand{\1}{\mathds{1}}

\newcommand{\de}{ \ \mathrel{\stackrel{\makebox[0pt]{\mbox{\normalfont\tiny def}}}{=}} \ }

% From NEXT_V2
\newcommand{\Int}{\operatorname{int}}
\newcommand{\AAA}{{\mathcal{A}}}
\newcommand{\tgamma}{{\tilde{\gamma}}}
\newcommand{\tGamma}{\widetilde{\Gamma}}

\numberwithin{equation}{section}

% Command added by Xiaowen (so it's easier to copy to other files)

% % resolve the problem of two different kinds of textbf{x}
\usepackage[alphabetic,nobysame]{amsrefs}

\newcommand{\PUV}{{\Psi_{U,V}}}
\newcommand{\KUV}{{K_{U,V}}}
\newcommand{\SUV}{{\sigma_b^{U,V}}}
\newcommand{\oI}{{\operatorname{I}}}
\newcommand{\oII}{{\operatorname{II}}}

\newcommand{\bfx}{{\bm{x}}}
\newcommand{\bfy}{{\bm{y}}}
\newcommand{\bfz}{{\bm{z}}}
\newcommand{\bfw}{{\bm{w}}}
\newcommand{\ME}{{\mathcal E}}
\newcommand{\MG}{{\mathcal G}}

\title{The bulk-edge correspondence for curved interfaces}
\author{Alexis Drouot}
\address[Alexis Drouot]{University of Washington, Seattle, USA.} 
\email{adrouot@uw.edu}
\author{Xiaowen Zhu}
\address[Xiaowen Zhu]{University of Washington, Seattle, USA.} 
\email{xiaowenz@uw.edu}

\NewEnviron{equations}{%
\begin{equation}\begin{gathered}
  \BODY
\end{gathered}\end{equation}
}
\newtheorem{thm}{Theorem}
\newtheorem{assumption}{Assumption}

\newtheorem{definition}{Definition}
\newtheorem{lemma}{Lemma}[section]
\newtheorem{cor}[lemma]{Corollary}

\newtheorem{prop}{Proposition}
\newtheorem{theorem}[thm]{Theorem}
\newtheorem{claim}{Claim}
\theoremstyle{definition}

% \pagenumbering{gobble} % originally on

\begin{document} 

\begin{abstract} The bulk-edge correspondence is a condensed matter theorem that relates the conductance of a Hall insulator in a half-plane to that of its (straight) boundary. In this work, we extend this result to domains with curved boundaries.
    Under mild geometric assumptions, we prove that the edge conductance of a topological insulator sample is an integer multiple of its Hall conductance. This integer counts the algebraic number of times that the interface (suitably oriented) enters the measurement set. This result provides a rigorous proof of a well-known experimental observation: arbitrarily truncated topological insulators support edge currents, regardless of the shape of their boundary. 
    %This result for instance implies that edge currents confined to distinct connected components of the interface do not interact at large scale.
    % This explains, for instance, the emergence of topological currents along corners (for topological insulators truncated in quadrants). 
\end{abstract}
\maketitle

%%%%%%%%%%%%%%%%%%%%%%%%%%%%%%%%%%%%%%%%%%%%%%%%%%%%%%%%%%%%%%%%%%%%%%%%%%

%%%%%%%%%%%%%%%%%%%%%%%%%%%%%%%%%%%%%%%%%%%%%%%%%%%%%%%%%%%%%%%%

\section{Introduction and Main results}
\subsection{Introduction} 

The study of topological insulators is a central topic in condensed matter physics. These materials are insulating phases of matter, described by a Hamiltonian with a spectral gap, to which one can associate a quantized topological invariant: the Hall conductance. When two insulators with distinct topological invariants are glued together, protected gapless currents emerge along the interface: the material becomes a conductor. For straight interfaces, the interface conductance is also quantized and equals the difference of the bulk topological invariants. This fundamental result is called the bulk-edge correspondence.

The quantization of the bulk conductance (i.e. the fact that it takes values in a discrete set) was first observed in the quantum Hall effect \cites{L81,TKNN, H82}. The characterization of Hall conductances as a Chern number marked the birth of topological phases of matter. Haldane \cite{H88} demonstrated on a famous model of magnetic graphene that this phenomenon is not restricted to quantum Hall systems. Hatsugai \cite{Hatsugai} then showed that topological materials support gapless states along their boundary; this is known today as the bulk-edge correspondence. Since then, the bulk-edge correspondence has been extended to various situations: see \cites{KM05a, KM05b,FKM07} for $\Z_2$-topological insulators, \cites{GT18, ST19} for Floquet topological systems, and \cites{HR08, RH08, YGS15,DMV17,PDV19,F19} for physical fields beyond quantum science, such as photonics, accounstics and fluid mechanics. By now mathematical proofs of the bulk-edge correspondence have spanned a wide variety of situations: see for instance \cites{Hatsugai, KS02, EG02, KS04,  L15, Ku17, Br18, LT22, L23} for discrete models, \cites{EGS05, LH11, GS18, ST19} for disordered systems, \cites{GP13, ASV13, FSSWY20} for $\Z_2$-topological insulators, and \cites{B19, D21b, D21, SW22a} for continuous Hamiltonians. 

Various experimental results have suggested that the bulk-edge correspondence is an extremely stable principle: it does not depend on the fine details of the interface \cite{WMTHXHW17, NHCR18, FSSWY20, TJCIBBP21}. However, with the exception of some $K$-theoretic approaches \cites{Ku17, LT22, L23}, most proofs of the bulk-edge correspondence have focused on straight interfaces. In this work, we provide a spectral proof of the bulk-edge correspondence for curved interfaces. We show equality between the edge conductance and an integer multiple of the Hall conductance; this goes beyond the aforementioned $K$-theoretic results where the bulk index takes the form of a geometric Hall conductance. We provide a physical interpretation of the emerging integer as an intersection number between the boundary and the measurement set. To the best of our knowledge, this is the first time that such a quantity emerges in the study of topological insulators.

\subsection{Main result}\label{sec-1.2}
We briefly review standard facts from condensed matter physics. Electronic propagation in a quantum material follows the Schr\"odinger equation with a selfadjoint operator $H$ on $\ell^2(\Z^2,\C^m)$. The operator $H$ describes the hopping of the electron between atomic sites. Its spectrum $\Sigma(H)$ characterizes the electronic nature of the material: $H$ is a conductor at energy $\lambda$ if and only if $\lambda \in \Sigma(H)$; and an insulator otherwise. In this paper, we will work with Hamiltonians that model limited hopping:

\begin{definition}[ESR]\label{def-1} We say that a selfadjoint operator $H$ on $\ell^2(\Z^2,\C^m)$ is \textit{exponentially-short-range (ESR)} if its kernel satisfies, for some $\nu >0$,
\begin{equation}\label{eq-9z}
    |H(\bfx, \bfy)|\leq \nu^{-1}e^{-2\nu d_1(\bfx,\bfy)}, \qquad \forall \bfx, \bfy\in \Z^2.
\end{equation}
\end{definition}

In \eqref{eq-9z}, $d_1(\bfx, \bfy)$ denotes the $\ell^1$-distance between $\bfx, \bfy \in \Z^2$.
When $H$ is an insulator at a certain energy, we can compute an invariant that characterizes the topological phase described by $H$: the Hall conductance. It is the intrinsic conductance of the material originating from the quantum Hall effect, see \cite{TKNN}.

\begin{definition}[Hall conductance] \label{def-1a} Let $H$ be an ESR Hamiltonian with a spectral gap $\MG$, $\lambda \in \MG$ and $P$ be the spectral projection below energy $\lambda$: $P := \1_{(-\infty, \lambda)}(H)$. The \textit{Hall conductance} of $H$ in the gap $\MG$ is:
\begin{equation}\label{eq-1t}
    \sigma_b(H) := -i \Tr \left(P \big[[P,\1_{\{x_2>0\}}],[P,\1_{\{x_1>0\}}]\big]\right).
\end{equation}
\end{definition}

The gap condition $\MG\subset\Sigma(H)^c$ ensures that \eqref{eq-1t} is well-defined; see e.g. \cite{EGS05} (or Proposition \ref{prop-2a} below). This work aims to prove the bulk-edge correspondence: for edge systems interpolating between two insulators, the conductance of the edge is equal to the difference between the two Hall conductances. We turn to the definition of edge systems:

\begin{definition}[Edge operator]\label{a2} Let $H_\pm$ be two ESR Hamiltonians and $U \subset \R^2$. An edge Hamiltonian $H_e$ associated to $H_+$ and $H_-$ in $U$ and $U^c$ is a selfadjoint operator on $\ell^2(\Z^2,\C^m)$ that satisfies the kernel estimate:
\begin{equation}\label{eq-1a}
\forall \bfx, \bfy \in \Z^2, \qquad
    \big| E(\bfx,\bfy) \big| \leq \nu^{-1} e^{-2\nu d_1(\bfx,\partial \Omega)}, \qquad E :=  H_e - \1_U H_+ \1_U - \1_{U^c} H_- \1_{U^c}.
\end{equation}
\end{definition}

The condition \eqref{eq-1a} means that $H_e$ describes two insulators glued along the edge $\p U$. To measure the conductance of $H_e$ along $\p U$, we now discuss sets \textit{transverse} to $U$.

\begin{definition}[Transversality] \label{def-1b}
    We say that two sets $U,V \subset \R^2$ are transverse if
\begin{equation}\label{eq-0a}
    \liminf_{|\bfx |\to +\infty} \frac{\ln \PUV(\bfx)}{\ln |\bfx|} >0, \qquad \PUV(\bfx) \de 1 + d_1(\bfx,\p U) + d_1(\bfx,\p V).
\end{equation}
\end{definition}

In \eqref{eq-0a}, $d_1(\bfx,\p U)$ denotes the $\ell^1$-distance between $\bfx \in \R^2$ and the set $\p U$. Transversality is a geometric condition on the relative position of the boundaries $\p U$ and $ \p V$. It demands that these two sets get away from each other at a relatively mild rate, typically $\PUV(\bfx) \gtrsim |\bfx|^\alpha$ for some $\alpha>0$. Under the transversality condition \eqref{eq-0a}, we can define the conductance of $H_e$ along $\p U$ into the measurement set $V$:

\begin{definition}[Edge conductance] \label{def-1c} Let $H_e$ be an edge operator associated to two Hamiltonians $H_+, H_-$ with a joint spectral gap $\MG$, in the sets $U, U^c$. Assume that  $U,V \subset \R^2$ are transverse. The \textit{edge conductance} of $H_e$ into $V$ for energies in the bulk spectral gap $\MG$ is: 
    \begin{equation}
        \label{eq-1u}
    \sigma_e^{U,V}(H_e)  =  i \Tr\big(\rho'(H_e)[H_e, \1_V]\big),
    \end{equation}
where $\rho\in C^\infty(\R;[0,1])$ is a function such that:
\begin{equation}
    \label{eq-1v}
\rho(x) = \begin{cases}
    1, & x\geq \sup \MG,\\
    0, & x\leq \inf \MG.
\end{cases}
\end{equation}
\end{definition}

In \eqref{eq-1u}, $[H_e, \1_V]$ measures the number of particles moving into the set $V$ per unit time, while $\rho'(H_e)$ is an energy density in $\MG$. As a result,  $\sigma_e(H_e)$ captures the expected charge moving along $\p U$ into $V$ per unit time and per unit energy: it is the conductance of $H_e$ along $\p U$. Transversality of $(U,V)$ guarantees that $\sigma_e(H_e)$ is well defined; see Proposition \ref{prop-2b} below. We also mention that $\sigma_e(H_e)$ does not depend on $\rho$ satisfying \eqref{eq-1v} -- this follows, for instance, from Theorem \ref{thm-main} below. Definitions \ref{def-1} - \ref{def-1c} serve as the basis for our setup:

\begin{assumption}\label{ass-all} In this work:
\begin{enumerate}
    \item[($\mathcal A$1)] $H_\pm$ are two selfadjoint ESR operators on $\ell^2(\Z^2,\C^m)$ with a joint spectral gap $\MG$; $P_\pm = \1_{(-\infty, \lambda)}(H_\pm)$ are the spectral projectors below energy $\lambda\in \MG$. 
    \item[($\mathcal A$2)] $U, V$  are transverse subsets of $\R^2$.
    \item[($\mathcal A$3)] $H_e$ is an edge Hamiltonian associated to $H_+, H_-$ in $U, U^c$.
\end{enumerate}
\end{assumption}

We are now ready to state our main result:

\begin{thm}[Bulk-edge correspondence]\label{thm-main} Under Assumption \ref{ass-all},
    \begin{equation}
        \label{eq-1w}
    \sigma_e^{U,V}(H_e) = \II_{U,V} \cdot \big(\sigma_b(P_+) - \sigma_b(P_-)\big),
    \end{equation}
    where $\II_{U,V}$ is an integer that depends exclusively on the sets $U, V$. 
\end{thm}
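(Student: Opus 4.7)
The plan is to decompose $\sigma_e^{U,V}(H_e)$ into local contributions concentrated near the crossings of $\partial U$ and $\partial V$, and to match each such contribution with the classical straight-interface bulk-edge correspondence.

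I would first establish three structural properties of $\sigma_e^{U,V}(H_e)$. (i) \emph{Independence}: the trace in \eqref{eq-1u} is unchanged under the choice of $\rho$ satisfying \eqref{eq-1v} and under the choice of edge operator $H_e$ interpolating between $H_\pm$. Both invariances reduce to cyclicity of the trace, combined with the ESR decay of Definition \ref{def-1} and the transversality decay \eqref{eq-0a}, which render the relevant commutators trace class. (ii) \emph{Additivity}: if $V = V_1 \sqcup V_2$ with each $(U, V_i)$ transverse, then $\sigma_e^{U,V}(H_e) = \sigma_e^{U, V_1}(H_e) + \sigma_e^{U, V_2}(H_e)$, a direct consequence of $\1_V = \1_{V_1} + \1_{V_2}$ and linearity of the trace. (iii) \emph{Vanishing}: if $\partial V \cap \partial U = \emptyset$ with $(U, V)$ transverse, then $\sigma_e^{U, V}(H_e) = 0$. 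Here \eqref{eq-0a} does the heavy lifting, since the trace-class norm of $\rho'(H_e)[H_e, \1_V]$ is controlled by sums of $\PUV(\bfx)^{-N}$ which vanish as $\partial V$ is pushed uniformly away from $\partial U$.

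Next, the transversality condition forces the crossings of $\partial U$ with $\partial V$ to be discrete and confined to a bounded region of $\R^2$, hence finite; call them $p_1, \ldots, p_n$. Partition $V = V_\infty \sqcup \bigsqcup_i V_i$ where each $V_i$ is a small neighborhood of a single $p_i$ and $V_\infty$ is disjoint from $\partial U$. Properties (ii)--(iii) yield $\sigma_e^{U, V}(H_e) = \sum_i \sigma_e^{U, V_i}(H_e)$. For each $i$, I would continuously deform $(U, V_i)$ through a path of transverse pairs to a model pair $(U', V') = (\{x_2 > 0\}, \{x_1 > 0\})$ or its reflection, extending $H_e$ along the homotopy so as to preserve ($\mathcal{A}$1). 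Property (i) then delivers $\sigma_e^{U, V_i}(H_e) = \epsilon_i \cdot \sigma_e^{U', V'}(H'_e)$, where $\epsilon_i \in \{\pm 1\}$ records the orientation of the crossing (whether $\partial U$, suitably oriented, enters or exits $V$ at $p_i$). On the model pair, the classical half-plane bulk-edge correspondence (e.g.\ \cite{EGS05}) computes $\sigma_e^{U', V'}(H'_e) = \sigma_b(P_+) - \sigma_b(P_-)$. Summing produces \eqref{eq-1w} with $\II_{U,V} := \sum_i \epsilon_i$, the algebraic intersection number of $\partial U$ and $\partial V$.

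The principal obstacle is the localization underlying (iii) and the subsequent trace decomposition: since $\rho'(H_e)$ is not a local operator, showing that the trace is essentially concentrated near the crossings $p_i$ requires exponential resolvent bounds for $H_e$ away from $\partial U$, transferred to $\rho'(H_e)$ via the Helffer--Sj\"ostrand functional calculus, and combined with a partition of unity tailored to the transverse geometry quantified by \eqref{eq-0a}. The deformation step is a companion difficulty: one must build a continuous family of edge operators satisfying Definition \ref{a2} along the homotopy of transverse pairs, with the joint spectral gap in ($\mathcal{A}$1) maintained uniformly; I expect this to be the most delicate technical point, since a naive interpolation need not preserve the ESR bound \eqref{eq-1a} with uniform constants. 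Once these analytic tools are assembled, the final counting of signs is a routine topological computation.
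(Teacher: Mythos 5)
Your outline founders at the localization step, and this is a conceptual rather than technical problem. First, the partition $V = V_\infty \sqcup \bigsqcup_i V_i$ with $V_\infty$ disjoint from $\p U$ is impossible whenever $\II_{U,V} \neq 0$: in that case $\p U$ eventually lies inside $V$ (this asymptotic containment is exactly what the limits in Definition \ref{def:5} detect), so no finite collection of bounded neighborhoods of crossing points can remove $\p U$ from $V_\infty$. Second, and more fundamentally, the conductance into any \emph{bounded} measurement set vanishes: for bounded $V_i$ the operator $P\1_{V_i}P$ is trace class, so $\sigma_b^{U,V_i}(P) = -i\Tr\big([P\1_UP,P\1_{V_i}P]\big) = 0$ by cyclicity, and hence (via the reduction of Theorem \ref{thm-4a}) $\sigma_e^{U,V_i}(H_e)=0$; by your own additivity (ii) the decomposition then assigns zero to every local piece and the full conductance to $V_\infty$. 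The edge conductance is an asymptotic/topological quantity, insensitive to compact perturbations (Proposition \ref{prop-2c}), and cannot be computed as a sum of contributions from small neighborhoods of boundary crossings. Relatedly, transversality \eqref{eq-0a} confines $\p U \cap \p V$ to a bounded region but does not make it finite or even discrete (and in Assumption \ref{ass-all} the sets $U,V$ carry no boundary regularity at all — the paper must first replace them by ``good'' sets via Lemma \ref{lem-9}); this is why the paper defines $\II_{U,V}$ through the asymptotic values $\lim_{t\to\pm\infty}\1_V\circ\gamma(t)$ rather than by counting signed crossings.

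The deformation step is the second genuine gap. Your property (i) — independence of $\rho$ and of the interpolating $H_e$ for a \emph{fixed} pair $(U,V)$ — is true but is itself essentially the content of the nontrivial Theorem \ref{thm-4a} (the Elgart--Graf--Schenker argument of \S\ref{sec-4}), not a cyclicity one-liner; and in any case it gives no invariance when $U$ and $V$ themselves are deformed. Constructing a homotopy of edge Hamiltonians with uniform gaps would not rescue this: the statement that $\sigma_e^{U_t,V_t}(H_{e,t})$ is constant along a family of transverse pairs is precisely what must be proven, and it is the analytic heart of the problem. The paper obtains it indirectly: it first converts the edge conductance into the geometric bulk conductances $\sigma_b^{U,V}(P_\pm)$ (Theorem \ref{thm-4a}), proves their robustness under compact perturbations by cyclicity (Proposition \ref{prop-2c}), and then compares with the half-plane model through a specially constructed, \emph{uniformly} transverse family $(U_n,V_n)$ coinciding with $(\Hh_2,\Hh_1)$ on $\Dd_n(0)$, where uniform transversality supplies the domination needed for a dominated-convergence argument (Proposition \ref{lem-10}, Lemma \ref{lem-11}); general sets are then handled by additivity over connected components (\S\ref{sec-7}). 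Without a substitute for this robustness-plus-uniform-convergence mechanism, both your localization and your reduction to the straight-interface case remain unsupported.
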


In rough terms, the integer $\II_{U, V}$ emerging in the formula \eqref{eq-1w}, which we call the \textit{intersection number,} counts algebraically how many times the boundary of $U$ (suitably oriented) enters $V$. See Figures \ref{fig: 1} and \ref{fig: 2} for a brief description on how to compute $\II_{U,V}$ and \S\ref{sec-5.2}, \S\ref{sec-7.2} for detailed definitions.

\begin{figure}[b]
     \centering
     \begin{subfigure}[ht]{0.3\textwidth}
         \centering
         \includegraphics[width=\textwidth]{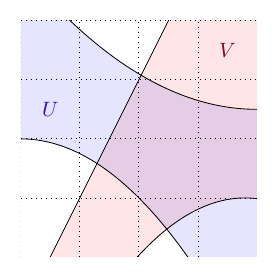}
         \caption{}
         \label{fig: 1(a)}
     \end{subfigure}
     \hfill
     \begin{subfigure}[ht]{0.3\textwidth}
         \centering
         \includegraphics[width=\textwidth]{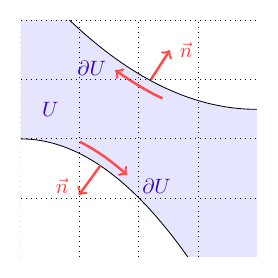}
         \caption{}
         \label{fig: 1(b)}
     \end{subfigure}
     \hfill
     \begin{subfigure}[ht]{0.3\textwidth}
         \centering
         \includegraphics[width=\textwidth]{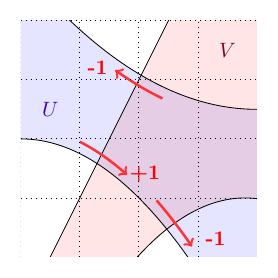}
         \caption{}
         \label{fig: 1(c)}
     \end{subfigure}
        \caption{We define the intersection number $\mathcal X_{U, V}$ between transverse simple sets $U, V$ in two steps. We first orient $\p U$ such that $U$ is to its left according to the outward-pointing normal, see (a) and (b); then we count how many times the oriented $\p U$ enters $V$, see (c). Here $\mathcal X_{U, V} = +1 - 1 - 1 = -1$.}
        \label{fig: 1}
\end{figure}

Theorem \ref{thm-main} is a version of the bulk-edge correspondence that goes beyond half-planes: it applies to sets $(U,V)$ with arbitrary geometry. %It applies to curved interfaces $\p U$ as well as arbitrary measurement sets $V$. 
In particular, leveraging flexibility on $V$ improves our knowledge on edge currents. When $\p U$ is disconnected, our result shows that each connected component of $\p U$ supports a current with conductance  $|\sigma_b(P_+) - \sigma_b(P_-)|$, propagating according to the orientation of $\p U$; see Figures \eqref{fig: 2(a)} and \eqref{fig: 2(b)}. Another application is when $U$ is a strip and $V$ is perpendicular to $U$: Theorem \ref{thm-main} implies that the edge conductance vanishes since the intersection number is $0$; see Figure \eqref{fig: 2(c)}. Both these facts were heuristically known but rigorous proofs were missing.

\subsection{Sketch of Proof}\label{sec-1.4}
Our proof of Theorem \ref{thm-main} consists of two independent components.

\textbf{Part 1 (\S\ref{sec-2}-\S\ref{sec-4}).} This part reduces the edge conductance $\sigma_e(H_e)$ to a bulk quantity (i.e. that depends only on $H_+$, $H_-$). The key observation is that one can formally think of $\sigma_e(H_e)$ as a \textit{singular trace:} 
\begin{equation}\label{eq-0y}
    \sigma_e^{U, V}(H_e) = \Tr \big( i \rho'(H_e) [H_e,\1_V] \big) \simeq \Tr \big( i[ \rho(H_e), \1_V ] \big).
\end{equation}
We call it ``singular'' because the operator $[\rho(H_e), \1_V]$ is not technically  trace-class. However, if $W$ is a $R$-tubular neighborhood of $\p U$ then $[\rho(H_e)\1_W,\1_V]$ is trace-class, and algebraic manipulations show that this trace vanishes. Therefore, as singular traces,
\begin{equation}\label{eq-6b}
    \sigma_e^{U, V}(H_e) 
    \simeq \Tr \big( [ \rho(H_e) \1_{W^c}, \1_V ] \big),
\end{equation}
which is essentially a bulk term: for $R \gg 1$, $\rho(H_e) \1_{W^c}$ depends essentially on the bulk Hamiltonians $H_+$ and $H_-$. 

At the heuristic level, our proof is based on the above informal observation but its mathematical execution is more sophisticated. We follow a strategy due to Elgart--Graf--Schenker \cite{EGS05}, tailored to the more complex geometry under focus here. To avoid trace-class issues, we inject the cutoff $\1_W$ in the very first step: in the formula $\Tr ( i \rho'(H_e) [H_e,\1_V] )$ rather than in the ill-defined quantity $\Tr ([ \rho(H_e), \1_V ] )$. Getting to the rigorous analogue of \eqref{eq-6b} produces commutators which we analyze using Helffer--Sj\"ostrand formulas and cyclicity. Edge terms vanish as predicted, and this eventually yields
\begin{equation}\label{eq-0z}
    \sigma_e^{U, V}(H_e) = \sigma_b^{U, V}(P_+) - \sigma_b^{U, V}(P_-), \qquad \sigma_b^{U,V}(P) := -i\Tr \big( P \big[[P,\1_U],[P,\1_V] \big] \big),
\end{equation}
see Theorem \ref{thm-4a} below.
We call the emerging qauntities $\sigma_b^{U,V}(P_\pm)$ \textit{geometric bulk conductances.}

\begin{figure}[t]
     \centering
     \begin{subfigure}[b]{0.3\textwidth}
         \centering
         \includegraphics[width=\textwidth]{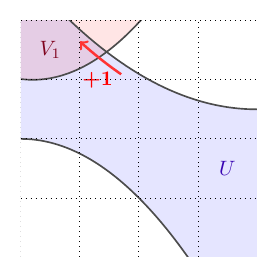}
         \caption{}
         \label{fig: 2(a)}
     \end{subfigure}
     \hfill
     \begin{subfigure}[b]{0.3\textwidth}
         \centering
         \includegraphics[width=\textwidth]{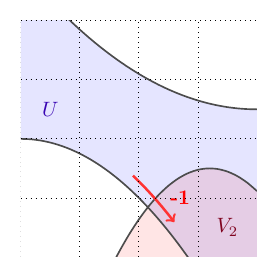}
         \caption{}
         \label{fig: 2(b)}
     \end{subfigure}
     \hfill
     \begin{subfigure}[b]{0.3\textwidth}
         \centering
         \includegraphics[width=\textwidth]{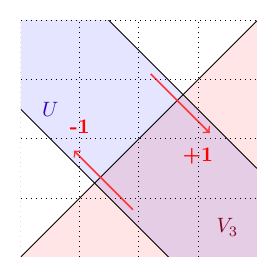}
         \caption{}
         \label{fig: 2(c)}
     \end{subfigure}
        \caption{The intersection number in subfigure \eqref{fig: 2(a)}, \eqref{fig: 2(b)}, \eqref{fig: 2(c)} are $+1$, $-1$ and $0$; thus the edge conductance $\sigma_e^{U, V_1}(H_e) = \sigma_b(P_+) - \sigma_b(P_-) = -\sigma_e^{U, V_2}(H_e)$, $\sigma_e^{U, V_3}(H_e) = 0$. }
        \label{fig: 2}
\end{figure}

\textbf{Part 2 (\S\ref{sec-5}-\S\ref{sec-7}).} In this part we reduce the geometric bulk conductances to integer multiples of Hall conductances (see Theorems \ref{thm:1} and \ref{thm:2} below); we give an interpretation of the emerging integer as an intersection number between $\p U$ and $\p V$. The key observation is that $\sigma_b^{U,V}(P)$ is the trace of a commutator $[A,B]$ where $AB$ and $BA$ are not separately trace-class:
\begin{equation}\label{eq-1y}
    \sigma_b^{U,V}(P) = -i\Tr \big( P \big[[P,\1_U],[P,\1_V] \big]\big) = -i\Tr \big( [P\1_U P,P\1_V P]\big).
\end{equation}
In particular, a compact perturbation $U'$ of $U$ does not affect $\sigma_b^{U,V}(P)$, because
\begin{equation}
    \sigma_b^{U',V}(P) - \sigma_b^{U,V}(P) = \sigma_b^{U' \Delta U,V}(P) = \Tr \big( [P\1_{U' \Delta U} P,P\1_V P]\big) = 0,
\end{equation}
where the last identity comes from cyclicity, see Proposition \ref{prop-2c} below.

When $U$ and $V$ are \textit{simple sets} (i.e. their boundaries are connected), our strategy consists of using the robustness of the geometric bulk conductance to locally deform $U$ and $V$ to half-planes and recover the Hall conductance. In details, we construct sets $U_n$, $V_n$, that (up to permutation of $U$ and $V$) resemble $\{x_2>0\}$, $\{x_1>0\}$ in $\Dd_{2n}(0)$ (the disk of radius $n$, centered at the origin) and $U$,$V$ outside $\Dd_{4n}(0)$. Since $U_n$, $V_n$ resemble $U$, $V$ outside $\Dd_{4n}(0)$, the robustness of geometric bulk conductances (Proposition \ref{prop-2c}) guarantees that
\begin{equation}
    \sigma_b^{U, V}(P) = \sigma_b^{U_n, V_n}(P) = \lim_{n\to \infty} \sigma_b^{U_n, V_n}(P).
\end{equation}
On the other hand, $\1_{U_n}(\bfx)$, $\1_{V_n}(\bfx)$ approach $\1_{\{x_2>0\}}(\bfx)$, $\1_{\{x_1>0\}}(\bfx)$ when $n\to \infty$ for every $\bfx$. So formally,
\begin{equation}\label{eq-1x} 
    \sigma_b^{U, V}(P) = \lim_{n\to \infty} \sigma_b^{U_n, V_n}(P) = \sigma_b(P).
\end{equation}

The main challenge is to make \eqref{eq-1x} rigorous. We will rely on an application of the dominated convergence theorem on specifically designed deformations of $(U,V)$. Because \eqref{eq-1y} is a trace, the involved kernel is negligible outside large enough balls. Our construction of $U_n, V_n$ preserve this negligibility uniformly, and this yield a rigorous proof of \eqref{eq-1x} for simple sets $U, V$; see Proposition \ref{lem-10}.

In \S\ref{sec-7} we will use additivity properties of the bulk conductances to extend \eqref{eq-1x} to arbitrary transverse sets $U, V$, driving the emergence of the intersection number between $U$ and $V$:
\begin{equation}\label{eq-9y}
    \sigma_b^{U, V}(P) = \II_{U,V} \cdot \sigma_b(P).
\end{equation}
See Theorem \ref{thm:2}. 
Theorem \ref{thm-main} will follow from combining \eqref{eq-0z} and \eqref{eq-9y}.

The paper is organized as follows: 
\begin{itemize}
    \item In \S \ref{sec-2}, we provide basic estimates on short-range operators and transverse sets.
    \item In \S \ref{sec-3}, we introduce of bulk and edge conductance and detail their basic properties.
    \item In \S \ref{sec-4}, we reduce the edge conductance to the difference between geometric bulk conductances (Theorem \ref{thm-4a}). 
    \item In \S \ref{sec-5}, we introduce the intersection number for simple transverse sets (transverse sets whose boundaries are connected). 
    \item In \S \ref{sec-6}, we prove Theorem \ref{thm:1}:  for simple transverse sets, the geometric bulk conductance is equal to the intersection number times the Hall conductance. 
    \item In \S \ref{sec-7}, we extend Theorem \ref{thm:1} to any pair of transverse sets. This leads to the emergence of the intersection number $\II_{U,V}$ and completes the proof of Theorem \ref{thm-main}.
\end{itemize}

\subsection{Relation to existing results}\label{sec-1e} The bulk-edge correspondence has been extensively studied in the past thirty years. Various perspectives have been developed around it: $K$-theoretic approaches \cites{KS02, KS04, T14, PS16, BKR, Ku17, Br18}; spectral flow methods \cites{Hatsugai, Br18}; and tracial techniques \cites{EG02, EGS05, GP13, GS18, GT18, ST19,B19, D21}. 

While many papers have focused on extending the bulk-edge correspondence to increasing degree of generality on the model -- from Landau Hamiltonian to $\Z_2$-insulators, from discrete to continuous, from periodic to disordered -- very few go beyond straight edges. The work \cite{BBDKLW23} gave an explicit formula for edge states of an adiabatically modulated Dirac operator with a weakly curved interface. The result relies on semiclassical methods and gives precise results but is restricted to the adiabatic regime; see also \cite{D22,B22,PPY22}. Our previous work \cite{DZ23} shows the emergence of edge spectrum for topological insulators as long as both the domain and its complement contain arbitrarily large balls, but did not provide a formula for the conductance. Thiang \cite{T20} derived similar results using $K$-theory and coarse geometry.

Later in \cites{LT22, L23}, Ludewig and Thiang proved a K-theoretic version of the bulk-edge correspondence on flasque spaces. These are spaces that satisfy coarse-geometric properties; in contrast with our condition \eqref{eq-0a}, flasqueness \cite[Definition 9.3]{R96} can be challenging to check on concrete examples beyond perturbations of half-spaces. Their K-theoretic bulk-edge correspondence result \cite[Theorem V.4]{L23} and edge-traveling interpretation \cite[Theorem VII.7]{L23} together connect the edge conductance to the geometric bulk conductance \eqref{eq-0z}. Our approach goes beyond \cites{LT22, L23} by connecting the geometric conductances themselves to a topological marker independent of the geometry: the Hall conductance \eqref{eq-1t}. This drove the emergence of the intersection number $\mathcal{X}_{U, V}$.

\subsection{Open problems}
One natural question is whether our curved bulk-edge correspondence \eqref{eq-1w} holds for $\Z_2$-topological insulators. This would require a space-based trace formula for $\Z_2$-bulk indices, such as \eqref{eq-1t} for Hall insulators. At this point it is unclear whether such a formula is available or even possible; see for instance discussions in \cites{LM19, FSSWY20}.

Another interesting problem concerns the type of the edge spectrum; it is widely expected to include an absolutely continuous part. For instance, \cite{BW22} proved that the edge spectrum is filled with  absolutely continuous spectrum when the edge is straight, combining a form of the bulk-edge correspondence derived in \cite{FSSWY20} with a result on the structure of unitary operators \cite{MM21}. The present work provides the first step in the context of curved edges. In a follow-up project we will show that the emerging edge spectrum is absolutely continuous, extending the result of \cite{BW22}.

In \cite{CB23}, Chen and Bal showed that for a straight edge, the bulk index is equal to the difference between the number of modes reflected and transmitted. It would be very interesting to extend this result to curved edges. Yet this promises to be challenging, because the scattering techniques developed there rely on translational invariance.

Several authors \cite{HM15, GMP17, BBDF18} have shown that the Hall conductance of interacting systems of particles is quantized (i.e. takes values in a discrete set). In our proof a \textit{geometric bulk conductance} naturally emerges, see \eqref{eq-0z}; it would be interesting to investigate whether the analogue of this quantity in interacting systems is quantized as well, and given by the integer $\II_{U,V}$ times the  Hall conductance.

\subsection{Funding and/or competing interests} We gratefully acknowledge support from the National Science Foundation DMS 2054589 (AD) and the Pacific Institute for the Mathematical Sciences (XZ). The contents of this work are solely the responsibility of the authors and do not necessarily represent the official views of PIMS.

\subsection{Notations}
We use the following notations:
\begin{itemize}
    \item Regular font $x$, $x_j$, $z$  denotes scalars in $\R$, $\Z$, or $\C$.
    \item Bold font $\bfx = (x_1, x_2), \bfy = (y_1, y_2)$ denotes points in $\Z^2$.
    \item $d(\cdot, \cdot)$ denote the Euclidean metric.
    \item $d_1(\bfx, \bfy) := |x_1 - y_1| + |x_2 - y_2|$ denote the $\ell^1$-metric on $\R^2$.
    \item $d_l(\bfx, \bfy) = \ln(1 + d_1(\bfx, \bfy))$ denote the logarithm metric on $\R^2$. 
    \item $\Vert \cdot \Vert$ and $\Vert \cdot \Vert_1$ denote the operator norm and the trace-class norm, respectively.
    \item Throughout the paper, $C$ denotes a constant that is allowed to vary from line to line, that may depend on $\nu$ and $\rho$ (though this dependence is not emphasized). We will occasionally use $C_N$ to emphasize dependence on an integer $N$.
    \item For $A \subset \R^2$, $A^\CC$ denotes the open set $\Int(A^c)$.
    \item $\Dd_R(\bfz):= \{\bfx\in \R^2: d_1(\bfx, \bfz) <R\}$ denotes the $\ell^1$-disk centered at $z$, of radius $R$.
    \item $\Hh_1 = \{x_1 > 0\}$ and $\Hh_2 = \{x_2 > 0\}$.
    \item If $K$ is an operator on $\ell^2(\Z^2)$, $K(\bfx,\bfy)$ denotes the Schwartz kernel of $K$.
    \end{itemize}

\section{Preparation}\label{sec-2}
In this section, we give some basic operator estimates that will be needed but leave the proof to Appendix \ref{app-ESR}. 

\subsection{Operator estimates} Under short-range assumptions, we give here estimates on resolvents and product of operators. We start by introducing the notion of textit{polynomially-short-range (PSR)} Hamiltonians.

\begin{definition}[PSR]
    An operator $H$ on $\ell^2(\Z^2,\C^m)$ is \textit{polynomially-short-range (PSR)} if for any $N>0$, there is $C_N>0$ such that 
\[
    |H(\bfx, \bfy)|\leq C_N(1 + d_1(\bfx, \bfy))^{-N} = C_N e^{-N d_l(\bfx, \bfy)}, \qquad \forall \bfx, \bfy\in \Z^2.
\]    
\end{definition}

Note that any ESR operator is also PSR. The next result shows that (1) resolvents of ESR for insulating energies are ESR; (2) functionals of ESR operators are PSR.

\begin{lemma}\label{lemma-1a} Assume $H$ is self-adjoint and ESR. 
    \begin{enumerate}
        \item[(a)] For $0<|\Im z|< 1$, $(H-z)^{-1}$ is ESR. More precisely,
        \begin{equation}\label{eq-9a}
        \left|(H - z)^{-1}(\bfx, \bfy)\right|\leq \frac{2}{|\Im z|}e^{-c|\Im z|d_1(\bfx, \bfy)}, \qquad c = \dfrac{\nu^4}{32}.
        \end{equation}
        \item[(b)] For any $g\in C_c^\infty(\R)$, $g(H)$ is PSR. 
    \end{enumerate}
\end{lemma}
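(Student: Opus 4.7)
The plan is to prove part (a) via a Combes--Thomas style conjugation argument and then use it as the input for part (b) through a Helffer--Sj\"ostrand representation of $g(H)$.

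For part (a), I would fix $\bfy_0 \in \Z^2$ and, for a small parameter $\alpha > 0$ to be tuned, introduce the diagonal multiplication operator $M_\alpha$ with entries $e^{\alpha d_1(\bfx, \bfy_0)}$. The conjugated operator $H_\alpha = M_\alpha H M_\alpha^{-1}$ has kernel
\[
H_\alpha(\bfx, \bfy) = e^{\alpha(d_1(\bfx, \bfy_0) - d_1(\bfy, \bfy_0))}\, H(\bfx, \bfy),
\]
and by the triangle inequality $|d_1(\bfx,\bfy_0) - d_1(\bfy,\bfy_0)| \leq d_1(\bfx, \bfy)$ together with the ESR bound \eqref{eq-9z} and $e^s - 1 \leq s e^s$,
\[
|H_\alpha(\bfx, \bfy) - H(\bfx, \bfy)| \leq \nu^{-1}\alpha\, d_1(\bfx, \bfy)\, e^{-(2\nu - \alpha) d_1(\bfx, \bfy)}.
\]
Restricting to $\alpha \leq \nu$ and applying Schur's test, I obtain $\|H_\alpha - H\| \leq C \alpha \nu^{-4}$ for a universal $C$ (the $\nu^{-4}$ stems from $\nu^{-1}$ in the ESR prefactor and the $\sum_\bfy d_1(\bfx,\bfy) e^{-\nu d_1(\bfx, \bfy)} = O(\nu^{-3})$ from the Schur sum). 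Choosing $\alpha = c |\Im z|$ with $c = \nu^4/32$ forces $\|H_\alpha - H\| \leq |\Im z|/2$, so a Neumann series gives $\|(H_\alpha - z)^{-1}\| \leq 2/|\Im z|$. Using the kernel identity $(H_\alpha - z)^{-1}(\bfx, \bfy_0) = e^{\alpha d_1(\bfx, \bfy_0)}\, (H - z)^{-1}(\bfx, \bfy_0)$ together with $|(H_\alpha - z)^{-1}(\bfx, \bfy_0)| \leq \|(H_\alpha - z)^{-1}\|$ yields exactly \eqref{eq-9a}.

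For part (b), I would appeal to the Helffer--Sj\"ostrand formula. Choose an almost analytic extension $\tilde g \in C_c^\infty(\C)$ of $g$ satisfying $|\bar\partial \tilde g(z)| \leq C_N |\Im z|^N$ for every integer $N \geq 0$ and supported in a fixed bounded neighborhood of $\supp g \subset \R$ with $|\Im z| \leq 1$. Then
\[
g(H) = \frac{1}{\pi} \int_\C \bar\partial \tilde g(z)\, (H - z)^{-1}\, dA(z).
\]
Taking matrix elements and inserting the bound from (a) gives
\[
|g(H)(\bfx, \bfy)| \leq \frac{2 C_N}{\pi} \int_{\supp \tilde g} |\Im z|^{N-1}\, e^{-c |\Im z|\, d_1(\bfx, \bfy)}\, dA(z).
\]
Writing $z = x + iy$, the $x$-integral runs over a bounded interval, so the estimate reduces to $\int_0^1 y^{N-1} e^{-c y r}\, dy$ with $r = d_1(\bfx, \bfy)$. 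The substitution $t = c y r$ (combined with the trivial bound for $r \leq 1$) controls this by $C'_N (1 + r)^{-N}$, which is precisely the PSR decay since $N$ is arbitrary.

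The main obstacle is the constant bookkeeping in part (a): the explicit $\nu^4/32$ in \eqref{eq-9a} requires careful tracking of how the ESR prefactor $\nu^{-1}$ and the sum $\sum_\bfy d_1(\bfx,\bfy) e^{-\nu d_1(\bfx,\bfy)}$ combine to produce the $\nu^{-4}$ in the Schur estimate, and then how this feeds into the Neumann series radius $|\Im z|/2$. Once (a) is established, (b) is a routine consequence of the Helffer--Sj\"ostrand calculus.
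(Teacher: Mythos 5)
Your proposal is correct in strategy and matches the paper's proof quite closely. For part (a) the paper does not re-derive the Combes--Thomas bound: it quotes \cite[Theorem 10.5]{AW15}, which gives $|(H-z)^{-1}(\bfx,\bfy)|\leq (\Delta_z-S_\alpha)^{-1}e^{-\alpha d_1(\bfx,\bfy)}$ with $S_\alpha=\sup_\bfx\sum_\bfy|H(\bfx,\bfy)|(e^{\alpha d_1(\bfx,\bfy)}-1)$, together with the monotonicity bound $S_\alpha/\alpha\leq S_\nu/\nu\leq 16/\nu^4$ from \cite{DZ23}, and then sets $\alpha=2^{-5}\nu^4|\Im z|$ so that $S_\alpha\leq|\Im z|/2\leq\Delta_z/2$. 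Your conjugation-by-$e^{\alpha d_1(\cdot,\bfy_0)}$, Schur test, and Neumann-series argument is exactly the standard proof of that cited estimate, so you are in effect unpacking the black box rather than taking a different route; part (b) (Helffer--Sj\"ostrand plus the substitution $w=|\Im z|\,d_1(\bfx,\bfy)$, with the trivial bound on the diagonal) is identical to the paper's.

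One caveat on the constant, which you yourself flag: with your inequality $e^s-1\leq se^s$ and the Schur sum $\sum_{\bfy}d_1(\bfx,\bfy)e^{-\nu d_1(\bfx,\bfy)}=4\sum_{k\geq1}k^2e^{-\nu k}\lesssim 64/\nu^3$ (after the prefactor $\nu^{-1}\alpha$), you get $\|H_\alpha-H\|\leq 64\,\alpha/\nu^4$ rather than $16\,\alpha/\nu^4$, so the choice $\alpha=\nu^4|\Im z|/32$ gives $\|H_\alpha-H\|\leq 2|\Im z|$, not $|\Im z|/2$, and the Neumann series does not close as written. The fix is to replace $e^s-1\leq se^s$ by the convexity bound $e^{\alpha d}-1\leq(\alpha/\nu)(e^{\nu d}-1)$ for $\alpha\leq\nu$ --- this is precisely the monotonicity $S_\alpha\leq(\alpha/\nu)S_\nu$ encoded in \eqref{eq-1m} --- after which $\|H_\alpha-H\|\leq S_\alpha\leq 16\alpha/\nu^4$ and the stated $c=\nu^4/32$ in \eqref{eq-9a} comes out exactly; alternatively one accepts a slightly smaller $c$, which is harmless for every later use of the lemma but falls short of the statement as written.
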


We now give estimates on products of ESR / PSR operators. We state them in a unifying way using the two metrics $d_1, d_\ell$.

\begin{lemma}
    \label{lemma-1b} Let $U \subset \R^2$. Let $S$ be a selfadjoint operator on $\ell^2(\Z^2,\C^m)$ that satisfies the kernel estimate
    \[
    |S(\bfx, \bfy)|\leq Ce^{-cd_*(\bfx, \bfy)}.
    \]
    Then 
        \begin{align}
        &\|\1_U S \1_U(\bfx, \bfy)|\leq C e^{-cd_*(\bfx, \bfy) - cd_*(\bfx, U) - cd_*(\bfy, U)},\label{eq-2p}\\
           &|\1_ U  S \1_{ U ^c}(\bfx, \bfy)|\leq Ce^{-\frac{c}{3}d_*(\bfx, \bfy) - \frac{c}{3}d_*(\bfx, \p  U ) - \frac{c}{3}d_*(\bfy,\p U )},\label{eq-2b}\\
           &|[\1_ U , S](\bfx, \bfy)|\leq 2Ce^{-\frac{c}{3}d_*(\bfx, \bfy) - \frac{c}{3}d_*(\bfx, \p  U ) - \frac{c}{3}d_*(\bfy, \p  U )}.\label{eq-2c}
        \end{align}
    \end{lemma}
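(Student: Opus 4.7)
The plan is to reduce all three bounds to the kernel estimate on $S$ together with one elementary geometric observation: whenever $\bfx\in U$ and $\bfy\in U^c$, the straight segment connecting $\bfx$ and $\bfy$ in $\R^2$ must cross $\p U$, so
\[
d_1(\bfx,\p U) \leq d_1(\bfx,\bfy), \qquad d_1(\bfy,\p U) \leq d_1(\bfx,\bfy).
\]
Since $d_l(\cdot,\cdot) = \ln(1+d_1(\cdot,\cdot))$ is a monotone function of $d_1$, the same two inequalities persist with $d_1$ replaced by $d_l$. Consequently the argument is uniform in the choice $d_*\in\{d_1,d_l\}$.

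With this in hand, \eqref{eq-2p} is essentially vacuous: the kernel $\1_U S\1_U(\bfx,\bfy)$ vanishes unless $\bfx,\bfy\in U$, in which case $d_*(\bfx,U)=d_*(\bfy,U)=0$ and the bound reduces to the hypothesis on $S$. For \eqref{eq-2b}, the kernel $\1_U S \1_{U^c}(\bfx,\bfy)$ vanishes unless $\bfx\in U$ and $\bfy\in U^c$; in that case the geometric observation gives $d_*(\bfx,\p U),\, d_*(\bfy,\p U) \leq d_*(\bfx,\bfy)$, so that
\[
d_*(\bfx,\bfy) \;\geq\; \frac{1}{3}\bigl(d_*(\bfx,\bfy) + d_*(\bfx,\p U) + d_*(\bfy,\p U)\bigr),
\]
and combining with $|S(\bfx,\bfy)|\leq Ce^{-cd_*(\bfx,\bfy)}$ gives the claim. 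Finally \eqref{eq-2c} follows by decomposing $[\1_U,S] = \1_U S \1_{U^c} - \1_{U^c} S \1_U$ and applying \eqref{eq-2b} to each term (using $\p(U^c)=\p U$), which produces the factor of $2$.

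The argument is elementary; the only point requiring a moment of care is the geometric lemma $d_1(\bfx,\p U)\leq d_1(\bfx,\bfy)$ for $\bfx\in U$, $\bfy\in U^c$. This is not an immediate triangle inequality applied at a lattice point of $\p U$ (there need not be one), but rather a continuous-segment argument in $\R^2$: the parametrized segment from $\bfx$ to $\bfy$ meets the topological boundary $\p U$ at some intermediate point, and the $\ell^1$-length from $\bfx$ to the first such crossing is at most $d_1(\bfx,\bfy)$. Beyond this I do not anticipate any obstacle.
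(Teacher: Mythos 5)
Your proof is correct and takes essentially the same route as the paper's: \eqref{eq-2p} reduces to the support observation, \eqref{eq-2b} follows from the kernel bound together with the elementary fact that a segment joining a point of $U$ to a point of $U^c$ must meet $\p U$ (the paper encodes this as $d_*(\cdot,U)+d_*(\cdot,U^c)\geq d_*(\cdot,\p U)$ plus the triangle inequality, you as $d_*(\bfx,\p U),\,d_*(\bfy,\p U)\leq d_*(\bfx,\bfy)$ on the support of the kernel, which yields the same factor $c/3$), and \eqref{eq-2c} comes from the identical splitting $[\1_U,S]=\1_U S\1_{U^c}-\1_{U^c}S\1_U$ with $\p(U^c)=\p U$. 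No gaps.
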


\begin{lemma}\label{lemma-1c}
       Let $ U ,  V \subset\R^2$. Assume that $S_1, \dots, S_n$ are self-adjoint operators on $\ell^2(\Z^2,\C^m)$ that satisfy the following estimates:
    \begin{equations}
        \label{eq-2g}
    \forall j \in [1,n], \qquad |S_j(\bfx, \bfy)|\leq C_j e^{-cd_*(\bfx, \bfy)}; \qquad \qquad   \qquad \qquad
    \\
    \exists p \in [1,n], \qquad  |S_p(\bfx, \bfy)|\leq  C_p e^{-cd_*(\bfx, \bfy) - cd_*(\bfx, \p U ) - cd_*(\bfy, \p U )}; 
    \\
    \exists q \in [1,n], \qquad |S_q(\bfx, \bfy)|\leq  C_qe^{-cd_*(\bfx, \bfy) - cd_*(\bfx, \p V ) - cd_*(\bfy, \p V)}.
    \end{equations}
    Then we have
        \begin{equation}
            \label{eq-2d}
            \left|\left(\prod_{j = 1}^n S_j \right)(\bfx, \bfy) \right|\leq \left(\prod_{j = 1}^n C_j \right) M_{d_*, \frac{c}{4}}^{n - 1} e^{-\frac{c}{4}d_*(\bfx, \bfy) -\frac{c}{4} d_*(\bfx, \p U ) -\frac{c}{4} d_*(\bfy, \p U ) -\frac{c}{4} d_*(\bfx, \p V ) -\frac{c}{4} d_*(\bfy, \p V )},
        \end{equation}
        where $M_{d_*, a} := \sum\limits_{\bfx\in \Z^2} e^{-a d_*(\bfx, \bfy)}$, $*\in \{1, l\}$.
\end{lemma}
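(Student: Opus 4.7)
The plan is to expand the product kernel as an iterated sum over intermediate points and distribute the exponential decay among the five target quantities via the triangle inequality for $d_*$, while reserving a fraction of the inter-point decay to execute the summation. Concretely, writing $\bfz_0 := \bfx$ and $\bfz_n := \bfy$, I would expand
\[
\Bigl(\prod_{j=1}^n S_j\Bigr)(\bfx,\bfy) = \sum_{\bfz_1, \dots, \bfz_{n-1}\in\Z^2} \prod_{j=1}^n S_j(\bfz_{j-1}, \bfz_j)
\]
and apply the three hypotheses \eqref{eq-2g}. Abbreviating $I := \sum_{j=1}^n d_*(\bfz_{j-1}, \bfz_j)$, $A_1 := d_*(\bfz_{p-1}, \p U)$, $A_2 := d_*(\bfz_p, \p U)$, $B_1 := d_*(\bfz_{q-1}, \p V)$, $B_2 := d_*(\bfz_q, \p V)$, the resulting pointwise bound on the product is $\bigl(\prod_j C_j\bigr)\, e^{-cI - c(A_1+A_2+B_1+B_2)}$.

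Next I would use the triangle inequality for $d_*$ --- valid for $d_l$ because $(1+a)(1+b)\geq 1+a+b$ --- to obtain
\[
d_*(\bfx,\bfy) \leq I, \qquad d_*(\bfx,\p U) \leq \sum_{j=1}^{p-1} d_*(\bfz_{j-1},\bfz_j) + A_1 \leq I + A_1,
\]
together with three analogous estimates for the remaining boundary distances. Summing the five inequalities --- each edge of the chain participates in at most three of them (once for $d_*(\bfx,\bfy)$, once for an $\p U$-distance, once for a $\p V$-distance) --- yields
\[
d_*(\bfx,\bfy) + d_*(\bfx,\p U) + d_*(\bfy,\p U) + d_*(\bfx,\p V) + d_*(\bfy,\p V) \leq 3I + A_1 + A_2 + B_1 + B_2.
\]
Multiplying by $c/4$ and discarding the non-negative surplus $(3c/4)(A_1+A_2+B_1+B_2)$, this rearranges into
\[
e^{-cI - c(A_1+A_2+B_1+B_2)} \leq e^{-\frac{c}{4}[d_*(\bfx,\bfy)+d_*(\bfx,\p U)+d_*(\bfy,\p U)+d_*(\bfx,\p V)+d_*(\bfy,\p V)]} \cdot e^{-\frac{c}{4}I}.
\]

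The first factor on the right is precisely the decay claimed in \eqref{eq-2d}, so I would pull it out of the sum over $\bfz_1,\dots,\bfz_{n-1}$. What remains is $e^{-\frac{c}{4}I} = \prod_{j=1}^n e^{-\frac{c}{4} d_*(\bfz_{j-1},\bfz_j)}$; summing iteratively starting from $\bfz_{n-1}$ and discarding the factor $e^{-\frac{c}{4}d_*(\bfz_{n-1},\bfy)}$ at the first step, the translation-invariance of $d_*$ gives $M_{d_*,c/4}$ per summation, producing the total factor $M_{d_*,c/4}^{n-1}$.

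The main obstacle is the tight exponent allocation in the second step: because the five target distances share edges of the chain in overlapping intervals, a naive distribution would suggest the coefficient $c/3$ per distance, leaving no decay available for the summation. The finer coefficient $c/4$ --- combined with the three-time sharing bound above --- is precisely what spares the extra $(c/4)I$ needed for the iterated sum, and I would carefully verify this budget allocation (including the edge cases $p=1,n$ and $p=q$) to complete the proof.
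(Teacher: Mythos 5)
Your proposal is correct and follows essentially the same route as the paper: expand the product kernel over intermediate points, use the triangle inequality to convert partial chains plus the boundary-decay factors at positions $p,q$ into the five target distances with weight $c/4$ each (your explicit "each edge is used at most three times" bookkeeping is exactly what makes the paper's budget work, left implicit there), and then sum the leftover $e^{-\frac{c}{4}d_*(\bfz_{j-1},\bfz_j)}$ factors iteratively to produce $M_{d_*,c/4}^{n-1}$. No gaps; the edge cases $p=1,n$, $p=q$ you flag cause no difficulty.
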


Note that $M_{d_*,a}$ is independent of $\bfy\in \Z^2$ and 
\begin{equation}
    \label{eq-2e}
    \begin{split}
    &M_{d_1, a} \leq \frac{16}{a^2}, \quad \forall a>0,\quad \text{~and~} \quad M_{d_l, a} = \sum\limits_{\bfx \in \Z^2} \big(1 + |\bfx|_1\big)^{-a}<+\infty, \quad \forall a>2.
    \end{split}
\end{equation}

\subsection{Transversality and trace-class}
Recall that we say $U, V \subset \R^2$ if 
\begin{equation}
    \liminf_{|\bfx |\to +\infty} \frac{\ln \PUV(\bfx)}{\ln |\bfx|} >0, \qquad \PUV(\bfx) \de 1 + d_1(\bfx,\p U) + d_1(\bfx,\p V).
\end{equation}
Note that it is equivalent to 
\begin{equation}\label{eq-0b}
    \exists c\in (0,1) \text{ such that }  \forall |\bfx|\geq c^{-1}, \PUV(\bfx) \geq |\bfx|^c.
\end{equation}
Meanwhile, because of the inequality
\[
2\ln (1 + a + b) \geq \ln(1 + a) + \ln (1 + b)\geq \ln(1 + a + b), \qquad a,b >0,
\]
we have:  
\begin{equation}
    \label{eq-0x}
2\ln \Psi_{U, V}(\bfx) \geq d_l(\bfx, \p U) + d_l(\bfx, \p V) \geq \ln \Psi_{U, V}(\bfx).
\end{equation}
Therefore for any $N>\frac{2}{c}$, by \eqref{eq-0b}
\begin{equation}
    \label{eq-0v}
\sum_{\bfx\in \Z^2} e^{-N(d_l(\bfx, \p U) + d_l(\bfx, \p V))} \leq \sum_{\bfx\in \Z^2} \Psi_{U,V}(\bfx)^{-N} <+\infty.
\end{equation}

The next result states that product of PSR operators that decay away from the boundaries of transverse sets $U,V$ are trace-class.

\begin{cor}[trace-class]\label{cor-1a}
    Assume $U, V \subset \R^2$ are transverse sets. Assume $S_j$, $j = 1, \cdots, n$, are PSR and for some $p,  q\in [1,n]$ and for any $N$, there is $C_N$ such that
    \[
    |S_p(\bfx, \bfy)|\leq  C_Ne^{- Nd_l(\bfx, \p  U ) - Nd_l(\bfy, \p  U )},   \ \    |S_q(\bfx, \bfy)|\leq  C_Ne^{ - N d_l(\bfx, \p  V ) - N d_l(\bfy, \p  V )}.
    \]
    Then  $\prod\limits_{j = 1}^n S_j$ is trace-class. 
\end{cor}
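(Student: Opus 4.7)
The plan is to control the trace norm by the kernel's $\ell^1$-norm on $\Z^2 \times \Z^2$. For any operator $T$ on $\ell^2(\Z^2, \C^m)$, writing $T = \sum_{\bfx, \bfy \in \Z^2} |\delta_\bfx\rangle T(\bfx, \bfy) \langle \delta_\bfy|$ decomposes $T$ into pieces of rank at most $m$, each with trace norm bounded by the matrix norm $|T(\bfx, \bfy)|$; hence $\|T\|_1 \leq m \sum_{\bfx, \bfy} |T(\bfx, \bfy)|$ whenever the right-hand side is finite. It therefore suffices to establish absolute summability of the kernel of $\prod_{j=1}^n S_j$ on $\Z^2 \times \Z^2$.

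To produce such a summable estimate, I would apply Lemma \ref{lemma-1c} with the logarithmic metric $d_* = d_l$. Since every $S_j$ is PSR, each admits a bound $|S_j(\bfx, \bfy)| \leq C_{j,c}\, e^{-c\, d_l(\bfx, \bfy)}$ with $c$ arbitrarily large, while the two distinguished factors $S_p, S_q$ carry the extra boundary decay at any prescribed rate by hypothesis. Lemma \ref{lemma-1c} then produces
\[
\Big|\Big(\prod_{j=1}^n S_j\Big)(\bfx, \bfy)\Big| \leq C_c \, M_{d_l, c/4}^{n-1} \, e^{-\tfrac{c}{4}\left[d_l(\bfx, \bfy) + d_l(\bfx, \partial U) + d_l(\bfy, \partial U) + d_l(\bfx, \partial V) + d_l(\bfy, \partial V)\right]},
\]
with $M_{d_l, c/4} < \infty$ as soon as $c > 8$, by \eqref{eq-2e}.

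To sum this estimate, I would drop the non-negative $\bfx$-boundary factors and perform the $\bfx$-sum, which contributes another finite factor of $M_{d_l, c/4}$. The remaining task reduces to bounding
\[
\sum_{\bfy \in \Z^2} e^{-\tfrac{c}{4}\left(d_l(\bfy, \partial U) + d_l(\bfy, \partial V)\right)},
\]
which is precisely the content of the transversality estimate \eqref{eq-0v}, provided $c/4$ exceeds $2/\alpha$, where $\alpha \in (0,1)$ is the constant from \eqref{eq-0b}. The only (mild) obstacle is to ensure that a single rate $c$ beats both thresholds simultaneously, namely $c > \max(8, 8/\alpha)$. Since PSR grants arbitrarily fast polynomial decay, such a choice is always available, completing the argument.
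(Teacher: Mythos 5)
Your proposal is correct and follows essentially the same route as the paper: reduce trace-class to absolute summability of the kernel, invoke Lemma \ref{lemma-1c} with $d_*=d_l$ (using PSR plus the boundary-decay hypotheses on $S_p,S_q$), sum one variable against $e^{-\frac{c}{4}d_l(\bfx,\bfy)}$ via $M_{d_l,c/4}$, and sum the other via transversality and \eqref{eq-0v}. The only differences are cosmetic: you explicitly justify the bound $\|T\|_1\leq m\sum_{\bfx,\bfy}|T(\bfx,\bfy)|$ via a rank-$\leq m$ decomposition (which the paper takes for granted) and you sum in $\bfx$ first rather than $\bfy$.
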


%%%%%%%%%%%%%%%%%%%%%%%%%%%%%%%%%%%%%%%%%%%%%%%%%%%%%%%%%%%%%%%%%%%%%%%%%%%%%%%%%%%%%%%%%%%%%%%%%%%%%%%%%%%%%%%%%%%%%%%
\section{Bulk and edge conductance}\label{sec-3}
\begin{center}
   \textbf{From now on, we always assume that Assumption \ref{ass-all} holds.} 
\end{center}

In this section, we show that the Hall conductance \eqref{eq-1t}, the geometric bulk conductance \eqref{eq-0z}, and the edge conductance \eqref{eq-1u}, are well-defined. 

\subsection{Geometric bulk conductances}

\begin{prop}\label{prop-2a}
    Under Assumption \ref{ass-all}, the geometric bulk conductances
    \begin{equation}
        \label{eq-3v}
        \sigma_b^{U,V}(P_\pm):=  -i \Tr\big (P_\pm\big[[P_\pm, \1_U],[P_\pm, \1_V]\big]\big)
    \end{equation}
    are well-defined and independent of $\lambda \in \MG$.
\end{prop}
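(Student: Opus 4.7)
The plan is to handle the two claims separately: trace-class existence and independence of $\lambda \in \MG$. I would first establish that $P_\pm$ has a polynomially-decaying kernel. Since $H_\pm$ is ESR, Schur's test shows it is bounded on $\ell^2(\Z^2,\C^m)$, so its spectrum is compact. Using the gap condition $\MG \cap \Sigma(H_\pm) = \emptyset$, I can choose $g \in C_c^\infty(\R)$ with $g \equiv 1$ on $\Sigma(H_\pm) \cap (-\infty, \lambda)$ and $g \equiv 0$ on $\Sigma(H_\pm) \cap (\lambda, +\infty)$. Then $P_\pm = g(H_\pm)$, which is PSR by Lemma \ref{lemma-1a}(b).

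Next, to show trace-class, I would expand the double commutator:
\[
P_\pm \big[[P_\pm, \1_U], [P_\pm, \1_V] \big] = P_\pm [P_\pm, \1_U] [P_\pm, \1_V] - P_\pm [P_\pm, \1_V] [P_\pm, \1_U].
\]
Each term is a product of three PSR operators. Lemma \ref{lemma-1b} (inequality \eqref{eq-2c}) applied with $d_* = d_l$ shows that the kernel of $[P_\pm, \1_U]$ carries an additional decay factor $e^{-cd_l(\bfx, \p U) - cd_l(\bfy, \p U)}$, and similarly $[P_\pm, \1_V]$ carries $e^{-cd_l(\bfx, \p V) - cd_l(\bfy, \p V)}$. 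Since $P_\pm$ itself is PSR, Corollary \ref{cor-1a} applies to each of the two products, and each is trace-class. The trace in \eqref{eq-3v} is therefore well-defined.

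Independence of $\lambda \in \MG$ is essentially a tautology: if $\lambda, \lambda' \in \MG$ with $\lambda < \lambda'$, then
\[
\1_{(-\infty,\lambda')}(H_\pm) - \1_{(-\infty,\lambda)}(H_\pm) = \1_{[\lambda,\lambda')}(H_\pm) = 0,
\]
because $[\lambda,\lambda') \subset \MG \subset \Sigma(H_\pm)^c$. Thus $P_\pm$ is literally the same operator for every $\lambda \in \MG$, and so is $\sigma_b^{U,V}(P_\pm)$.

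The only step requiring mild bookkeeping is the application of Corollary \ref{cor-1a}: one must choose the polynomial decay exponent $N$ in the PSR estimate for $P_\pm$ large enough that the sum $\sum_{\bfx} \PUV(\bfx)^{-N}$ converges, which by \eqref{eq-0b} requires $N > 2/c$ where $c \in (0,1)$ is the transversality constant. Since the PSR estimates hold for arbitrary $N$, this is immediate, so no real obstacle arises.
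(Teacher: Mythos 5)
Your proposal is correct and follows essentially the same route as the paper: write $P_\pm = g(H_\pm)$ for a compactly supported smooth function using the gap (so $P_\pm$ is PSR by Lemma \ref{lemma-1a}(b)), get decay of $[P_\pm,\1_U]$ and $[P_\pm,\1_V]$ near $\p U$, $\p V$ from \eqref{eq-2c}, conclude trace-class via Lemma \ref{lemma-1c}/Corollary \ref{cor-1a} and transversality, and note that independence of $\lambda$ is immediate since the spectral projections coincide for all $\lambda$ in the gap. Expanding the double commutator into two three-fold products before invoking Corollary \ref{cor-1a} is only a cosmetic difference from the paper's treatment of $K_{U,V}$ as a single kernel estimate.
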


Because the right and upper half-planes are transverse sets, the Hall conductance $\sigma_b(H):= -i \Tr\big(P\big[[P, \1_{\{x_2>0\}}],[P, \1_{\{x_1>0\}}]\big]\big)$ is well-defined.

\begin{proof}
    Since $\MG$ is a spectral gap, for any $\lambda, \lambda'\in \MG$, the spectral projectors $\1_{(-\infty,\lambda)}(H)$ and $\1_{(-\infty,\lambda')}(H)$ are equal. This proves independence.
    
    We then show that $\sigma_b^{U,V}(P)$ is well-defined. Without loss of generalities we may assume that $\MG$ is an interval; pick $\lambda$ as the midpoint and $f\in C_c^\infty(\R)$ such that $f(H) = P$. By Lemma \ref{lemma-1a}(a),  $P = f(H)$ is PSR (one can actually prove that it is ESR). By \eqref{eq-2c}, for any $N$, there exists $C_N > 0$ such that 
    \begin{equation}
        \label{eq-3f}
    \begin{split}
        &|P(\bfx, \bfy)|\leq C_N e^{-4Nd_l(\bfx, \bfy)},\\
        &\big|[P, \1_U](\bfx, \bfy)\big|\leq C_Ne^{-4N(d_l(\bfx, \bfy) + d_l(\bfx, \p U) + d_l(\bfy, \p U))},\\
        &\big|[P, \1_{V}](\bfx, \bfy)\big|\leq C_Ne^{-4N(d_l(\bfx, \bfy) + d_l(\bfx, \p V) + d_l(\bfy, \p V))}.
    \end{split}
    \end{equation}
    Introduce 
    \begin{equation}\label{eq-9f}
        \KUV \de P[[P, \1_U], [P, \1_V]].
    \end{equation} 
    By  \eqref{eq-2d} in Lemma \ref{lemma-1c},
    \begin{equation}
        \label{eq-2k}
    |\KUV(\bfx, \bfy)|\leq C_N e^{-N (d_l(\bfx, \bfy) + d_l(\bfx, \p U) + d_l(\bfx, \p V) + d_l(\bfy, \p U) + d_l(\bfy, \p V))}.
    \end{equation}
    Since $U, V$ are transverse subsets of $\R^2$, by Corollary \ref{cor-1a}, $K_{U, V} = P\big[[P, \1_U][P, \1_V]\big]$ is trace-class; thus $\sigma_b^{U,V}(P)$ is well-defined.  This completes the proof. \end{proof}

 In particular, we have by \eqref{eq-0v} and \eqref{eq-2k} 
    \begin{equation}
        \label{eq-2q}
        |\SUV(P)| = \sum\limits_{\bfx\in \Z^2}|K_{U, V}(\bfx, \bfx)| %\leq C_N \left(\sum_{\bfx\in \Z^2} e^{-N (d_l(\bfx, \p U) + d_l(\bfx, \p V) )}\right)^2
    \leq C_N \left(\sum_{\bfx\in \Z^2} \Psi_{U, V}(\bfx)^{-N}\right)^2.    
    \end{equation}

Given two sets $A_1, A_2$, we introduce the symmetric difference
\begin{equation}
    A_1 \Delta A_2 = (A_1 \setminus A_2) \cup (A_2 \setminus A_1).
\end{equation}
One of the key properties of geometric bulk conductances is:

\begin{prop}\label{prop-2c}[Robustness of geometric bulk conductances]
    Under Assumption \ref{ass-all}, assume $U'$ is such that for some $R>0$,
    \begin{equation}\label{eq-2r}
        U \Delta U' \subset \{ \bfx \in \R^2: d_l(\bfx, \p U) \leq R\}.
    \end{equation}
    Then $\sigma_b^{U,V}(P_\pm) = \sigma_b^{U',V}(P_\pm)$. 
\end{prop}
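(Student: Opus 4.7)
The plan is to show that $\sigma_b^{U', V}(P_\pm) - \sigma_b^{U, V}(P_\pm)$ can be recast as a trace that automatically vanishes. Writing $A_+ := U' \setminus U$ and $A_- := U \setminus U'$, we have $\1_{U'} - \1_U = \1_{A_+} - \1_{A_-}$ and $A_\pm \subset U \Delta U' \subset \{\bfx : d_l(\bfx, \p U) \leq R\}$. By bilinearity of the double commutator in the indicator functions,
\begin{equation*}
\sigma_b^{U', V}(P_\pm) - \sigma_b^{U, V}(P_\pm) = \sigma_b^{A_+, V}(P_\pm) - \sigma_b^{A_-, V}(P_\pm),
\end{equation*}
where we extend the notation $\sigma_b^{A, V}(P) := -i\Tr(P[[P, \1_A], [P, \1_V]])$ to arbitrary subsets $A \subset \R^2$. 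It therefore suffices to prove $\sigma_b^{A, V}(P_\pm) = 0$ for every $A \subset \{\bfx : d_l(\bfx, \p U) \leq R\}$. For such $A$, the pointwise bound $|\1_A(\bfx)| \leq e^{NR} e^{-N d_l(\bfx, \p U)}$, valid for all $N \geq 0$, combined with Lemmas \ref{lemma-1a}--\ref{lemma-1c} and Corollary \ref{cor-1a}, implies that $P_\pm[[P_\pm, \1_A], [P_\pm, \1_V]]$ is trace-class, so $\sigma_b^{A, V}(P_\pm)$ is well-defined.

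When $A = B$ is \emph{finite}, $\1_B$ is finite-rank, hence so is $P_\pm \1_B P_\pm$; consequently the operators $(P_\pm \1_B P_\pm)(P_\pm \1_V P_\pm)$ and $(P_\pm \1_V P_\pm)(P_\pm \1_B P_\pm)$ are each trace-class and cyclicity of the trace applies without reservation. The algebraic identity underlying \eqref{eq-1y}, which uses only $P_\pm^2 = P_\pm$ and $\1_B \1_V = \1_V \1_B$, gives
\begin{equation*}
P_\pm[[P_\pm, \1_B], [P_\pm, \1_V]] = [P_\pm \1_B P_\pm, P_\pm \1_V P_\pm],
\end{equation*}
whose trace vanishes by cyclicity. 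This establishes $\sigma_b^{B, V}(P_\pm) = 0$ for every finite $B$.

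For a general $A \subset \{\bfx : d_l(\bfx, \p U) \leq R\}$, I would approximate by the finite sets $B_n := A \cap \Dd_n(0)$ and apply dominated convergence to the trace. For each $\bfx$, the diagonal value $K_n(\bfx, \bfx) := P_\pm[[P_\pm, \1_{B_n}], [P_\pm, \1_V]](\bfx, \bfx)$ converges pointwise to the diagonal of $K := P_\pm[[P_\pm, \1_A], [P_\pm, \1_V]]$, since $\1_{B_n} \to \1_A$ pointwise and the sums computing the kernels converge absolutely. Applying Lemma \ref{lemma-1c} to the three factors $P_\pm$, $[P_\pm, \1_{B_n}]$, $[P_\pm, \1_V]$ yields a uniform-in-$n$ majorant
\begin{equation*}
|K_n(\bfx, \bfx)| \leq C_N e^{-N(d_l(\bfx, \p U) + d_l(\bfx, \p V))},
\end{equation*}
which is summable in $\bfx$ by transversality of $U$ and $V$ via \eqref{eq-0v}. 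The crucial point is that $C_N$ depends on $R$ and $N$ but not on $n$, because every $B_n$ sits in the common log-neighborhood $\{d_l(\cdot, \p U) \leq R\}$ so the bound $|\1_{B_n}(\bfx)| \leq e^{NR} e^{-N d_l(\bfx, \p U)}$ holds uniformly. Dominated convergence then gives $\sigma_b^{A, V}(P_\pm) = \lim_{n \to \infty} \sigma_b^{B_n, V}(P_\pm) = 0$. The main technical hurdle is precisely extracting this $n$-independent decay estimate: one must verify that only the common support constraint $B_n \subset A$ enters the bound (not the size of $B_n$) and that the $R$-dependent prefactor $e^{NR}$ from $\1_{B_n}$ is absorbed cleanly into $C_N$ without accumulating combinatorial factors in $n$.
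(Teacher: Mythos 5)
Your proof is correct, and it takes a genuinely different route at the key vanishing step. Both arguments share the initial reduction: by linearity in the indicator function, the difference $\sigma_b^{U',V}(P_\pm)-\sigma_b^{U,V}(P_\pm)$ is expressed through geometric conductances of sets contained in the log-$R$-neighborhood of $\p U$ (the paper works with $D=U\Delta U'$, you with $U'\setminus U$ and $U\setminus U'$ separately, which is actually cleaner sign-wise). To show these vanish, the paper stays with the infinite-rank operators: it reduces $\sigma_b^{D,V}(P)$ to $\Tr\big([P,\1_V]\1_D\big)$, writes $[P,\1_V]=\1_{V^c}P\1_V-\1_VP\1_{V^c}$, and applies cyclicity so that the factor $\1_V\1_{V^c}=0$ kills the trace, justifying each rearrangement by verifying trace-classness of the specific products ($P\1_VP^\perp\1_D$, $\1_{V^c}P\1_V\1_D$, etc.) with the same kernel bounds you invoke. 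You instead exploit the identity $P[[P,\1_B],[P,\1_V]]=[P\1_BP,P\1_VP]$, dispose of finite $B$ by finite-rank cyclicity (where no delicate justification is needed), and reach general $A$ by truncating to $B_n=A\cap\Dd_n(0)$ and applying dominated convergence; the $n$-uniform majorant you identify as the crux does hold, since the constants in Lemma \ref{lemma-1c} depend only on the kernel bounds of the factors and the bound $\1_{B_n}(\bfx)\le e^{NR}e^{-Nd_l(\bfx,\p U)}$ uses only $B_n\subset A$. So your route trades the paper's case-by-case trace-class verification of cyclicity for an approximation argument — and, interestingly, your truncation-plus-dominated-convergence scheme is precisely the mechanism the paper deploys later (Lemma \ref{lem-11}, Proposition \ref{lem-10}) to identify the bulk conductance, so it is in the paper's spirit even if not its proof of this proposition. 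One small bookkeeping point: the statement implicitly requires $\sigma_b^{U',V}(P_\pm)$ to be well-defined; the paper checks that $(U',V)$ is transverse, whereas in your argument trace-classness of $K_{U',V}$ follows from writing it as $K_{U,V}+K_{A_+,V}-K_{A_-,V}$ with each term trace class — this is implicit in your bilinearity display and would be worth stating explicitly.
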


\begin{proof} [Proof of Proposition \ref{prop-2c}] 
First, $U', V$ are transverse and $\sigma_b^{U', V}(P_\pm)$ are well-defined. Indeed, by \eqref{eq-0x} and \eqref{eq-2r}, 
\[
\begin{split}
    2\ln \Psi_{U', V}(\bfx) &\geq d_l(\bfx, \p U') + d_l(\bfx, \p V)\geq d_l(\bfx, \p U) + d_l(\bfx, \p V) - d_l(\p U, \p U')\\
    &\geq \ln \Psi_{U, V}(\bfx) - R \geq c\ln |\bfx| - R \geq c' \ln |\bfx|
\end{split}
\]
for any $c'>c$ and large enough $|\bfx|$. Hence $U', V$ are transverse sets and $\sigma_b^{U', V}(P_\pm)$ is well-defined. It remains to show:
\[
\sigma_b^{D, V}(P) 
= 0 \quad \text{when} \quad D = U\Delta U', \qquad P = P_\pm.
\]
By direct expansion of definition \eqref{eq-3v} and $1 = P + P^\perp$,  we get 
\[
\sigma_b^{D, V}(P) = \Tr\big(P\1_D P \1_V P - P \1_V P \1_D P\big) = \Tr\big( P \1_V P^\perp \1_D P - P\1_D P^\perp \1_V P\big). 
\]
Recall the cyclicity of trace-class operators -- see e.g. \cite[Proposition 7.3]{K89}: 
\begin{equation}
    \label{eq-0l}
    \Tr(AB) =\Tr(BA), \text{~when~} AB \text{~and~} BA \text{~are~trace~class.}
\end{equation}
Let us first assume that every operator below is trace-class so we can freely use cyclicity above. We get 
\[
\begin{split}
    \sigma_b^{D,V}(P) &= \Tr\big( P \1_V P^\perp \1_D - P^\perp \1_V P \1_D\big) = \Tr\big([P, \1_V]\1_D\big)\\
    &= \Tr\big(\1_{V^c} P\1_V\1_D - 1_{V} P \1_{V^c} \1_D\big)\\
    &= \Tr\big(P\1_V\1_D \1_{V^c} - P \1_{V^c} \1_D1_{V}\big) = 0,
\end{split}
\]
where we used $[P, \1_V] = \1_{V^c} P \1_V - \1_V P \1_{V^c}$. 

\smallskip 

Now we prove every operator mentioned above is trace-class. Note that $P^\perp \1_V P = P^\perp [1_V, P]$, $P \1_V P^{\perp} = [P, \1_V] P^{\perp}$ and recall that $P$ is PSR (see the proof of Proposition \ref{prop-2a}). By \eqref{eq-2c}, 
\[
    \big|P \1_V P^{\perp}(\bfx, \bfy)\big|, \big|P^\perp \1_V P(\bfx, \bfy)\big|\leq C_N e^{-6N d_l(\bfx, \bfy) - 6N d_l(\bfx, \p V) - 6N d_l(\bfy, \p V)}. 
\]
Meanwhile, $\1_D(\bfx)\leq e^{-6Nd_l(\bfx, D)}\leq e^{6N\ln (1 + R)}e^{-6Nd_l(\bfx,  \p U)}$. Hence 
\[
\big|\1_DP(\bfx, \bfy)\big|, \big|P\1_D(\bfx, \bfy)\big|\leq C_N(1 + R)^{6N} e^{-3Nd_l(\bfx, \bfy) - 3Nd_l(\bfx, \p U) - 3Nd_l(\bfy, \p U)}.
\]
Therefore by Corollary \ref{cor-1a}, $P \1_V P^\perp \1_D$ and $P^\perp \1_V P \1_D$ are both trace-class. Meanwhile, by \eqref{eq-2b} of Lemma \ref{lemma-1b} and $\1_D(\bfx) \leq (1 + R)^{6N}e^{-6Nd_l(\bfx, \p U)}$, hence
\[
\big|\1_{V^c} P\1_V\1_D(\bfx, \bfy)\big|\leq C_{6N}(1 + R)^{6N}e^{-2Nd_l(\bfx, \bfy) - 2Nd_l(\bfx, \p V) - 2Nd_l(\bfy, \p V) - 6Nd_l(\bfx, \p U)}. 
\]
By \eqref{eq-0v}, 
\[
\sum\limits_{\bfx, \bfy}\big|\1_{V^c} P\1_V\1_D(\bfx, \bfy)\big| \leq C_N(1 + R)^{6N}\sum\limits_{\bfx}e^{-2Nd_l(\bfx, \p V) - 2Nd_l(\bfx, \p U)} \left(\sum\limits_{\bfy} e^{-2Nd_l(\bfx, \bfy)}\right)<+\infty.
\]
Hence $\1_{V^c}P\1_V\1_D$, and similarly $1_{V} P \1_{V^c}\1_D$, are also trace-class. This completes the proof.
\end{proof}

\subsection{Edge conductance} 
Now we can introduce the edge conductance. 

\begin{prop}[Edge conductance]\label{prop-2b}
    Under Assumption \ref{ass-all}, the edge conductance into $V$ 
    \begin{equation}
        \label{eq-3w}
    \sigma_e^{U,V}(H_e)  =  i \Tr\big(\rho'(H_e)[H_e, \1_V]\big)
    \end{equation}
    is well-defined.
\end{prop}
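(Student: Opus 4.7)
The plan is to verify that $\rho'(H_e)[H_e, \1_V]$ is trace-class, so that its trace makes sense. Since transversality of $(U, V)$ together with Corollary \ref{cor-1a} yield trace-class whenever the product has kernel decaying (in the $d_l$ metric) away from both $\p U$ and $\p V$, the goal is to establish such a joint decay.

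First I would show that $\rho'(H_e)$ has kernel decaying near $\p U$. The crucial observation is that $\supp \rho' \subset \MG$ lies in a joint spectral gap of $H_\pm$, so $\rho'(H_+) = \rho'(H_-) = 0$. Via the Helffer--Sj\"ostrand formula with an almost analytic extension $\widetilde{\rho'}$, and inserting these vanishing terms,
\begin{equation*}
\rho'(H_e) = \int_\C \bar\partial \widetilde{\rho'}(z) \Bigl[(H_e - z)^{-1} - \1_U(H_+ - z)^{-1}\1_U - \1_{U^c}(H_- - z)^{-1}\1_{U^c}\Bigr] \frac{dm(z)}{\pi}.
\end{equation*}
A resolvent identity, combined with the decomposition $H_e = \1_U H_+ \1_U + \1_{U^c}H_-\1_{U^c} + E$, expresses the bracketed difference as $(H_e-z)^{-1}$ applied to operators whose kernels are localized near $\p U$: namely $\1_U H_+\1_{U^c}(H_+ - z)^{-1}\1_U$, its $H_-$-analogue, and $E$ times the bulk resolvents. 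The first two decay near $\p U$ by Lemma \ref{lemma-1b}, and the last by Definition \ref{a2}. Propagating these bounds through products of resolvents (Lemma \ref{lemma-1c}) and integrating against $\bar\partial \widetilde{\rho'}(z)$, whose $|\Im z|^N$ vanishing absorbs the singular resolvent bound \eqref{eq-9a} of Lemma \ref{lemma-1a}(a), I would obtain for every $N$:
\begin{equation*}
|\rho'(H_e)(\bfx, \bfy)| \leq C_N e^{-N(d_l(\bfx, \bfy) + d_l(\bfx, \p U) + d_l(\bfy, \p U))}.
\end{equation*}

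Second, because $H_e$ is ESR, Lemma \ref{lemma-1b} (estimate \eqref{eq-2c}) directly gives
\begin{equation*}
|[H_e, \1_V](\bfx, \bfy)| \leq C_N e^{-N(d_l(\bfx, \bfy) + d_l(\bfx, \p V) + d_l(\bfy, \p V))}.
\end{equation*}
Combining these two estimates via Lemma \ref{lemma-1c}, the kernel of $\rho'(H_e)[H_e, \1_V]$ enjoys joint decay in $d_l(\bfx, \bfy)$, $d_l(\cdot, \p U)$ and $d_l(\cdot, \p V)$. Corollary \ref{cor-1a} then yields that $\rho'(H_e)[H_e, \1_V]$ is trace-class, so $\sigma_e^{U,V}(H_e)$ is well-defined.

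The hard part will be the first step. The standard PSR property of $\rho'(H_e)$ (off-diagonal decay from Lemma \ref{lemma-1a}(b)) is insufficient on its own, because $[H_e, \1_V]$ only localizes near $\p V$; without the extra $\p U$-localization of $\rho'(H_e)$, the putative trace would diverge along $\p V$ far from the origin. The identity $\rho'(H_\pm) = 0$ in the gap is precisely what forces $\rho'(H_e)$ to live near the edge $\p U$, and extracting this quantitatively from the resolvent comparison, uniformly in $z$ near $\MG$, is the main technical labor.
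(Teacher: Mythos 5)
Your proposal is correct and follows essentially the same route as the paper: exploit $\rho'(H_\pm)=0$ in the joint gap together with the Helffer--Sj\"ostrand formula and a resolvent identity to show that $\rho'(H_e)$ has kernel decay toward $\p U$ (the analogue of \eqref{eq-9h}), combine with the $\p V$-localization \eqref{eq-1d} of $[H_e,\1_V]$ via Lemma \ref{lemma-1c}, and conclude trace-class from transversality through Corollary \ref{cor-1a}. The only (cosmetic) difference is in the bookkeeping: you compare $(H_e-z)^{-1}$ with $\1_U(H_+-z)^{-1}\1_U+\1_{U^c}(H_--z)^{-1}\1_{U^c}$ in one stroke, so the $\p U$-localized factors ($\1_U H_+\1_{U^c}$, $\1_{U^c}H_-\1_U$, $E$) appear directly, whereas the paper compares $\rho'(H_e)$ with $\rho'(H_\pm)$ separately (its term $G$ decays only toward $U^c$, see \eqref{eq-6a}--\eqref{eq-9d}) and upgrades to $\p U$-decay by applying the cutoffs $\1_U$, $\1_{U^c}$ afterwards.
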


For the rest of the paper, we consider an arbitrary but fixed pair of transverse sets $(U, V)$ in $\R^2$ and omit the superscripts $U, V$ from $\sigma_e^{U,V}(H_e)$ for convenience.

\begin{proof}[Proof of Proposition \ref{prop-2b}] 
1. By definition, $\supp(\rho')\subset \MG\subset \Sigma(H_+)^c\cap \Sigma(H_-)^c$; hence $\rho'(H_\pm) = 0$. Therefore, we have:
\begin{align}
    \rho'(H_e) & = \rho'(H_e) - \rho'(H_+)\1_U - \rho'(H_-)\1_{U^c}
    \\ & = \big(\rho'(H_e) - \rho'(H_+)\big)\1_U + \big(\rho'(H_e) - \rho'(H_-)\big) \1_{U^c}. \label{eq-9e}  
\end{align}

2. We now estimate $\rho'(H_e) - \rho'(H_+)$. By Helffer-Sj\"ostrand formula \eqref{eq: HS_1}, we have
    \begin{equation}
        \label{eq-2o}
    \begin{split}
        \rho'(H_e) - \rho'(H_+) &= \frac{1}{2\pi i}\int_\C \frac{\p \tilde \rho'}{\p \bar{z}} \left((H_e - z)^{-1} - (H_+ - z)^{-1} \right) dz\wedge d\bar{z}\\
        &= -\frac{1}{2\pi i}\int_\C \frac{\p \tilde \rho'}{\p \bar{z}} (H_e - z)^{-1} (H_e - H_+) (H_+ - z)^{-1} dz\wedge d\bar{z}.
    \end{split}
    \end{equation}
    Note that 
    \[
    \begin{split}
            H_e - H_+ &= E + \1_U H_+\1_U + \1_{U^c} H_-\1_{U^c} - H_+\\
            &= E - \big(\1_{U^c} H_+ \1_U + \1_U H_+\1_{U^c}\big) + \1_{U^c} (H_-  - H_+) \1_{U^c} := E - F + G.
    \end{split}    
    \]
And we have the following estimates:
    \begin{enumerate}
        \item By \eqref{eq-9a} in Lemma \ref{lemma-1a}, when $|\Im z|< 1$, we have:
    \begin{equation}
        \label{eq: est_kernel}
    \begin{split}
        \left|(H_+ - z)^{-1}(\bfx, \bfy)\right|, \ \left|(H_e - z)^{-1}(\bfx, \bfy)\right|\leq 2|\Im z|^{-1} e^{-\frac{\nu^4|\Im z|}{32}d_1(\bfx, \bfy)}.
    \end{split}
    \end{equation}
        \item By Lemma \ref{lemma-1b}, 
        \begin{equations}
        |G(\bfx, \bfy)| \leq 2\nu^{-1}e^{-2\nu \big(d_1(\bfx, \bfy) + d_1(\bfx, U^c) + d_1(\bfy, U^c)\big)}.
        \\
        |F(\bfx, \bfy)|\leq 2\nu^{-1}e^{-\frac{2\nu}{3}\big(d_1(\bfx, \bfy) + d_1(\bfx, \p U) + d_1(\bfy, \p U)\big)}.
        \end{equations}
    \item By short-range of $H_\pm$, $H_e$ and definition of $E$ in \eqref{eq-1a}, $E$ is ESR and decays away from $\p U$. By interpolating the two associated bounds, we get:
    \[
    \begin{split}
    |E(\bfx, \bfy)|\leq 2\nu^{-1}e^{-\frac{\nu}{2}\big(d_1(\bfx, \bfy) + d_1(\bfx, \p U) + d_1(\bfy, \p U)\big)}.
    \end{split}
    \]
\end{enumerate}

Note that all the decay rate in the above bounds can be relaxed to $\frac{\nu^4|\Im z|}{32}$ because $\nu < 1$ and $|\Im z| < 1$. Denote $r  := \frac{\nu^4}{32\times 4}$. Since $d_1(\bfx, \p U) \geq d_1(\bfx, U^c)$, we have
\begin{align}
    \big|(H_e - H_+)(\bfx, \bfy)\big| &= \big|(E + F + G)(\bfx, \bfy)\big|\leq 6\nu^{-1}e^{-\frac{\nu}{2}\big(d_1(\bfx, \bfy) + d_1(\bfx, U^c) + d_1(\bfy, U^c)\big)}\\
    &\leq 6\nu^{-1}e^{-r|\Im z|\big(d_1(\bfx, \bfy) + d_1(\bfx, U^c) + d_1(\bfy, U^c)\big)} \label{eq-6a}
\end{align}
By \eqref{eq-2d} in Lemma \ref{lemma-1c} and \eqref{eq-2e}, 
\[
\begin{split}
\big|(H_e - z)^{-1} (H_e - H_+) (H_+ - z)^{-1}(\bfx, \bfy)\big|&\leq \frac{CM_{d_1, r |\Im z|}^2}{|\Im z|^2} e^{-r |\Im z| \big(d_1(\bfx, \bfy) + d_1(\bfx, U^c)  +d_1(\bfy, U^c)\big)}\\
&\leq \frac{C}{|\Im z|^6}e^{-r  |\Im z|\big(d_1(\bfx, \bfy) + d_1(\bfx, U^c)  +d_1(\bfy, U^c)\big)}.
\end{split}
\]
As a result, using almost analyticity \eqref{eq: HS_1},  when $\bfx \neq \bfy$, let $w = |\Im z|$, 
\begin{align}
    \big| \big(\rho'(H_e) - \rho'(H_+)\big)(\bfx, \bfy)|&\leq   \int_{\supp(g) \times [-1,1]} \frac{C |\p_{\bar{z}}\tilde g|}{|\Im z|^6}e^{-r  |\Im z|\big(d_1(\bfx, \bfy) + d_1(\bfx, U^c) + d_1(\bfy, U^c)\big)}|dz\wedge d\bar{z}|\\
    &\leq C_{N}\int_0^\infty w^{N - 6} e^{-w r  \big(d_1(\bfx, \bfy) + d_1(\bfx, U^c) + d_1(\bfy, U^c)\big)} dw \\
    &\leq   C_{N}\big(d_1(\bfx, \bfy) + d_1(\bfx, U^c) + d_1(\bfy, U^c)\big)^{N - 5}.
    \end{align}
 When $\bfx = \bfy$, $\big|(\rho'(H_e) - \rho'(H_+))(\bfx,\bfy)\big|\leq \Vert \rho'(H_e) - \rho'(H_+)\Vert \leq 2\Vert g\Vert_\infty<+\infty$. Therefore, at the cost of potentially increasing $C_N$, we have for any $\bfx, \bfy$: 
\begin{align}
    \big|\big(\rho'(H_e) - \rho'(H_+)\big)(\bfx, \bfy)\big| 
    & \leq C_N\big(1 + d_1(\bfx, \bfy) + d_1(\bfx, U^c) + d_1(\bfy, U^c)\big)^{-3N} \\
& \leq C_Ne^{-N(d_l(\bfx, \bfy) + d_l(\bfx, U^c) + d_l(\bfy, U^c))}.
\label{eq-9d} 
\end{align}

3. Therefore, from \eqref{eq-9d}:
\begin{align}
    \big| \big( (\rho'(H_e) - \rho'(H_+))\1_U \big) (\bfx, \bfy)\big| &\leq C_N e^{-Nd_l(\bfx, \bfy) - Nd_l(\bfx, U^c) - Nd_l(\bfy, U^c)}\1_U(\bfy)\\
    &\leq C_N e^{-Nd_l(\bfx, \bfy) - Nd_l(\bfx, U^c) - Nd_l(\bfy, U^c)}\cdot e^{-Nd_l(\bfy, U)}\\
    &\leq C_N e^{-\frac{N}{2}d_l(\bfx, \bfy) - \frac{N}{2}d_l(\bfx, U^c) - \frac{N}{2}d_l(\bfy, U^c) - \frac{N}{2}d_l(\bfy, U) - \frac{N}{2}d_l(\bfx, U)}\\
    &\leq C_N e^{-\frac{N}{2}d_l(\bfx, \bfy) - \frac{N}{2}d_l(\bfx, \p U) - \frac{N}{2}d_l(\bfy, \p U)}.
\end{align}
The same upper bound holds for $\big(\rho'(H_e)-\rho'(H_-)\big) \1_{U^c}$. Going back to \eqref{eq-9e}, we obtain
\begin{equation}\label{eq-9h}
    \big| \rho'(H_e)(\bfx,\bfy) \big| \leq C_N e^{-\frac{N}{2}d_l(\bfx, \bfy) - \frac{N}{2}d_l(\bfx, \p U) - \frac{N}{2}d_l(\bfy, \p U)}.
\end{equation}
By \eqref{eq-2c} in Lemma \ref{lemma-1b}, for any $N$, there is $C_{N}$ such that 
    \begin{equation}
        \label{eq-1d}
    \big|[H_e, \1_V](\bfx,\bfy)\big| \leq C_{N}e^{-Nd_l(\bfx, \bfy) - Nd_l(\bfx, \p V) - Nd_l(\bfy, \p V)}.
    \end{equation}
Since $U, V$ are transverse sets, by Corollary \ref{cor-1a}, $\rho'(H_e)[H_e, \1_U]$ is trace-class; hence $\sigma_e(H_e)$ is well-defined.
\end{proof}

\section{Equality}\label{sec-4}

\begin{thm}\label{thm-4a}
    Under Assumption \ref{ass-all}, 
    \begin{equation}
        \label{eq-3p}
    \sigma_e^{U,V}(H_e) = \sigma_b^{U, V}(P_+) - \sigma_b^{U, V}(P_-).
    \end{equation}
\end{thm}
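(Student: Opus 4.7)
The plan is to implement the singular-trace strategy outlined in \S\ref{sec-1.4}, following Elgart--Graf--Schenker but tailored to the transverse geometry of $(U,V)$. Fix $R \gg 1$, set $W_R := \{\bfx \in \R^2 : d_1(\bfx, \p U) \leq R\}$, and first recast the edge conductance as
\[
\sigma_e^{U,V}(H_e) \;=\; i\Tr\bigl([\rho(H_e)\1_{W_R^c},\,\1_V]\bigr) \;+\; o_R(1),
\]
where $o_R(1) \to 0$ as $R \to \infty$. To achieve this, I would use the Helffer--Sj\"ostrand representation $\rho'(H_e) = \frac{1}{2\pi i}\int_{\C}\frac{\p\tilde\rho}{\p\bar z}(H_e-z)^{-2}\,dz\wedge d\bar z$ (obtained by an integration by parts in $z$ applied to the formula used in the proof of Proposition \ref{prop-2b}), split $1 = \1_{W_R} + \1_{W_R^c}$, and reorganize via resolvent identities and trace-cyclicity so that $\rho'(H_e)\1_{W_R^c}[H_e,\1_V]$ becomes, modulo trace-class errors vanishing as $R\to\infty$, the commutator $[\rho(H_e)\1_{W_R^c},\1_V]$. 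The complementary piece yields $i\Tr([\rho(H_e)\1_{W_R},\1_V])$, which vanishes by cyclicity: $\rho(H_e)\1_{W_R}$ has a PSR kernel decaying away from $\p U$, so the products $\rho(H_e)\1_{W_R}\1_V$ and $\1_V\rho(H_e)\1_{W_R}$ are both trace-class by Corollary \ref{cor-1a} combined with transversality.

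Next, I would replace $\rho(H_e)$ by its bulk interpolant $\widetilde P_e := P_+\1_U + P_-\1_{U^c}$. Because $\rho\equiv 1$ above $\MG$, $\rho\equiv 0$ below $\MG$, and $\Sigma(H_\pm)\cap \MG = \emptyset$, we have $\rho(H_\pm) = P_\pm$. Rerunning the Helffer--Sj\"ostrand kernel estimate from the proof of Proposition \ref{prop-2b} with $\rho$ in place of $\rho'$, the difference $\rho(H_e) - \widetilde P_e$ has a PSR kernel decaying away from $\p U$. Combined with the decay of $[\,\cdot\,,\1_V]$ near $\p V$ (Lemma \ref{lemma-1b}) and transversality, this sharpens the previous identity to
\[
\sigma_e^{U,V}(H_e) \;=\; i\Tr\bigl([\widetilde P_e\,\1_{W_R^c},\,\1_V]\bigr) \;+\; o_R(1).
\]

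To conclude, I would write $\widetilde P_e = P_- + (P_+ - P_-)\1_U$, distribute the commutator, and reintroduce the missing $\1_{W_R}$-pieces (each having trace zero by the cyclicity argument of the first step) so as to take the limit $R\to\infty$ by dominated convergence on kernels, controlled uniformly via Lemma \ref{lemma-1c}. The resulting trace is then rearranged using the projector identity $P[[P,\1_U],[P,\1_V]] = P\1_V P^\perp \1_U P - P\1_U P^\perp \1_V P$ applied to $P = P_+$ and $P = P_-$, yielding
\[
\sigma_e^{U,V}(H_e) \;=\; -i\Tr\bigl(P_+[[P_+,\1_U],[P_+,\1_V]]\bigr) + i\Tr\bigl(P_-[[P_-,\1_U],[P_-,\1_V]]\bigr),
\]
which is exactly $\sigma_b^{U,V}(P_+) - \sigma_b^{U,V}(P_-)$.

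The main obstacle is that intermediate operators such as $P_\pm\1_U$ or $[\rho(H_e),\1_V]$ are individually not trace-class, so cyclicity of trace is only legitimate after regrouping into genuinely trace-class commutators; every such regrouping must be justified by the kernel estimates of \S\ref{sec-2}, with transversality controlling the summability away from both $\p U$ and $\p V$. I expect this bookkeeping---the rigorous translation of the formal manipulation $\rho'(H_e)[H_e,\1_V]\simeq [\rho(H_e),\1_V]$ into a sequence of legitimate trace identities---to absorb the bulk of the technical work, much as it does in Propositions \ref{prop-2c} and \ref{prop-2b}.
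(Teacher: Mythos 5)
Your overall architecture (localize away from $\p U$, pass from $H_e$ to the bulk data, and identify the result with the geometric bulk conductances) is the same as the paper's, but the pivot identity your plan rests on has a genuine gap. You propose to establish
\begin{equation}
\sigma_e^{U,V}(H_e) \;=\; i\Tr\bigl([\rho(H_e)\1_{W_R^c},\,\1_V]\bigr) + o_R(1),
\end{equation}
and then to manipulate that trace further. But $[\rho(H_e)\1_{W_R^c},\1_V]=[\rho(H_e),\1_V]\1_{W_R^c}$ is not trace-class: its kernel decays away from $\p V$ (and off-diagonally) but gains nothing from the cutoff $\1_{W_R^c}$, and the portion of $\p V$ lying far from $\p U$ carries infinitely many lattice points, so the trace-norm sum diverges. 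Worse, the diagonal kernel of any commutator with $\1_V$ vanishes identically, so any naive regularization of this "trace" by summing diagonal entries or by dominated convergence on kernels returns $0$, not $\sigma_e$. This is exactly the ill-defined "singular trace" of \eqref{eq-0y}--\eqref{eq-6b} that the theorem must avoid writing down; saying the reorganization holds "modulo trace-class errors" cannot be literally true when the target itself is not trace-class. The paper's proof circumvents this by never extracting $[\rho(H_e),\1_V]$ with a far-field cutoff: the cutoff is inserted inside the Helffer--Sj\"ostrand integral in the form $\zeta_R^{*}(H)=\Tr\int \p_{\bar z}\tilde\rho\,(H-z)^{-1}[H,\1_V]\bigl[(H-z)^{-1},\1_{U_R^{*}}\bigr]$, where the commutator $[(H-z)^{-1},\1_{U_R^{*}}]$ supplies decay away from $\p U_R^{*}$ and makes every trace legitimate; the only term resembling your object is $(H_+-z)^{-1}[H_+,\1_V]\1_{U_R^+}(H_+-z)^{-1}$, and its trace is eliminated not by cyclicity but by the Elgart--Graf--Schenker spectral-localization trick (Lemma \ref{lemma-3c}: sandwiching by $P_+$ or $P_+^\perp$ and using analyticity of $(P_+H_+P_+-z)^{-1}$ on $\supp\tilde\rho$ plus Stokes). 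Your plan contains no substitute for this step, and it is precisely where the formal manipulation $\rho'(H_e)[H_e,\1_V]\simeq[\rho(H_e),\1_V]$ would otherwise fail.

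Two smaller points. First, $\rho(H_\pm)=\1-P_\pm=P_\pm^\perp$, not $P_\pm$ (recall $\rho=0$ below the gap and $P_\pm$ projects below $\lambda$), so your bulk interpolant should be $P_+^\perp\1_U+P_-^\perp\1_{U^c}$; this is a fixable sign/complement slip but it propagates through your final rearrangement. Second, your closing operator identity $P[[P,\1_U],[P,\1_V]]=P\1_V P^\perp\1_U P-P\1_U P^\perp\1_V P$ is correct, and the idea of comparing $\rho(H_e)$ with a bulk interpolant via Helffer--Sj\"ostrand does mirror the kernel estimates in the proof of Proposition \ref{prop-2b}; the missing content is the rigorous mechanism (the $\zeta_R^{*}$ quantities and Lemma \ref{lemma-3c}) that converts the singular-trace heuristic into trace identities among genuinely trace-class operators.
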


\begin{proof} 
\begin{figure}[b]
     \centering
     \begin{subfigure}[t]{0.3\textwidth}
         \centering
         \includegraphics[width=\textwidth]{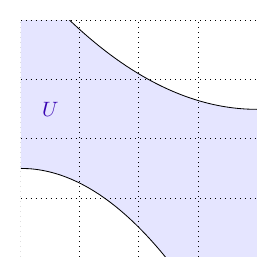}
         \caption{}
     \end{subfigure}
\qquad \qquad \begin{subfigure}[t]{0.3\textwidth}
         \centering
         \includegraphics[width=\textwidth]{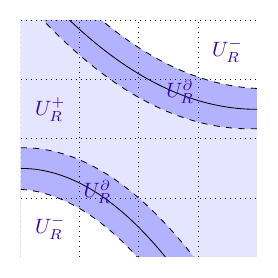}
         \caption{}
     \end{subfigure}
        \caption{The white, light blue, and dark blue region in $(b)$ represents $U_R^-$, $U_R^+$, and $U_R^\p$ respectively.}
        \label{fig-P4c}
\end{figure}
   We follow an approach due to Elgart--Graf--Schenker \cite{EGS05}, tailored here to our geometic setup. Define  
    \[
    U_R^{\p} :=  \{ \bfx : d(\bfx,\p U) < R\}, \qquad U_R^+ := \big(U_R^{\p}\big)^c \cap U, \qquad U_R^- :=\big(U_R^{\p}\big)^c \cap U^c,
   \]
    and correspondingly, 
        \begin{equation}
        \label{eq-4k}
    \zeta_R^{*}(H) := \Tr\left(\int \frac{\p {\tilde \rho}}{\p \bar{z}} (H - z)^{-1}[H, \1_V]\big[(H - z)^{-1}, \1_{U_R^{*}}\big]dz \wedge d\bar{z}\right), \qquad *\in\{\p, +, -\}.
    \end{equation}
    This quantity $\zeta_R^*(H)$ plays an intermediate role in proving equality \eqref{eq-3p}. In fact, we prove \eqref{eq-3p} by proving the following three equalities:
    \begin{equation}
        \label{eq-3u}
    \begin{split}
    \sigma_e(H_e) \xlongequal{\text{Claim~}\ref{claim-4a}} \lim_{R\to \infty} \zeta_R^\p(H_e) \xlongequal{\text{Claim~}\ref{claim-4b}} \lim_{R\to \infty} - \zeta_R^+(H_+) - \zeta_R^-(H_-) \xlongequal{\text{Claim~}\ref{claim-4c}} \sigma_b^{U, V}(P_+) - \sigma_b^{U, V}(P_-).
    \end{split}
    \end{equation}
    As explained in \S\ref{sec-1e}, to prove Claim \ref{claim-4a}, we inject a cutoff near $\p U$ and justify that terms away from $\p U$ are negligible. Claim \ref{claim-4b} reduces this near-$\p U$ edge quantity, $\zeta_R^\p(H_e)$, to the difference of two bulk quantities, $\zeta_R^\pm(H_\pm)$. Claim \ref{claim-4c} deform the bulk quantities, $\zeta_R^\pm(H_\pm)$ (in the limit), to the geometric bulk conductances, $\sigma_b^{U, V}(P_\pm)$. 
    
    Now we state and prove these three claims.

    \begin{claim}\label{claim-4a}
    \[\sigma_e(H_e) = \lim\limits_{R\to \infty} \zeta_R^\p(H_e).\]
\end{claim}
% \end{proof}
\begin{proof}[Proof of Claim \ref{claim-4a}]
    Since $\R^2 = U_R^\p \sqcup U_R^+\sqcup U_R^-$, we split $\sigma_e(H_e)$ into two parts:
    \begin{equation}
        \label{eq: I_and_II}
        2\pi i [H_e, \1_V] \rho'(H_e) = 2\pi i [H_e, \1_V]\big( \1_{U_R^\p} + (\1_{U_R^+} + \1_{U_R^-}) \big)\rho'(H_e):= \oI + \oII.
    \end{equation}    
    We show that $\oII$ is trace-class and that $\Tr(\oII)\to 0$ when $R\to +\infty$. By \eqref{eq-9h}, we have 
        \begin{align}
        \big|\1_{U_R^\pm} \rho'(H_e)(\bfx, \bfy)\big|&\leq C_N e^{- 2Nd_l\left(\bfx, (U_R^\p)^c\right) -2Nd_l(\bfx, \bfy) -2Nd_l(\bfx, \p U) - 2Nd_l(\bfy, \p U) }\\
        &\leq C_N (1 + R)^{-N} e^{-Nd_l(\bfx, \bfy) -Nd_l(\bfx, \p U) - Nd_l(\bfy, \p U) }, \label{eq-9g}
        \end{align}
        where we used $\big|\1_{U_R^\pm}(\bfx)\big|\leq e^{-2Nd_l \left(\bfx, (U_R^\p)^c \right)}$ and $d_l\left(\bfx, (U_R^\p)^c\right) + d_l(\bfx, \p U) \geq \ln (1 + R)$. Because of \eqref{eq-1d} and \eqref{eq-9g}, Corollary \ref{cor-1a} implies that $\oII$ is trace-class. We now estimate its trace. By \eqref{eq-1d} and \eqref{eq-2d} in Lemma \ref{lemma-1c}, we get 
        \begin{align}
        |\text{II}(\bfx, \bfy)| & \leq C_N (1 + R)^{-N} e^{-\frac{N}{4}d_l(\bfx, \bfy) -\frac{N}{4}d_l(\bfx, \p U) - \frac{N}{4}d_l(\bfy, \p U) - \frac{N}{4}d_l(\bfx, \p V)- \frac{N}{4}d_l(\bfy, \p V)} 
        \end{align}
        As a result, 
        \[
        \big| \Tr(\oII) \big| \leq \sum\limits_{\bfx} |\oII(\bfx, \bfx)|\leq C_N  (1 + R)^{-N} \cdot \sum_{\bfx} e^{- \frac{N}{2} \ln \Psi_{U,V}(\bfx)}.
        \]
        By \eqref{eq-0v}, the sum is finite for $N> 4/c$. It follows that $\Tr(\oII)\to 0$ as $R \rightarrow \infty$. 
        
        It remains to prove $\Tr(\oI) = \zeta_R^\p(H_e)$. Without loss of generalities, $\widetilde {\p_z \rho} = \p_z \tilde \rho$ -- see e.g. \cite[(2.3.5)]{D21}. Combining the Helffer--Sj\"ostrand formula \eqref{eq: HS_1} with integration by parts, we obtain 
    \begin{align}
        \oI = 2\pi i[H_e, \1_V] \1_{U_R^\p} \cdot \rho'(H_e) &= [H_e, \1_V] \1_{U_R^\p} \cdot \int \frac{\p {\tilde\rho}'}{\p \bar{z}}   (H_e - z)^{-1} dz\wedge d\bar{z}\\
        &= [H_e, \1_V] \1_{U_R^\p}  \cdot \int \frac{\p^2 {\tilde\rho}}{\p z\p \bar{z}}  (H_e - z)^{-1} dz \wedge d\bar{z}\\
        &= - \int \frac{\p {\tilde\rho}}{\p \bar{z}}  [H_e, \1_V] \1_{U_R^\p} \cdot (H_e - z)^{-2} dz \wedge d\bar{z}.\label{eq-6c}
    \end{align}   
   
    If $\bfx\in U_R^\p$, then $d_l(\bfx, \p U)\leq \ln(1 +R)$. Thus for any $N$, 
    \begin{equation}
        \label{eq-0n}
    |\1_{U_R^\p}(\bfx)|\leq  e^{N\big(\ln(1 + R)- d_l(\bfx, \p U)\big)}\leq (1 + R)^N e^{-Nd_l(\bfx, \p U) }.
    \end{equation}
    Combining with \eqref{eq-1d} and using Corollary \ref{cor-1a}, we see that $[H_e, \1_V] \1_{U_R^\p} $ is trace-class. Using $\| (H-z)^{-1} \| \leq |\Im z|^{-1}$, we have:
    \begin{equation}
      \left\|  [H_e, \1_V] \1_{U_R^\p} (H_e - z)^{-2} \right\|_1 \leq \dfrac{C}{|\Im z|^2}.
    \end{equation}
    Therefore, we can take the trace on both sides of \eqref{eq-6c} and switch trace and integral (using almost analyticity of $\tilde\rho$). This yields:
    \begin{equation}
        \Tr (\oI) =  - \int \frac{\p {\tilde\rho}}{\p \bar{z}}  \Tr \big( [H_e, \1_V] \1_{U_R^\p} (H_e - z)^{-2} \big) dz \wedge d\bar{z}.
    \end{equation}
    By \eqref{eq-1d}, \eqref{eq-0n}, and Corollary \ref{cor-1a}, the operators below are all trace-class. Thus we can apply cyclicity \eqref{eq-0l} to get
    \[
    \begin{split}
    \Tr(\oI) &  = -\int \frac{\p {\tilde\rho}}{\p \bar{z}} \Tr\left((H_e - z)^{-1} [H_e , \1_V]  \1_{U_R^\p} (H_e - z)^{-1}\right) dz \wedge d\bar{z}\\
    & = -\int \frac{\p {\tilde\rho}}{\p \bar{z}} \Tr\left( (H_e - z)^{-1}[H_e, \1_V](H_e - z)^{-1}  \1_{U_R^\p} \right) dz \wedge d\bar{z}\\
    & \ \ \ + \int \frac{\p {\tilde\rho}}{\p \bar{z}} \Tr\left((H_e - z)^{-1}[H_e, \1_V]\big[(H_e - z)^{-1},  \1_{U_R^\p} \big]\right)dz \wedge d\bar{z}
    \\ 
    & = \int \frac{\p {\tilde\rho}}{\p \bar{z}} \Tr\left( \big[(H_e - z)^{-1}, \1_V\big] \1_{U_R^\p} \right) dz \wedge d\bar{z} + \zeta_R^\p(H_e)\\
    &= \Tr\big(\big[\rho(H_e), \1_V\big] \1_{U_R^\p} \big) + \zeta_R^\p(H_e),
    \end{split}
    \]
 where we used Helffer-Sj\"ostrand formula \eqref{eq: HS_1} for the last equality. Finally, 
 \[
    \begin{split}
         \Tr\big(\big[\rho(H_e), \1_V\big] \1_{U_R^\p} \big) &= \Tr\big(\1_{V^c} \rho(H_e) \1_V  \1_{U_R^\p}
        \big)  - \Tr\big(\1_V  \rho(H_e)\1_{V^c} \1_{U_R^\p} \big)\\
        &= \Tr\big(\1_V \1_{V^c} \rho(H_e)  \1_{U_R^\p}\big)   - \Tr\big(\1_V \1_{V^c} \rho(H_e) \1_{U_R^\p}  \big) = 0
    \end{split}
    \]    
    as $\1_{V^c}\rho(H_e)\1_V  \1_{U_R^\p} $,  $\1_V\rho(H_e)\1_{V^c}  \1_{U_R^\p} $, and $\1_V\1_{V^c}\rho(H_e) \1_{U_R^\p}$ are all trace-class by Lemma \ref{lemma-1a}(b) and Corollary \ref{cor-1a}. Thus we get $ \Tr(\oI) = \zeta_R^\p(H_e)$; this completes the proof of Claim \ref{claim-4a}.
\end{proof}

\begin{claim}\label{claim-4b} We have:
\begin{equation}
    \label{eq-3s}
    \lim\limits_{R\to \infty} \big(\zeta_R^\p(H_e) +\zeta_R^+(H_+) + \zeta_R^-(H_-)\big) = 0.
\end{equation}
\end{claim}

\begin{proof}[Proof of Claim \ref{claim-4b}]
    Recall that $ \1_{U_R^\p} + \1_{U_R^+}  +\1_{U_R^-} = 1 $. Since $\big[1, (H_e - z)^{-1}\big] = 0$, from the definition \eqref{eq-4k}, we get $\zeta_R^\p(H_e) + \zeta_R^+(H_e) + \zeta_R^-(H_e) =0$. 
    To show \eqref{eq-3s}, it is enough to show 
    \begin{equation}
        \label{eq-3t}
    \lim_{R\to\infty} |\zeta_R^\pm(H_e) - \zeta_R^\pm(H_\pm)|= 0.
    \end{equation}
    Denote $A_R^+(H) = (H - z)^{-1}[H, \1_V]\big[(H - z)^{-1}, \1_{U_R^+} \big]$. We have:
    \begin{align}
        \big|\zeta_R^\pm(H_e) - \zeta_R^\pm(H_\pm)\big| & = \left| \Tr \int\frac{\p {\tilde\rho}}{\p \bar{z}} \left(A_R^+(H_e) - A_R^+(H_+)\right) dz \wedge d\bar{z} \right|  \\
        & \leq \int \left|\frac{\p {\tilde\rho}}{\p \bar{z}} \right| \left\Vert \left(A_R^+(H_e) - A_R^+(H_+)\right) \right\Vert_1 |dz \wedge d\bar{z}|. \label{eq-6d}
    \end{align}

    We rewrite
    \begin{equation}
        \label{eq-4b}
    \begin{split}
        A_R^+(H_e) - A_R^+(H_+) =& \left((H_e - z)^{-1} - (H_+ - z)^{-1}\right)[H_e, \1_V]\big[(H_e - z)^{-1}, \1_{U_R^+} \big] \\
        +& (H_+ - z)^{-1}[H_e - H_+, \1_V]\big[(H_e - z)^{-1}, \1_{U_R^+}  \big]\\
        +& (H_+ - z)^{-1}[H_+, \1_V]\big[(H_e - z)^{-1} - (H_+ - z)^{-1}, \1_{U_R^+}  \big]\\
        =& (H_e - z)^{-1}(H_+ - H_e)(H_+ - z)^{-1}[H_e, \1_V]\big[(H_e - z)^{-1}, \1_{U_R^+} \big] \\
        +& (H_+ - z)^{-1}[H_e - H_+, \1_V]\big[(H_e - z)^{-1}, \1_{U_R^+}  \big]\\
        +& (H_+ - z)^{-1}[H_+, \1_V]\big[(H_e - z)^{-1}(H_+ - H_e)(H_+ - z)^{-1}, \1_{U_R^+}  \big]\\
        =:& A_1 + A_2 + A_3.
    \end{split}
    \end{equation}

    Note that:
    \begin{enumerate}
   \item By \eqref{eq-2c} and using that $H_\pm$, $H_e$ are PSR, for any $N$, there exists $C_N$ with
    \[
    \big|[H_*, \1_V](\bfx, \bfy)\big|\leq C_N e^{-8Nd_l(\bfx, \bfy) - 8Nd_l(\bfx, \partial V) - 8Nd_l(\bfx, \partial V)},\qquad *\in \{\pm,\text{e}\}.
    \]
    \item By an argument similar to that leading to \eqref{eq-6a}, for any $N$, there is $C_N$ such that 
    \[
    \big|(H_e - H_+)(\bfx, \bfy)\big|\leq C_Ne^{-8Nd_l(\bfx, \bfy) - 8Nd_l(\bfx, U^c) - 8d_l(\bfy, U^c)}. 
    \]
    \item Recall that $\1_{U_R^+} (\bfx) \leq e^{-8Nd_l(\bfx, U_R^+)}$. 
    \end{enumerate}
    
    Note that each $A_j$, $j\in \{1,2,3\}$ contains $H_e - H_+$, $[H_*, \1_V]$ and $\1_{U_R^+} $. By $(1)$, $(2)$, $(3)$ above, the control of the resolvent norm provided by Lemma \ref{lemma-1a}(a) and \eqref{eq-2d} in Lemma \ref{lemma-1c}, we have 
    \[
    \begin{split}
        |A_j(\bfx, \bfy)|&\leq \frac{C_N}{|\Im z|^3} e^{-2N\big(d_l(\bfx, \bfy) + d_l(\bfx, U^c) + d_l(\bfy, U^c) + d_l(\bfx, \p V) + d_l(\bfy, \p V) + d_l(\bfx, U_R^+) + d_l(\bfy, U_R^+)\big)}\\
        &\leq \frac{C_N}{|\Im z|^3}(1 + R)^{-2N} e^{-N(d_l(\bfx, \bfy)+ d_l(\bfx, \partial U) + d_l(\bfx, \p V) + d_l(\bfy, \p U) + d_l(\bfy, \p V))}
        \end{split}
    \]
    where we used $d_l(\bfx, U^c) + d_l(\bfx, U_R^+) \geq d_l(U^c, U_R^+)\geq  \ln(1 + R)$ and $d_l(\bfx, U^c) + d_l(\bfx, U_R^+) \geq d_l(\bfx, \partial U)$.  Therefore,
    \[
    \big|\big(A_R^+(H_e) - A_R^+(H_+)\big)(\bfx, \bfy)\big|\leq \frac{C_N}{|\Im z|^3} (1 + R)^{-2N} e^{-N(d_l(\bfx, \bfy)+ d_l(\bfx, \partial U) + d_l(\bfx, \partial V) + d_l(\bfy, \p U) + d_l(\bfy, \p V))}.
    \]
    Since $U, V$ are transverse sets, when $N \geq 2/c$, by \eqref{eq-0v}, we obtain:
    \[
    \left \Vert A_R^+(H_e) - A_R^+(H_+)\right \Vert_1 \leq \sum\limits_{\bfx, \bfy}\big|\big(A_R^+(H_e) - A_R^+(H_+)\big)(\bfx, \bfy)\big|\leq \frac{C_N}{|\Im z|^3} (1 + R)^{-2N}.
    \]
    By almost analyticity, 
    \[
    \int \left|\frac{\p{\tilde\rho}}{\p \bar{z}} \right| \left\Vert \left(A_R^+(H_e) - A_R^+(H_+)\right) \right\Vert_1 |dz \wedge d\bar{z}| \leq C_N (1+R)^{-2N}.
    \]
    It suffices to go back to \eqref{eq-6d} and take $R \rightarrow \infty$ to obtain \eqref{eq-3t}. This completes the proof.
\end{proof}

\begin{claim}\label{claim-4c}
    For any $R>0$, 
    \begin{equation}
        \label{eq-3q}
    \zeta_R^\pm(H_\pm) = -\sigma_b^{U_R^\pm,V}(P_\pm) = \mp\sigma_b^{U,V}(P_\pm).
    \end{equation}
\end{claim}
\begin{proof}[Proof of Claim \ref{claim-4c}]
Note that $U_R^+ \Delta U\subset U_R^\p$ and $U_R^- \Delta U^c\subset U_R^\p$. By Proposition \ref{prop-2c}, 
\[
\begin{split}
&\sigma_b^{U_R^+, V}(P_+) = \sigma_b^{U, V}(P_+),\qquad \sigma_b^{U_R^-, V}(P_-) = \sigma_b^{U^c, V}(P_-) = -\sigma_b^{U, V}(P_-). 
\end{split}
\]
Hence we get the second equality in \eqref{eq-3q}. WLOG we prove $\zeta_R^+(H_+) = -\sigma_b^{U_R^+, V}(P_+)$ below. Recall that 
\begin{equation}
\begin{split}
    \zeta_R^{+}(H_+) := \Tr\left(\int_\C \frac{\p {\tilde \rho}}{\p \bar{z}} (H_+ - z)^{-1}[H_+, \1_V]\big[(H_+ - z)^{-1}, \1_{U_R^{+}}\big]dz \wedge d\bar{z}\right)=:\Tr(A).
\end{split}
\end{equation}
Expand $[(H_+ - z)^{-1}, \1_{U_R^{+}}]$ and use $(H_+ - z)^{-1}[H_+, \1_V](H_+ - z)^{-1} = -\big[(H_+ - z)^{-1}, \1_V\big]$, we get 
\[
\begin{split}
A &= \int_\C \frac{\p {\tilde \rho}}{\p \bar{z}}\left(-\big[(H_+ - z)^{-1}, \1_V\big] \1_{U_R^+} - (H_+ - z)^{-1}[H_+, \1_V]\1_{U_R^+}(H_+ - z)^{-1}\right)dz \wedge d\bar{z}\\
&=: -A_1 - A_2.
\end{split}
\]
By the Helffer-Sj\"ostrand formula \eqref{eq: HS_1}, we have 
\begin{equation}
    \label{eq-3r}
A_1 = \big[\rho(H_+), \1_V\big]\1_{U_R^+}.
\end{equation}

To conlude, we adapt a trick from \cite{EGS05}.

\begin{lemma}\label{lemma-3c}
    We have $P_+ A_2 P_+ = P_+^\perp A_2 P_+^\perp = 0$.
\end{lemma}

\begin{proof}
By functional calculus,
    \[
    \begin{split}
            P_+A_2 P_+ &= \int_\C \frac{\p {\tilde \rho}}{\p \bar{z}}P_+(H_+ - z)^{-1}[H_+, \1_V]\1_{U_R^{+}}(H_+ - z)^{-1} P_+ dz \wedge d\bar{z}\\
            &= \int_\C \frac{\p {\tilde \rho}}{\p \bar{z}}P_+(P_+H_+P_+ - z)^{-1} [H_+, \1_V]\1_{U_R^{+}}(P_+H_+P_+ - z)^{-1} P_+ dz \wedge d\bar{z}
            \\
            & =:\int_\C \frac{\p {\tilde \rho}}{\p \bar{z}} K(z) dz\wedge d\bar{z}.
    \end{split}   
    \]
    Since $\lambda\in \MG\subset \Sigma(H_+)^c$ while $\rho\in \ME_\MG$, $\tilde \rho$ is the almost analytic extension of $\rho$, $\Sigma(P_+H_+P_+)\cap \supp(\tilde\rho) = \emptyset$. Hence $(P_+H_+P_+ - z)^{-1}$, and thus the whole integrand, denoted as $K(z)$, is analytic on $\supp(\tilde\rho)$. Choose $R$ large enough such that $\supp(\tilde \rho)\subset \Dd_R(0)$. By Stokes theorem, 
    \[
    \int_\C \frac{\p {\tilde \rho}}{\p \bar{z}} K(z) dz\wedge d\bar{z} = \oint_{\p \Dd_R(0)} \tilde \rho(z, \bar{z})K(z) dz = 0. 
    \]
    A similar computation gives 
    \[
    \begin{split}
        P_+^\perp A_2P_+^\perp &= \int_\C \frac{\p (1 - \tilde \rho)}{\p \bar{z}} P_+^\perp (P_+^\perp H_+P_+^\perp  - z)^{-1}[H_+, \1_V]\1_{U_R^{+}} (P_+^\perp H_+P_+^\perp  - z)^{-1} P_+ dz\wedge d\bar{z}\\
        &=:\int_\C \frac{\p (1 - \tilde \rho)}{\p \bar{z}} J(z) dz \wedge d\bar{z}\\
        &= \oint_{\p \Dd_R(0)} (1 - \tilde\rho)(z, \bar{z}) J(z) dz = \oint_{\p \Dd_R(0)} J(z) dz.
    \end{split}
    \]
    However, when $\bfz\in \p \Dd_R(0)$, 
    \[
    \Vert J(z) \Vert \leq C d\big(z, \Sigma(H_+)\big)^{-2} \leq C(R - \Vert H \Vert)^{-2}. 
    \]
    Hence 
    \[
    \Vert P_+^\perp A_2 P_+^\perp \Vert \leq \frac{CR}{(R - \Vert H_+\Vert)} \to 0, \qquad R\to \infty. 
    \]
    This completes the proof. 
\end{proof}

Now write $A = P_+^2 A + (P_+^\perp)^2A$. Since $A$ is trace-class, Lemma \ref{lemma-3c} and cyclicity \eqref{eq-0l} yields:
\[
\begin{split}
    \Tr(A) &= \Tr\big(P_+^2A + (P_+^\perp)^2 A\big) = \Tr\big(P_+AP_+ + P_+^\perp A P_+^\perp\big)\\
    &= -\Tr\big(P_+A_1P_+ + P_+^\perp A_1 P_+^\perp\big). 
\end{split}
\]
Using $A_1 = [\rho(H_+), \1_V]\1_{U_R^+}$ and $\rho(H_+) = 1 - P_+ = P_+^\perp$, we see 
\[
\begin{split}
    P_+A_1P_+ + P_+^\perp A_1 P_+^\perp &= P_+[P_+^\perp, \1_V]\1_{U_R^+}P_+ + P_+^\perp [P_+^\perp, \1_V]\1_{U_R^+}P_+^\perp\\
    &= -P_+\1_VP_+^\perp \1_{U_R^+}P_+ + P_+^\perp \1_V \1_{U_R^+}P_+^\perp - P_+^\perp \1_V P_+^\perp \1_{U_R^+}P_+^\perp\\
    &= -P_+\1_VP_+^\perp \1_{U_R^+}P_+ + P_+^\perp \1_V P_+ \1_{U_R^+}P_+^\perp
\end{split}
\]
where 
\[
\begin{split}
    \Tr\big(P_+^\perp \1_V P_+ \1_{U_R^+} P_+^\perp\big) &= \Tr\big(P_+^\perp \1_V P_+ \cdot P_+\1_{U_R^+} P_+^\perp\big) = \Tr\big([P_+^\perp, \1_V]P_+ \cdot P_+[\1_{U_R^+}, P_+^\perp]\big)\\
    &= \Tr\big(P_+[\1_{U_R^+}, P_+^\perp][P_+^\perp, \1_V]P_+\big) = \Tr\big(P_+\1_{U_R^+} P_+^\perp \1_V P_+\big).
\end{split}
\]
Thus 
\[
\begin{split}
    \zeta_R^+(H_+) &= \Tr(A) = -\Tr\big(P_+A_1P_+ + P_+^\perp A_1 P_+^\perp\big) = \Tr\big(P_+\1_{V} P_+^\perp \1_{U_R^+} P_+-P_+\1_{U_R^+}P_+^\perp \1_V P_+\big)\\
    &=  \Tr\big(P_+\1_{U_R^+} P_+\1_V P_+ - P_+\1_VP_+\1_{U_R^+}P_+ \big).
\end{split}\]
On the other hand, by direct computation, 
\[
\sigma_b^{U_R^+, V}(P_+) = -\Tr\big(P_+\big[[P_+, \1_{U_R^+}],[P_+, \1_V]\big]\big) = \Tr\big(P_+\1_V P_+ \1_{U_R^+} P_+ - P_+\1_{U_R^+}P_+\1_VP_+ \big). 
\]
Thus $\zeta_R^+(H_+) = -\sigma_b^{U_R^+, V}(P_+)$. 
\end{proof}
\end{proof}

\section{Intersection number between simple transverse sets}\label{sec-5}
In this section, we define an intersection number $\II_{U, V}$ between simple transverse sets $U, V$ (see Definition \ref{def-8}). This integer will emerge when expressing geometric bulk conductances in terms of Hall conductances: see  Theorem \ref{thm:1} below.

\begin{definition}[Simple path]\label{def:2} A continuous map $\gamma : \R \rightarrow \R^2$ is called a simple path if:
\begin{itemize}
    \item[(i)] $\gamma$ is injective and proper (the preimage of a compact set is compact);
    \item[(ii)] There exists a discrete closed set $S \subset \R$ such that $\gamma$ is smooth on $\R \setminus S$; 
    \item[(iii)] For all $t \in \R$, the left and right derivatives of $\gamma$ at $t$ exist and have norm $1$.
\end{itemize}
If (ii) and (iii) hold but $\gamma$ is periodic and injective over its period, we call $\gamma$ a simple loop.
\end{definition}

\begin{prop}\label{prop:4} Let $\gamma_1, \gamma_2$ be two simple paths with ranges $\Gamma_1, \Gamma_2$, such that $\Gamma_1\Delta \Gamma_2$ is compactly supported. Let $A_1$ be a connected component of $\R^2 \setminus \Gamma_1$. Then there exists a unique connected component $A_2$  of $\R^2 \setminus \Gamma_2$ such that $A_1 \Delta A_2$ is bounded.
\end{prop}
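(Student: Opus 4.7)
The plan is to pass to the one-point compactification $S^2 = \R^2 \cup \{\infty\}$. Since each $\gamma_i : \R \to \R^2$ is proper and injective, it extends to a continuous injective map $\tilde\gamma_i : S^1 \to S^2$ (identifying $S^1$ with the one-point compactification of $\R$ and sending the added point to $\infty$); by compactness of $S^1$ and Hausdorffness of $S^2$, $\tilde\gamma_i$ is a homeomorphism onto its image $\tilde\Gamma_i = \Gamma_i \cup \{\infty\}$, which is therefore a Jordan curve in $S^2$. The Jordan curve theorem combined with Jordan--Schoenflies then shows that $S^2 \setminus \tilde\Gamma_i$ has exactly two connected components, each homeomorphic to an open disc. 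Removing the point $\infty$ (which lies on $\tilde\Gamma_i$, not in the components), one recovers that $\R^2 \setminus \Gamma_i$ has exactly two connected components, both unbounded in $\R^2$, in natural bijection with the components of $S^2 \setminus \tilde\Gamma_i$.

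Next, I would exploit the assumption that $\Gamma_1 \Delta \Gamma_2$ is compact: fixing $R_0$ with $\Gamma_1 \Delta \Gamma_2 \subset \Dd_{R_0}(0)$, the open set $N := S^2 \setminus \overline{\Dd_{R_0}(0)}$ is a neighborhood of $\infty$ in $S^2$ on which $\tilde\Gamma_1$ and $\tilde\Gamma_2$ agree. Using Jordan--Schoenflies to locally straighten $\tilde\Gamma_1$ near $\infty$, I shrink $N$ to a smaller open neighborhood $N'$ of $\infty$ in which $N' \cap \tilde\Gamma_1 = N' \cap \tilde\Gamma_2$ is a single simple arc through $\infty$ and $N' \setminus \tilde\Gamma_i$ splits into exactly two open connected ``half-discs'' $B^+, B^-$. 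Each $B^\pm$ is connected and disjoint from $\tilde\Gamma_i$, so it is contained in exactly one component of $S^2 \setminus \tilde\Gamma_i$. This yields a natural bijection between components of $\R^2 \setminus \Gamma_1$ and components of $\R^2 \setminus \Gamma_2$: two components correspond precisely when they contain the same half-disc $B^+$ or $B^-$.

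Given $A_1$, define $A_2$ to be its partner under this bijection. Then $A_1 \cap N' = A_2 \cap N' \in \{B^+, B^-\}$, so $A_1 \Delta A_2 \subset \R^2 \setminus N'$, which is bounded. For uniqueness, suppose $A_2'$ is another component of $\R^2 \setminus \Gamma_2$ with $A_1 \Delta A_2'$ bounded. Then $A_2 \Delta A_2' \subset (A_1 \Delta A_2) \cup (A_1 \Delta A_2')$ is bounded; but $A_2$ and $A_2'$ are disjoint connected components of $\R^2 \setminus \Gamma_2$, both unbounded, so $A_2 \Delta A_2' = A_2 \cup A_2'$ is unbounded, a contradiction. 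The main technical point is the local straightening of $\tilde\Gamma_i$ near $\infty$, which is where one must invoke Jordan--Schoenflies for a purely topological Jordan curve; since only the topological structure is needed, the finitely many non-smooth points allowed by Definition~\ref{def:2}(ii) are harmless.
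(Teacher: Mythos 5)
Your proposal is correct and follows essentially the same route as the paper's own proof (Lemmas \ref{lem:4}--\ref{lem:5}): compactify to $S^2=\R^2\cup\{\infty\}$ so that $\tilde\Gamma_i$ becomes a Jordan curve through $\infty$, invoke the Jordan curve and Schoenflies theorems to control a neighborhood of $\infty$ where the two curves coincide, match components through that neighborhood, and get uniqueness from the unboundedness of both complementary components. The one step worth making explicit is that the half-discs $B^+$ and $B^-$ lie in \emph{different} components of $S^2\setminus\tilde\Gamma_2$ as well (your straightening only gives this for $\tilde\Gamma_1$); this follows in one line because each complementary component of a Jordan curve has the whole curve, hence $\infty$, in its boundary, so $N'$ meets both components of $S^2\setminus\tilde\Gamma_2$ and therefore one of $B^\pm$ sits in each.
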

\begin{figure}[ht] % Here, [h] stands for "here", indicating to place the figure here if possible
    %\centering % Center the figure
\floatbox[{\capbeside\thisfloatsetup{capbesideposition={right,center},capbesidewidth=0.6\textwidth}}]{figure}[\FBwidth]
    {\caption{Simple path splits $\R^2$ into two halves. When two simple paths $\Gamma_1$, $\Gamma_2$ differ by a compact set, given $A_1$ connected component of $\R^2 \setminus \Gamma_1$, there is a connected component $A_2$ of $\R^2 \setminus \Gamma_2$ such that $A_1$, $A_2$ differ by a compact set. }} % Caption for the figure
    {\includegraphics[width=0.9\linewidth]{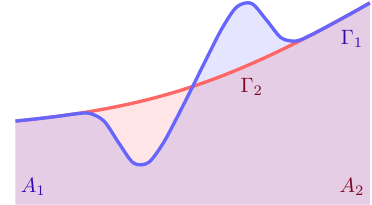} \label{fig-prop}}
\end{figure}

Proposition \ref{prop:4} has a flavor reminiscent of the Jordan curve Theorem. While visually obvious (see Figure \ref{fig-prop}), its proof requires some work. We give a sketch here and defer the proof to Appendix \ref{app:A}:
\begin{itemize}
    \item We show that if $\gamma_1$ is a proper injective curve, then $\R^2 \setminus \gamma_1(\R)$ has two connected components (Lemma \ref{lem:4});
    \item We then justify that a perturbation $\Gamma_2$ of $\Gamma_1$ on a compact set perturbs the two connected components by a bounded set only (Lemma \ref{lem:5});
    \item We unambiguously define the left side %{\xz I think the fact that the ``left'' of the curve differs by compact sets is only true under appropriate orientation, am I right? In a very early version, your condition is $\gamma_1 - \gamma_2$ is compact. We changed it to $\Gamma_1\Delta \Gamma_2$ is compact. Hence we only know for any component $A_1$, there is a component $A_2$ such that $A_1 \Delta A_2$ is compact. $A_1$, $A_2$ doesn't necessarily be left of $\gamma_i$.}
    of a simple path (Proposition \ref{prop:1}) and justify that the left components of $\R^2 \setminus \gamma_1(\R)$, $\R^2 \setminus \gamma_2(\R)$ have bounded symmetric difference (see Figure \ref{fig:P_left} for a pictorial representation and Proposition \ref{prop:1} for a proper definition of the left of a simple curve).
\end{itemize}
\begin{figure}[ht] % Here, [h] stands for "here", indicating to place the figure here if possible
    %\centering % Center the figure
\floatbox[{\capbeside\thisfloatsetup{capbesideposition={right,center},capbesidewidth=0.6\textwidth}}]{figure}[\FBwidth]
    {\caption{At each point on $\gamma$, we can find small enough disk split in two by $\gamma$ and define the left side of $\gamma$ as the ``conventional" left side when travelling along the path $\gamma$.}} % Caption for the figure
    {\includegraphics[width=0.85\linewidth]{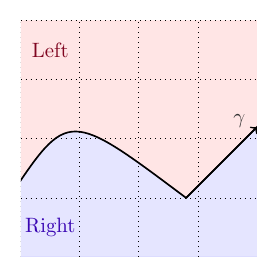} \label{fig:P_left}}
\end{figure}

\begin{definition}[Simple set]\label{def-8} An open subset $A$ of $\R^2$ is simple if it is connected and its boundary is the range of a simple path or of a simple loop. \end{definition}

Associated with the distinction between simple paths and simple loops, there are two types of simple sets: those with bounded boundaries (given by simple loops) and those with unbounded boundaries (given by simple paths). By Jordan's theorem, those with bounded boundaries are bounded or have bounded complements. 

\subsection{Intersection number}\label{sec-5.2}

\begin{definition}[Intersection number for simple sets]\label{def:5} Let $U,V$ be transverse sets such that $U$ is simple. If $\p U$ is unbounded, let $\gamma$ be a simple path with range $\p U$, such that $U$ lies to the left of $\gamma$ (see Figure \ref{fig:P_left}  for a pictorial representation and Proposition \ref{prop:1} for a proper definition). We define the intersection number between $U, V$ as:
\begin{equation}\label{eq2-6e}
\II_{U, V}  \de \II_+(U,V) - \II_-(U,V), \qquad \II_\pm(U,V) \de \lim_{t \rightarrow \pm \infty}   \1_V \circ \gamma(t).
\end{equation}
If $\p U$ is bounded, we set $\II_{U, V}  = 0$.
\end{definition}

We show in Appendix \ref{app:B} that $\II_{U, V}$ is correctly defined and independent of the choice of curve $\gamma$ describing its boundary. In \S\ref{sec-7.2} we extend Definition \ref{def:5} to more general transverse sets $U,V$, such that (loosely speaking) $\p U$ is made of well-separated curves.

\section{Bulk index computation}\label{sec-6}

\subsection{Theorem \ref{thm:1} for transverse simple sets}

The main theorem here is:

\begin{theorem}\label{thm:1} Under Assumption \ref{ass-all}, we have
\begin{equation}
\sigma_b^{U, V}(P_\pm) = \II_{U, V} \cdot \sigma_b(P_\pm).
\end{equation}
\end{theorem}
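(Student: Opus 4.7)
The plan is to apply the robustness of the geometric bulk conductance (Proposition \ref{prop-2c}) to deform $(U,V)$ into a standard half-plane configuration and then pass to the limit. I split into cases on the value of $\II_{U,V} \in \{-1, 0, 1\}$.

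In the case $\II_{U,V} = 1$, the parametrization $\gamma$ of $\p U$ lies outside $V$ for $t \to -\infty$ and inside $V$ for $t \to +\infty$, so by continuity $\gamma$ crosses $\p V$ at some point. First, I translate $(U,V)$ (using Proposition \ref{prop-2c} twice to see that such a translation is harmless) so that this crossing sits at the origin, with $U, V$ locally matching $\Hh_2, \Hh_1$. Next, for each $n \in \N$ I construct simple sets $U_n, V_n$ that coincide with $\Hh_2, \Hh_1$ inside $\Dd_{2n}(0)$ and with $U, V$ outside $\Dd_{4n}(0)$, smoothly interpolated in the annulus. Since $\p U, \p V$ pass through the origin, the symmetric differences $U \Delta U_n$ and $V \Delta V_n$ lie in $\Dd_{4n}(0)$, which is a tubular neighborhood of each boundary. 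Proposition \ref{prop-2c} then yields $\sigma_b^{U_n, V_n}(P_\pm) = \sigma_b^{U, V}(P_\pm)$ for every $n$.

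The next step is to let $n \to \infty$. Since $\1_{U_n}(\bfx), \1_{V_n}(\bfx) \to \1_{\Hh_2}(\bfx), \1_{\Hh_1}(\bfx)$ for every fixed $\bfx$, the diagonal kernels $K_{U_n, V_n}(\bfx, \bfx)$ of \eqref{eq-9f} converge pointwise to $K_{\Hh_2, \Hh_1}(\bfx, \bfx)$. Provided the family $(U_n, V_n)$ is \emph{uniformly transverse} -- i.e. there exist $c, R_0 > 0$ independent of $n$ with $\Psi_{U_n, V_n}(\bfx) \geq |\bfx|^c$ for $|\bfx| \geq R_0$ -- the bound \eqref{eq-2k} combined with \eqref{eq-0v} supplies a summable, $n$-independent majorant, so dominated convergence delivers $\sigma_b^{U_n, V_n}(P_\pm) \to \sigma_b^{\Hh_2, \Hh_1}(P_\pm) = \sigma_b(P_\pm)$. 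Combined with the previous step, this completes the case $\II_{U,V} = 1$. The case $\II_{U,V} = -1$ is identical after reversing orientation at the crossing, which produces $-\sigma_b(P_\pm)$ in the limit.

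When $\II_{U,V} = 0$, I argue separately. If $\p U$ is bounded, then $\1_U$ is compactly supported, so $P_\pm \1_U$ is Hilbert--Schmidt and $P_\pm \1_U P_\pm$ is trace-class; rewriting $\sigma_b^{U, V}(P_\pm) = -i\Tr([P_\pm \1_U P_\pm, P_\pm \1_V P_\pm])$ and applying cyclicity \eqref{eq-0l} gives $0$. Otherwise $\p U$ is unbounded but both ends of $\gamma$ lie in the same connected component of $\R^2 \setminus \p V$; I deform $U$ via Proposition \ref{prop-2c} (closing the curve inside that component) to reduce to the previous subcase. The main obstacle throughout is the uniform transversality requirement on the family $(U_n, V_n)$: the interpolation of $\p U$ to the coordinate axis inside the annulus $\Dd_{4n}(0) \setminus \Dd_{2n}(0)$ must be engineered so that the combined boundaries do not accumulate near each other at a rate depending on $n$. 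Achieving this will likely use the ``left-side'' formalism of Appendix \ref{app:A} together with a careful explicit construction preserving the $|\bfx|^c$ lower bound on $\Psi_{U_n, V_n}$.
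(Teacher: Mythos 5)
Your overall strategy is the same as the paper's: use robustness (Proposition \ref{prop-2c}) to replace $(U,V)$ by sets $(U_n,V_n)$ that equal $(\Hh_2,\Hh_1)$ on a large disk and $(U,V)$ far away, then pass to the limit by dominated convergence using pointwise kernel convergence and uniform transversality. However, the one step you defer -- constructing the interpolating family so that $\Psi_{U_n,V_n}(\bfx)\geq|\bfx|^c$ with $c$ independent of $n$ -- is not a technicality to be ``engineered later''; it is the core of the proof and the only place where the hypothesis $\II_{U,V}=1$ actually does any work. The paper spends \S\ref{sec-5.4}--\S\ref{sec-5.6} on exactly this: Lemma \ref{lem:17} shows that $\II_{U,V}=1$ forces the cyclic ordering $\bfz_V^-,\bfz_U^-,\bfz_V^+,\bfz_U^+$ of the entrance/exit points of $\gamma_U,\gamma_V$ on the circle of radius $4n$ (using the left-side formalism and Proposition \ref{prop:4}), and Lemmas \ref{lemma-14} and \ref{prop-1} give the quantitative angular separation making the radial--angular interpolation \eqref{eq2-7y} uniformly transverse. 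Without an argument of this kind your construction is not known to exist (for $\II_{U,V}=-1$ the analogous gluing would force $\p U_n$ and $\p V_n$ to cross in the annulus at $|\bfx|\sim n$, destroying uniform transversality), and anchoring the picture at ``the'' crossing point of $\p U$ with $\p V$ does not supply the needed global information, since $\p U$ may cross $\p V$ many times and the sign of the answer is a global, not local, quantity.

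There is also a concrete error in your $\II_{U,V}=0$ case with $\p U$ unbounded. Proposition \ref{prop-2c} only permits replacing $U$ by $U'$ with $U\Delta U'$ inside a set of the form $\{d_l(\cdot,\p U)\leq R\}$, i.e.\ a tube of bounded width around $\p U$; but no set $U'$ with \emph{bounded} boundary can agree with $U$ outside such a tube when $\p U$ is unbounded (outside a large disk containing $\p U'$, $U'$ would have to be all of the plane or empty, contradicting that it contains points on both sides of the tube). So ``closing the curve'' is not reachable by Proposition \ref{prop-2c}, and this subcase is left unproved. The paper instead deforms the \emph{other} set: since both ends of $\gamma_U$ lie in $V$ (after possibly replacing $V$ by $V^\CCC$), one has $\p U\subset V_R$ for large $R$, and replacing $V$ by the fattened set $V_{2R}$ is legitimate because $V\Delta V_{2R}$ lies in a tube around $\p V$; then $d_l(\p U,\p V_{2R})\geq R$ and the kernel bound \eqref{eq-2q} shows $|\sigma_b^{U,V}(P)|=|\sigma_b^{U,V_{2R}}(P)|\leq C_Ne^{-NR/4}\to0$. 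Finally, a small slip: $\p U$ bounded does not imply $\1_U$ compactly supported ($U^\CCC$ may be the bounded set); this is easily repaired via $K_{U,V}=K_{\R^2,V}-K_{U^\CCC,V}$ or, as in the paper, via $\sigma_b^{U,V}(P)=\sigma_b^{\R^2,V}(P)=0$ from Proposition \ref{prop-2c}.
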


Since $U$ is simple, by Definition \ref{def:5}, $\II_{U, V} \in \{0,\pm 1\}$. It suffices to prove the following: 
\begin{itemize}
    \item[\textbf{1.}] If $\II_{U, V} = 0$, then $\sigma_b^{U, V}(P) = 0$ (Lemma \ref{lem:16}).
    \item[\textbf{2.}] If Theorem \ref{thm:1} holds when $\II_{U, V} = 1$, then it also holds when  $\II_{U, V} = -1$ (Lemma \ref{lem:8}).
    \item[\textbf{3.}] Theorem \ref{thm:1} holds when $\II_{U, V} = 1$ (\S\ref{sec:3.5} - \S\ref{sec-5.6}).
\end{itemize}
The core of the proof is the third step.

\subsection{Steps 1 and 2}

\begin{lemma}\label{lem:16} Let $U,V$ be transverse simple sets such that $\II_{U, V} = 0$, and $P$ be an ESR projector. Then $\sigma_b^{U, V}(P) = 0$.      
\end{lemma}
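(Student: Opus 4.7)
The plan is to use Proposition~\ref{prop-2c} to reduce $(U,V)$ to a configuration in which $\sigma_b^{U,V}(P)$ vanishes by inspection, splitting the argument by the boundedness of $\partial U$.

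\textbf{Case 1 ($\partial U$ bounded).} Then $\partial U$ is a simple loop, and by the Jordan curve theorem one of $U$, $U^c$ is bounded; say $U$ is bounded. Then $U\subset\{\bfx:d_l(\bfx,\partial U)\le R\}$ for $R$ large enough, so Proposition~\ref{prop-2c} with $U'=\emptyset$ yields $\sigma_b^{U,V}(P)=\sigma_b^{\emptyset,V}(P)=0$. The situation $U^c$ bounded is handled symmetrically with $U'=\R^2$, for which $[P,\1_{U'}]=0$.

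\textbf{Case 2 ($\partial U$ unbounded).} Let $\gamma$ be a simple path with range $\partial U$ and $U$ to its left. From $[P,\1_{V^c}]=-[P,\1_V]$ we see $\sigma_b^{U,V^c}(P)=-\sigma_b^{U,V}(P)$, and $\II_{U,V^c}=-\II_{U,V}=0$; after replacing $V$ by $V^c$ if necessary we may assume $\II_+(U,V)=\II_-(U,V)=0$. Transversality then forces $d_l(\gamma(t),\partial V)\to\infty$ with $\gamma(t)\in V^c$ as $|t|\to\infty$, while in the bounded region where $\gamma$ can still cross $\partial V$, the net algebraic count of crossings vanishes (it is exactly $\II_+-\II_-=0$). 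Hence $\gamma$ can be isotoped inside a bounded region to a simple path $\gamma'$ whose range $\partial U'$ is at positive distance from $\overline V$. The resulting simple set $U'$ given by Proposition~\ref{prop:4} satisfies $U\Delta U'\subset\{\bfx:d_l(\bfx,\partial U)\le R\}$ for some $R$, so Proposition~\ref{prop-2c} yields $\sigma_b^{U,V}(P)=\sigma_b^{U',V}(P)$. Since $V$ is simple (hence connected) and disjoint from $\partial U'$, it lies entirely in $U'$ or in $(U')^c$; swapping $U'$ with $(U')^c$ if needed we may assume $V\subset(U')^c$.

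The proof closes by a parallel modification of $V$, using the symmetric counterpart of Proposition~\ref{prop-2c} that follows from the antisymmetry $\sigma_b^{U,V}(P)=-\sigma_b^{V,U}(P)$. When $\partial V$ is bounded, the unboundedness of $\partial U'$ forces $V$ itself (rather than $V^c$) to be bounded, and the modification $V\mapsto\emptyset$ finishes the argument. The main obstacle is the subcase where $\partial V$ is also unbounded: here my plan is to exchange the roles of $U'$ and $V$ and re-run Case~2 on the pair $(V,U')$, which satisfies $\II_{V,U'}=0$ because $\partial V\subset(U')^c$ (so both endpoints of any parametrization of $\partial V$ lie on the same side of $\partial U'$). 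In this new configuration the boundaries become strictly nested, and the iteration should terminate after finitely many exchanges --- the delicate technical point being to verify that each step genuinely simplifies the geometry while preserving the tubular-neighborhood hypothesis required by Proposition~\ref{prop-2c}.
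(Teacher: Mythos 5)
Your Case 1 ($\p U$ bounded) is fine and matches the paper, but Case 2 has two genuine gaps. First, the sentence ``hence $\gamma$ can be isotoped inside a bounded region to a simple path $\gamma'$ whose range is at positive distance from $\overline V$'' is an unproved and nontrivial claim. What you actually know from $\II_+(U,V)=\II_-(U,V)=0$ and transversality is that $\p U\cap\overline V$ is contained in a compact set; to reroute the middle arc of $\gamma$ inside $V^\CC$ you must also avoid the two unbounded tails of $\gamma$ (to keep $\gamma'$ injective and proper), and whether $\gamma(-T)$ and $\gamma(T)$ lie in the same component of $V^\CC$ minus the tails is exactly the kind of planar-topology statement (winding/Jordan-type argument) that requires a construction comparable to the paper's \S\ref{sec-5}--\S\ref{sec-6}; ``net crossing number zero'' alone does not deliver it. Second, and more fundamentally, even if you achieve $\p U'\cap\overline V=\emptyset$ with $V\subset (U')^c$, you have no mechanism to conclude $\sigma_b^{U',V}(P)=0$: a \emph{fixed} positive separation of the boundaries does not make the trace vanish (the vanishing in the paper always comes either from cyclicity with a compactly-confined cutoff, as in Proposition \ref{prop-2c}, or from a separation that is sent to infinity). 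Your proposed iteration for the subcase where $\p V$ is also unbounded --- swapping $(U',V)$ and re-running Case 2 --- never reaches a configuration where vanishing is evident, has no decreasing invariant guaranteeing termination, and you acknowledge as much; this is precisely where the proof fails to close. (Your subcase with $\p V$ bounded is fine.)

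For comparison, the paper closes Case 2 with the opposite normalization and a quantitative limit rather than a single deformation: normalize so the tails of $\gamma_U$ lie in $V$, so that $\p U\subset V_R$ where $V_R$ is the $R$-neighborhood of $V$; by Proposition \ref{prop-2c}, $\sigma_b^{U,V}(P)=\sigma_b^{U,V_{2R}}(P)$ for every $R$, while $d_l(\p U,\p V_{2R})\gtrsim R$ combined with the kernel bound \eqref{eq-2q} and the interpolated transversality estimate gives $|\sigma_b^{U,V_{2R}}(P)|\le C_N e^{-NR/4}$ uniformly; letting $R\to\infty$ yields the result. If you want to salvage your route, you would need to replace the single ``push $\p U$ off $\overline V$'' step by a one-parameter family of such deformations whose separation tends to infinity, together with a uniform transversality bound so that \eqref{eq-2q} applies uniformly --- at which point you are essentially reconstructing the paper's argument.
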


\begin{proof} We recall the notation $A^\CCC = \Int A^c$.

1.  Assume first that $\p U$ is bounded. Since $U$ is a simple set, by definition, $\p U$ is a simple loop. Hence either $U$ or $U^\CC$ is bounded by Jordan's theorem. In the former case, Proposition \ref{prop-2c} implies $\sigma_b^{U, V}(P) = \sigma_b^{\emptyset, V}(P) = 0$. In the latter case, Proposition \ref{prop-2c} implies $\sigma_b^{U, V}(P) = \sigma_b^{\R^2, V}(P) = 0$. Therefore we assume $\p U$ is unbounded in the rest of the proof.

2. Since $\II_{U, V} = 0$, $\II_+(U,V) = \II_-(U,V)$; potentially replacing $V$ by $V^\CCC$ we may assume that $\II_+(U,V) = \II_-(U,V) = 1 = \lim\limits_{t\to \pm \infty} \1_V\circ \gamma(t)$. Hence when $T$ is large enough, $\gamma(\{ |t| > T \}) \subset V$.  Given $R>0$, define 
\[
V_R \de \Dd_R(V) = \big\{ \bfx \in \R^2: \  d_l(\bfx,V) < R \big\}. 
\]
When $R>\max \big\{ d_l(\gamma(t),V) : t \in [-T,T] \big\}$, we have $\p U \subset V_R$.

3. By continuity of distance function, $d_l(\p V_R, V) \leq R$ and $d_l(\p V_R, \p V)\leq R$. As a result, 
\[
\begin{split}
    d_l(\bfx, \p V_{2R})&\geq   d_l(\bfx, \p V) -  d_l(\p V, \p V_{2R}) \geq d_l(\bfx, \p V) - 2R
\end{split}
\]
By \eqref{eq-0x}, 
\begin{align}
      2\ln \Psi_{U, V_{2R}}(\bfx) &\geq  d_l(\bfx, \p V_{2R})  +  d_l(\bfx, \p U)\\
      &\geq d_l(\bfx, \p V) + d_l(\bfx, \p U) - 2R \geq \ln \PUV(\bfx) - 2R. \label{eq2-8c}
\end{align}
On the other hand, if $d_l(\p V_R, V)<R$, then there is $\bfx \in \p V_R$ such that $d_l(\bfx, V)<R$, by definition this implies $\bfx \in V_R$. But $V_R$ is an open set so $\bfx \in V_R \cap \p V_R = \emptyset$. We get a contradiction; thus $d(\p V_R, V) = R$. In particular, for any $\bfx \in \p V_{2R}$, 
\[
    d_l(\bfx, V_R) \geq d_l(\bfx, V_R) - d_l(V, V_R) \geq 2R - R = R \quad \Rightarrow \quad d_l(V_R, \p V_{2R}) \geq R. 
\]
As a result, by \eqref{eq-0x},
\begin{equation}\label{eq2-8b}
2\ln \Psi_{U, V_{2R}}(\bfx) \geq d_l(\p U, \p V_{2R}) \geq d_l(V_R, \p V_{2R})\geq R. 
\end{equation}
Interpolating between \eqref{eq2-8c} and \eqref{eq2-8b} gives:
\begin{equation}\label{eq2-5b}
2 \ln \Psi_{U, V_{2R}}(\bfx) \geq \dfrac{1}{4} (\ln \Psi_{U,V}(\bfx) - 2R) + \dfrac{3}{4} R = \dfrac{\ln \Psi_{U,V}(\bfx)+R}{4}. 
\end{equation}

4. Since $V_{2R} \Delta V \subset \{\bfx: d_l(\bfx, \p V) \leq 2R\}$, by Proposition \ref{prop-2c}:
\begin{equation}
    \big| \sigma_b^{U, V}(P) \big|  =  \big| \sigma_b^{U, V_{2R}}(P) \big|.
\end{equation}
By \eqref{eq-2q} and \eqref{eq2-5b}, we have 
\begin{equation}
\big| \sigma_b^{U, V_{2R}}(P) \big| \leq C_N \left(\sum_{\bfx \in \Z^2} e^{-N\ln \Psi_{U, V_{2R}}(\bfx)}\right)^2 \leq C_N e^{-\frac{NR}{4}} \left(\sum_{\bfx\in \Z^2} e^{-\frac{N}{8} \ln\Psi_{U,V}(\bfx)} \right)^2. 
\end{equation}

Since $U, V$ are transverse sets, by \eqref{eq-0v}, the last sum is finite when $N$ is large enough and the sum does not depend on $R$. Therefore, taking $R \rightarrow \infty$ yields $\sigma_b^{U, V}(P) = 0 = \II_{U, V} \cdot \sigma_b(P)$. This proves Theorem \ref{thm:1} when $\II_{U, V} = 0$. \end{proof}

\begin{lemma}\label{lem:8} Let $P$ be an ESR projector. Assume that for all transverse simple sets $U,V$ such that $\II_{U, V} = 1$, we have $\sigma_b^{U, V}(P) = \sigma_b(P)$. Then for all transverse simple sets $U,V$ such that $\II_{U, V} = -1$, we have $\sigma_b^{U, V}(P) = -\sigma_b(P)$.
\end{lemma}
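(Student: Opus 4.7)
The plan is to reduce the case $\II_{U,V}=-1$ to the assumed case $\II_{U,V}=+1$ by a symmetry argument that swaps $U$ with its open complement $U^\CC = \Int(U^c)$. The two key observations are: (i) swapping $U$ with $U^\CC$ reverses the canonical orientation of $\p U$ and hence flips the sign of the intersection number; (ii) this same swap flips the sign of the geometric bulk conductance $\sigma_b^{U,V}(P)$, because $\1_{U^c} = 1 - \1_U$ exactly as functions on $\Z^2$ and $[P,1] = 0$.

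First, I would verify that $U' := U^\CC$ is a transverse simple set satisfying $\II_{U',V}=+1$. Since $U$ is simple and $\p U$ splits $\R^2$ into exactly two connected components (Lemma \ref{lem:4} / Jordan's theorem), one component is $U$ and the other is necessarily $\Int(U^c) = U^\CC$. Hence $U^\CC$ is open, connected, and $\p U^\CC = \p U$; it is therefore simple. Transversality with $V$ is immediate because $\Psi_{U^\CC,V} = \Psi_{U,V}$. If $\gamma$ parameterizes $\p U$ with $U$ to its left, then $\tilde\gamma(t) := \gamma(-t)$ parameterizes $\p U^\CC$ with $U^\CC$ to its left, so the limits $\II_\pm$ are interchanged:
\begin{equation*}
\II_{U^\CC,V} = \II_+(U^\CC,V) - \II_-(U^\CC,V) = \II_-(U,V) - \II_+(U,V) = -\II_{U,V} = +1.
\end{equation*}
(The bounded-boundary case does not arise since $\II_{U,V}=-1 \neq 0$.)

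Next, I would show $\sigma_b^{U^\CC,V}(P) = -\sigma_b^{U,V}(P)$. The symmetric difference $U^\CC \Delta U^c \subset \p U$ is contained in $\{\bfx: d_l(\bfx,\p U)\leq R\}$ for every $R>0$, so Proposition \ref{prop-2c} (whose proof uses only that $P$ is PSR, and hence applies to any ESR projector) yields $\sigma_b^{U^\CC,V}(P)=\sigma_b^{U^c,V}(P)$. Since $\1_{U^c}=1-\1_U$ on $\Z^2$ and $[P,1]=0$, one has $[P,\1_{U^c}] = -[P,\1_U]$, from which the definition \eqref{eq-3v} gives
\begin{equation*}
\sigma_b^{U^c,V}(P) = -i\Tr\bigl(P\bigl[\,[P,1-\1_U],\,[P,\1_V]\bigr]\bigr) = -\sigma_b^{U,V}(P).
\end{equation*}

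Finally, the standing hypothesis applied to the transverse simple pair $(U^\CC,V)$ with $\II_{U^\CC,V}=+1$ yields $\sigma_b^{U^\CC,V}(P)=\sigma_b(P)$. Combining with the previous identity gives $\sigma_b^{U,V}(P) = -\sigma_b(P) = \II_{U,V}\cdot\sigma_b(P)$, as required. I expect no genuine obstacle: the argument is essentially an exercise in sign-tracking. The only subtlety is checking that $U^\CC$ is indeed simple with the correctly reversed boundary orientation, which follows from the connected-components description of $\R^2\setminus\p U$ and the definition of the left side of a simple path used in Definition \ref{def:5}.
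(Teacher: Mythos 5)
Your proof is correct, but it takes the mirror-image route to the paper's. The paper keeps $U$ fixed and complements $V$: since the intersection number is defined by evaluating $\1_V$ along a curve parametrizing $\p U$, the identity $\1_{V^c}=1-\1_V$ immediately gives $\II_{U,V^c}=-\II_{U,V}$, and $[P,\1_{V^c}]=-[P,\1_V]$ gives $\sigma_b^{U,V^c}(P)=-\sigma_b^{U,V}(P)$, so the whole argument is two lines. You instead complement $U$, which makes both sign flips less immediate: for the intersection number you need that $U^\CC$ is the other Jordan component of $\R^2\setminus\p U$ (hence simple) and that reversing the parametrization of $\p U$ swaps left and right — a fact that is true and implicitly contained in the paper's Appendix (Definition \ref{def:3}, Proposition \ref{prop:1}, and the orientation-reversal step in the proof of Lemma \ref{lem:9}), but which you invoke rather than prove; and for the conductance you need Proposition \ref{prop-2c} to pass from $U^\CC$ to $U^c$ before using $\1_{U^c}=1-\1_U$. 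What your version buys is a bit more care on the open-complement issue: the paper applies the hypothesis to the pair $(U,V^c)$ even though $V^c$ is closed rather than the simple set $V^\CC$ (a harmless imprecision, fixable exactly by the $\Delta$-robustness argument you use), whereas you apply it to the genuinely simple set $U^\CC$. What it costs is the extra geometric input on orientation reversal, which the paper's choice of complementing $V$ avoids entirely. Your parenthetical that Proposition \ref{prop-2c} applies to an arbitrary ESR projector (not just the spectral projectors of Assumption \ref{ass-all}) is a point the paper also leaves implicit in Lemma \ref{lem:16}, and is correct.
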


\begin{proof} Assume that $\II_{U, V} = -1$. Then, $\II_{U, V^c} = 1$, and so by assumption we have $\sigma_b^{U, V^c}(P) = \sigma_b(P)$. But since $\sigma_b^{U, V^c}(P) = -\sigma_b^{U, V}(P)$, we deduce that $\sigma_b^{U, V}(P) = -\sigma_b(P)$.\end{proof}

Therefore, it remains to prove that Theorem \ref{thm:1} holds for pairs $(U,V)$ such that $\II_{U, V} = 1$.

\subsection{Uniformly transverse families}\label{sec:3.5} 

Recall that transversality condition \eqref{eq-0a} is equivalent to \eqref{eq-0b}:
\begin{equation}\label{eq2-0b}
    \exists c\in (0,1) \ \text{such that, } \ \forall |\bfx|\geq c^{-1}, \ \PUV(\bfx) \geq |\bfx|^c.
\end{equation}
When \eqref{eq2-0b} holds, we refer to $(U,V)$ as \textit{$c$-transverse}.

\begin{definition}[Uniformly transversality] Let $\FF = \big\{ (U_n,V_n): n \in \N\big\}$ be a family of transverse simple sets. We say that $\FF$ is uniformly transverse if there exists $c \in (0,1)$ such that for all $n$, $(U_n,V_n)$ is $c$-transverse.
We say that $\FF$ is equivalent to $(U,V)$ if for all $n$, $U \Delta U_n$ and $V \Delta V_n$ are bounded.    
\end{definition}

Uniform transversality boils down to \eqref{eq2-0b} holding uniformly in $n$:
\begin{equation}\label{eq2-0g}
    \exists c \in (0,1) \text{ such that, } \ \forall n, \ \ \forall |\bfx| \geq c^{-1}, \ \Psi_{U_n, V_n}(\bfx) \geq |\bfx|^c.
\end{equation}

Recall that $\Hh_i:= \{ x_i\geq 0\}$, $i = 1,2$ denote the right/upper half-plane. 

Our strategy to prove Theorem \ref{thm:1} goes as follows. We will construct a family of uniformly transverse sets $(U_n, V_n)$ (see Proposition \ref{lem-10} and \S \ref{sec-5.6}) such that $(U_n,V_n)$ are compact perturbations of $(U,V)$ and they look like $(\Hh_2, \Hh_1)$ inside the disk $\Dd_n(0)$. Because $(U_n,V_n)$ are compact perturbations of $(U,V)$, we will have $\sigma_b^{U_n, V_n}(P) = \sigma_b^{U, V}(P)$ by the robustness of geometric bulk conductance (Proposition \ref{prop-2b}). Meanwhile, since $(U_n, V_n)$ look like $\Hh_2$ and $\Hh_1$ in $\Dd_n(0)$, we will show $\sigma_b^{U_n, V_n}(P)$ converges to $ \sigma_b^{\Hh_2, \Hh_1}(P)$ as $n\to \infty$ by showing
\begin{itemize}
    \item $K_{U_n,V_n}(\bfx,\bfx) \to K_{\Hh_2,\Hh_1}(\bfx,\bfx)$ as $n\to \infty$ for each $\bfx$ (Lemma \ref{lem-11}); 
    \item the convergence is uniform because of uniform transversality. 
\end{itemize}
This will prove that $ \sigma_b^{U_n, V_n}(P)$ converges as $n \rightarrow \infty$ to $ \sigma_b^{\Hh_2,\Hh_1}(P)$, and in particular $\sigma_b^{U, V}(P) = \lim\limits_{n\to\infty} \sigma_b^{U_n, V_n}(P) = \sigma_b^{\Hh_2,\Hh_1}(P) = \sigma_b(P)$.

\begin{lemma}\label{lem-11} Assume that $\{ (U_n,V_n): \ n \in \N\}$ is a family of transverse sets in $\R^2$ such that for all $n$, $U_n \cap \Dd_n(0) = \Hh_2 \cap \Dd_n(0)$ and $V_n \cap \Dd_n(0) = \Hh_1 \cap \Dd_n(0)$. Then for any $\bfx \in \Z^2$:
\begin{equation}
    \lim_{n \rightarrow \infty} K_{U_n, V_n}(\bfx,\bfx) = K_{\Hh_2, \Hh_1}(\bfx,\bfx).
\end{equation}
\end{lemma}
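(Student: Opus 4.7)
The plan is to expand $K_{U_n,V_n}(\bfx,\bfx)$ as a double sum over intermediate lattice points in $\Z^2$ and apply the dominated convergence theorem. The hypothesis $U_n \cap \Dd_n(0) = \Hh_2 \cap \Dd_n(0)$ (and analogously for $V_n, \Hh_1$) forces the indicator functions $\1_{U_n}$ and $\1_{V_n}$ to coincide with $\1_{\Hh_2}$ and $\1_{\Hh_1}$ on any finite collection of points once $n$ is large enough; this delivers pointwise convergence of each summand. Summability of a dominating function will come from the PSR decay of $P$.

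First I would write out the expansion. Since $K_{U,V}=P\big([P,\1_U][P,\1_V]-[P,\1_V][P,\1_U]\big)$, we have
\begin{equation}
K_{U_n,V_n}(\bfx,\bfx) = \sum_{\bfy,\bfz\in\Z^2} P(\bfx,\bfy)\Big([P,\1_{U_n}](\bfy,\bfz)\,[P,\1_{V_n}](\bfz,\bfx) - [P,\1_{V_n}](\bfy,\bfz)\,[P,\1_{U_n}](\bfz,\bfx)\Big).
\end{equation}
Using the identity $[P,\1_W](\bfy,\bfz) = P(\bfy,\bfz)\bigl(\1_W(\bfz) - \1_W(\bfy)\bigr)$, each summand depends on $\1_{U_n}, \1_{V_n}$ only through their values at the three points $\bfx, \bfy, \bfz$. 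Once $n$ is large enough that $\bfx, \bfy, \bfz \in \Dd_n(0)$, the hypothesis gives $\1_{U_n}=\1_{\Hh_2}$ and $\1_{V_n}=\1_{\Hh_1}$ at each of these points, so the summand equals the corresponding summand of $K_{\Hh_2,\Hh_1}(\bfx,\bfx)$.

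For the dominating function, $P$ is PSR (as used in the proof of Proposition \ref{prop-2a}; see \eqref{eq-3f}), so for every $N$ there exists $C_N > 0$ with $|P(\bfy,\bfz)| \leq C_N(1 + d_1(\bfy,\bfz))^{-N}$. Since $|[P,\1_W](\bfy,\bfz)| \leq 2|P(\bfy,\bfz)|$ uniformly in $W$, each summand above is bounded, uniformly in $n$, by
\begin{equation}
4 C_N^3 (1 + d_1(\bfx,\bfy))^{-N} (1 + d_1(\bfy,\bfz))^{-N} (1 + d_1(\bfz,\bfx))^{-N},
\end{equation}
which is summable in $(\bfy,\bfz)\in(\Z^2)^2$ for $N\geq 3$ (convolve in $\bfz$, then sum in $\bfy$). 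Dominated convergence then yields the claim. The only real technical point is ensuring the $n$-independent majorant; this is immediate because the trivial bound $|[P,\1_W]|\leq 2|P|$ does not see the set $W$ at all. Notice that uniform transversality of the family is not required at this stage---it will enter later, when this pointwise convergence of kernels is upgraded to convergence of the traces $\sum_{\bfx} K_{U_n,V_n}(\bfx,\bfx)$.
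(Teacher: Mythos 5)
Your proof is correct, but it takes a different route from the paper's. You expand the diagonal kernel $K_{U_n,V_n}(\bfx,\bfx)$ as an absolutely convergent lattice sum over intermediate points $(\bfy,\bfz)$, observe via $[P,\1_W](\bfy,\bfz)=P(\bfy,\bfz)\bigl(\1_W(\bfz)-\1_W(\bfy)\bigr)$ that each summand only sees the indicators at $\bfx,\bfy,\bfz$ and hence is \emph{eventually equal} to the corresponding summand for $(\Hh_2,\Hh_1)$, and then dominate uniformly in $n$ by the trivial bound $|[P,\1_W]|\leq |P|$ together with the PSR decay of $P$; dominated convergence in $(\bfy,\bfz)$ finishes it. The paper instead works at the operator level: it uses bilinearity of $K_{A,B}$ in $(\1_A,\1_B)$ to compare $K_{A,B}$ with the truncated kernel $K_{A\cap\Dd_n(0),\,B\cap\Dd_n(0)}$, and bounds the difference by the kernel estimate \eqref{eq-2k}, exploiting that $\p\bigl(A\cap\Dd_n(0)^c\bigr)$ is at distance of order $n$ from a fixed $\bfx$; since the truncations of $(U_n,V_n)$ and $(\Hh_2,\Hh_1)$ coincide by hypothesis, both diagonal kernels are within $C_N(1+n)^{-N}e^{Nd_l(\bfx,0)}$ of each other. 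The paper's argument yields a quantitative rate of convergence (not actually needed later), while yours is more elementary, requiring only the PSR bound on $P$ and no localization estimate; both correctly avoid using transversality, which as you note only enters when the pointwise convergence is upgraded to convergence of traces via the uniform-transversality majorant.
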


Lemma \ref{lem-11} means that the geometric bulk conductance can be computed locally. This observation was key to the work \cite{DZ23} on emergence of edge spectrum for truncated topological insulators.

\begin{proof} 1. Recall that $K_{A,B} = P\big[[P, \1_A],[P, \1_B]\big]$ \eqref{eq-9f} satisfies the kernel estimate \eqref{eq-2k}: 
\begin{equation}
    \label{eq2-4e}
|K_{A,B}(\bfx, \bfy)| \leq C_N e^{-N(d_l(\bfx, \p A) + d_l(\bfx, \p A) + d_l(\bfy, \p B) + d_l(\bfy, \p B) + d_l(\bfx, \bfy))}.
\end{equation}
Since $K_{A, B}$ is linear in $\1_A$ and $\1_B$, we have   
\[
K_{A,B} - K_{A\cap \Dd_n(0), B\cap \Dd_n(0)} = K_{A\cap \Dd_n(0)^c, B} + K_{A\cap \Dd_n(0)^c, B\cap \Dd_n(0)^c}.
\]
Note that
\[
d_l(\bfx, \p (A\cap \Dd_n(0)^c)) \geq d_l(0,\p (A\cap \Dd_n(0)^c))) - d_l(\bfx, 0)\geq \ln (1 + n) - d_l(\bfx, 0).
\]
Hence we have  
\begin{equation} 
\big|K_{A\cap \Dd_n(0)^c, B}(\bfx, \bfx) \big|\leq C_N e^{-Nd_l(\bfx, \p (A\cap \Dd_n(0)^c))}\leq C_N e^{-N \ln (1 + n) + Nd_l(\bfx, 0)}.
\end{equation}
The same estimate holds for $\big|K_{A\cap \Dd_n(0)^c, B\cap \Dd_n(0)^c}(\bfx, \bfx)\big|$. Hence we obtain
\begin{equation}\label{eq2-5h}
\big|K_{A,B} - K_{A\cap \Dd_n(0), B\cap \Dd_n(0)}(\bfx, \bfx)\big|\leq C_N e^{-N\ln (1 + n)+ Nd_l(\bfx, 0) } = C_N(1 + n)^{-N} e^{-N d_l(\bfx, 0)}.
\end{equation}

2. We now apply \eqref{eq2-5h} to the pair $(U_n,V_n)$, then to the pair $(\Hh_2,\Hh_1)$:
\begin{equations}
    \big| K_{U_n,V_n}(\bfx,\bfx) - K_{U_n \cap \Dd_n(0),V_n \cap \Dd_n(0)}(\bfx,\bfx)\big|\leq     C_N(1 + n)^{-N} e^{Nd_l(\bfx, 0)},
\end{equations}
\begin{equations}
\big| K_{\Hh_2, \Hh_1}(\bfx,\bfx) - K_{\Hh_2 \cap \Dd_n(0),\Hh_1 \cap \Dd_n(0)}(\bfx,\bfx)     \big| \leq C_N(1 + n)^{-N} e^{Nd_l(\bfx, 0) }.
\end{equations}
Because $\Hh_2 \cap \Dd_n(0) = U_n \cap \Dd_n(0)$ and $\Hh_1 \cap \Dd_n(0)=V_n \cap \Dd_n(0)$, summing these two bounds gives
\begin{equations}
    \big| K_{U_n ,V_n}(\bfx,\bfx) - K_{\Hh_2, \Hh_1}(\bfx,\bfx)     \big| \leq  2C_N(1 + n)^{-N} e^{Nd_l(\bfx, 0) }.
\end{equations}
It suffices to take the limit as $n$ goes to $\infty$ to conclude.
\end{proof}

\begin{prop}\label{lem-10} Let $(U,V)$ be transverse simple sets in $\R^2$ such that $\II_{U, V} = 1$. There exists a family $\FF = \{ (U_n,V_n): \ n \in \N\}$ with the following properties:
\begin{itemize}
    \item[(a)] $\FF$ is uniformly transverse;
    \item[(b)] $\FF$ is equivalent to $(U,V)$;
    \item[(c)] In the disk $\Dd_n(0)$, $U_n$ is the upper half-plane $\Hh_2 = \{x_2 >0\}$ and $V_n$ is the right half-plane $\Hh_1 = \{x_1 >0\}$.
\end{itemize}    
\end{prop}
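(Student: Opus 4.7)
The plan is to construct $(U_n, V_n)$ by a local surgery of $(U, V)$ inside the annulus $A_n := \Dd_{4n}(0) \setminus \overline{\Dd_n(0)}$, leaving the pair unchanged outside $\Dd_{4n}(0)$ and replacing it with $(\Hh_2, \Hh_1)$ inside $\Dd_n(0)$.

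First, I extract the far-field topology implied by $\II_{U,V} = 1$. Parametrize $\partial U$ by a simple path $\gamma$ (with $U$ on its left) and $\partial V$ by a simple path $\eta$ (with $V$ on its left). Transversality forces $\partial U \cap \partial V$ to be bounded. Since $\II_+(U,V) = 1 \neq 0 = \II_-(U,V)$, the two ends of $\gamma$ lie in different components of $\R^2 \setminus \partial V$; a symmetric argument together with the Jordan-type decompositions of \S\ref{sec-5} yields the \emph{interlacing} of the four infinite ends of $\partial U \cup \partial V$: on any sufficiently large circle, the exit points of $\gamma_\pm$ and $\eta_\pm$ alternate cyclically, and the four unbounded components of $\R^2 \setminus (\partial U \cup \partial V)$ topologically mirror the four quadrants of $(\Hh_2, \Hh_1)$.

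Fix $R_0 \geq 1$ with $\partial U \cap \partial V \subset \Dd_{R_0}(0)$ and such that transversality holds outside $\Dd_{R_0}(0)$. For $n \geq R_0$, let $\bm{p}_\pm \in \gamma \cap \partial \Dd_{4n}(0)$ and $\bm{q}_\pm \in \eta \cap \partial \Dd_{4n}(0)$ be the outermost exit points of the two ends of $\gamma, \eta$. By the interlacing, one may match the cyclic order $(\bm{p}_+, \bm{q}_+, \bm{p}_-, \bm{q}_-)$ on $\partial \Dd_{4n}(0)$ to the standard order $((n,0), (0,n), (-n,0), (0,-n))$ on $\partial \Dd_n(0)$. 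Draw four pairwise disjoint simple arcs $\sigma_1, \sigma_2, \sigma_3, \sigma_4$ in $\overline{A_n}$ joining these matched pairs, each specified in polar coordinates by $\theta = \theta_j(r)$ with $\theta_j$ affine in $r \in [n, 4n]$. The four arcs, together with the two bounding circles, partition $\overline{A_n}$ into four closed regions $Q_1, \ldots, Q_4$; label them so that $Q_j$ connects the $j$-th standard quadrant of $\Dd_n(0)$ to the $j$-th far-field sector of $(U, V)$, and set
\begin{equations}
U_n := \big(U \setminus \overline{\Dd_{4n}(0)}\big) \cup \big(\Hh_2 \cap \Dd_n(0)\big) \cup \Int(Q_1 \cup Q_2),
\\
V_n := \big(V \setminus \overline{\Dd_{4n}(0)}\big) \cup \big(\Hh_1 \cap \Dd_n(0)\big) \cup \Int(Q_1 \cup Q_4).
\end{equations}

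Properties (b) and (c) are then immediate: $U_n \Delta U$ and $V_n \Delta V$ are contained in $\overline{\Dd_{4n}(0)}$, and $(U_n, V_n)$ coincide with $(\Hh_2, \Hh_1)$ on $\Dd_n(0)$ by construction. The main obstacle is (a), uniform transversality in the annulus $A_n$. For $|\bm{x}| \geq 4n$ the pair agrees with $(U, V)$ and $\Psi_{U_n, V_n}(\bm{x}) \geq |\bm{x}|^c$ is inherited; for $|\bm{x}| \leq n$ one has $\Psi_{U_n, V_n}(\bm{x}) \geq 1 + |x_1| + |x_2| \geq 1 + |\bm{x}|$. In $A_n$, transversality of $(U, V)$ at radius $4n$ gives $d_1(\bm{p}_\pm, \bm{q}_\pm) \gtrsim (4n)^c$, hence angular separation $\gtrsim n^{c-1}$ between consecutive exit points; the affine choice of $\theta_j$ forces the arcs $\sigma_j$ to be angularly separated by $\gtrsim n^{c-1}$ at \emph{every} radius $r \in [n, 4n]$, i.e., Euclidean separation $\gtrsim r \cdot n^{c-1} \geq n^c$. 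Since a point $\bm{x} \in A_n \cap Q_j$ is bounded by two arcs, one lying in $\partial U_n$ and the other in $\partial V_n$, one obtains $d_1(\bm{x}, \partial U_n) + d_1(\bm{x}, \partial V_n) \gtrsim n^c$; using the matching at $\partial \Dd_n(0)$, $\partial \Dd_{4n}(0)$ and the transversality of $(U, V)$ to handle points near these interfaces, this yields $\Psi_{U_n, V_n}(\bm{x}) \gtrsim |\bm{x}|^{c'}$ for any $c' < c$, with constants independent of $n$. The delicate part is the bookkeeping of these distance bounds near the interfaces $\partial \Dd_n(0)$ and $\partial \Dd_{4n}(0)$, where the three regimes meet.
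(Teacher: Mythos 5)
Your construction is essentially the paper's: surgery in the annulus between radii $n$ and $4n$, arcs whose polar angle is affine in the radius joining the outermost exit points of $\p U,\p V$ at radius $4n$ to the coordinate half-axes, uniform transversality from the angular separation $\gtrsim n^{c-1}$ forced by transversality of $(U,V)$ at radius $\sim 4n$, and compact-perturbation robustness for equivalence. The genuine gap is at the one point where the hypothesis $\II_{U,V}=1$ actually does work: the cyclic order of the four exit points. You assert that the counterclockwise order on $\p \Dd_{4n}(0)$ is $(\bm{p}_+,\bm{q}_+,\bm{p}_-,\bm{q}_-)$, i.e. $U^+,V^+,U^-,V^-$, and match it to $\big((n,0),(0,n),(-n,0),(0,-n)\big)$. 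But that cyclic order is the one realized by the model pair $(\Hh_2,\{x_1<0\})$, whose intersection number is $-1$: for $(\Hh_2,\Hh_1)$, with each set to the left of its boundary path, the ends sit at angles $0$ ($U^+$), $\pi/2$ ($V^-$), $\pi$ ($U^-$), $3\pi/2$ ($V^+$), so the interlacing consistent with $\II_{U,V}=1$ is $U^+,V^-,U^-,V^+$ — this is exactly the content of the paper's Lemma \ref{lem:17}, whose proof is not a routine consequence of the Jordan-type statements of \S\ref{sec-5} but needs the auxiliary curve $\gamma_W$ and Proposition \ref{prop:4}. With your stated order one of two things goes wrong: if the true exit points are in the correct order, your prescribed matching forces two of the four arcs to cross; if instead you carry out the matching as written, then near the origin $\p V_n$ is traversed upward and the inner model for $V_n$ is the left half-plane, so property (c) fails and the family would "prove" $\sigma_b^{U,V}(P)=-\sigma_b(P)$, contradicting Theorem \ref{thm:1}. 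So the interlacing statement must be both corrected and actually proved; as written it is the missing key lemma.

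Two smaller points. First, defining $U_n$ as $\big(U\setminus\overline{\Dd_{4n}(0)}\big)\cup\big(\Hh_2\cap\Dd_n(0)\big)\cup\Int(Q_1\cup Q_2)$ leaves out the seams along $\p\Dd_n(0)$ and $\p\Dd_{4n}(0)$, so $\p U_n$ acquires spurious circular arcs; this is harmless in spirit but should be avoided (e.g. define $U_n$ as the component to the left of an explicitly concatenated boundary path, as in \eqref{eq2-7y}), since the transversality estimates are phrased in terms of $d_1(\cdot,\p U_n)$, $d_1(\cdot,\p V_n)$. Second, the uniform transversality verification is only sketched: the regime where the nearest boundary points lie on different pieces (one on a surgery arc, one on the original boundary just outside radius $4n$, or near the inner circle) is precisely the case analysis carried out in Lemma \ref{prop-1}, and you acknowledge but do not supply it.
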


Proposition \ref{lem-10} is the key construction in the proof of Theorem \ref{thm:1} and we defer its proof to \S\ref{sec-5.4} - \S\ref{sec-5.6}. 

\begin{proof}[Proof of Theorem \ref{thm:1} assuming Proposition \ref{lem-10}] By Proposition \ref{lem-10}, there is a family  $\FF = \{ (U_n,V_n): \ n \in \N\}$ satisfying (a), (b), (c) above. Since $\FF$ is equivalent to $(U,V)$ and $U_n \Delta U$, $V_n \Delta V$ are bounded, by Proposition \ref{prop-2c}, we deduce that $\sigma_b^{U, V}(P) = \sigma_b^{U_n, V_n}(P)$ for all $n$. In particular,
\begin{equation}\label{eq2-8a}
    \sigma_b^{U, V}(P) = \lim_{n \rightarrow \infty} \sigma_b^{U_n, V_n}(P) = \lim_{n \rightarrow \infty} \sum_{\bfx \in \Z^2} K_{U_n,V_n}(\bfx,\bfx).
\end{equation}
Our plan is now to apply the dominated convergence theorem to the above series.

Define $k_n(\bfx) = K_{U_n,V_n}(\bfx)$. By Lemma \ref{lem-11}, for every $\bfx \in \Z^2$, $k_n(\bfx)$ converges to $K_{\Hh_2,\Hh_1}(\bfx,\bfx)$ as $n \rightarrow \infty$. Moreover, by \eqref{eq2-4e} and uniformly-admissiblility, when $|\bfx|\geq c^{-1}$,
\begin{equation}
    \big| k_n (\bfx) \big| = \big| K_{U_n,V_n} (\bfx,\bfx)\big| \leq C_N  \Psi_{U_n, V_n}(\bfx)^{-2N} \leq C_N |\bfx|^{-2Nc}=:k(\bfx).
\end{equation}
When $N$ is large enough, we have $k(\bfx)\in \ell^1(\Z^2)$. Thus we can apply dominated convergence theorem to get:

\begin{align}
    \lim_{n \rightarrow \infty} \sum_{\bfx \in \Z^2} K_{U_n,V_n}(\bfx,\bfx) & =
    \lim_{n \rightarrow \infty} \sum_{\bfx \in \Z^2} k_n (\bfx) 
    \\
    & =  \sum_{\bfx \in \Z^2} \lim_{n \rightarrow \infty} k_n (\bfx) = \sum_{\bfx \in \Z^2} K_{\Hh_2,\Hh_1}(\bfx,\bfx) = \sigma_b^{\Hh_2,\Hh_1}(P) = \sigma_b(P).
\end{align}
Going back to \eqref{eq2-8a} completes the proof of Theorem \ref{thm:1}.
\end{proof}

\subsection{Ordering entrance / exit points} \label{sec-5.4} In the following two subsections, we aim to prove Proposition \ref{lem-10}. Let $U, V$ be transverse simple sets. For $r > 0$, $J = U,V$, define:
\begin{equations}\label{eq2-4n}
    t_J^+(r) \de \sup \big\{ t : \ \gamma_J(t) \in \overline{\Dd_r(0)} \big\}, \quad
    t_J^-(r) \de \inf \big\{ t : \ \gamma_J(t) \in \overline{\Dd_r(0)} \big\}, \quad
    \bfz_J^\pm(r) \de \gamma_J \circ t_J^\pm(r).
\end{equations}

\begin{lemma}\label{lem:17} Let $U, V$ be transverse simple sets such that $\II_{U, V} = 1$. Let $R > 0$ such that $\p U \cap \p V \subset \Dd_R(0)$. For all $r > R$, there exists $\te_1 < \te_2 < \te_3 < \te_4 < \te_1 +2\pi$ (see Figure \ref{fig:P9}), such that
\begin{equation}
    \bfz_V^-(r) = r e^{i\te_1}, \quad \bfz_U^-(r) = r e^{i\te_2}, \quad \bfz_V^+(r) = r e^{i\te_3}, \quad \bfz_U^+(r) = r e^{i\te_4}.
\end{equation}
\end{lemma}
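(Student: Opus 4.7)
My approach is to first identify each of the four extreme points as lying in one of the four ``quadrants'' $U\cap V$, $U\cap V^c$, $U^c\cap V$, $U^c\cap V^c$, then read off the cyclic order from the way these quadrants arrange themselves in $\R^2\setminus \overline{\Dd_r(0)}$. For distinctness, since $\bfz_U^\pm(r)\in \p U$, $\bfz_V^\pm(r)\in \p V$, and $\p U\cap\p V\subset \Dd_R(0)$ is disjoint from $\p\Dd_r(0)$, the pairs $\{\bfz_U^\pm(r)\}$ and $\{\bfz_V^\pm(r)\}$ are disjoint. The hypothesis $\II_{U,V}=1\neq 0$ forces $\gamma_U$ and $\gamma_V$ to meet inside $\Dd_R(0)\subset\Dd_r(0)$, so each $\gamma_J$ spends a nonempty open time interval inside $\Dd_r(0)$, giving $t_J^-(r)<t_J^+(r)$ and hence $\bfz_J^-(r)\neq \bfz_J^+(r)$.

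Next I would introduce the outward tails $\alpha_J\de \gamma_J((t_J^+(r),\infty))$ and $\beta_J\de \gamma_J((-\infty,t_J^-(r)))$. Each is a connected subset of $\R^2\setminus \overline{\Dd_r(0)}$ disjoint from the opposite boundary (since $\p U\cap \p V\subset\Dd_R(0)$), so lies entirely in one of $V,V^c$ (when $J=U$) or one of $U,U^c$ (when $J=V$). The hypothesis $\II_+(U,V)=1$, $\II_-(U,V)=0$ immediately gives $\alpha_U\subset V$ and $\beta_U\subset V^c$. For $\gamma_V$, I would establish the antisymmetry $\II_{V,U}=-\II_{U,V}=-1$ by a local computation at a transverse intersection: placing $\gamma_U$ along the $y$-axis going up and $\gamma_V$ along the $x$-axis going right, the left-of-$\gamma$ orientation convention identifies $U=\{x<0\}$ and $V=\{y>0\}$, so $\gamma_U$ traverses $V^c\to V$ (contribution $+1$) while $\gamma_V$ traverses $U\to U^c$ (contribution $-1$); summing over intersection points yields the stated antisymmetry. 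Hence $\beta_V\subset U$ and $\alpha_V\subset U^c$, and by continuity of $\gamma_J$ at $t_J^\pm(r)$ together with the fact that each extreme point lies off the opposite boundary, the four points land in the four distinct quadrants:
\[
\bfz_U^+(r)\in V,\qquad \bfz_U^-(r)\in V^c,\qquad \bfz_V^-(r)\in U,\qquad \bfz_V^+(r)\in U^c.
\]

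The cyclic order is then read off by tracking which tails bound which quadrants in the annulus. For generic $r$, the curves cross $\p\Dd_r(0)$ transversely at the four extreme points; the outward (resp.\ inward) radial component of the tangent at $\bfz_J^+(r)$ (resp.\ $\bfz_J^-(r)$), combined with the orientation convention, places $J$ locally on the counterclockwise (resp.\ clockwise) side of $\bfz_J^+(r)$ (resp.\ $\bfz_J^-(r)$) along $\p\Dd_r(0)$. Starting at $\bfz_U^+(r)$ and walking counterclockwise, the adjacent region in $\R^2\setminus\overline{\Dd_r(0)}$ lies in $U\cap V$; the only tails having a side in $U\cap V$ are $\alpha_U$ (on its $U$-side) and $\beta_V$ (on its $V$-side, since $\beta_V\subset U$), so the next extreme point counterclockwise must be $\bfz_V^-(r)$. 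Iterating through the quadrant pattern $(U\cap V)\to(U\cap V^c)\to(U^c\cap V^c)\to(U^c\cap V)$, each bounded by the appropriate pair of tails, produces the counterclockwise cyclic order $\bfz_U^+(r),\bfz_V^-(r),\bfz_U^-(r),\bfz_V^+(r)$, equivalent to the desired $\te_1<\te_2<\te_3<\te_4<\te_1+2\pi$.

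The main obstacle is the last step: the four disjoint tails really are the only parts of $\p U\cup \p V$ outside $\overline{\Dd_r(0)}$ only if $\gamma_U,\gamma_V$ make no additional ``inner'' excursions out of $\Dd_r(0)$ during $(t_J^-(r),t_J^+(r))$, which in general they may, and these excursions could create additional boundary arcs in the annulus. I would handle this by restricting to generic $r$ (at which the sets $\gamma_J^{-1}(\p\Dd_r(0))$ are finite and the crossings transverse, via a Sard-type argument) and noting that the cyclic order of the four extreme points is locally constant in $r$ away from coincidences, so it suffices to verify the claim at one suitably chosen $r$. A secondary subtle point is the antisymmetry $\II_{V,U}=-\II_{U,V}$, which for non-transverse intersections requires an approximation argument.
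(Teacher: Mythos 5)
Your opening steps (distinctness of the four points, and $\Gamma_U^+\subset V$, $\Gamma_U^-\subset V^\CCC$ from $\II_+(U,V)=1$, $\II_-(U,V)=0$) agree with the paper, but the two steps that carry the actual content of the lemma have genuine gaps. The first is your use of the antisymmetry $\II_{V,U}=-\II_{U,V}$ to place the tails of $\gamma_V$ in $U$ and $U^\CCC$. The intersection number of Definition \ref{def:5} is defined through the limits of $\1_V\circ\gamma_U(t)$ as $t\to\pm\infty$, not through signed crossings, and the transversality of Definition \ref{def-1b} is purely a condition at infinity: inside $\Dd_R(0)$ the boundaries are only piecewise smooth and may touch tangentially, overlap along arcs, or meet in infinitely many points. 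Your justification (a model picture at one transverse crossing, ``summing over intersection points,'' plus an unspecified ``approximation argument'' for degenerate intersections) therefore presupposes a crossing-count formula for $\II_{U,V}$ that is never established and is essentially as deep as the lemma itself. Note that the paper's proof never needs any information about where $\gamma_V$'s tails sit relative to $U$: it only uses $\Gamma_U^\pm$ versus $V$.

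The second gap is the cyclic-order step, and your proposed fix does not repair it. At a generic radius the extra arcs of $\p U\cup\p V$ outside $\overline{\Dd_r(0)}$ produced by excursions of $\gamma_J$ during $(t_J^-(r),t_J^+(r))$ are still present (genericity only makes the crossings with the circle finite and transverse), so the region adjacent to $\bfz_U^+(r)$ along $\p\Dd_r(0)$ may be bounded by such an excursion arc rather than by one of the four tails, and the ``walk through the quadrants'' argument breaks down. The reduction to one suitably chosen $r$ is also unjustified: $t_J^\pm(r)$, hence $\bfz_J^\pm(r)$, can jump as $r$ grows and the disk absorbs a new excursion, so the cyclic order is not obviously locally constant in $r$, and in any case the lemma is asserted for every $r>R$. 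The paper sidesteps both problems with a single construction: it replaces the whole portion of $\gamma_V$ between $t_V^-(r)$ and $t_V^+(r)$ by the clockwise arc of $\p\Dd_r(0)$ joining $\bfz_V^-(r)$ to $\bfz_V^+(r)$, obtaining a simple path $\gamma_W$ whose left component $W$ differs from $V$ by a bounded set (Proposition \ref{prop:4}) and is disjoint from $\Dd_r(0)$; the unbounded connected sets $\Gamma_U^+\subset V$ and $\Gamma_U^-\subset V^\CCC$ are then trapped in $W$ and in its complement, which forces $\bfz_U^+(r)$ and $\bfz_U^-(r)$ onto the two prescribed arcs of the circle, with no genericity and no antisymmetry needed. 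To salvage your route you would need to prove the crossing formula/antisymmetry and a genuinely global argument controlling the excursion arcs, at which point you would in effect be reproving the paper's construction.
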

\begin{figure}[b] % Here, [h] stands for "here", indicating to place the figure here if possible
    %\centering % Center the figure
\floatbox[{\capbeside\thisfloatsetup{capbesideposition={right,center},capbesidewidth=0.6\textwidth}}]{figure}[\FBwidth]
    {\caption{The sets $U,V$ with the arguments $\te_1, \te_2, \te_3, \te_4$.}} % Caption for the figure
    {\includegraphics[width=0.85\linewidth]{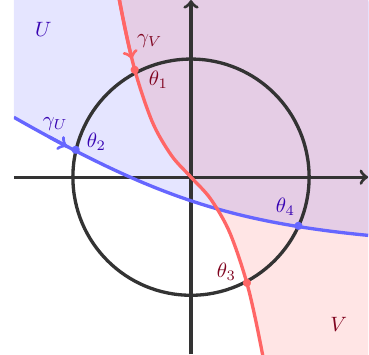} \label{fig:P9}}
\end{figure}
\begin{proof} 1. Fix $r > R$. Define
\begin{equation}
    \Gamma_U^+ = \gamma_U\big(( t_U^+(r),+\infty)\big), \qquad \Gamma_U^- = \gamma_U\big(( -\infty,t_U^-(r))\big)
\end{equation}
These are connected sets which do not intersect $\p V$. In particular they lies in $V$ or in $V^\CCC$. But since $\II_+(U,V) = 1$, for $t$ sufficiently large $\gamma_U(t) \in V$. It follows that $\Gamma_U^+ \subset V$. Likewise $\Gamma_U^- \subset V^\CCC$.

2. Let $\te_1$ such that $\bfz_V^-(r) = re^{i\te_1}$; let $\te_3 \in (\te_1,\te_1+2\pi)$ such that $\bfz_V^+(r) = r e^{i\te_3}$. 
Let $\gamma_W$ be the curve defined by:
\begin{equation}
    \gamma_W = \gamma_V \big|_{t \leq  t_V^-(r)} \oplus re^{-i\te}\big|_{\te \in (-\te_1,2\pi-\te_3)} \oplus \gamma_V \big|_{t \geq t_V^+(r)}.
\end{equation}
where we use the symbol $\oplus$ to denote the concatenation of two curves. See Figure \ref{fig:P10}.
Let $W$ the connected component of $\C \setminus \gamma_W(\R)$ to the left of $\gamma_W$. The function $\te \mapsto e^{-i\te}$ runs clockwise. Therefore (e.g. by considering a sufficiently small disk centered at $re^{i\te}$, $\te \in (\te_1,\te_2)$, simply split by $\gamma_W$) we see that $\Dd_r(0)$ lies in the component of $\C \setminus \Gamma_W$ to the right of $\gamma_W$, in particular it does not intersect $W$.

3. Note that $\Gamma_U^+$ is an unbounded connected subset of $V$. Because $V \Delta W$ is bounded (see Proposition \ref{prop:4}) $\Gamma_U^+$ intersects $W$. Moreover, it does not intersect $\p W$, so we have $\Gamma_U^+ \subset W$. In particular, 
\begin{equation}
   \bfz_U^+(r) = \lim_{t \rightarrow t_U^+(r)} \gamma_U(t) \in \overline{W}. 
\end{equation}
Because $|\bfz_U^+(r)| = r$, we obtain 
\begin{equation}\label{eq2-6f}
    \bfz_U^+(r) \in \overline{W} \cap \p \Dd_r(0) = \big\{ re^{-i\te}: \ \te \in (-\te_1,2\pi-\te_3) \big\} %= \big\{ e^{i\te}: \ \te \in (\te_2,\te_1+2\pi) \big\}
\end{equation}
Therefore, $\bfz_U^+(r) = r e^{\te_4}$ for some $\te_4 \in (\te_3,\te_1+2\pi)$.

\begin{figure}[b] % Here, [h] stands for "here", indicating to place the figure here if possible
    %\centering % Center the figure
\floatbox[{\capbeside\thisfloatsetup{capbesideposition={right,center},capbesidewidth=0.6\textwidth}}]{figure}[\FBwidth]
    {\caption{Construction of $\gamma_W$ from $\gamma_V$ and determination of $\theta_4\in (\theta_3, \theta_1 + 2\pi)$.}} % Caption for the figure
    {\includegraphics[width=0.85\linewidth]{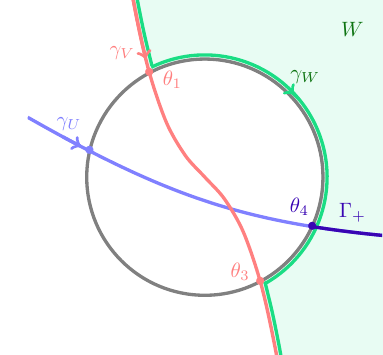} \label{fig:P10}}
\end{figure}

4. By a similar argument, $\Gamma_U^-$ lies in the component of $\C \setminus \Gamma_W$ to the right of $\gamma_W$. Because $\Gamma_U^-$ does not intersect $\Dd_r(0)$ and $|\bfz_U^-(r)| = r$, we deduce that as in \eqref{eq2-6f} that
\begin{equation}
    \bfz_U^-(r) \in \big\{ re^{i\te}: \ \te \in (\te_1,\te_3) \big\},
\end{equation}
which implies $\bfz_U^-(r) = r e^{\te_2}$ for some $\te_3 \in (\te_1,\te_3)$. This completes the proof.  
\end{proof}

\subsection{Construction of uniformly transverse family of sets}\label{sec-5.6}

Let $(U,V)$ be a $c$-transverse pair with $\II_{U, V} = 1$. Now we can construct a family of uniformly transverse sets $(U_n, V_n)$ that is equivalent to $(U, V)$ such that $U_n \cap \Dd_n(0) = \Hh_1 \cap \Dd_n(0)$, $V_n \cap \Dd_n(0) = \Hh_2 \cap \Dd_n(0)$. 

By Lemma \ref{lem:17}, when $n$ is large enough, for some $\te_1 < \te_2 < \te_3 < \te_4 <\te_1 +2\pi$, we have 
\begin{equation}
    \bfz_V^-(4n) = 4n e^{i\te_1}, \quad \bfz_U^-(4n) = 4n e^{i\te_2}, \quad \bfz_V^+(4n) = 4n  e^{i\te_3}, \quad \bfz_U^+(4n) = 4n  e^{i\te_4}.
\end{equation}
For such an $n$, we define four functions $\alpha_k : [0,4] \rightarrow \R$, $k \in \{1,2,3,4\}$ by
\begin{equation}
    \alpha_k(s) = \systeme{ \frac{k \pi}{2}, & s \in [0,2] \\ 
    (3-s) \frac{k \pi}{2} + (s-2) \te_k, & s \in [2,3]\\
    \te_k, & s \in [3,4]}
\end{equation}
and four curves $\bfz_k : [0,4] \rightarrow \C$ (see Figure \ref{fig:P11}) by
\begin{equation}\label{eq2-5r}
    \bfz_k(s) = ns \cdot e^{i\alpha_k(s)}.
\end{equation}
\begin{figure}[b]
  \centering
  \includegraphics[width=1\textwidth]{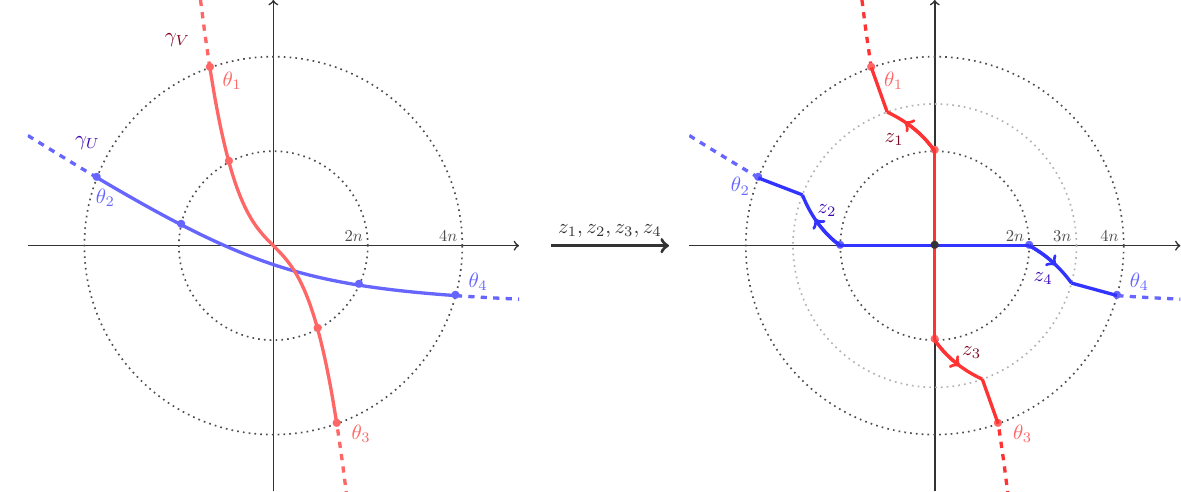}
  \caption{The curves $\bfz_1, \bfz_2, \bfz_3, \bfz_4$.}
  \label{fig:P11}
\end{figure}

We will make use of the following result:

\begin{lemma}\label{lemma-14} Let $c \in [0,1], n \in \N$. Assume that $t,s \in [0,4]$ are such that $|\bfz_1(t)| - |\bfz_2(s)| <2^{-7}n^c$. Then 
         \begin{equation}\label{eq-6p}
        2n \left|\sin \left(\frac{\alpha_1(t) - \alpha_2(s)}{2}\right)\right|\geq 2^{-4} n^{ c}.
        \end{equation}
\end{lemma}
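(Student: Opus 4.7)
The plan is to translate the hypothesis into a bound on $|t-s|$ and then perform a case analysis based on the piecewise structure of $\alpha_1$ and $\alpha_2$.

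Since $|\bfz_k(s)| = ns$ by \eqref{eq2-5r}, I would read the hypothesis (interpreting the difference as an absolute value, otherwise the bound is vacuous when $|\bfz_2(s)|$ dominates) as
\[
|t - s| < 2^{-7} n^{c-1} \leq 2^{-7},
\]
using $c \leq 1$ and $n \geq 1$. In particular $t$ and $s$ must lie in the same or adjacent elements of the partition $\{[0,2], [2,3], [3,4]\}$ of $[0,4]$, since each element has length at least $1$.

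I would first handle the three pure cases where $t$ and $s$ lie in the same sub-interval. When $t,s \in [0,2]$, a direct substitution gives $\alpha_1(t) - \alpha_2(s) = \pi/2 - \pi = -\pi/2$, hence $2n|\sin(-\pi/4)| = n\sqrt{2} \geq 2^{-4} n^c$ since $n^{1-c} \geq 1$. When $t,s \in [3,4]$, one has $\alpha_1(t) - \alpha_2(s) = \theta_1 - \theta_2$, which is a fixed nonzero angle in $(-2\pi,0)$ by Lemma \ref{lem:17}, so $|\sin((\theta_1-\theta_2)/2)|$ is a positive constant and the bound follows for $n$ large enough. When $t,s \in [2,3]$, substituting the affine interpolation formula shows $\alpha_1(t) - \alpha_2(s)$ interpolates between $-\pi/2$ at $(t,s)=(2,2)$ and $\theta_1 - \theta_2$ at $(t,s)=(3,3)$, and a short calculation shows it remains in a fixed cone bounded away from $0 \pmod{2\pi}$.

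The mixed cases (one of $t,s$ in $[0,2] \cup [3,4]$, the other in $[2,3]$) reduce to the pure cases using the Lipschitz property of $\alpha_k\big|_{[2,3]}$ (with slope $|\theta_k - k\pi/2| < 2\pi$), combined with the small-$|t-s|$ bound above. The main obstacle is ensuring the lower bound on $|\sin((\alpha_1(t) - \alpha_2(s))/2)|$ is uniform in the relevant parameters: in the subcase $t,s \in [3,4]$ the bound is configuration-dependent (it degenerates as $\theta_2 - \theta_1 \downarrow 0$), so matching the stated exponent $c$ requires taking $n$ sufficiently large, which is consistent with the large-$n$ regime implicit in Proposition \ref{lem-10}.
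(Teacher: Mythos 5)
There is a genuine gap, and it sits exactly where your proposal admits discomfort: the subcase $t,s\in[3,4]$ (and, through the interpolation, all cases touching $[2,3]$). You treat $\theta_1-\theta_2$ as ``a fixed nonzero angle'' whose sine is ``a positive constant,'' and then argue the bound ``follows for $n$ large enough.'' But $\theta_1,\theta_2$ are not fixed: they are defined through Lemma \ref{lem:17} at radius $r=4n$, so $\theta_k=\theta_k(n)$, and the angular gap $|\theta_1(n)-\theta_2(n)|$ can shrink as $n\to\infty$ (transversality only guarantees $\Psi_{U,V}\gtrsim|\bfx|^c$ with $c<1$, which is compatible with $\p U$ and $\p V$ approaching each other in angle like $n^{c-1}$). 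Consequently ``take $n$ large'' cannot rescue the argument: the inequality \eqref{eq-6p} must hold for every $n$ with constants uniform in $n$ -- this uniformity is precisely what Case 4 of Lemma \ref{prop-1} consumes to prove uniform transversality of $\{(U_n,V_n)\}$ -- and your claimed constant lower bound on $|\sin((\theta_1-\theta_2)/2)|$ is not uniform in $n$.

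The missing ingredient is the transversality of $(U,V)$ itself, which your proposal never invokes. Since $\bfz_1(4)=\bfz_V^-(4n)\in\p V$ and $\bfz_2(4)\in\p U$, $c$-transversality gives
\begin{equation}
(4n)^c\ \leq\ \Psi_{U,V}\big(\bfz_1(4)\big)\ \leq\ \big|\bfz_1(4)-\bfz_2(4)\big|\ \leq\ 4n\,|\theta_1-\theta_2|,
\end{equation}
i.e. $|\theta_1-\theta_2|\geq(4n)^{c-1}$: a quantitative, $n$-dependent lower bound replacing your ``fixed positive constant.'' From there the paper avoids your case analysis entirely: $\alpha_1(t)-\alpha_2(t)$ is monotone in $t$, so $|\alpha_1(t)-\alpha_2(t)|\geq\min\{|\theta_1-\theta_2|,\pi/2\}\geq\tfrac14 n^{c-1}$; the hypothesis $nt-ns<2^{-7}n^c$ plus the Lipschitz bound $|\alpha_2'|\leq 4\pi$ gives $|\alpha_2(t)-\alpha_2(s)|\leq 2^{-3}n^{c-1}$; subtracting yields $|\alpha_1(t)-\alpha_2(s)|\geq\tfrac18 n^{c-1}$, and $|\sin x|\geq|x|/2$ for $|x|\leq\pi/2$ gives $2n|\sin(\cdot)|\geq 2^{-4}n^c$. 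Your Lipschitz-plus-small-$|t-s|$ mechanism for the mixed cases is in the right spirit, but without the transversality input the exponent $n^c$ on the right-hand side cannot be produced, so the proof as proposed does not close.
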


\begin{proof}
         We first estimate $\alpha_1(4) - \alpha_2(4) = \theta_1 - \theta_2$. Without loss of generality, we can assume $|\theta_1 - \theta_2|< \pi$ by choosing $\theta_2\in (\theta_1 - \pi, \theta_1 + \pi)$. Recall that $\bfz_1(4) = \bfz_V^-(4n)\in \p V$ by definition. By transversality, we see
         \[
         \begin{split}
             (4n)^c &\leq \PUV\big(\bfz_1(4)\big) = d\big(\bfz_1(4), \p U\big)\leq \big|\bfz_1(4) - \bfz_2(4)\big|\\
             &= 4n\big|e^{i\theta_1} - e^{i\theta_2}\big| \leq 4n|\theta_1 -\theta_2|.
         \end{split}
         \]
         Since $c<1$, for any $n\geq 1$, $(4n)^{c - 1}\leq 1\leq \frac{\pi}{2}$. Recall that by definition, $\alpha_1(t) - \alpha_2(t)$ is monotone over $[0,4]$. Hence we obtain
         \begin{equation}
             \label{eq-6n}
         \begin{split}
             |\alpha_1(t) - \alpha_2(t)| &\geq \min \big\{|\alpha_1(4) - \alpha_2(4)|, |\alpha_1(0) - \alpha_2(0)|\big\}\geq \min \big\{|\theta_1 - \theta_2|, \frac{\pi}{2}\big\}\\
             &\geq \min \left\{(4n)^{c-1}, \frac{\pi}{2}\right\}\geq (4n)^{c - 1} \geq \frac{1}{4}n^{c - 1}.
         \end{split}
         \end{equation}
         Meanwhile, by the definition of $\alpha_k(s)$ and the assumption $|\bfz_1(t)| - |\bfz_2(s)| = nt - ns \leq 2^{-7} n^{c}$, we have 
         \begin{equation}
             \label{eq-6o}
         |\alpha_2(t) - \alpha_2(s)|\leq |t - s|\cdot\max\limits_{r\in[0,4]}|\alpha_2'(r)|\leq |t - s|\cdot  4\pi\leq 2^{-3} n^{c - 1}.
         \end{equation}
         Combining \eqref{eq-6n} and \eqref{eq-6o}, we obtain 
    \[
    \begin{split}
        |\alpha_1(t) - \alpha_2(s)|&\geq |\alpha_1(t) - \alpha_2(t)| - |\alpha_2(t) - \alpha_2(s)|\\ 
   &\geq \frac{1}{4}n^{c -1} -\frac{1}{8}n^{c - 1} = \frac{1}{8}n^{c - 1}. 
   \end{split}
     \]
     Since $\left|\frac{\alpha_1(t) - \alpha_2(s)}{2}\right| = \left|\frac{\theta_1 - \theta_2}{2}\right| \leq \frac{\pi}{2}$ and when $| \alpha |\leq \frac{\pi}{2}$, $|\sin  \alpha| \geq |r|/2$, we see that 
     \[
     2n\left|\sin\left(\frac{\alpha_1(t) - \alpha_2(s)}{2}\right)\right|\geq \frac{2n|\alpha_1(t) - \alpha_2(s)|}{4} \geq  2^{-4} n^c
     \]
     for any $n\geq 1$. This proves Lemma \ref{lemma-14}.      \end{proof}

We now deform $U, V$ by using $\bfz_k$ as boundary functions. Specifically, we define two sets $U_n, V_n$ as the set lying to the left of the following boundaries (see Figure \ref{fig:P12}) -- recall that $\alpha_k$ and $\bfz_k$ depends on $n$, see \eqref{eq2-5r}:
\begin{equations}\label{eq2-7y}
    \p U_n = \gamma_U\big((-\infty,t_U^-(4n))\big) \cup \bfz_2\big([0,4]\big) \cup \bfz_4\big([0,4]\big) \cup \gamma_U\big((t_U^+(4n), +\infty)\big),
    \\
    \p V_n = \gamma_V\big((-\infty,t_V^-(4n))\big) \cup \bfz_1([0,4]) \cup \bfz_3([0,4]) \cup \gamma_V\big((t_V^+(4n),+\infty)\big).
\end{equations}
\begin{figure}[b]
  \centering
  \includegraphics[width=0.75\textwidth]{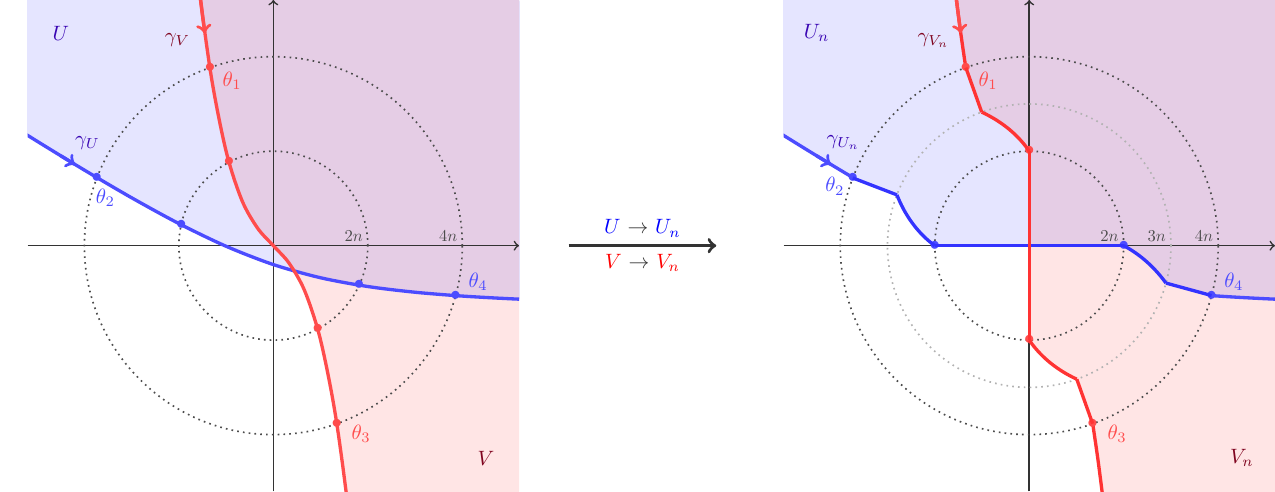}
  \caption{Deforming $U, V$ to $U_n, V_n$. Blue (red) region represents $U$ ($V$) and $U_n$ ($V_n$).}
  \label{fig:P12}
\end{figure}

\begin{lemma}\label{prop-1}   Assume $(U,V)$ is $c$-transverse, $c \in (0,1)$. There exists some $c' \in (0,1)$ such that $\big\{(U_n, V_n)\big\}_{n\geq 1}$ is $c'$-uniformly transverse.
\end{lemma}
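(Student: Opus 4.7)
I would prove $c'$-uniform transversality for some $c' \in (0, 1)$ by splitting into three regimes on $|\bfx|$ and lower-bounding $\Psi_{U_n, V_n}(\bfx)$ uniformly in $n$ within each. Write $A_1 = \bfz_2([0, 2]) \cup \bfz_4([0, 2])$ for the axis pieces of $\p U_n$, $A_{23} = \bfz_2([2, 4]) \cup \bfz_4([2, 4])$ for the interpolating pieces, and $A_4 = \p U \cap \Dd_{4n}(0)^c$ for the outer piece; define $B_1, B_{23}, B_4$ analogously for $\p V_n$.

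In the inner regime $|\bfx| \leq n$, $A_1$ agrees with the real axis on a neighborhood of $\bfx$ and $B_1$ with the imaginary axis, so $d_1(\bfx, \p U_n) \geq |x_2|$, $d_1(\bfx, \p V_n) \geq |x_1|$, and $\Psi_{U_n, V_n}(\bfx) \geq 1 + |\bfx|$. In the outer regime $|\bfx| \geq 5n$, the pieces $A_1 \cup A_{23} \subset \Dd_{4n}(0)$ lie at distance $\geq |\bfx|/5$ from $\bfx$, while $A_4 \subset \p U$ gives $d_1(\bfx, A_4) \geq d_1(\bfx, \p U)$. Combining with the analogous bounds for $V$, a short case analysis yields $\Psi_{U_n, V_n}(\bfx) \geq \min(|\bfx|/5, |\bfx|^c)$, which is $\geq |\bfx|^c$ for $|\bfx|$ large.

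The heart of the argument is the transition regime $n \leq |\bfx| \leq 5n$. Take $\bfp \in \p U_n$, $\bfq \in \p V_n$ realizing the two distances. If either exceeds $n^c/100$ we are done (as $|\bfx|^c \leq (5n)^c$); otherwise $|\bfp - \bfq| \leq n^c/50$ and $|\bfp|, |\bfq| \geq n/2$. I then rule out small $|\bfp - \bfq|$ in each pairing: when $\bfp \in A_4$ and $\bfq \in B_4$, $c$-transversality of $(U, V)$ at $\bfp \in \p U$ gives $d_1(\bfp, \p V) \geq |\bfp|^c - 1 \gtrsim n^c$, contradiction. When $\bfp \in A_{23}$ and $\bfq \in B_{23}$, write $\bfp = \bfz_k(s)$, $\bfq = \bfz_j(t)$ with $k \in \{2, 4\}$, $j \in \{1, 3\}$, $s, t \in [1/2, 4]$; either $|s - t| \geq 2^{-7} n^{c - 1}$ so $|\bfp - \bfq| \geq n|s - t| \geq 2^{-7} n^c$, or Lemma \ref{lemma-14} (or its analogue for the relevant pair $(\bfz_j, \bfz_k)$) gives $2n|\sin((\alpha_j(t) - \alpha_k(s))/2)| \geq 2^{-4} n^c$ and hence $|\bfp - \bfq|^2 \geq 4 n^2 s t \sin^2 \gtrsim n^{2c}$ using $s t \geq 1/4$. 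The mixed sub-cases (one of $\bfp, \bfq$ on an outer piece, the other on an interpolating curve) combine a transversality bound at the outer point with an angular-separation bound along the interpolating curve.

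The main obstacle is to establish the analogues of Lemma \ref{lemma-14} for the other pairs of interpolating curves $(\bfz_j, \bfz_k)$ with $j \in \{1,3\}, k \in \{2,4\}$, and to stitch the various sub-cases cleanly. These analogues follow from the same template as Lemma \ref{lemma-14}: $c$-transversality of $(U, V)$ applied at the boundary crossings $\bfz_U^\pm(4n) \in \p U$ and $\bfz_V^\pm(4n) \in \p V$ forces $|\theta_k - \theta_j| \gtrsim n^{c - 1}$ (with an appropriate branch choice for the cyclic pair $(\bfz_1, \bfz_4)$), after which the argument proceeds identically. Collecting the three regimes yields $c'$-uniform transversality for some $c' \in (0, c]$.
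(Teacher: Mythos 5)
Your plan is essentially the paper's own argument, just reorganized: the paper also reduces to a case analysis (indexed by the radii of the nearest boundary points rather than by $|\bfx|$), uses transversality of $(U,V)$ where the boundaries are unchanged, perpendicularity of the axes near the origin, the dichotomy between radial separation and the angular bound of Lemma \ref{lemma-14} (handling the other adjacent pairs of curves by the same symmetry you invoke), and the inequality $|r_1e^{i\te_1}-r_2e^{i\te_2}|^2\geq 4r_1r_2\sin^2\big(\tfrac{\te_1-\te_2}{2}\big)$. The only slip is constant bookkeeping in the transition regime ($2^{-7}n^{c}$ does not exceed your closeness threshold $n^{c}/50$), which is harmless: shrink the initial threshold, or simply use $\Psi_{U_n,V_n}(\bfx)\geq 1+|\bfp-\bfq|$ to turn the pairing analysis into a direct lower bound, as the paper does.
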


\begin{proof}
     Given $A,B \subset \R^2$ and $\bfx\in \R^2$, let $\bfx_A$, $\bfx_B$ denote points on $\p A$, $\p B$ that minimizing the distances to $\p A, \p B$: 
    \[
    |\bfx -  \bfx_A| = \min_{\bfy\in \p A}|\bfx- \bfy|, \qquad |\bfx- \bfx_B| = \min_{\bfy\in \p B}|\bfx- \bfy|.
    \]
    Without loss of generalities, we may assume $|\bfx_{U_n}|\geq |\bfx_{V_n}|$. To prove the lemma, it suffices to show that $\Psi_{U_n, V_n}(\bfx) \geq |\bfx|^{\frac{c}{4}}$ when $|\bfx|$ is large enough, uniformly in $n$. %To prove that $(U_n,V_n)$ is XXX-transverse, it suffices to show that for $|\bfx| \geq 19 c^{-1}$, $\Psi_{U_n V_n}(\bfx) \geq |\bfx|^{c/19}$.
    
   \noindent \textbf{Case 1: $|\bfx_{V_n}|\geq 4n$.} Then $|\bfx_{U_n}|\geq|\bfx_{V_n}|\geq 4n $; thus $\bfx_{U_n} = \bfx_U$, $\bfx_{V_n} = \bfx_V$ and 
    \[
    \Psi_{U_n, V_n}(\bfx) =  \PUV(\bfx) \geq |\bfx|^c
    \]
    when $|\bfx|\geq 1/c$, which is uniform in $n$. \\
    
   \noindent \textbf{Case 2: $|\bfx_{V_n}|<4n$ and $|\bfx|\geq 5n$.} Then $|\bfx| - 4n \geq |\bfx|/5$; thus 
    \[
     \Psi_{U_n,V_n}(\bfx)\geq  |\bfx -  \bfx_{V_n}| \geq  |\bfx| - |\bfx_{V_n}| \geq  |\bfx| - 4 n \geq  \frac{|\bfx|}{5} \geq |\bfx|^c,
    \]
    when $|\bfx|\geq 5$, which is uniform in $n$. \\
    
   \noindent  \textbf{Case 3: $|\bfx_{V_n}|<4n$, $|\bfx|<5n$, and $|\bfx_{U_n}| - |\bfx_{V_n}|\geq 2^{-7} n^c$.} Then 
    \[
    \begin{split}       
      \Psi_{U_n,V_n}(\bfx) &=  |\bfx - \bfx_{U_n}| + |\bfx - \bfx_{V_n}| \geq  |\bfx_{U_n} - \bfx_{V_n}| \geq  |\bfx_{U_n}| - |\bfx_{V_n}|\\
     &\geq  2^{-7} n^c \geq 2^{-7}\left(  \frac{|\bfx|}{5}\right)^c \geq |\bfx|^{\frac{c}{2} },
    \end{split}
    \]
    when $|\bfx|$ is large enough, uniformly in $n$. \\
       
    \noindent \textbf{Case 4: $|\bfx_{V_n}|<4n$, $|\bfx|<5n$, and  $|\bfx_{U_n}| - |\bfx_{V_n}|< 2^{-7} n^c$.} Since $c<1$ and $n\geq 1$, $|\bfx_{U_n}| - |\bfx_{V_n}|< n$ and we have $|\bfx_{U_n}|\in[0,5n)$. We divide it into three cases.
    
    Assume first that $|\bfx_{U_n}|\in[0,2n)$. Then we have $|\bfx_{V_n}|\leq |\bfx_{U_n}|< 2n$. Since in $\Dd_{4n}(0)$, the boundaries $\p U_n$ and $\p V_n$ coincides with the $x$ and $y$-axis from $-4n$ to $4n$, if we write $\bfx = (\bfx_1, \bfx_2)$, we have
    \[
     \Psi_{U_n, V_n}(\bfx) =  |\bfx - \bfx_{U_n}| + |\bfx - \bfx_{V_n}| =  |\bfx_1| + |\bfx_2| \geq   |\bfx|\geq |\bfx|^c
    \]
    when $|\bfx|\geq 1$, which is uniform in $n$.  
    
    Assume now that $|\bfx_{U_n}|\in 
(4n,5n)$. Then $|\bfx_{V_n}|\in [3n,4n]$. Since $\bfx_{V_n}\in \p V_n\cap \Dd_{4n}(0)$, by definition, $\bfx_{V_n}\in [3ne^{i\theta_k}, 4ne^{i\theta_k}]$ for some $k \in \{1,2,3,4\}$. Set $\tilde \bfx_V:= 4ne^{i\theta_k}\in \p V$. Then 
\[
|\bfx_{V_n} -  \tilde \bfx_V| = \min_{|\bfy|\geq 4n} |\bfx_{V_n} - \bfy| \leq |\bfx_{V_n} - \bfx_{U_n}| \leq \Psi_{U_n, V_n}(\bfx). 
\]
As a result,
\[
\begin{split}
    2\Psi_{U_n, V_n}(\bfx)&\geq |\bfx - \bfx_{V_n}| + |\bfx- \bfx_{U_n}| + | \bfx_{V_n}- \tilde \bfx_V| \geq |\bfx- \tilde \bfx_V| + |\bfx- \bfx_{U_n}|\\
    &\geq |\bfx- \bfx_V| + |\bfx- \bfx_U| = \PUV(\bfx)\geq |\bfx|^c,
\end{split}
\]
when $|\bfx|\geq 1/c$, which is uniform in $n$. 
    
    Finally, assume that $|\bfx_{U_n}|\in [2n,4n]$. Then we have $|\bfx_{V_n}|\in [n,3n]$. So both $\bfx_{U_n}$ and $\bfx_{V_n}$ sit on some $\bfz_j$ curves; we write without loss of generalities $\bfx_{U_n} = \bfz_1(t) = nte^{i\alpha_1(t)}$, $\bfx_{V_n} = \bfz_2(s) = nse^{i\alpha_2(s)}$. Then we have 
    \begin{align}\label{eq-9b}
     \Psi_{U_n, V_n}(\bfx)^2 & \geq |\bfx_{U_n} - \bfx_{V_n}|^2 
     \geq 4n^2\sin^2\left(\frac{\alpha_1(t) - \alpha_2(s)}{2}\right)
     \geq 2^{ - 4} n^c\geq 2^{ - 4}\left(\frac{|\bfx|}{5}\right)^{c}\geq |\bfx|^{\frac{c}{2}},
    \end{align}
    when $|\bfx|$ is large enough, uniformly in $n$. In \eqref{eq-9b}
    we first applied the inequality
    \begin{equation}
        |r_1 e^{i\te_1}-r_2e^{i\te_2}|^2 = r_1^2 + r_2^2 - 2 r_1 r_2 \cos(\te_1-\te_2) \geq 2r_1 r_2 \big( 1- \cos(\te_1-\te_2) \big) = 4 r_1 r_2 \sin^2\left(\dfrac{\te_1-\te_2}{2}\right).
    \end{equation} 
    together with $|\bfx_{U_n}|, |\bfx_{V_n}| \geq n$; then we applied  \eqref{eq-6p}.\\

\noindent \textbf{Summary.} Cases 1-4 imply that for all $n\geq 1$, $\Psi_{U_n,V_n}(\bfx)\geq |\bfx|^{\frac{c}{2}}$  when $|\bfx|$ is large enough, uniformly in $n$. Hence there is some $c'<c$ such that $(U_n, V_n)$ is $c'$-uniformly transverse. \end{proof}

\begin{proof}[Proof of Proposition \ref{lem-10}] Let $(U,V)$ be transverse sets and $(U_n,V_n)$ defined by \eqref{eq2-7y}. Introduce
\begin{equation}
    \FF = \big\{ (U_n, V_n) : n \geq 1 \big\}.
\end{equation}
By Lemma \ref{prop-1}, $\FF$ is uniformly transverse; this proves (a). Moreover, because $U,U_n$ respectively lie to the left of $\gamma_U, \gamma_{U_n}$, and $\gamma_{U_n}$ is a compact perturbation of $\gamma_U$, $U \Delta U_n$ is bounded -- see Proposition \ref{prop:4}. Likewise $V \Delta V_n$ is bounded. It follows that $\FF$ is equivalent to $(U,V)$; this proves (b). Finally, $\gamma_{U_n}$ simply splits $\Dd_n(0)$, and $\Hh_1 \cap \Dd_n(0)$ is the left of $\gamma_{U_n}$ in $\Dd_n(0)$ (see Definition \ref{def:3}). It follows from Proposition \ref{prop:1} that 
\begin{equation}
    \Hh_1 \cap \Dd_n(0) = U_n \cap \Dd_n(0).
\end{equation}
Likewise, $\Hh_2 \cap \Dd_n(0) = V_n \cap \Dd_n(0)$. This proves (c). 
\end{proof}

This completes the proof of Theorem \ref{thm:1}.

\section{Extension}\label{sec-7}

Here we extend Theorem \ref{thm:1} to general transverse sets $(U,V)$, when neither $U$ nor $V$ are assumed to be simple, but instead are \textit{good:}

\begin{definition}\label{def:4} An open subset $A$ of $\R^2$ is good if:
\begin{itemize}
    \item[(a)] The set $A^\CCC := \Int A^c$ has boundary $\p A$;
    \item[(b)] The connected components $\{ \Gamma_k : k \in \N \}$ of $\p A$ are the ranges of simple paths or loops and satisfy $\inf\limits_{j \neq k} d(\Gamma_j, \Gamma_k) > 0$.
    \item[(c)] $\p A$ does not intersect $\Z^2$. 
\end{itemize}
\end{definition}

It turns out that one can naturally extend the intersection number between transverse simple sets $\II_{U, V}$ (\S\ref{sec-5.2}) to transverse good sets -- see \S\ref{sec-7.2}. In rough terms, $\II_{U, V}$ counts the signed number of times that $\p U$ (oriented so that $U$ lies to its left) enters $V$. We then have the following extension of Theorem \ref{thm:1}: 

\begin{theorem}\label{thm:2} Assume that Assumption \ref{ass-all} holds and that $U,V$ are transverse good sets. Then
\begin{equation}\label{eq2-7f}
    \sigma_b^{U, V}(P_\pm) = \II_{U, V} \cdot \sigma_b(P_\pm).
\end{equation}
\end{theorem}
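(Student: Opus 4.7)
The plan is to reduce Theorem \ref{thm:2} to Theorem \ref{thm:1} via additivity of the geometric bulk conductance over the connected components of $\partial U$. Write $\partial U = \bigsqcup_k \Gamma_k$ as in Definition \ref{def:4}; each $\Gamma_k$ is the range of a simple path or simple loop, and the separation condition guarantees that any bounded set meets only finitely many of them. For each $k$, I would build a simple set $U_k$ whose boundary is $\Gamma_k$ by picking one of the two connected components of $\R^2 \setminus \Gamma_k$, namely the one that agrees with $U$ in a fixed tubular neighborhood of $\Gamma_k$. Because $\partial U_k \subset \partial U$, we have $\Psi_{U_k, V}(\bfx) \geq \Psi_{U, V}(\bfx)$, so $(U_k, V)$ is transverse with the same transversality constant, and Theorem \ref{thm:1} applies to give $\sigma_b^{U_k, V}(P_\pm) = \II_{U_k, V} \cdot \sigma_b(P_\pm)$.

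The core input is the kernel-level identity
\begin{equation}\label{eq:key_proposal}
[P_\pm, \1_U](\bfx, \bfy) = \sum_k [P_\pm, \1_{U_k}](\bfx, \bfy), \qquad \bfx, \bfy \in \Z^2,
\end{equation}
in which the right-hand side reduces to a finite sum for each fixed pair $(\bfx, \bfy)$. Indeed, a line segment from $\bfx$ to $\bfy$ is compact and meets only finitely many $\Gamma_k$; only those can separate $\bfx$ from $\bfy$, and for each such $k$, $\1_{U_k}(\bfy) - \1_{U_k}(\bfx) \in \{-1, 0, 1\}$ records the signed count of crossings of $\Gamma_k$. Summing these signed counts over all $k$ recovers $\1_U(\bfy) - \1_U(\bfx)$ by construction of the $U_k$, which proves \eqref{eq:key_proposal}. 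Substituting this into the trace formula $\sigma_b^{U, V}(P_\pm) = -i \Tr(P_\pm [[P_\pm, \1_U], [P_\pm, \1_V]])$ and interchanging $\Tr$ with the sum would yield
\begin{equation}
\sigma_b^{U, V}(P_\pm) = \sum_k \sigma_b^{U_k, V}(P_\pm) = \sigma_b(P_\pm) \sum_k \II_{U_k, V} = \II_{U, V} \cdot \sigma_b(P_\pm),
\end{equation}
where the last equality reflects the additive definition of $\II_{U, V}$ for good sets from \S\ref{sec-7.2}.

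The main technical hurdle is the interchange of $\Tr$ with $\sum_k$, which requires absolute summability of $\sum_k \|K_{U_k, V}\|_1$ where $K_{U_k, V} = P_\pm[[P_\pm, \1_{U_k}], [P_\pm, \1_V]]$. Let $R_k = \dist(0, \Gamma_k)$, which tends to infinity with $k$ since only finitely many $\Gamma_k$ intersect any bounded disk. Using the decay of $K_{U_k, V}$ in $d_l(\bfx, \Gamma_k)$ and $d_l(\bfx, \partial V)$ provided by \eqref{eq-2k}, combined with the uniform transversality of $\{(U_k, V)\}_k$ (inherited from $(U, V)$), an argument parallel to Lemma \ref{lem:16} yields $\|K_{U_k, V}\|_1 \leq C_N (1 + R_k)^{-N}$ for arbitrary $N$. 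This makes $\sum_k \|K_{U_k, V}\|_1$ finite, and the same estimate forces $\II_{U_k, V} = 0$ for all but finitely many $k$. A dominated convergence argument at the level of kernels then legitimizes the exchange of $\Tr$ with $\sum_k$ and closes the proof.
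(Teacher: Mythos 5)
Your strategy is genuinely different from the paper's (which proceeds through a five-case ladder, alternately decomposing $U$, $U^\CCC$, $V$, $V^\CCC$ into connected components via Lemmas \ref{lem:19}, \ref{lem:23}, \ref{lem:24}), and the single additivity step over the boundary components $\Gamma_k$ of $\p U$ is appealing. But there is a genuine gap at the point where you invoke Theorem \ref{thm:1} for the pairs $(U_k,V)$: Theorem \ref{thm:1} is proved only for pairs of transverse \emph{simple} sets. Its proof is not symmetric-in-name only: Lemma \ref{lem:16} is stated for simple $V$, and the core construction (Lemma \ref{lem:17} and the deformation in \S\ref{sec-5.6}) parametrizes $\p V$ by a single simple path $\gamma_V$ and uses its entrance/exit points $\bfz_V^\pm(r)$, which makes no sense when $V$ is merely good (its boundary may have many, even infinitely many, components). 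So the identity $\sigma_b^{U_k,V}(P)=\II_{U_k,V}\cdot\sigma_b(P)$ that your argument rests on is not available; to get it you would have to first decompose $V$ as well (into connected components of $V$ and then the simple components of $V^\CCC$, which is exactly what parts B and C of the paper's proof do). In other words, decomposing only $\p U$ cannot close the argument; a second layer of the same kind of decomposition on the $V$ side is unavoidable.

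Two further points need attention even granting that. First, your equality $\sum_k \II_{U_k,V}=\II_{U,V}$ is not the paper's definition: \eqref{eq2-7h}--\eqref{eq2-7i} define $\II_{U,V}$ through the connected components $U_\ell$ of $U$ and then the components $\UU_{\ell k}$ of $U_\ell^\CCC$, with a minus sign. To match your sum with that definition you must check (i) that each $\UU_{\ell k}$ is a \emph{full} connected component of $\R^2\setminus\Gamma_k$ (a short clopen argument, since $\p\UU_{\ell k}=\Gamma_k$ by Lemma \ref{lem:18b}), and (ii) that your $U_k$ is the opposite component, with $\II_{U_k,V}=-\II_{\UU_{\ell k},V}$ because reversing which side lies to the left reverses the orienting path. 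Both are true but unproven in your write-up. Second, the well-definedness of $U_k$ (that $U$ sits on one consistent side of $\Gamma_k$, so "the component agreeing with $U$ near $\Gamma_k$" makes sense) and the crossing-count identity $[P,\1_U]=\sum_k[P,\1_{U_k}]$ require the side-consistency machinery of Proposition \ref{prop:1} and Lemma \ref{lem:20}, plus an absolute-summability justification for pushing the sum over $k$ through the operator products and the trace; your quantitative sketch (separation $\rho_0$ of the $\Gamma_k$, the bound $d_l(\Gamma_k,\p V)\gtrsim c\ln\dist(0,\Gamma_k)$ from transversality, as in steps 3--4 of Lemma \ref{lem:19}) is the right ingredient list, but the decisive missing piece remains the simple-versus-good asymmetry above.
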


Theorem \ref{thm:2} generalizes Theorem \ref{thm:1} to transverse good sets. Our proof of Theorem \ref{thm:2}, given in \S\ref{sec:5.5}, consists of decomposing $U$ and $V$ in simple sets (see \S\ref{sec:5.1}) and applying additivity properties of Chern and intersection numbers (see \S\ref{sec-7.2}-\ref{sec:5.4}).

For our puprose, good sets are not restrictive. The next result implies that for any $\AAA \subset \R^2$, there exists a good set $A$ such that $\AAA \cap \Z^2 = A \cap \Z^2$. In particular, $\1_A$ and $\1_\AAA$, seen as multiplicative operators on $\ell^2(\Z^2)$, are equal.

\begin{lemma}\label{lem-9} Given $\Aa \subset \Z^2$, define
\begin{equation}\label{eq2-7a}
    A = \big\{ \bfx \in \R^2 : d_1(\bfx,\Z^2 \setminus \Aa) > 3/4 \big\},
\end{equation}
where $d_\infty$ denotes the $|\cdot|_\infty$ distance. Then $A$ is a good set and $A \cap \Z^2 = \Aa$. 
\end{lemma}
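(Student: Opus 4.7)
Set $f(\bfx) := d_1(\bfx, \Z^2 \setminus \Aa)$, so $A = \{f > 3/4\}$ is open and $A^c = \{f \leq 3/4\} = \bigcup_{\bfy \in \Z^2 \setminus \Aa} \overline{\Dd_{3/4}(\bfy)}$ is a union of closed $\ell^1$-balls. The plan is to establish the four required conclusions in turn. First, $A \cap \Z^2 = \Aa$: distinct lattice points have $\ell^1$-distance at least $1$, so for $\bfx \in \Aa$ we have $f(\bfx) \geq 1 > 3/4$; for $\bfx \in \Z^2 \setminus \Aa$, $f(\bfx) = 0 < 3/4$. Property (c) of Definition \ref{def:4} follows by continuity: at any $\bfx \in \Z^2$, $f(\bfx)$ is either $0$ or a positive integer, so $\bfx$ lies in the interior of $A$ or of $A^c$, whence $\p A \cap \Z^2 = \emptyset$.

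For property (a) the inclusion $\{f < 3/4\} \subseteq \Int(A^c)$ is immediate. For the reverse I would argue that for every $\bfx$ with $f(\bfx) = 3/4$ there exists a direction $\bfv$ with $f(\bfx + t\bfv) > 3/4$ for small $t > 0$, which shows $\bfx \notin \Int(A^c)$. The proof is a short fractional-part case analysis: the set of lattice points realizing $d_1(\bfx, \cdot) = 3/4$ is nonempty and has at most two elements (the coordinates of any such $\bfy$ are determined modulo $\Z$ by those of $\bfx$ and the distance $3/4$), and in each admissible configuration one can pick $\bfv$ so that $d_1(\cdot, \bfy)$ strictly increases near $\bfx$ for every such nearest $\bfy$ simultaneously. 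All farther lattice points remain farther for small $t$ by continuity. Hence $\Int(A^c) = \{f < 3/4\}$; the same argument gives $\overline{A} = \{f \geq 3/4\}$, so $\p A = \p \Int(A^c) = \{f = 3/4\}$.

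The main work is property (b). Two closed $\ell^1$-balls of radius $3/4$ around lattice centers $\bfy, \bfy'$ overlap iff $d_1(\bfy, \bfy') \leq 3/2$, equivalently $d_1(\bfy, \bfy') = 1$ (since lattice distances are positive integers). Hence the connected components of $A^c$ are indexed by the connected components of $\Z^2 \setminus \Aa$ in the $\ell^1$-adjacency graph, and two such components have mutual distance at least $2 - 3/2 = 1/2$, which supplies the separation clause of (b). For the path/loop structure, $\p A = \{f = 3/4\}$ sits in the $1$-skeleton of the $\ell^1$-Voronoi diagram of $\Z^2 \setminus \Aa$: inside a Voronoi cell where a unique $\bfy$ realizes $f(\bfx)$, it is a line segment of slope $\pm 1$; at a Voronoi edge where two points realize $f(\bfx)$, two such segments meet at a $90^\circ$ kink. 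Using the ``at most two nearest lattice points'' statement from the previous paragraph, exactly two boundary segments meet at every $\bfx \in \p A$, so $\p A$ is a topological $1$-manifold whose connected components are homeomorphic to $S^1$ (bounded, giving simple loops) or to $\R$ (unbounded, giving simple paths). Reparametrizing each component by $\ell^2$-arclength yields the piecewise smoothness and the unit left/right derivatives required by Definition \ref{def:2}.

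The hardest step is this last one, namely ensuring that $\p A$ has no branching points or self-touches. It ultimately reduces to the arithmetic assertion that no more than two lattice points lie at $\ell^1$-distance exactly $3/4$ from any fixed non-lattice $\bfx \in \R^2$, which is verified by listing the admissible fractional parts of $\bfx$ case by case.
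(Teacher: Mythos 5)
Your proposal proves (at best) a different lemma from the one the paper needs, and it defers exactly the step that constitutes the paper's proof. Despite the symbol $d_1$ in \eqref{eq2-7a}, the construction is meant for the $|\cdot|_\infty$ distance (see the parenthetical in the statement, the appendix proof where $A^c$ is written as a union of closed axis-parallel squares of side $3/2$, and the use of the lemma in the proof of Theorem \ref{thm-main}). This is not cosmetic. Your central arithmetic criterion --- ``at most two lattice points lie at distance exactly $3/4$ from a non-lattice point, hence no branching'' --- is true for $\ell^1$ but false for $d_\infty$: the point $(3/4,3/4)$ is at $d_\infty$-distance exactly $3/4$ from $(0,0)$, $(1,0)$ and $(0,1)$ simultaneously, yet the boundary is still unbranched there because of how the squares overlap. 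So the no-branching argument does not reduce to your counting statement; it requires the local configuration analysis that the paper carries out on the quarter-lattice $\Lambda=(1/4,1/4)+(\Z/2)^2$ (the seven cases of Figure \ref{fig:12}, followed by the explicit ``walk'' construction and the dichotomy injective $\Rightarrow$ proper path, non-injective $\Rightarrow$ periodic loop). Conversely, if you insist on the literal $\ell^1$ reading, radius-$3/4$ diamonds around the four corners of a unit cell do not cover the cell: the center of every cell whose corners all lie outside $\Aa$ belongs to $A$ (already $\Aa=\emptyset$ gives $A\neq\emptyset$, a union of small diamonds at all cell centers). This does not contradict the bare conclusion ``$A$ is good and $A\cap\Z^2=\Aa$'', but it produces boundary components arbitrarily far from the discrete interface between $\Aa$ and $\Z^2\setminus\Aa$, which ruins the way the lemma is used afterwards (there, $\p\, \UU$ must stay within bounded distance of $\p U$ so that goodness comes together with transversality of the modified pair). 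In short: the ``short fractional-part case analysis'' you postpone is the actual content of the lemma, and in the intended metric it is not the analysis you describe.

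There is also a genuine logical gap in your treatment of the separation requirement of Definition \ref{def:4}(b). You bound the distance between boundaries of \emph{distinct connected components of $A^c$}, but the clause concerns distinct connected components $\Gamma_j$ of $\p A$, and two such components can bound the \emph{same} component of $A^c$ (the inner and outer boundary of a thickened loop of excluded sites; in your $\ell^1$ picture, a small spurious loop at a cell center facing the large curve around the four surrounding diamonds). Your argument says nothing about these pairs, so clause (b) is not established as written; one needs a uniform local statement, e.g.\ the paper's observation that any boundary point within distance $1/5$ of the polygonal curve carrying one component must already lie on that curve, which yields a uniform lower bound on the distance between distinct components. Finally, ``$1$-manifold, hence each component is $S^1$ or $\R$, hence a simple loop or path'' needs the properness of the parametrization for the $\R$-components; this is true here by local finiteness (segments of length bounded below, finitely many in any ball), but it should be said, since Definition \ref{def:2} requires properness explicitly. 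The remaining parts of your outline ($A\cap\Z^2=\Aa$, clause (c), and clause (a) via the direction argument) are fine and close in spirit to the paper's Step 1.
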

The set $A$ is obtained from $\Aa$ by essentially filling the squares with vertices in $A$, and lightly enlarging the filled set -- see Figure \ref{fig:2}.

\begin{figure}[t] % Here, [h] stands for "here", indicating to place the figure here if possible
    %\centering % Center the figure
\floatbox[{\capbeside\thisfloatsetup{capbesideposition={right,center},capbesidewidth=0.6\textwidth}}]{figure}[\FBwidth]
    {\caption{The sets $\Aa$ (blue dots) and $A$ (blue filling) with the boundary $\p A$ (blue polygonal curves). }} % Caption for the figure
    {\includegraphics[width=0.9\linewidth]{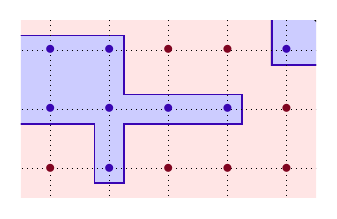} \label{fig:2} } 
\end{figure}

We prove this result in Appendix \ref{app:C}.

\subsection{Properties of good sets.}\label{sec:5.1}

We will need the following properties of good sets:

\begin{lemma}\label{lem-13} Let $A$ be a good set. Then $A^\CCC$ is good.  \end{lemma}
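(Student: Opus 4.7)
\textbf{Proof proposal for Lemma \ref{lem-13}.} Let $B := A^\CCC$. The plan is to verify the three defining conditions of a good set for $B$, and the entire argument pivots on one observation: $\p B = \p A$. Indeed, condition (a) for $A$ says $A^\CCC$ has boundary $\p A$, so by definition $\p(A^\CCC) = \p A$. This immediately transfers conditions (b) (the connected components of $\p B$ are exactly those of $\p A$) and (c) to $B$, and reduces the proof to checking condition (a) for $B$, namely $\p B^\CCC = \p B = \p A$.

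Since $B = \Int(A^c)$, its complement is $\overline A$, so $B^\CCC = \Int\overline A$. I would prove the stronger statement $\Int\overline A = A$, from which $\p B^\CCC = \p A$ follows at once. The inclusion $A \subset \Int\overline A$ is trivial as $A$ is open. The content of the lemma lies in the reverse inclusion, which I plan to establish by contradiction. Suppose $\bfx \in \Int\overline A \setminus A$; then $\bfx \in \overline A \setminus A \subset \p A = \p A^\CCC$, and since $A^\CCC$ is open, $\bfx$ is a genuine limit point of $A^\CCC$. Combining a neighborhood $U$ of $\bfx$ with $U \subset \overline A$ and an open ball $V \subset U$ around a nearby $\bfx_n \in A^\CCC$, one obtains $V \subset U \cap A^c \subset \overline A \cap A^c = \p A$; hence a nonempty open subset of $\p A$.

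The main (and only) obstacle is then to rule this out by ensuring $\p A$ has empty interior in $\R^2$. This follows from condition (b): each component of $\p A$ is the range of a simple path or loop, which is a rectifiable curve of locally finite length (the unit-norm left/right derivatives give arc-length bounds) and therefore of two-dimensional Lebesgue measure zero; the well-separation bound $\inf_{j \ne k} d(\Gamma_j,\Gamma_k) > 0$ makes $\p A$ locally a finite union of such sets, so it retains empty interior. With this in hand, the nonempty open $V \subset \p A$ gives a contradiction, yielding $\Int\overline A = A$ and establishing condition (a) for $B$. The remainder is bookkeeping: conditions (b) and (c) for $B$ follow from $\p B = \p A$ without further argument.
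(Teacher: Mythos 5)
Your proof is correct, and it reaches the same pivot as the paper (reduce everything to $\p(A^\CCC)^\CCC=\p A^\CCC$, i.e.\ to $\Int \ove{A}=A$, after noting $B^\CCC=\Int\ove A$ since $A^\CCC=(\ove A)^c$), but you close that pivot by a different mechanism. The paper never needs to know anything about the geometry of $\p A$: from condition (a) it gets $\p A=\p A^\CCC=\p\big((\ove A)^c\big)=\p\ove A$, so a point of $\Int\ove A\setminus A$ would lie in $\p A=\p\ove A$ while being interior to $\ove A$ --- an immediate contradiction, with no appeal to condition (b). You instead use condition (a) only to place your point $\bfx$ in $\p A^\CCC$, manufacture a nonempty open set inside $\ove A\cap A^c=\p A$, and then rule this out by showing $\p A$ has empty interior via condition (b): each $\Gamma_k$ is the range of a $1$-Lipschitz curve (the unit one-sided derivatives plus continuity across the discrete singular set do give local Lipschitz bounds), hence has planar Lebesgue measure zero, and the countable union over $k\in\N$ still has measure zero. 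That is a valid argument; what it buys is a genuinely geometric fact about good sets (boundaries are null), at the cost of importing rectifiability/measure considerations that the paper's two-line point-set identity avoids, and of leaning on the curve-regularity in Definition \ref{def:2}(iii) which the paper's proof does not use at all. One small wording slip to fix: the ball $V$ must be chosen inside the open set $U\cap A^\CCC$ (not merely ``around a point of $A^\CCC$ inside $U$''), otherwise $V\subset A^c$ is not guaranteed; since $A^\CCC$ is open this is a one-word repair.
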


\begin{lemma}\label{lem:20} Let $A$ be a good set, $\{ \Gamma_k : k \in \N\}$ be the connected components of $\p A$ and $\{ A_\ell: \ell \in \N\}$ be the connected components of $A$. There exists a partition $\KK_0, \KK_1, \dots$ of $\N$ such that
\begin{equation}\label{eq2-7k}
    \p A_\ell = \bigsqcup\limits_{k \in \KK_\ell} \Gamma_k.
\end{equation}
In particular, the connected components of good sets are good. 
\end{lemma}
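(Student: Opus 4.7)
The plan is to assign each boundary component $\Gamma_k$ to exactly one $A_\ell$ by a local ``$A$-side'' analysis at an arbitrary point of $\Gamma_k$, and then verify that $\partial A_\ell$ is precisely the union of the $\Gamma_k$'s so assigned.

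First I would establish a local model: at each $\bfx \in \Gamma_k$, the separation condition (b) of Definition \ref{def:4} implies that for $r$ sufficiently small, $\Dd_r(\bfx) \cap \p A = \Dd_r(\bfx) \cap \Gamma_k$. Since $\Gamma_k$ is locally the image of a simple path with unit-norm one-sided derivatives (Definition \ref{def:2}), I would show that $\Dd_r(\bfx) \setminus \Gamma_k$ has exactly two connected components --- a local diffeomorphism argument at smooth points, and a direct case analysis between the two tangent rays at the isolated corners in $\gamma_k(S)$. The partition $\R^2 = A \sqcup \p A \sqcup A^\CCC$ forces each of these two components to lie entirely in $A$ or entirely in $A^\CCC$. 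Goodness condition (a) then forces exactly one of each type: both components in $A$ would give $\bfx \notin \p A^\CCC = \p A$, and both in $A^\CCC$ would give $\bfx \notin \p A$. I would define $\ell(\bfx)$ as the index of the connected component of $A$ containing the $A$-side. Continuity of the local picture makes $\ell$ locally constant on $\Gamma_k$, hence constant by connectedness of $\Gamma_k$; call the common value $\ell(k)$ and set $\KK_\ell := \{k : \ell(k) = \ell\}$. These sets are pairwise disjoint and cover $\N$, giving the required partition.

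To verify \eqref{eq2-7k}, I would first check $\partial A_\ell \subset \partial A$: any $\bfx \in \partial A_\ell$ lies in $\overline{A_\ell}$ but in no other (open) component $A_{\ell'}$, so $\bfx \in \overline{A} \setminus A = \partial A$. Applying the local analysis at $\bfx \in \Gamma_k$, the $A$-side must lie in $A_\ell$ (else $\bfx$ would not be a limit of $A_\ell$-points), giving $k \in \KK_\ell$. The converse inclusion is immediate from the definition of $\KK_\ell$, since $\bfx \in \Gamma_k$ with $k \in \KK_\ell$ is a limit of the $A$-side points which lie in $A_\ell$, while $\bfx \notin A_\ell$. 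For the ``in particular'' clause, I would verify conditions (a)--(c) of Definition \ref{def:4} for each $A_\ell$: (b) and (c) pass from $A$ to $A_\ell$ since $\partial A_\ell \subset \partial A$, while (a) follows by computing $\Int(A_\ell^c) = A^\CCC \sqcup \bigsqcup_{\ell' \neq \ell} A_{\ell'}$ and using the local $A$-side/$A^\CCC$-side dichotomy to see that every point of $\partial A_\ell$ is a limit of $A^\CCC$-points as well, so that $\partial \Int(A_\ell^c) = \partial A_\ell$. The main technical hurdle is the local model --- producing the clean two-component split of $\Dd_r(\bfx) \setminus \Gamma_k$ --- which requires careful handling at the corner points of $\gamma_k$, but follows from the unit left/right derivative condition in Definition \ref{def:2}.
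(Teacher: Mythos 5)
Your proposal is correct, and it reorganizes the argument somewhat differently from the paper. The paper first shows $\p A_\ell \subset \p A$ (citing \cite[Theorem IV.3.1]{N51}), then propagates membership in $\p A_\ell$ along a whole component $\Gamma_k$ by the chain-of-simply-split-disks construction already used for Proposition \ref{prop:1} (together with Lemma \ref{lem:13}), and finally proves disjointness of the $\KK_\ell$ and the covering $\p A = \bigcup_\ell \p A_\ell$ as separate steps, each invoking condition (a). You instead isolate a single local model: at each $\bfx \in \Gamma_k$ a small disk meets $\p A$ only in $\Gamma_k$ and is split by $\Gamma_k$ into exactly two components, one inside $A$ and one inside $A^\CCC$ (condition (a) ruling out the two degenerate configurations), and then you read off everything --- the labelling $k \mapsto \ell(k)$ via local constancy and connectedness of $\Gamma_k$, both inclusions in \eqref{eq2-7k}, and goodness of $A_\ell$ --- from this one dichotomy. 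This is a cleaner bookkeeping; what the paper's route buys is that the needed local splitting is exactly the quoted statement \cite[Theorem 2.7.11]{T11} (simply split disks), so no new local analysis is required, whereas your route concentrates all the topology into one reusable lemma.

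Two caveats. First, your local model is essentially the simply-split-disk statement the paper already imports in Appendix \ref{app:A}; if you prove it yourself, the delicate point is not the smooth/corner case analysis of the arc through $\bfx$ but ruling out distant parameter values of $\gamma_k$ re-entering the small disk --- this is where injectivity and properness (respectively injectivity on a period for loops) must be used, e.g.\ via compactness of $\gamma_k^{-1}(\overline{\Dd_1(\bfx)}) \cap \{|t-t_0|\ge \delta\}$. Second, the identity $\Int(A_\ell^c) = A^\CCC \sqcup \bigsqcup_{\ell'\neq\ell} A_{\ell'}$ is not exact: $\Int(A_\ell^c)$ also contains every $\Gamma_k$ with $k \notin \KK_\ell$ (e.g.\ for two disjoint open disks, the closure of the second disk is interior to the complement of the first). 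This does not damage your argument, since the conclusion you actually need, $\p \Int(A_\ell^c) = \p A_\ell$, follows directly from the local dichotomy (the $A^\CCC$-side accumulates at every point of $\p A_\ell$, while the $A$-side prevents such points from being interior to $A_\ell^c$), but the displayed identity should be corrected or dropped.
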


\begin{lemma}\label{lem:18b} Let $A$ be a connected good set, $\{ \Gamma_k : k \in \N\}$ be the connected components of $\p A$. Let $\AAA = A^\CCC$. There exists a labelling $\{ \AAA_k : k \in \N\}$ of the connected components of $\AAA$ such that for all $k \in \N$, $\p \AAA_k = \Gamma_k$. In particular, the sets $\AAA_k$ are simple. 
\end{lemma}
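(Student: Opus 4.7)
The plan is to invoke Lemmas \ref{lem-13} and \ref{lem:20} to reduce the statement to showing that, for the connected good set $A$, each connected component $\AAA_\ell$ of $\AAA = A^\CCC$ has boundary consisting of exactly one $\Gamma_k$. Once this is established, the labelling $\AAA_k \leftrightarrow \Gamma_k$ and the simplicity of each $\AAA_k$ follow immediately from Definition \ref{def-8}.

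First I would apply Lemma \ref{lem-13} to obtain that $\AAA$ is good, and then apply Lemma \ref{lem:20} to $\AAA$ to produce a partition $\{\KK_\ell\}$ of $\N$ such that $\p \AAA_\ell = \bigsqcup_{k \in \KK_\ell} \Gamma_k$. The crux is to show $|\KK_\ell| = 1$ for every $\ell$. Assume for contradiction that $\{k_1, k_2\} \subset \KK_\ell$ with $k_1 \neq k_2$. I would first establish a local two-sided structure around $\Gamma_{k_1}$: by Definition \ref{def:4}(b) there is a tubular neighborhood of $\Gamma_{k_1}$ disjoint from all other $\Gamma_j$, and its two sides are connected open subsets of $\R^2 \setminus \p A = A \sqcup \AAA$, hence each lies entirely in $A$ or entirely in $\AAA$; since $\Gamma_{k_1} \subset \p A = \p \AAA$ by Definition \ref{def:4}(a), neither $A$ nor $\AAA$ can occupy both sides, so one side lies in $A$ and the other in $\AAA$.

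Next, since $\Gamma_{k_1}$ is the range of a simple path or loop, $\R^2 \setminus \Gamma_{k_1}$ has exactly two connected components (by the classical Jordan curve theorem or Lemma \ref{lem:4}), and the two local sides above lie in distinct global components. Because $A$ and $\AAA_\ell$ are connected and avoid $\Gamma_{k_1}$, each is contained in one such component; the local analysis forces these components to be distinct, say $A \subset R'$ and $\AAA_\ell \subset R$ with $R \cap R' = \emptyset$. Since $\Gamma_{k_2} \subset \p A \cap \p \AAA_\ell$, taking closures yields
\begin{equation*}
\Gamma_{k_2} \subset \overline{\AAA_\ell} \cap \overline{A} \subset (R \cup \Gamma_{k_1}) \cap (R' \cup \Gamma_{k_1}) = \Gamma_{k_1},
\end{equation*}
contradicting $\Gamma_{k_1} \cap \Gamma_{k_2} = \emptyset$. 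With $|\KK_\ell| = 1$ in hand I define $\AAA_k$ as the unique component of $\AAA$ having boundary $\Gamma_k$: surjectivity follows from taking the $\AAA$-side of the tubular neighborhood of any point of $\Gamma_k$ and observing that it lies in some component whose boundary must therefore equal $\Gamma_k$. Each $\AAA_k$ is then open, connected, and bounded by the range of a simple path or loop, so it is simple by Definition \ref{def-8}.

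The main technical obstacle is the two-sided local structure itself: that a small tubular neighborhood of $\Gamma_k$ has one side entirely in $A$ and the other in $\AAA$, and that these local sides belong to distinct global components of $\R^2 \setminus \Gamma_k$. While routine in the smooth interior of each arc of $\Gamma_k$, this needs care at the discrete non-smooth points allowed by Definition \ref{def:2}, and when $\Gamma_k$ is unbounded the separation statement Lemma \ref{lem:4} replaces the classical Jordan curve theorem.
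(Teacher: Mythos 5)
Your overall strategy coincides with the paper's: reduce via Lemmas \ref{lem-13} and \ref{lem:20} to showing that no component $\AAA_\ell$ of $A^\CCC$ can contain two distinct curves $\Gamma_{k_1},\Gamma_{k_2}$ in its boundary, and then derive a contradiction by placing $A$ and $\AAA_\ell$ in the two different connected components of $\R^2\setminus\Gamma_{k_1}$ (the paper phrases the contradiction as $\Gamma_2$ lying in both components, you phrase it as $\Gamma_{k_2}\subset\ove{A}\cap\ove{\AAA_\ell}\subset\Gamma_{k_1}$; these are equivalent). The genuine gap is the step you yourself flag but do not resolve: the existence of a tubular neighborhood of the whole curve $\Gamma_{k_1}$, disjoint from the other $\Gamma_j$, whose complement in the tube consists of exactly two \emph{connected} sides. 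Definition \ref{def:4}(b) only gives the separation $\inf_{j \neq k} d(\Gamma_j,\Gamma_k) > 0$; it does not provide a collar, and for a merely piecewise-smooth, possibly unbounded proper curve such a global two-sided neighborhood with connected sides is not free -- producing it is essentially the content of the chaining-of-simply-split-disks argument behind Proposition \ref{prop:1}, so invoking it as if it followed from the definition leaves the main point unproved. A second, smaller omission: even granted the tube, you need the $\AAA$-side to meet the specific component $\AAA_\ell$ rather than just $\AAA$; this follows from $\Gamma_{k_1}\subset\p\AAA_\ell$, but you only gesture at it (``the local analysis forces these components to be distinct'').

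Both issues are fillable with tools already in the paper, and in fact you need much less than a global tube: fix one point $\bfz\in\Gamma_{k_1}$ and one disk $\Dd$ centered at $\bfz$, of radius smaller than $\inf_{j\neq k}d(\Gamma_j,\Gamma_k)$, simply split by the curve (this exists by the result of \cite{T11} quoted in Appendix \ref{app:A}). The two pieces of $\Dd\setminus\Gamma_{k_1}$ lie in distinct global components of $\R^2\setminus\Gamma_{k_1}$, because both global components have boundary $\Gamma_{k_1}$ (Lemma \ref{lem:4}(ii), or the Jordan curve theorem for a loop) and hence both meet $\Dd$; moreover each piece is a connected subset of $\R^2\setminus\p A$, so it lies entirely in $A$ or in $\AAA$, and since $\bfz\in\p A\cap\p\AAA_\ell$ one piece must lie in $A$ and the other in $\AAA_\ell$. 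This yields your separation $A\subset R'$, $\AAA_\ell\subset R$ with $R\neq R'$ and the rest of your argument goes through. Note that the paper avoids even this local splitting: it places $\AAA_1\subset A_1^+$ directly by connectedness, and then shows $A\subset A_1^-$ by choosing $\bfy\in A_1^-$ close to $\Gamma_1$ and using the separation $\rho$ together with a segment argument to rule out $\bfy$ belonging to any $\AAA_k$ -- a trick you could also adopt to bypass the tube entirely.
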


\begin{figure}[b] % Here, [h] stands for "here", indicating to place the figure here if possible
    %\centering % Center the figure
\floatbox[{\capbeside\thisfloatsetup{capbesideposition={right,center},capbesidewidth=0.6\textwidth}}]{figure}[\FBwidth]
    {\caption{The boundary of $A$ is composed of $\Gamma_1$, $\Gamma_2$, $\Gamma_3$, which are respectively boundary of the disjoint connected components $\mathcal A_1$, $\mathcal A_2$ and $\mathcal A_3$ of $A^\CCC$.}} % Caption for the figure
    {\includegraphics[width=0.85\linewidth]{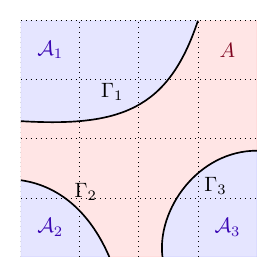} \label{fig: P_Ac}}
\end{figure}

We prove these results in Appendix \ref{app:D}; see Figure \ref{fig: P_Ac} for a pictorial representation.

\subsection{Extending the intersection numbers to pairs of good sets}\label{sec-7.2}

In \S\ref{sec-5} we defined the intersection number $\II_{U, V}$  between transverse simple sets $U, V$. We now extend it to transverse good sets $U, V$.

\begin{lemma}\label{lem:24} Let $U,V$ be transverse sets such that $U$ is simple and $V$ is good. Let $V_\ell$ be the connected components of $V$. Then $U,V_\ell$ are transverse sets, $\II_{U, V_\ell} \neq 0$ for all but at most two values of $\ell$ and
\begin{equation}\label{eq2-7m}
    \II_{U, V} = \sum_{\ell \in \N} \II_{U,V_\ell}.
\end{equation}
\end{lemma}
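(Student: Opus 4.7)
My plan is to address the three claims---transversality of each $(U, V_\ell)$, the finiteness of the sum, and the additivity formula---in turn, exploiting throughout the disjointness of the connected components $V_\ell$.

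First, I would invoke Lemma \ref{lem:20}: because $V$ is good, each $V_\ell$ is itself good and $\p V_\ell$ is a union of connected components of $\p V$; in particular $\p V_\ell \subset \p V$, so $\Psi_{U, V_\ell}(\bfx) \geq \Psi_{U,V}(\bfx)$ for every $\bfx$, and transversality of $(U, V_\ell)$ would follow from that of $(U, V)$ via the characterization \eqref{eq-0b}.

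Next, I would dispose of the case where $\p U$ is bounded: here Definition \ref{def:5} sets $\II_{U, V} = 0$ and $\II_{U, V_\ell} = 0$ for every $\ell$, and the identity is trivial. In the remaining case, $\p U$ is the range of a simple path $\gamma_U$. For each $\ell$, the transversality of $(U, V_\ell)$ evaluated along $\gamma_U$, combined with $\gamma_U(t) \in \p U$ (which kills the $d_1(\bfx,\p U)$ term), would yield $d_1(\gamma_U(t), \p V_\ell) \geq |\gamma_U(t)|^c - 1 \to \infty$ by properness of $\gamma_U$. Thus $\gamma_U(t)$ eventually avoids $\p V_\ell$; by connectedness of the tails $\gamma_U([T,\infty))$ and $\gamma_U((-\infty,-T])$, it must lie in a single connected component of $\R^2 \setminus \p V_\ell$ for large $|t|$, so the limits $\II_\pm(U, V_\ell) \in \{0,1\}$ would be well defined.

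Then additivity would follow from pointwise disjointness: at every point $\bfx$ at most one $\1_{V_\ell}(\bfx)$ equals $1$, and $\1_V = \sum_\ell \1_{V_\ell}$ pointwise. Applied at $\bfx = \gamma_U(t)$ and passing to the limit, at most one index $\ell_+$ can satisfy $\II_+(U, V_{\ell_+}) = 1$, with an analogous $\ell_-$; for every other index both one-sided limits vanish. This gives the ``at most two nonzero terms'' assertion, and
\begin{equation}
    \sum_\ell \II_{U, V_\ell} = \sum_\ell \big(\II_+(U, V_\ell) - \II_-(U, V_\ell)\big) = \II_+(U, V) - \II_-(U, V) = \II_{U, V}.
\end{equation}
The principal subtlety is the simultaneous existence of the limits $\II_\pm(U, V_\ell)$ for every $\ell$, which is why it is essential that transversality is inherited by each individual pair $(U, V_\ell)$ rather than merely by the whole family; once that is secured, the rest of the argument is a bookkeeping of disjointness.
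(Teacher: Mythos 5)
Your proposal is correct and follows essentially the same route as the paper: inherit transversality from $\p V_\ell \subset \p V$ (Lemma \ref{lem:20}), dispose of the simple-loop case, show the tails of $\gamma_U$ eventually avoid $\p V$ (hence every $\p V_\ell$), and use connectedness plus disjointness of the components to get at most one nonzero $\II_\pm(U,V_\ell)$ on each side and the additivity of the limits. The only cosmetic difference is that the paper fixes one uniform threshold $T$ from the transversality of $(U,V)$ and evaluates all indicators at $t=T$, whereas you argue per $\ell$; since your bound is uniform in $\ell$, this is the same argument.
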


\begin{proof} Fix $\ell \in \N$. By Lemma \ref{lem:20}, we have $\p V_\ell \subset \bigsqcup\limits_{m \in \N} \p V_m = \p V$. Therefore, $\Psi_{U,V_\ell} \geq \Psi_{U,V}$, so $U,V_\ell$ are transverse sets. 

Let $\gamma$ be a simple path or loop with range $\p U$. If $\gamma$ is a simple loop then $\II_{U, V} = 0$ and $\II_{U, V_\ell} = 0$ for all $\ell$; in particular \eqref{eq2-7m} holds. We now assume that $\gamma$ is a simple path.

Let $c > 0$ such that $(U,V)$ are $c$-transverse sets, see \eqref{eq2-0b}. Since $\gamma$ is proper,
\begin{equation}
    \exists T > 0, \ \forall t \geq T, \ |\gamma(t)| > c^{-1}.
\end{equation}
As a result, by \eqref{eq-0b}, for every $ t \geq T$, $\ln \Psi_{U,V} ( \gamma(t))\geq c\ln |\gamma(t)| > 0$; thus $\big\{\gamma(t):t\geq T\big\} \cap \p V = \emptyset$. 

Denote $V_{-1}:= V$ for convenience. For each $\ell \geq -1$, $\big\{\gamma(t):t\geq T\big\}\cap \p V_\ell = \emptyset$; thus for each $\ell\geq -1$, 
\begin{equation}\label{eq2-5d}
 \text{either} \quad  \big\{ \gamma(t):t\geq T\big\} \subset  V_\ell, \quad \text{or} \quad \big\{ \gamma(t):t\geq T\big\} \subset  V_\ell^\CCC.
\end{equation}
Because the sets $\{ V_\ell : \ell \in \N\} $ are disjoint, there is at most one value of $\ell \in \N$ such that the first statement is true -- in particular $\II_+(U,V_\ell) \neq 0$ for at most one value of $\ell \in \N$. On the other hand, \eqref{eq2-5d} implies $\1_{V_\ell}\circ \gamma(t)$ is constant when $t\geq T$ for any $\ell\geq -1$. As a result, 
\begin{align}
    \II_+(U,V) = \lim_{t \rightarrow +\infty} \1_V \circ \gamma(t) = \1_V \circ \gamma(T) = \sum_{\ell \in \N} \1_{V_\ell} \circ \gamma(T) = \sum_{\ell \in \N} \lim_{t \rightarrow +\infty} \1_{V_\ell} \circ \gamma(t) = \sum_{\ell \in \N} \II_+(U,V_\ell).
\end{align}
A similar identity holds for $\II_-(U,V)$, and this completes the proof. 
\end{proof}

\begin{lemma}\label{lem:23} Let $U,V$ be transverse sets with $U$ simple, $V$ good and connected. Let $\VV = V^\CCC$ and $\VV_k$ be the connected components of $\VV$. Then $U,\VV_k$ are transverse simple sets, $\II_{U,\VV_k} \neq 0$ for at most two values of $k$ and 
\begin{equation}\label{eq2-7r}
    \II_{U, V} = -\sum_{k \in \N} \II_{U,\VV_k}.
\end{equation}
\end{lemma}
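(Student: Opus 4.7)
My plan is to prove the identity by analyzing where the boundary curve $\gamma$ of $\p U$ goes at $\pm\infty$, using that $V^\CCC$ is the disjoint union of the open sets $\VV_k$. The minus sign in \eqref{eq2-7r} will reflect that $V$ and each $\VV_k$ lie on opposite sides of a shared piece $\Gamma_k$ of their common boundary: at each end of $\gamma$, the curve either escapes into $V$ (contributing $+1$ to $\II_\pm(U,V)$) or into exactly one $\VV_k$ (contributing $+1$ to $\II_\pm(U,\VV_k)$). The ``at most two values" claim will fall out of the same analysis.

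First, I would dispose of the bounded case: if $\p U$ is bounded, then $\II_{U,V}=0$ and every $\II_{U,\VV_k}=0$ by definition, so \eqref{eq2-7r} is trivial. From here on I assume $\p U$ is unbounded and fix a simple path $\gamma$ with range $\p U$ and $U$ on its left.

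Next I would show that each $(U,\VV_k)$ is a pair of transverse simple sets. Lemma \ref{lem-13} gives that $\VV = V^\CCC$ is good, and since $V$ is connected and good, Lemma \ref{lem:18b} applied to $V$ produces a labelling under which $\p \VV_k = \Gamma_k$, the $k$-th connected component of $\p V$ (which is the range of a simple path or loop). Since $\VV_k$ is open, connected, and has such a boundary, it is simple. Transversality of $(U,\VV_k)$ then follows at once from the pointwise bound $\Psi_{U,\VV_k}\geq \Psi_{U,V}$, which in turn is a consequence of $\p \VV_k \subset \p V$.

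The core step is to locate $\gamma(t)$ for $|t|$ large. From the $c$-transversality of $(U,V)$, and using $d_1(\gamma(t),\p U) = 0$, I get $d_1(\gamma(t),\p V) \geq |\gamma(t)|^c > 0$ once $|\gamma(t)|\geq c^{-1}$; properness of $\gamma$ then provides a threshold $T$ such that $\gamma(t)\notin \p V$ for $|t|\geq T$. Because
\begin{equation*}
    \R^2 \setminus \p V \;=\; V \;\sqcup\; \bigsqcup_{k\in \N} \VV_k
\end{equation*}
is a disjoint union of open sets, and $\{\gamma(t): t\geq T\}$ is connected, this set must lie entirely in one of the components — call it $V$ or $\VV_{k_+}$. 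The same holds for $\{\gamma(t): t\leq -T\}$, which lies in $V$ or in some $\VV_{k_-}$. This gives the ``at most two values" claim (only $\VV_{k_\pm}$ can contribute a nonzero $\II_\pm$), and it gives the pointwise identity
\begin{equation*}
    \1_V \circ \gamma(t) \;+\; \sum_{k\in \N} \1_{\VV_k}\circ \gamma(t) \;=\; 1, \qquad |t|\geq T.
\end{equation*}

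Finally, taking $t\to \pm\infty$ in the above identity yields $\II_\pm(U,V) + \sum_k \II_\pm(U,\VV_k) = 1$; subtracting the two equalities produces $\II_{U,V} + \sum_k \II_{U,\VV_k} = 0$, which is exactly \eqref{eq2-7r}. I expect the only mildly delicate step to be the verification that $\p \VV_k = \Gamma_k$ via Lemma \ref{lem:18b}, as it is what pins down the correspondence between components of $V^\CCC$ and pieces of $\p V$; everything else is standard bookkeeping with connectedness, properness, and transversality.
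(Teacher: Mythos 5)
Your proof is correct and follows essentially the same route as the paper: both arguments use properness plus transversality to show $\gamma(t)$ avoids $\p V$ for $|t| \geq T$, track in which component of $\R^2 \setminus \p V$ each tail of $\gamma$ lands (this yields the minus sign and the ``at most two $k$'' claim), and use Lemma \ref{lem:18b} and $\p \VV_k \subset \p V$ for simplicity and transversality of the pairs $(U,\VV_k)$. The only structural difference is that you derive the additivity over the $\VV_k$ directly from the partition identity $\1_V + \sum_k \1_{\VV_k} = 1$ off $\p V$ instead of first writing $\II_{U,V} = -\II_{U,\VV}$ and citing Lemma \ref{lem:24}, and your bound should read $d_1(\gamma(t),\p V) \geq |\gamma(t)|^c - 1$ (since $\Psi_{U,V}$ includes the additive constant $1$), a harmless adjustment.
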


\begin{proof} 1. We first note that 
\begin{equation}\label{eq2-7n}
    \II_{U, V} = \lim_{t \rightarrow +\infty} \1_V \circ \gamma(t) - \lim_{t \rightarrow -\infty} \1_V \circ \gamma(t) = \lim_{t \rightarrow -\infty} \1_{V^c} \circ \gamma(t) - \lim_{t \rightarrow +\infty} \1_{V^c} \circ \gamma(t).
\end{equation}
As in the proof of Lemma \ref{lem:24}, because $\gamma$ is proper and $U,V$ are transverse sets, there exists $T > 0$ such that if $|t| > T$, $\gamma(t) \notin \p V = \p V^c$. Therefore, when $|t| \geq T$, we have
\begin{equation}\label{eq2-7o}
    \1_{V^c} \circ \gamma(t) = \1_{V^\CCC} \circ \gamma(t) = \1_\VV \circ \gamma(t)
\end{equation}
It follows from \eqref{eq2-7n} and \eqref{eq2-7o} that $\II_{U, V} = -\II_{U, \VV}$.

2. The connected components of $\VV$ are the sets $\VV_k$; these are simple sets by Lemma \ref{lem:18b}. Therefore, by Lemma \ref{lem:24}, at most two values of $k$ are such that $\II_{U, \VV_k} \neq 0$ and
\begin{equation}\label{eq2-7q}
     \sum_{k \in \N} \II_{U, \VV_k} = \II_{U, \VV} = -\II_{U, V}.
\end{equation}
This completes the proof.     
\end{proof}

\subsection{Extending the intersection number} Here we extend $\II_{U, V}$ to all transverse good sets.

We shall make use of the following fact. Let $X$ be a bounded subset of $\R^2$. Then
\begin{equation}\label{eq2-7g}
    \inf_{\bfx \neq \bfy \in X} |\bfx-\bfy| > 0 \quad \Rightarrow \quad \# X < \infty.
\end{equation}
Indeed, if $X$ was infinite, then it would contain a sequence with pairwise distinct values, convergent without loss of generalities because $X$ is bounded, and hence Cauchy. But that's not possible because of the condition in \eqref{eq2-7g}.

\begin{lemma}\label{lem:21} Let $U, V$ be transverse good sets with $U$ connected. Let $\UU = U^\CCC$ and $\{ \UU_k : k \in \KK\}$ be the connected components of $\UU$. Then:
\begin{itemize}
    \item For each $k \in \KK$,  $\UU_k$ is simple and $\UU_k,V$ are transverse sets. 
    \item $\II_{\UU_k, V}$ is zero for all but finitely many $k$.
\end{itemize}
\end{lemma}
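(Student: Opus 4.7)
For the first bullet (simplicity of $\UU_k$ and transversality of $(\UU_k,V)$), I will invoke results that are already in place. By Lemma~\ref{lem-13}, $\UU = U^\CCC$ is a good set. Writing $\{\Gamma_k:k\in\KK\}$ for the connected components of $\p U$ and applying Lemma~\ref{lem:18b} to the connected good set $A=U$ produces a labeling $\{\UU_k\}_{k\in\KK}$ of the connected components of $\UU$ such that $\p\UU_k=\Gamma_k$ and each $\UU_k$ is simple. Transversality of $(\UU_k,V)$ is then immediate: $\p\UU_k=\Gamma_k\subset\p U$ implies $d_1(\bfx,\p\UU_k)\geq d_1(\bfx,\p U)$, hence $\Psi_{\UU_k,V}(\bfx)\geq\Psi_{U,V}(\bfx)$, so the $c$-transversality of $(U,V)$ transfers directly to $(\UU_k,V)$.

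For the second bullet, observe first that when $\Gamma_k$ is a simple loop, $\II_{\UU_k,V}=0$ by Definition~\ref{def:5}, so only unbounded $\Gamma_k$ can possibly contribute. The plan is to argue that any $k$ with $\II_{\UU_k,V}\neq 0$ forces $\Gamma_k$ to meet the bounded set $\p U\cap\p V$, and then to use the good-set separation of $\{\Gamma_k\}$ to bound the count. First, transversality of $(U,V)$ makes $\p U\cap\p V$ bounded: for $\bfx\in\p U\cap\p V$ one has $\Psi_{U,V}(\bfx)=1$, while $c$-transversality forces $\Psi_{U,V}(\bfx)\geq|\bfx|^c$ whenever $|\bfx|\geq c^{-1}$, so $|\bfx|\leq\max(c^{-1},1)$. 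Next, let $\gamma_k$ be a simple path parametrizing $\Gamma_k$ as in Definition~\ref{def:5}. The assumption $\II_{\UU_k,V}\neq 0$ yields some $t_0<t_1$ with $\gamma_k(t_0)\in V$ and $\gamma_k(t_1)\in V^c$. Setting $t^\star:=\sup\{t\in[t_0,t_1]:\gamma_k(t)\in V\}$, openness of $V$ and continuity of $\gamma_k$ force $\gamma_k(t^\star)\in\overline V\setminus V=\p V$, so $p_k:=\gamma_k(t^\star)\in\Gamma_k\cap\p V$.

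To conclude, I pick one such $p_k$ for each $k$ with $\II_{\UU_k,V}\neq 0$. The good-set property of $U$ provides $\delta:=\inf_{j\neq k}d(\Gamma_j,\Gamma_k)>0$, whence $|p_j-p_k|\geq\delta$ for $j\neq k$. Since all the $p_k$ lie in the bounded set $\p U\cap\p V$, \eqref{eq2-7g} forces only finitely many such $k$ to exist. The main subtlety I anticipate is extracting the concrete crossing point $p_k\in\Gamma_k\cap\p V$ from the purely asymptotic information encoded in $\II_{\UU_k,V}$; once this continuity-based step is made rigorous, the separation/counting argument is elementary.
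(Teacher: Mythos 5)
Your proof is correct and follows essentially the same route as the paper: simplicity and transversality of the $\UU_k$ via Lemma \ref{lem:18b} and the boundary inclusion $\p\UU_k\subset\p U$, then for each $k$ with $\II_{\UU_k,V}\neq 0$ an intersection point of $\Gamma_k$ with $\p V$, bounded by transversality and pairwise separated by goodness of $U$, so finiteness follows from \eqref{eq2-7g}. The only difference is that you spell out (correctly, via the supremum/continuity argument) why a nonzero intersection number forces $\Gamma_k\cap\p V\neq\emptyset$, a step the paper simply asserts.
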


\begin{proof} By Lemma \ref{lem:20}, $\p \UU_k \subset \p U$, therefore $\Psi_{\UU_k,V} \geq \Psi_{U,V}$. It follows that $\UU_k,V$ are transverse sets. Define
\begin{equation}
    S = \big\{ k \in \N : \ \II_{\UU_k, V} \neq 0\big\}.
\end{equation}
We must show that $S$ is finite. Let $\{ \Gamma_k : k \in \N\}$ be the connected components of $\p U$, labelled so that $\Gamma_k = \p \UU_k$ -- see Lemma \ref{lem:18b}. If $k \in S$, $\Gamma_k  = \p \UU_k$ and $\p V$ must intersect; let $\bfx_k$ in the intersection. Let $c \in (0,1)$ such that $(U,V)$ are $c$-transverse sets. We note that $\bfx_k \in \Gamma_k \subset \p U$, and because we also have $\bfx_k \in \p V$, $\Psi_{U,V}(\bfx_k) = 0$ so $\ln|\bfx_k| \leq c^{-1}$.

In particular, the set $\{ \bfx_k: k \in S\}$ is finite. Moreover, $\bfx_k \in \Gamma_k$, so -- because $U$ is good -- the pairwise distances between the points $\bfx_k$ are bounded below. By \eqref{eq2-7g}, $S$ is finite. This completes the proof. \end{proof}

Lemma \ref{lem:21} allows us to define $\II_{U, V}$ for transverse good sets $U,V$ such that $U$ is connected. We let $\UU_k$ the connected components of $U^\CCC$ and we set:
\begin{equation}\label{eq2-7h}
    \II_{U, V} \de -\sum_{k \in \N} \II_{\UU_k, V}.
\end{equation}
Lemma \ref{lem:21} ensures that $\II_{\UU_k, V}$ is well-defined and that the sum is finite. Because after adequate labelling, $\p \UU_k = \Gamma_k$ (where the sets $\Gamma_k$ are the connected components of $\p U$), we can interpret \eqref{eq2-7h} as the signed number of entrances in $V$ of trajectories following $\p U$, oriented so that $U$ lies to their left.

\begin{lemma}\label{lem:22} Let $U, V$ be transverse good sets. Let $\{ U_\ell : \ell \in \N \}$ be the connected components of $U$. Then:
\begin{itemize}
    \item For each $\ell$, $U_\ell,V$ are transverse sets. 
    \item $\II_{U_\ell, V}$ is zero for all but finitely many $\ell$.
\end{itemize}
\end{lemma}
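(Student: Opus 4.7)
The plan is to combine the structural lemmas on good sets (Lemmas~\ref{lem-13}--\ref{lem:18b}) with the boundedness/separation argument already used in Lemma~\ref{lem:21}. The first bullet is almost immediate: by Lemma~\ref{lem:20}, each connected component $U_\ell$ of $U$ is itself good and $\p U_\ell \subset \p U$. Hence $\Psi_{U_\ell, V}(\bfx) \geq \Psi_{U, V}(\bfx)$ for every $\bfx \in \R^2$, and transversality of $(U,V)$ transfers to $(U_\ell, V)$.

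For the second bullet, the strategy is to attach to each $\ell$ with $\II_{U_\ell, V} \neq 0$ a specific connected component of $\p U$ meeting $\p V$, and then to mimic the counting argument of Lemma~\ref{lem:21}. Since each $U_\ell$ is a connected good set, the extended definition \eqref{eq2-7h} applies and gives
\begin{equation*}
    \II_{U_\ell, V} = -\sum_{k} \II_{\UU_{\ell,k}, V},
\end{equation*}
where $\{\UU_{\ell,k}\}_k$ are the connected components of $U_\ell^\CCC$. If $\II_{U_\ell, V} \neq 0$, there must exist some $k$ with $\II_{\UU_{\ell,k}, V} \neq 0$; select one such component and call it $\UU_\ell$. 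By Lemma~\ref{lem:18b}, $\UU_\ell$ is simple and $\p \UU_\ell$ is a connected component of $\p U_\ell$; by the partition \eqref{eq2-7k} in Lemma~\ref{lem:20}, $\p \UU_\ell$ is then a connected component of $\p U$ itself. For different values of $\ell$ the sets $U_\ell$ are distinct connected components of $U$, so the chosen $\p \UU_\ell$'s are pairwise distinct connected components of $\p U$.

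Next I would use a continuity/connectedness argument to produce $\bfx_\ell \in \p \UU_\ell \cap \p V$: because $\II_{\UU_\ell, V} \neq 0$, the parametrizing simple path $\gamma_\ell$ is unbounded and $\1_V \circ \gamma_\ell$ takes different values at $\pm \infty$; since the disjoint open sets $V$ and $\Int(V^c)$ cover $\R^2 \setminus \p V$ and $\R$ is connected, the image of $\gamma_\ell$ must hit $\p V$. Then $\bfx_\ell \in \p U \cap \p V$ gives $\Psi_{U, V}(\bfx_\ell) = 1$, and transversality of $(U,V)$ in the form \eqref{eq-0b} forces $|\bfx_\ell| < c^{-1}$, so $\{\bfx_\ell\}$ is bounded. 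Moreover, for $\ell \neq \ell'$ the components $\p \UU_\ell,\p \UU_{\ell'}$ are distinct components of $\p U$, so Definition~\ref{def:4}(b) gives a uniform positive lower bound on $|\bfx_\ell - \bfx_{\ell'}|$. Applying \eqref{eq2-7g} to the bounded, uniformly separated set $\{\bfx_\ell\}$ yields finiteness of the index set.

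The main obstacle is the bookkeeping in the second step: one has to verify, through Lemma~\ref{lem:20}, that the single simple component $\UU_\ell$ chosen for each $\ell$ produces pairwise distinct connected components of the global boundary $\p U$, so that the separation property of good sets can be invoked to bound pairwise distances from below. Once that identification is in place, the remainder is a straightforward transfer of the finiteness argument of Lemma~\ref{lem:21} to this slightly more intricate two-level decomposition of $\p U$.
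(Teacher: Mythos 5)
Your proposal is correct and follows essentially the same route as the paper: transversality of $(U_\ell,V)$ from $\p U_\ell \subset \p U$ via Lemma \ref{lem:20}, then for each $\ell$ with $\II_{U_\ell,V}\neq 0$ using \eqref{eq2-7h} to extract a component of $U_\ell^\CCC$ with nonzero intersection number, producing a point $\bfx_\ell \in \p U \cap \p V$, and concluding via boundedness from transversality plus the good-set separation and \eqref{eq2-7g}. The only differences are cosmetic: you spell out the connectedness argument that $\II_{\UU_{\ell k},V}\neq 0$ forces $\p\UU_{\ell k}\cap\p V\neq\emptyset$ (which the paper leaves implicit by reference to Lemma \ref{lem:21}), and you get the uniform separation of the $\bfx_\ell$ by identifying the chosen boundaries as distinct components of $\p U$ rather than quoting the lower bound on $d(\p U_\ell,\p U_{\ell'})$ — the same fact in both cases.
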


\begin{proof} Because of Lemma \ref{lem:20}, $U_\ell$ is good; and $\p U_\ell \subset \p U$. This implies $\Psi_{U_\ell V} \geq \Psi_{U,V}$ so $U_\ell,V$ are transverse sets. As in the proof of Lemma \ref{lem:21}, we define
\begin{equation}
    S = \big\{ \ell \in \N: \  \II_{U_\ell, V} \neq 0\big\}.
\end{equation}
Assume $\ell \in S$. Let $\UU_\ell = U_\ell^\CCC$ and $\{ \UU_{\ell k} :  k \in \N\}$ be the connected components of $\UU_\ell$. By \eqref{eq2-7h}, $\II_{\UU_{\ell k},V} \neq 0$ for some $k \in \N$ (depending on $\ell$), and there exists $\bfx_\ell \in \p \UU_{\ell k} \cap \p V \subset \p U_\ell \cap \p V \subset \p U \cap \p V$. As in the proof of Lemma \ref{lem:21}, we have $\Psi_{U,V}(\bfx_\ell) = 0$ and therefore $\{ \bfx_\ell : \ell \in S\}$ is bounded.

Moreover, because of Lemma \ref{lem:20}, the distances between the sets $\p U_\ell$ are bounded below,  hence so are the pairwise distances between the points $\bfx_\ell$. By \eqref{eq2-7g}, $S$ is finite. This completes the proof. 
\end{proof}

Lemma \ref{lem:22} allows us to remove the condition that $U$ is connected in \eqref{eq2-7h}: if $U,V$ are transverse sets, we split $U$ in its connected components $U_\ell$ and we define:
\begin{equation}\label{eq2-7i}
    \II_{U, V} \de \sum_{\ell \in \N} \II_{U_\ell, V}.
\end{equation}
The quantities $\II_{U_\ell, V}$ are well-defined because $U_\ell,V$ are transverse good sets (see Lemma \ref{lem:20} and Lemma \ref{lem:22}). The sum is finite because of Lemma \ref{lem:22}. Because of \eqref{eq2-7k}, we interpret again $\II_{U, V}$ as the signed number of entrances in $V$ of trajectories following $\p U$, oriented so that $U$ lies to their left.

\begin{figure}[b]
     \centering
     \begin{subfigure}[ht]{0.3\textwidth}
         \centering
         \includegraphics[width=\textwidth]{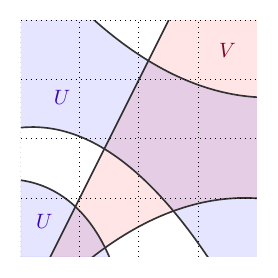}
         \caption{}
         \label{fig: P_multi_1}
     \end{subfigure}
     \hfill
     \begin{subfigure}[ht]{0.3\textwidth}
         \centering
         \includegraphics[width=\textwidth]{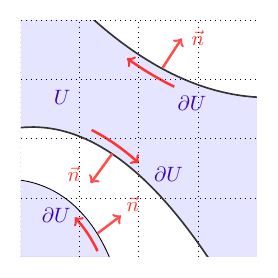}
         \caption{}
         \label{fig: P_multi_2}
     \end{subfigure}
     \hfill
     \begin{subfigure}[ht]{0.3\textwidth}
         \centering
         \includegraphics[width=\textwidth]{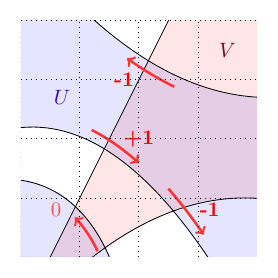}
         \caption{}
         \label{fig: P_multi_3}
     \end{subfigure}
        \caption{We define the intersection number $\mathcal X_{U, V}$ between transverse sets $U, V$ in two steps. We first orient $\p U$ such that $U$ is to its left according to the outward-pointing normal, see (a) and (b); then we count how many times the oriented $\p U$ enters $V$, see (c). In particular, here $\mathcal X_{U, V} = +1 - 1 - 1 + 0= -1$. }
        \label{fig: P_multi}
\end{figure}

\subsection{Chern numbers and connected components}\label{sec:5.4}

\begin{lemma}\label{lem:19} Let $U, V$ be transverse good sets. Let $\{ V_k : k \in \N\}$ denote the connected components of $V$. Then for all $k \in \N$, $U, V_k$ are transverse sets; the series $\sigma_b^{U,V_k}(P)$ is summable; and 
\begin{equation}\label{eq2-7d}
    \sigma_b^{U, V}(P) = \sum_{k \in \N} \sigma_b^{U,V_k}(P).
\end{equation}
\end{lemma}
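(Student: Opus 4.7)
The plan is to reduce the identity to a pointwise decomposition of the Schwartz kernel $K_{U,V}$ followed by Fubini. First, for each $k$, Lemma \ref{lem:20} ensures that the connected component $V_k$ is good and its boundary $\p V_k$ is a union of components of $\p V$; hence $\p V_k \subset \p V$ and $\Psi_{U, V_k}(\bfx) \geq \Psi_{U, V}(\bfx)$, so $(U, V_k)$ is transverse. By Proposition \ref{prop-2a}, $\sigma_b^{U, V_k}(P)$ is therefore well-defined.

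I would next establish the pointwise kernel identity $K_{U, V}(\bfx, \bfy) = \sum_k K_{U, V_k}(\bfx, \bfy)$ for all $\bfx, \bfy \in \Z^2$. Since $V = \bigsqcup_k V_k$, we have $\1_V = \sum_k \1_{V_k}$, and the identity $[P,\1_V](\bfz_1, \bfz_2) = \sum_k [P, \1_{V_k}](\bfz_1, \bfz_2)$ has at most two nonzero terms for each fixed pair $(\bfz_1, \bfz_2)$ (one for each endpoint sitting in some $V_k$), dominated in modulus by $2|P(\bfz_1,\bfz_2)|$. Combined with PSR decay of $P$ (Lemma \ref{lemma-1a}), dominated convergence lets me exchange $\sum_k$ with the intermediate sums defining the operator product $P[[P, \1_U], [P, \1_V]]$, yielding the identity.

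The heart of the proof is the absolute summability of $\sum_k \sum_\bfx |K_{U, V_k}(\bfx, \bfx)|$. Applying \eqref{eq-2k} to $(U, V_k)$, combined with \eqref{eq-0x} and $\Psi_{U, V_k} \geq \Psi_{U, V}$, gives
\begin{equation}
|K_{U, V_k}(\bfx, \bfx)| \leq C_N \, \Psi_{U, V}(\bfx)^{-N/2} \, e^{-(N/2)(d_l(\bfx, \p U) + d_l(\bfx, \p V_k))}.
\end{equation}
The main obstacle is to bound $\sum_k e^{-(N/2) d_l(\bfx, \p V_k)}$ uniformly in $\bfx$. Writing $\p V_k = \bigsqcup_{j \in \KK_k} \Gamma_j^V$ (Lemma \ref{lem:20}) with $\{\KK_k\}$ a partition of $\N$, I would dominate the sum by $\sum_j e^{-(N/2) d_l(\bfx, \Gamma_j^V)}$. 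I would control the latter through the separation $\delta_V := \inf_{j \neq j'} d_1(\Gamma_j^V, \Gamma_{j'}^V) > 0$ (Definition \ref{def:4}(b)) via a packing argument: for each $\bfx$, nearest points on distinct $\Gamma_j^V$ are pairwise $\delta_V$-separated in $\ell^1$, so an $\ell^1$-area count shows that at most $C(1 + R/\delta_V)^2$ components $\Gamma_j^V$ lie within $\ell^1$-distance $R$ of $\bfx$. Ordering $r_{(n)} := d_1(\bfx, \Gamma_{j_n}^V)$ increasingly yields $r_{(n)} \gtrsim \delta_V \sqrt{n}$, whence $\sum_j e^{-(N/2) d_l(\bfx, \Gamma_j^V)} \leq C_{N, V}$ uniformly in $\bfx$ once $N > 4$. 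Together with \eqref{eq-0v} applied to $(U, V)$, this gives
\begin{equation}
\sum_\bfx \sum_k |K_{U, V_k}(\bfx, \bfx)| \leq C_{N, V} \sum_\bfx \Psi_{U, V}(\bfx)^{-N/2} < \infty
\end{equation}
once $N > 4/c$.

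Fubini then concludes the proof:
\begin{equation}
\sum_k \sigma_b^{U, V_k}(P) = -i \sum_k \sum_\bfx K_{U, V_k}(\bfx, \bfx) = -i \sum_\bfx K_{U, V}(\bfx, \bfx) = \sigma_b^{U, V}(P).
\end{equation}
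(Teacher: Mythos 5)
Your proof is correct, but it follows a genuinely different route from the paper's. The paper splits $V$ into the finitely many components $V_0,\dots,V_K$ plus the tail $W_K=\bigsqcup_{k>K}V_k$, proves the topological identity $\p W_K=\bigsqcup_{k>K}\p V_k$, and then kills the remainder by showing $\big|\sigma_b^{U,W_K}(P)\big|\lesssim e^{-\frac N2 d_l(\p U,\p W_K)}$ with $d_l(\p U,\p V_k)\to\infty$: the distance-minimizing points $\bfx_k\in\p V_k$ are pairwise separated (goodness of $V$), hence escape to infinity by \eqref{eq2-7g}, and transversality of $(U,V)$ then forces $d_l(\p U,\p V_k)\geq c\ln|\bfx_k|\to\infty$. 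You instead work at the level of kernels: additivity $K_{U,V}(\bfx,\bfy)=\sum_k K_{U,V_k}(\bfx,\bfy)$ (justified by domination, since at most two components contribute per pair of lattice points), plus absolute summability of the double series via a packing bound — the $\delta_V$-separation of the boundary components gives at most $C(1+R/\delta_V)^2$ components within distance $R$ of any $\bfx$, hence $\sum_j e^{-\frac N2 d_l(\bfx,\Gamma_j^V)}\leq C_{N,V}$ uniformly in $\bfx$ for $N>4$ — and then Fubini together with \eqref{eq-0v}. Your route buys a stronger conclusion ($\sum_k|\sigma_b^{U,V_k}(P)|<\infty$, with a bound uniform in the ordering of components) and avoids identifying $\p W_K$ or the escape-to-infinity argument; note that you only use transversality of $(U,V)$ for the $\bfx$-sum, the $k$-sum being controlled by separation alone. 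The paper's route stays closer to machinery it has already set up (the trace bound \eqref{eq-2q} and finite additivity in the second argument) and needs no interchange-of-summation justification at the kernel level. Two cosmetic points: Definition \ref{def:4}(b) states the separation in the Euclidean metric rather than $d_1$ (equivalent up to a constant, so harmless), and you should say explicitly that $\sigma_b^{U,V_k}(P)=-i\sum_{\bfx}K_{U,V_k}(\bfx,\bfx)$ because $K_{U,V_k}$ is trace-class by Corollary \ref{cor-1a} applied to the transverse pair $(U,V_k)$; with that, the argument is complete.
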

Since $\sigma_b^{U, V}(P)$ is antisymetric in $(U,V)$, \eqref{eq2-7d} also holds when $U$ and $V$ are switched. See Figure \ref{fig: P_multi}.

\begin{proof} 1. Because $V$ is good, for every $k$ we have $\p V_k \subset \p V$, and therefore $\ln \Psi_{U,V_k}(\bfx) \geq \ln \Psi_{U,V}(\bfx)$: $U,V_k$ are transverse sets. Define now
\begin{equation}
    W_K = \bigsqcup\limits_{k > K} V_k.
\end{equation}
We claim that $\p W_K = \bigsqcup\limits_{k > K} \p V_k$. Indeed, for any $k > K$, $V_k$ is a connected component of $W_K$ and therefore $\p V_k \subset \p W_K$ -- see e.g. \cite[Theorem IV.3.1]{N51}. This proves $\bigsqcup\limits_{k > K} \p V_k \subset \p W_k$. 

To prove the reverse inclusion, fix $\bfx$ in $\p W_K$ and define
\begin{equation}
    \rho_0 = \inf_{j \neq k} d_1(\Gamma_j,\Gamma_k) > 0,
\end{equation}
where $\{ \Gamma_k : k \in \N\}$ are the connected components of $\p V$. Let $\Dd_n = \Dd_{1/n}(\bfx)$ be a neighborhood of $\bfx$ with $n \geq 2\rho_0^{-1}$. Because $\bfx \in \p W_K$, $\Dd_n$ intersects both $W_K = \bigsqcup V_k$ and $W_K^c = \bigcap V_k^c$, so (at least) one of the sets $\{ V_k : k > K\}$, denoted $V_{k_n}$; and all of the sets $\{ V_k^c : k >K\}$. In particular, $\Dd_n$ contains a point of $\p V_{k_n}$. Because $\Dd_n$ has diameter less than $\rho_0$, we deduce that $k_n$ does not depend on $n$, so there exists $k_*$ such that $\Dd_n \cap \p V_{k_*} \neq \emptyset$. Taking $n \rightarrow \infty$ proves that $\bfx_n \in \p V_{k_*}$. This implies that $\p W_K \subset \bigsqcup\limits_{k > K} \p V_k$, completing the proof of $\p W_K = \bigsqcup\limits_{k > K} \p V_k$.

2. Hence $\p W_K \subset \p V$ and $U,W_K$ are transverse sets. By expanding $K_{U,V}$ according to the finite disjoint union $V = \bigsqcup\limits_{k \leq K} V_k \bigsqcup W_K$, we deduce that
\begin{equation}
    \sigma_b^{U, V}(P) = \sum_{k=0}^K \sigma_b^{U,V_k}(P)  + \sigma_b^{U,W_K}(P).
\end{equation}
It then remains to prove that
\begin{equation}\label{eq2-7c}
    \lim_{K \rightarrow \infty} \sigma_b^{U,W_K}(P) = 0.
\end{equation}

3. Since $\p W_K \subset \p V$, we have $ \Psi_{U,W_K}(\bfx) \geq \Psi_{U,V}(\bfx)$. Therefore by \eqref{eq-2q} and \eqref{eq-0x}, we have
\begin{align}
    \big| \sigma_b^{U,W_K}(P) \big| \leq C_N \left( \sum_{\bfx \in \Z^2}  \Psi_{U,W_K}(\bfx)^{-N}\right)^2
    & \leq C_N \cdot \sup_{\bfx \in \Z^2} e^{-N\ln \Psi_{U,W_K}(\bfx)} \cdot \left(\sum_{\bfx \in \Z^2}  \Psi_{U,V}(\bfx)^{-\frac{N}{2}} \right)^2
    \\
    & \leq C_N  \cdot e^{- \frac{N}{2} d_l(\p U, \p W_K)} \cdot \left( \sum_{\bfx \in \Z^2} \Psi_{U,V}(\bfx)^{-\frac{N}{2}}\right)^2.
\end{align}
The last sum is finite by \eqref{eq-0v}, and from $\p W_k = \bigsqcup\limits_{k > K} V_k$, we have
\begin{equation}
    d_l(\p U, \p W_K) = \inf_{k > K} d_l(\p U, \p V_k).
\end{equation}
% \begin{equation}
%     d_1(\p U, \p W_K) = \inf_{k > K} d_1(\p U, \p V_k).
% \end{equation}
So, to prove \eqref{eq2-7c} it suffices to show that 
\begin{equation}
\lim_{K \rightarrow \infty} \inf_{k > K} d_l(\p U, \p V_k) = \liminf_{k \rightarrow \infty}  d_l(\p U, \p V_k) = +\infty,
\end{equation}
or equivalently that $d_l(\p U, \p V_k)$ goes to infinity as $k$ goes to infinity.

4. Let $\bfx_k \in \p V_k$ with $d_l(\p U, \bfx_k) = d_l(\p U, \p V_k)$. By Lemma \ref{lem:20}, we have:
\begin{equation}
    d_l(\bfx_k, \bfx_j) \geq \ln \big(1 + d_1(\bfx_k,\bfx_j)\big) \geq \ln\big(1 + d_1(\p V_k, \p V_j)\big) \geq \ln(1 + \inf_{m \neq \ell} d_1(\Gamma_m,\Gamma_\ell))  = \ln(1 +  \rho_0)>0.
\end{equation}
Hence the set $\{ \bfx_k : k \in \N\}$ is made of points that are at least $\rho_0$ pairwise distant; by \eqref{eq2-7g}, $\bfx_k$ has no bounded subsequence:
\begin{equation}
    \limsup_{k \rightarrow \infty} |\bfx_k| = \infty.
\end{equation}
When $|\bfx_k| \geq c^{-1}$,
\begin{equation}
    d_l(\p U, \p V_k) = d_l(\p U, \bfx_k) = \ln \Psi_{U,V_k}(\bfx_k) \geq \ln \Psi_{U,V}(\bfx_k) \geq c\ln |\bfx_k|.
\end{equation}
Therefore $d_l(\p U, \p V_k)$ goes to infinity as $k \rightarrow \infty$, as claimed. This completes the proof.     
\end{proof}

\subsection{Proof of Theorem \ref{thm:2}}\label{sec:5.5}

\begin{proof}[Proof of Theorem \ref{thm:2}] Let $U,V$ be transverse sets. We claim that \eqref{eq2-7f} is valid if one of the following scenarios holds (ordered according to generality):
\begin{itemize}
    \item[A.] $U$ and $V$ are simple.
    \item[B.] $U$ is simple and $V$ is good and connected.
    \item[C.] $U$ is simple and $V$ is good.
    \item[D.] $U$ is good and connected and $V$ is good.
    \item[E.] And finally, $U$ and $V$ are good.
\end{itemize}
Of course A is Theorem \ref{thm:1}, which we already proved; and E is Theorem \ref{thm:2}.

B. Assume that $U,V$ are transverse sets with $U$ simple and $V$ good and connected; let $\VV = V^\CCC$. Then $U, \VV$ are also transverse sets: indeed $\VV$ is good (Lemma \ref{lem-13}) and $\p \VV = \p V$. Moreover, because $\p V \cap \Z^2 = \emptyset$ (see the Definition \ref{def:4} of good sets), $V^c \cap \Z^2 = V^\CCC \cap \Z^2 = \VV $ a
\begin{equation}\label{eq2-8f}
    \sigma_b^{U, V}(P) = -\sigma_b^{U, V^c}(P) = -\sigma_b^{U,\VV}(P),
\end{equation}
Let $\{ \VV_k : k \in \N \}$ be the connected components of $\VV$. By Lemma \ref{lem:19}, the sets $U, \VV_k$ are transverse, $\sigma_b^{U, \VV_k}(P)$ is summable, and:
\begin{equation}
   \sigma_b^{U, V}(P) = - \sigma_b^{U,\VV}(P) = -\sum_{k \in \N} \sigma_b^{U, \VV_k}(P).
\end{equation}
By Lemma \ref{lem:18b}, $\VV_k$ is simple. Hence, $U, \VV_k$ are transverse simple sets, and by part A,
\begin{equation}\label{eq2-7e}
    \sigma_b^{U, \VV_k}(P) = \II_{U, \VV_k} \cdot \sigma_b(P).
\end{equation}
Summing both sides of \eqref{eq2-7e} over $k$ and applying Lemma \ref{lem:23}, we obtain:
\begin{equation}
   \sigma_b^{U, V}(P)  = -\sum_{k \in \N} \sigma_b^{U, \VV_k}(P) =  -\sum_{k \in \N} \II_{U, \VV_k} \cdot \sigma_b(P) = \II_{U, V} \cdot \sigma_b(P).
\end{equation}

C. Assume now that $U$ is simple and $V$ is good. Let $V_\ell$ be the connected components of $V$. By Lemma \ref{lem:19}, the sets $U,V_\ell$ are transverse, $\sigma_b^{U,V_\ell}(P)$ is summable, and 
\begin{equation}
    \sigma_b^{U, V}(P) = \sum_{\ell \in \N} \sigma_b^{U,V_\ell}(P).
\end{equation}
Now, $U$ is simple and $V_\ell$ is connected, so by part B, we have $\sigma_b^{U,V_\ell}(P) = \II_{U, V_\ell} \cdot \sigma_b(P)$. Therefore, by Lemma \ref{lem:24}:
\begin{equation}
    \sigma_b^{U, V}(P) = \sum_{\ell \in \N} \II_{U, V_\ell} \cdot \sigma_b(P) = \II_{U, V} \cdot \sigma_b(P).
\end{equation}

D. Assume now that $U$ is good and connected and $V$ is good; set $\UU = U^\CCC$. As for \eqref{eq2-8f}, $(\UU,V)$ is good and we have $\sigma_b^{U, V}(P) = -\sigma_b^{\UU,V}(P)$. Let $\{ \UU_m : m \in \N\}$ be the connected components of $\UU$. By Lemma \ref{lem:19}, the sets $\UU_m,V$ are transverse, the terms $\sigma_b^{\UU_m,V}(P)$ are summable, and:
\begin{equation}\label{eq2-7s}
   \sigma_b^{U, V}(P) = - \sigma_b^{\UU,V}(P) = -\sum_{m \in \N} \sigma_b^{\UU_m,V}(P).
\end{equation}
Now because $\UU,V$ are transverse, so are $\UU_m,V$, but now $\UU_m$ is simple and $V$ is good -- see Lemma \ref{lem:24}. By Step C, $\sigma_b^{\UU_m,V}(P) = \II_{\UU_m, V} \cdot \sigma_b(P)$. Using the definition \eqref{eq2-7h} of $\II_{U, V}$ when $U,V$ are transverse sets and $U$ is connected, we obtain once again:
\begin{equation}
    \sigma_b^{U, V}(P)  = -\sum_{m \in \N} \sigma_b^{\UU_m,V}(P) = -\sum_{m \in \N} \II_{\UU_m, V} \cdot \sigma_b(P) =  \II_{U, V} \cdot \sigma_b(P).
\end{equation}

E. Assume finally that $U$ and $V$ are both good, and let $\{ U_n : n \in \N\}$ be the connected components of $U$. We apply (in this order) Lemma \ref{lem:19}, part D, and the definition \eqref{eq2-7i} of $\II_{U, V}$ for good sets:
\begin{equation}
    \sigma_b^{U, V}(P) = \sum_{n \in \N} \sigma_b^{U_n,V}(P) =  \sum_{n \in \N} \II_{U_n, V} \cdot \sigma_b(P) = \II_{U, V} \cdot \sigma_b(P).
\end{equation}
This completes the proof. 
\end{proof}

\subsection{Proof of Theorem \ref{thm-main}}
Brought together, Theorems \ref{thm-4a} and \ref{thm:2} and Lemma \ref{lem-9} complete the proof of Theorem \ref{thm-main}. Indeed, in the setup of Assumption \ref{ass-all}, we have (by Theorem \ref{thm:2}):
\begin{equation}\label{eq-9j}
    \sigma_e(H_e) = \sigma_b^{U,V}(P_+) - \sigma_b^{U,V}(P_-).
\end{equation}
Let now $\UU, \VV \subset \R^2$ be given by 
\begin{equation}
    \UU = \{ \bfx \in \R^2 : d(\bfx,\Z^2 \setminus U) > 3/4\}, \qquad  VV = \{ \bfx \in \R^2 : d(\bfx,\Z^2 \setminus V) > 3/4\}.
\end{equation}
By Lemma \ref{lem-9}, $\UU$, $\VV$ are good transverse sets, and $\UU \cap \Z^2 = U$, $\VV \cap \Z^2 = V$. Therefore, by Theorem \ref{thm-4a}:
\begin{equation}\label{eq-9i}
    \sigma_b^{U,V}(P_\pm) = \sigma_b^{\UU,\VV}(P_\pm) = \chi_{\UU,\VV} \cdot \sigma_b(P_\pm).
\end{equation}
Theorem \ref{thm-main} follows from plugging \eqref{eq-9i} in \eqref{eq-9j}.

\appendix
\section{Proofs for Section \ref{sec-2}}\label{app-ESR}
\begin{proof}[Proof of Lemma \ref{lemma-1a}]
    (1) Following \cites{EGS05,AW15}, we introduce
    \[
    S_\alpha := \sup_{\bfx\in \Z^2} \sum_{\bfy\in \Z^2} |H(\bfx, \bfy)|(e^{\alpha d_1(\bfx, \bfy)} - 1)<+\infty, \qquad \alpha<2\nu
    \]
and note that, for $\alpha\in (0,\nu]$, we have \cite{DZ23}*{(3.3)}: 
\begin{equation}
    \label{eq-1m}
    S_\alpha / \alpha \leq S_\nu/\nu \leq 16/\nu^4<+\infty.
\end{equation}
Recall the Combes-Thomas estimate in \cite[Theorem 10.5]{AW15}: for any $\alpha<\nu$ and $z\in \C$ such that $\Delta_z = d(z,\Sigma(H))>S_\alpha$, we have 
        \begin{equation}\label{eq-2s}
        |(H - z)^{-1}(\bfx, \bfy)|\leq \frac{1}{\Delta_z - S_\alpha}e^{-\alpha d_1(\bfx, \bfy)}.
        \end{equation}
    
    The estimate \eqref{eq-9a} follows from \eqref{eq-2s} by taking $\alpha = 2^{-5}\nu^4|\Im z|$. In this case, since $\nu<1$ and $0<|\Im z|< 1$, we see that $\alpha<\nu$. By \eqref{eq-1m}, 
    \[
    S_\alpha<16\alpha/\nu^3<|\Im z|/2<|\Im z|\leq \Delta_z.
    \]
    
    (2) We recall the Helffer-Sj\"ostrand formula \cite{Z22}*{Theorem 14.8}: for any $g\in C_c^\infty(\R)$, 
\begin{equation}
    \label{eq: HS_1}
    g(H) = \frac{1}{2\pi i }\int_\C \frac{\p \tilde g}{\p \bar{z}} (H - z)^{-1} dz\wedge d\bar{z},
\end{equation}
where $\tilde g = \tilde g(z, \bar{z}):\C \to \C$ is an \textit{almost analytic extension} of $g$:
\begin{equation}
    \label{eq: almost_analytic}
\tilde g \big|_\R = g, \qquad \p_{\bar{z}}\tilde g = O(|\Im z|^\infty) \ \ \text{as $\Im z \rightarrow 0$}, \qquad \supp(\tilde g)\subset \{z:|\Im z| \leq 1\}. 
\end{equation}
Above, $\p_{\bar{z}}\tilde g = O(|\Im z|^\infty)$ means for any $N>0$, there is $C_N > 0$ (depending on $g$), such that $|\p_{\bar{z}}\tilde g|\leq C_N |\Im z|^N.$ We now have, for $\bfx \neq \bfy$, by \eqref{eq-9a},
\[
\begin{split}
    |g(H)(\bfx, \bfy)| &\leq \frac{1}{2\pi}\int_{\C} |\p_{\bar{z}} \tilde g(z)| \left| \left(H - z\right)^{-1}(\bfx, \bfy)\right| |dz \wedge d\bar{z}| \\
    &\leq C_{N}\int_{\supp(g)\times[-1,1]} |\Im z|^N \frac{2}{|\Im z|}e^{-\frac{\nu^4|\Im z|}{32}d_1(\bfx, \bfy)} |dz \wedge d\bar{z}|\\
    &\leq C_N d_1(\bfx,\bfy)^{-N} \cdot \int_0^{1}  w^{N - 1}e^{-\frac{\nu^4}{32} w} dw  \\
    &\leq C_N d_1(\bfx, \bfy)^{-N}, \label{eq-9c}
\end{split}
\]
where we used the substitution $w = |\Im z| d_1(\bfx,\bfy)$. When $\bfx = \bfy$, since $g$ is bounded, $|g(H)(\bfx, \bfx)|\leq \Vert g \Vert_\infty$. Combining this bound with the estimate \eqref{eq-9c}, we obtain $g(H)$ is PSR:
\[
\big|g(H)(\bfx, \bfy)\big|\leq C_N\big(1 + d_1(\bfx, \bfy)\big)^{-N} = C_N e^{-Nd_l(\bfx, \bfy)}.
\]
\end{proof}

\begin{proof}[Proof of Lemma \ref{lemma-1b}]
Note that $\1_U S \1_U(\bfx, \bfy) = S(\bfx, \bfy) \1_U(\bfx) \1_U(\bfy)$ and $\1_U(\bfx)\leq e^{-cd_*(\bfx, U)}$ for any $c>0$. Hence we obtain \eqref{eq-2p}
\[
    \big|\1_U S_1 \1_U(\bfx, \bfy)\big|\leq \big|S_1(\bfx, \bfy)\big|\1_U(\bfx) \1_U(\bfy)\leq C e^{-cd_*(\bfx, \bfy) - cd_*(\bfx, U) - cd_*(\bfy, U)},
\]
and
\[
\begin{split}
    \big|\1_U S_1 \1_{U^c}(\bfx, \bfy)\big|& \leq \big|S_1(\bfx, \bfy)\big|\1_U(\bfx) \1_{U^c}(\bfy) \leq Ce^{-cd_*(\bfx, \bfy) -  c d_*(\bfx, U) - c d_*(\bfy, U^c)}.
\end{split}
\]
Note that 
\[
\begin{split}
    &d_*(\bfx, \bfy) + d_*(\bfx, U) + d_*(\bfy, U^c)\geq d_*(\bfy, U) + d_*(\bfy, U^c) \geq d_*(\bfy, \p U)\\
    &d_*(\bfx, \bfy) + d_*(\bfx, U) + d_*(\bfy, U^c)\geq d_*(\bfx, U) + d_*(\bfx, U^c) \geq d_*(\bfx, \p U).
\end{split}
\]
Therefore we obtain \eqref{eq-2b}:
\[
\big|\1_U S_1 \1_{U^c}(\bfx, \bfy)\big| \leq Ce^{-cd_*(\bfx, \bfy) - \frac{2c}{3}d_*(\bfx, U) - \frac{2c}{3}d_*(\bfy, U^c)}\leq C e^{-\frac{c}{3}d_*(\bfx, \bfy) - \frac{c}{3}d_*(\bfx, \p U) - \frac{c}{3}d_*(\bfy,\p U)}.
\]
Finally \eqref{eq-2c} follows from $[\1_ U , S_1] = \1_ U  S_1 \1_{ U ^c} - \1_{ U ^c} S_1 \1_ U $.
\end{proof}

    \begin{proof}[Proof of Lemma \ref{lemma-1c}]
    We note that for $C = \prod\limits_{j = 1}^n C_j$, we have
        \[
    \begin{split}
        &\left|\left(\prod_{j = 1}^n S_{j}\right)(\bfx, \bfy)\right| \leq C \sum\limits_{\bfz_1, \cdots, \bfz_{n - 1}} e^{-cd_*(\bfx, \bfz_1) - \cdots -cd_*(\bfz_{n -1}, \bfy)} e^{-cd_*(\bfz_{p - 1}, \p U ) - cd_*(\bfz_p , \p U ) - cd_*(\bfz_{q - 1}, \p V )  - cd_*(\bfz_{q}, \p V )}\\
        &\leq Ce^{-\frac{c}{4}d_*(\bfx, \p U ) - \frac{c}{4}d_*(\bfx, \p V ) - \frac{c}{4}d_*(\bfy, \p U ) - \frac{c}{4}d_*(\bfy, \p V ) - \frac{c}{4}d_*(\bfx, \bfy)} \sum\limits_{\bfz_1,\cdots, \bfz_{n - 1}} e^{-\frac{c}{4}d_*(\bfx, \bfz_1) -\cdots -\frac{c}{4}d_*(\bfz_{n -2}, \bfz_{n - 1})}\\
        &\leq CM_{d_*, \frac{c}{4}}^{n - 1} e^{-\frac{c}{4}d_*(\bfx, \bfy) -\frac{c}{4}d_*(\bfx, \p U ) - \frac{c}{4}d_*(\bfx, \p V ) - \frac{c}{4}d_*(\bfy, \p U ) - \frac{c}{4}d_*(\bfy, \p V )}. 
    \end{split}
    \]
     To get the second line, we used the triangle inequality (the quantities $d_*(\bfx, \p V )$, $d_*(\bfy, \p V )$, and $d_*(\bfx, \bfy)$ being obtained similarly.):
    \[
    d_*(\bfx, \bfz_1) + d_*(\bfz_1, \bfz_2) + \cdots + d_*(\bfz_{p - 2}, \bfz_{p - 1}) + d_*(\bfz_{p - 1}, \p U ) \geq d_*(\bfx, \p U ).
    \]
    This completes the proof. 
\end{proof}

\begin{proof}[Proof of Corollary \ref{cor-1a}]
    To show $\prod\limits_{j = 1}^n S_j$ is trace-class, it is enough to show 
    \[
    \sum\limits_{\bfx, \bfy}\left|\left(\prod\limits_{j = 1}^n S_{j}\right)(\bfx, \bfy)\right|<+\infty.
    \]
    Using that $S_j$ are PSR, \eqref{eq-2d} in Lemma \ref{lemma-1c} and \eqref{eq-0v}, we have 
    \begin{equation}
        \label{eq-1r}
    \begin{split}
    \sum\limits_{\bfx, \bfy}\left|\left(\prod_{j = 1}^n S_{j}\right)(\bfx, \bfy)\right|&\leq CM_{d_l, \frac{N}{4}}^{n - 1}\sum\limits_\bfx e^{-\frac{N}{4}d_l(\bfx, \p  U ) - \frac{N}{4} d_l(\bfx, \p  V ) }\left(\sum\limits_\bfy e^{-\frac{N}{4}d_l(\bfx, \bfy)}\right)\\
    &\leq CM_{d_l, \frac{N}{4}}^{n } \sum\limits_\bfx e^{- \frac{N}{4}d_l(\bfx, \p  U ) - \frac{N}{4}d_l(\bfx, \p  V ) }\\
    &\leq CM_{d_l, \frac{N}{4}}^n \sum\limits_\bfx \Psi_{U,V}(\bfx)^{-\frac{N}{4}}<+\infty
    \end{split}
    \end{equation}
    Thus $\prod\limits_{j = 1}^n S_j$ is trace-class.    This completes the proof.
\end{proof}
\section{Proof of Proposition \ref{prop:4}}\label{app:A}

\subsection{Sets separated by proper injective curves}

\begin{lemma}\label{lem:4} Let $\gamma : \R \rightarrow \R^2$ be a continuous, proper and injective map with range $\Gamma$. Then $\C \setminus \Gamma$ has precisely two connected components.

Moreover, if $A$ is one of these connected components:
\begin{itemize}
    \item[(i)] $A$ is unbounded;
    \item[(ii)] $A$ has boundary $\Gamma$;
    \item[(iii)] For any compact set $K \subset \C$, there exists $\Omega \subset \C \setminus K$ open such that $A \cap \Omega$ is connected and $\C \setminus \Omega$ is compact.
\end{itemize}
\end{lemma}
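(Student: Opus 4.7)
The plan is to reduce everything to the classical Jordan curve theorem on the sphere $S^2$ via one-point compactification of $\R^2$, and then to extract the local structure of the two complementary components near the point at infinity to obtain the refined properties (i), (ii), (iii).

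First I would set $S^2 = \R^2 \cup \{\infty\}$ and $S^1 = \R \cup \{\infty\}$ with their usual topologies, and extend $\gamma$ to a map $\tilde\gamma : S^1 \to S^2$ by $\tilde\gamma(\infty) = \infty$. Properness of $\gamma$ is precisely the statement that $\tilde\gamma$ is continuous at $\infty$, and injectivity is inherited from $\gamma$. Since $S^1$ is compact and $S^2$ is Hausdorff, $\tilde\gamma$ is a topological embedding, so $\tilde\gamma(S^1) = \Gamma \cup \{\infty\}$ is a Jordan curve in $S^2$. The Jordan curve theorem in $S^2$ then yields exactly two open connected components $A_1, A_2$ of $S^2 \setminus \tilde\gamma(S^1)$, each with boundary $\tilde\gamma(S^1)$. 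Since $\infty \in \tilde\gamma(S^1)$, neither $A_i$ contains $\infty$, so $\{A_1, A_2\}$ are exactly the two connected components of $\R^2 \setminus \Gamma$. Property (i) follows from $\infty \in \p_{S^2} A_i$, which forces every $S^2$-neighborhood of $\infty$ to meet $A_i$ and hence $A_i$ to be unbounded in $\R^2$. Property (ii) is the identity $\p_{\R^2} A_i = \p_{S^2} A_i \setminus \{\infty\} = \Gamma$.

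For property (iii), the key is the local picture near $\infty$. Given compact $K \subset \R^2$, I would choose an open neighborhood $W$ of $\infty$ in $S^2$ satisfying: (a) $W \cap K = \emptyset$; (b) $W$ is homeomorphic to an open disk; and (c) $\tilde\gamma^{-1}(W)$ is a single open arc $(t_-, t_+) \subset S^1$ containing $\infty$. Conditions (a) and (b) are routine for $W$ a small enough stereographic ball around $\infty$, while (c) is achieved on further shrinking $W$ because $\tilde\gamma^{-1}(\{\infty\}) = \{\infty\}$ and $\tilde\gamma$ is continuous. Continuity forces $\tilde\gamma(t_\pm) \in \p W$, so $W \cap \tilde\gamma(S^1) = \tilde\gamma((t_-, t_+))$ is a simple arc inside $W$ joining two boundary points through $\infty$. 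By the Jordan arc theorem (applied inside the topological disk $W$), $W \setminus \tilde\gamma(S^1)$ has exactly two connected components; each sits inside a single $A_i$ by global connectedness, and conversely each $A_i \cap W$ is nonempty since $\infty \in \p_{S^2} A_i$. Hence $A_i \cap W$ equals one of the two local components and is connected. Setting $\Omega = W \setminus \{\infty\}$ then gives an open subset of $\R^2 \setminus K$ with compact complement $\R^2 \setminus \Omega = S^2 \setminus W$ and with $A \cap \Omega = A \cap W$ connected.

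The hard part will be the last step: arranging $W$ to simultaneously satisfy (a)--(c) and then matching the two local components of $W \setminus \tilde\gamma(S^1)$ with the global $A_1, A_2$. This ultimately rests on $\tilde\gamma$ being a topological embedding at $\infty$ together with Jordan's theorem in $S^2$; the rest of the argument is bookkeeping between the $S^2$- and $\R^2$-topologies.
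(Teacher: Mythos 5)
Your compactification step, the application of the Jordan curve theorem on $S^2$, and your arguments for (i) and (ii) are essentially identical to the paper's proof. For (iii), however, you take a different, local route, and it has a genuine gap: the claim that after ``further shrinking'' one can arrange $\tgamma^{-1}(W)$ to be a single open arc containing $\infty$. Continuity of $\tgamma$ together with $\tgamma^{-1}(\{\infty\})=\{\infty\}$ only guarantees that $\tgamma^{-1}(W)$ \emph{contains} an arc around $\infty$; it says nothing about the rest of the preimage, and shrinking $W$ does not remove the extra components. Concretely, for the round neighborhoods $W_R=\{|z|>R\}\cup\{\infty\}$ this can fail for \emph{every} $R$: take a proper injective curve whose radius oscillates, say $|\gamma(n)|\sim n^2$ while $|\gamma(n+\tfrac12)|\sim n$; it is proper, yet for each $R$ the set $\{t: |\gamma(t)|>R\}$ has many components besides the two tails, so $\tgamma^{-1}(W_R)$ is never an arc. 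Hence a non-round $W$ adapted to the curve must be constructed, and that construction (a ``simply split''/cross-cut neighborhood of the point $\infty$ of the Jordan curve $\tgamma(S^1)$, lying inside $S^2\setminus K$) is precisely the nontrivial content of the step — comparable to the cross-cut statement the paper quotes at finite points (\cite[Theorem 2.7.11]{T11}) — not a routine consequence of continuity and injectivity. In addition, the assertion that $W\setminus\tgamma(S^1)$ has exactly two components requires the cross-cut theorem for Jordan domains, which should at least be cited; your subsequent matching of the two local components with the global components $A_1,A_2$ is fine once these inputs are secured.

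For comparison, the paper sidesteps the local problem by a global straightening: it fixes $w\notin\overline{A}$, applies the M\"obius map $z\mapsto 1+(z-w)^{-1}$ (sending $\infty$ to the finite boundary point $1$ and $A$ to the bounded Jordan domain), and then invokes the Schoenflies theorem to map $A$ homeomorphically onto the unit disk with $\infty\mapsto 1$; the set $\Omega$ is then the pullback of a small round disk centered at $1$, and the connectedness of $A\cap\Omega$ reduces to the obvious connectedness of the intersection of two round disks. If you wish to retain your local approach, you need to prove or cite the existence of such a cross-cut neighborhood of $\infty$ avoiding $K$ (e.g., by transporting the finite-point cross-cut theorem to $\infty$ via a M\"obius map); as written, the justification offered for this key step is not valid.
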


\begin{proof} In this proof we identify $\R^2$ with $\C$. We denote by $\Dd_r(\bfz)$ the open disk centered at $\bfz \in \C$, of radius $r$. We will use the Riemann sphere $S^2 = \C \cup \{\infty\}$. This is a topological space with topology generated by the sets
\begin{equation}
    \DD_r(\bfz_*) = \systeme{ \Dd_r(\bfz_*) & \text{ if $\bfz_* \in \C$, $r > 0$}; \\ S^2 \setminus \overline{\Dd_r(0)} & \text{ if $\bfz_* = \infty$, $r > 0$.}}
\end{equation}
See for instance \cite[\S6.2]{T11}.

After reparametrizing $\gamma$, we may assume that it maps $S^1 \setminus \{1\}$ to $\C$. 
Extend $\gamma$ to $\tgamma : S^1 \rightarrow S^2$ by requiring $\tgamma(1) = \infty$. We claim that $\tgamma$ is injective. Indeed, if $\tgamma(t) = \tgamma(t')$ for some $t, t'$, then either the joint value is $\infty$, in that case $t=t'=1$; or it is in $\C$, so then $\gamma(t) = \gamma(t')$ and $t=t'$ since $\gamma$ is injective.  We claim that $\tgamma$ is continuous.  Indeed, if $\bfz_* \in \C$ and $r>0$ then $\DD_r(\bfz_*) \subset \C$, so
\begin{equation}
   \tgamma^{-1}\big(\DD_r(\bfz_*)\big) = \tgamma^{-1}\big(\Dd_r(\bfz_*)\big) = \gamma^{-1}\big(\Dd_r(\bfz_*)\big)
\end{equation}
is open in $S^1 \setminus \{1\}$, hence in $S^1$. Likewise, if $\bfz_* =\infty$ and $r>0$ then 
\begin{equation}
\tgamma^{-1}\big(\DD_r(\infty)\big) = \tgamma^{-1}\big(S^2 \setminus \overline{\Dd_r(0)}\big) = S^1 \setminus \gamma^{-1}\big(\overline{\Dd_r(0)}\big).
\end{equation}
Now, $\gamma^{-1}(\overline{\Dd_r(0)})$ is a compact subset of $S^1 \setminus \{1\}$, hence of $S^1$ (because a covering by open subsets in $S^1 \setminus \{1\}$ is a covering by open subsets in $S^1$). In particular it is closed in $S^1$ and $\tgamma^{-1}\big(\DD_r(\infty)\big)$ is open in $S^1$. We proved that preimages under $\tgamma$ of topological basis elements are open in $S^1$, so $\tgamma : S^1 \rightarrow S^2$ is continuous.

This means that $\tgamma$ is a Jordan curve in $S^2$. By \cite[Theorem 63.4]{M13}, $S^2 \setminus \tGamma$ (with its subspace topology) has precisely two connected component, each of them having $\tGamma$ as boundary. Let $A$ be one of them. Since $\infty \in \tGamma$, $A$ is open in $S^2 \setminus \tGamma$, which is open in $\C$; so $A$ is open in $\C$. It follows that $\C \setminus \Gamma$ has precisely two connected components, which are the same as $S^2 \setminus \tGamma$.

Since $\infty \in \tGamma$, $\infty$ is in the $S^2$-closure of $A$. Thus for any $r > 0$, $\DD_r(\infty) \cap A \neq \emptyset$, so $A$ is unbounded. This proves (i).  The closure of $A$ in $S^2$ is $\overline{A} \cup \{\infty\}$, so the boundary of $A$ in $S^2$ is equal to $\p A \cup \{\infty\} = \tGamma = \Gamma \cup \{\infty\}$ (where $\overline{A}$ and $\p A$ denote the closure and boundary of $A$ in $\C$, respectively). We deduce that the boundary of $A$ in $\C$ is $\Gamma$. This proves (ii).

Finally, let $w \in \C \setminus \overline{A}$ and $F_1 : S^2 \rightarrow S^2$ defined by $F_1(z) = 1+(z-w)^{-1}$ for $z \neq \infty, w$, $F_1(w) = \infty$, and $F_1(\infty) = 1$. This is a homeomorphism of $S^2$ --  see e.g. \cite[Theorem 6.3.1]{T11}. Since $A$ is open and $w \notin \overline{A}$, the set $F_1(\tGamma)$ is a closed Jordan curve in $\C$, and the bounded component of $S^2 \setminus F_1(\tGamma)$ is $F_1(A)$. By Schoenflies's theorem \cite[Theorem 3.1]{T92}, 
there exists a homeomorphism $F_2 : \C \rightarrow \C$ such that $F_2 \circ F_1(A) = \Dd_1(0)$ and $F_2(1) = 1$; extend $F_2$ to a homeomorphism $S^2 \rightarrow S^2$ by $F_2(\infty) = \infty$. Set $F = F_2 \circ F_1$, which is a homeomorphism of $S^2$ that sends $A$ to $\Dd_1(0)$.

Fix $K \subset \C$ compact, and set $V = \C \setminus K \cup \{\infty\} = S^2 \setminus K$, which is a neighborhood of $\infty$ in $S^2$. Then $F(V)$ is a neighborhood of $F(\infty) = 1$, so there exists $r > 0$ such that $\Dd_r(1) \subset F(V)$. The set $\Dd_r(1) \cap \Dd_1(0)$ is connected, therefore so is 
\begin{equation}
    F^{-1}\big( \Dd_r(1) \cap \Dd_1(0) \big) = F^{-1}\big( \Dd_r(1) \big) \cap W_1 = \tOmega \cap A = \Omega \cap A, \quad \tOmega \de F^{-1}\big( \Dd_r(1) \big), \quad \Omega \de \tOmega \setminus \{\infty\}.
\end{equation}
Note that $\tOmega$ is a neighborhood of $\infty$, so the set $\C \setminus \Omega$ is compact; and $\Omega \subset V \setminus \{\infty\} \subset \C \setminus K$. This proves (iii).
\end{proof}

\begin{figure}[t]
  \centering
  \includegraphics[width=1\textwidth]{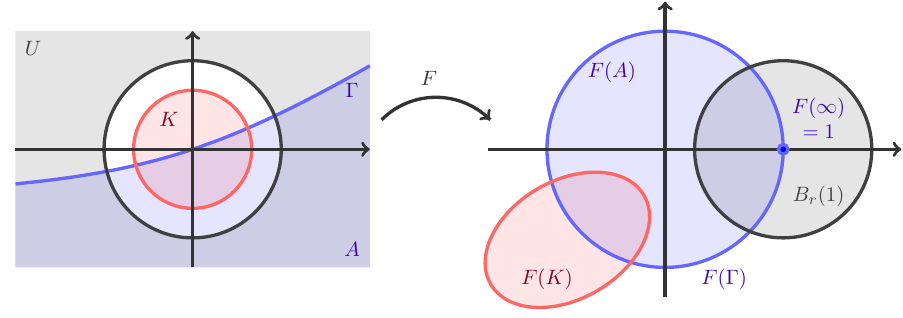}
  \caption{Pictorial representation of the proof of Lemma \ref{lem:4}(iii). After applying a homeomorphism of $S^2 = \C \cup \{\infty\}$ that sends $\infty$ to $1$, we can think of $F(A)$ as the unit disk and $F(K)$ as a compact set that does not contain $1$. Thus we can take $F(\Omega)$ as a disk centered at $1$ that does not intersect $F(K)$.}
  \label{fig:1}
\end{figure}

\begin{lemma}\label{lem:5} Let $\Gamma_1, \Gamma_2$ be the ranges of two proper injective continuous maps such that $\Gamma_1 \Delta \Gamma_2$ is bounded. Let $A_1$ be a connected component of $\R^2 \setminus \Gamma_1$. Then there exists  a unique connected component $A_2$  of $\R^2 \setminus \Gamma_2$ such that $A_1 \Delta A_2$ is bounded.
\end{lemma}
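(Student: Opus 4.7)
The strategy is to identify $A_2$ as the unique connected component of $\R^2 \setminus \Gamma_2$ containing the ``far away part'' of $A_1$, and then show that $A_1$ and $A_2$ coincide outside a compact set. Let $K \subset \R^2$ be a compact set containing $\Gamma_1 \Delta \Gamma_2$, so that $\Gamma_1$ and $\Gamma_2$ agree on $\R^2 \setminus K$. Let $A_1'$ denote the other connected component of $\R^2 \setminus \Gamma_1$ (which exists and is unbounded by Lemma~\ref{lem:4}(i)--(ii)). I will apply Lemma~\ref{lem:4}(iii) twice in sequence: first to $\Gamma_1$, $A_1$ with compact set $K$, producing an open set $\Omega \subset \R^2 \setminus K$ such that $A_1 \cap \Omega$ is connected and $\R^2 \setminus \Omega$ is compact; second, to $\Gamma_1$, $A_1'$ with compact set $K \cup (\R^2 \setminus \Omega)$, producing an open set $\Omega' \subset \Omega$ with $A_1' \cap \Omega'$ connected and $\R^2 \setminus \Omega'$ compact.

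\textbf{Existence of $A_2$.} Since $\Omega \cap K = \emptyset$, we have $\Omega \cap \Gamma_1 = \Omega \cap \Gamma_2$, so $A_1 \cap \Omega$ is a connected subset of $\R^2 \setminus \Gamma_2$ and therefore lies in a unique connected component $A_2$ of $\R^2 \setminus \Gamma_2$. Similarly, $A_1' \cap \Omega'$ lies in a unique connected component $A_2''$. The key substep is to show $A_2'' \neq A_2$, i.e.\ $A_2''$ equals the other component $A_2'$ of $\R^2\setminus\Gamma_2$: indeed if $A_2'' = A_2$, then
\begin{equation}
    \Omega' \cap (\R^2 \setminus \Gamma_2) = \Omega' \setminus \Gamma_1 = (A_1 \cap \Omega') \cup (A_1' \cap \Omega') \subset A_2,
\end{equation}
which would force $A_2' \cap \Omega' = \emptyset$, contradicting the unboundedness of $A_2'$ (Lemma~\ref{lem:4}(i)) and compactness of $\R^2 \setminus \Omega'$. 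Hence $A_1' \cap \Omega' \subset A_2'$.

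\textbf{Boundedness of $A_1 \Delta A_2$.} I will show $A_1 \Delta A_2 \subset \R^2 \setminus \Omega'$, which is compact. For $\bfz \in A_1 \setminus A_2$, if $\bfz \in \Omega' \subset \Omega$ then $\bfz \in A_1 \cap \Omega \subset A_2$, a contradiction. For $\bfz \in A_2 \setminus A_1$, if $\bfz \in \Omega'$ then $\bfz \in \Omega' \setminus \Gamma_2 = \Omega' \setminus \Gamma_1$, and since $\bfz \notin A_1$ we get $\bfz \in A_1' \cap \Omega' \subset A_2'$; but $\bfz \in A_2$ and $A_2 \cap A_2' = \emptyset$, a contradiction. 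This gives the desired inclusion.

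\textbf{Uniqueness.} If $\tilde A_2 \neq A_2$ were another component of $\R^2 \setminus \Gamma_2$ with $A_1 \Delta \tilde A_2$ bounded, then $\tilde A_2 = A_2'$ and $A_2 \Delta A_2' = \R^2 \setminus \Gamma_2$ would be bounded (as a subset of $(A_1 \Delta A_2) \cup (A_1 \Delta A_2')$), which contradicts the fact that $A_2$ and $A_2'$ are both unbounded by Lemma~\ref{lem:4}(i). The main obstacle in this plan is the second application of Lemma~\ref{lem:4}(iii) and the separation argument showing $A_2'' \neq A_2$; without controlling both components of $\R^2 \setminus \Gamma_1$ simultaneously, one cannot show the ``far away'' part of $A_2$ lies in $A_1$ rather than in $A_1'$.
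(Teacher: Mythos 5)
Your proof is correct. The overall strategy is the same as the paper's -- use Lemma \ref{lem:4} to control the ``far field'' where $\Gamma_1$ and $\Gamma_2$ coincide -- but the execution differs in a meaningful way. The paper applies Lemma \ref{lem:4}(iii) to all four components $A_1^\pm, A_2^\pm$ of $\R^2\setminus\Gamma_1$ and $\R^2\setminus\Gamma_2$, intersects the resulting neighborhoods into a single $\Omega$, and then proves the two-sided identity $A_1^+\cap\Omega = A_2^+\cap\Omega$ by running the connectedness/disjoint-open-sets argument in both directions (which requires knowing that $A_2^+\cap\Omega$ is connected, i.e.\ invoking Lemma \ref{lem:4}(iii) for $\Gamma_2$'s components as well). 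You instead apply Lemma \ref{lem:4}(iii) only to the two components of $\R^2\setminus\Gamma_1$, and in a nested way (taking the second compact set to contain $\R^2\setminus\Omega$, so that $\Omega'\subset\Omega$ and connectedness is used exactly where it was granted); the reverse inclusion $A_2\cap\Omega'\subset A_1$ is then obtained not from connectedness of $A_2\cap\Omega'$ but from the exclusion argument that $A_1'\cap\Omega'$ must land in the \emph{other} component $A_2'$, since otherwise $A_2'$ would avoid $\Omega'$ and hence be bounded, contradicting Lemma \ref{lem:4}(i). This buys you a slightly leaner argument (two applications of Lemma \ref{lem:4}(iii) instead of four, and no need to discuss connectedness of intersections with $\Omega$ for $\Gamma_2$'s components), at the cost of the extra bookkeeping with $A_1'$ and $\Omega'$. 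Your uniqueness step (symmetric-difference triangle inequality forcing $\R^2\setminus\Gamma_2$ to be bounded) is a minor variant of the paper's, both resting on unboundedness of the components. One small point to make explicit: the definitions of $A_2$ and $A_2''$ require $A_1\cap\Omega\neq\emptyset$ and $A_1'\cap\Omega'\neq\emptyset$; this follows immediately from Lemma \ref{lem:4}(i) together with compactness of $\R^2\setminus\Omega$ and $\R^2\setminus\Omega'$ -- the same reasoning you already use for $A_2'$ -- but it should be stated.
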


\begin{proof} Let $A_j^+, A_j^-$ denote the two connected components of $\R^2 \setminus \Gamma_j$, labelled so that $A_1^+ = A_1$; and $K = \Gamma_1 \Delta \Gamma_2$, which is compact. If both $A_1^+ \Delta A_2^-$ and $A_1^+ \Delta A_2^+$ were bounded, then we would have $\R^2 \setminus \Gamma_2 = A_2^- \sqcup A_2^+ = A_1^+ \cup B$, for a bounded set $B$. Let $R > 0$ such that $\Gamma_1 \Delta \Gamma_2 \subset \Dd_R(0)$ and $B \subset \Dd_R(0)$. Then:
\begin{equation}
   \big(A_1^+ \sqcup A_1^-\big) \cap \Dd_R(0)^c = \big(\R^2 \setminus \Gamma_1 \big) \cap \Dd_R(0)^c = \big(\R^2 \setminus \Gamma_2\big) \cap \Dd_R(0)^c = A_1^+ \cap \Dd_R(0)^c,
\end{equation}
so $A_1^- \subset \Dd_R(0)$: this is a contradiction because $A_1^-$ is unbounded.

We now work on existence. 
Let $\Omega_j^\pm \subset \C \setminus K$ open, such that $A_j^\pm \cap \Omega_j^\pm$ is connected and $\C \setminus \Omega_j^\pm$ is compact -- it exists because of Lemma \ref{lem:4}(iii). Let $\Omega = \bigcap_{j,\pm} \Omega_j^\pm$; since $A_1^+$ is unbounded and $\C \setminus \Omega$ is compact, $A_1^+ \cap \Omega$ is not empty. Fix $\bfx \in A_1^+ \cap \Omega$. Note that $\bfx \in \Omega$, so $\bfx \notin K$; and $\bfx \in A_1^+$, so $\bfx \notin \Gamma_1$; from which it follows $\bfx \notin \Gamma_2$, because $\Gamma_2 \subset \Gamma_1 \cup K$. After a potential relabelling of $A_2^+, A_2^-$, the set $A_2^+$ is the connected component of $\R^2 \setminus \Gamma_2$ containing $\bfx$.

We now show that $A_2^+ \cap \Omega = A_1^+ \cap \Omega$. Indeed, write 
\begin{align}
    A_1^+ \cap \Omega % = A_1^+ \cap \Omega \cap (\R^2 \setminus \Gamma_1) 
    & = A_1^+ \cap \Omega \cap \big(\R^2 \setminus \Gamma_2\big) \\
    & = A_1^+ \cap \Omega \cap \big(A_2^+ \sqcup A_2^-\big) = \big(A_1^+ \cap \Omega \cap A_2^+\big) \sqcup \big(A_1^+ \cap \Omega \cap A_2^-\big).
\end{align}
Since $A_1^+ \cap \Omega$ is connected and $A_1^+ \cap \Omega \cap A_2^+$, $A_1^+ \cap \Omega \cap A_2^-$ are disjoint open sets, we deduce from $\bfx \in A_1^+ \cap \Omega \cap A_2^+$ that $A_1^+ \cap \Omega  \subset A_2^+ \cap \Omega$. Similarly, 
$A_2^+ \cap \Omega \subset A_1^+ \cap \Omega$.

It follows that $A_1^+ \Delta A_2^+ \subset \C \setminus \Omega$, which is compact. In particular, $A_1^+ \Delta A_2^+$ is bounded. This completes the proof.\end{proof}

\subsection{Right and left of a simple path}\label{sec:2.3}

In this section, we review notions from \cite[\S2]{T11}, adapted to our context. 

\begin{definition}\cite[Definition 2.7.10]{T11} Let $\gamma$ be a simple path and $\Dd$ be an open disk in the plane. We say that $\gamma$ simply splits $\Dd$ if $I = \gamma^{-1}(\Dd)$ is an open interval and $\Dd \setminus \gamma(I)$ has precisely two connected components.
\end{definition}

\begin{prop}\cite[Theorem 2.7.11]{T11} Let $R > 0$, $\gamma$ be a simple path and $\bfz \in \gamma(\R)$. There exists $r < R$ such that $\Dd_r(\bfz)$ is simply split by $\gamma$. 
\end{prop}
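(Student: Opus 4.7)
After translating so that $\bfz = 0$ and reparametrizing so that $\gamma(0) = 0$, the plan is to establish, for some sufficiently small $r \in (0, R)$, three consecutive facts that together give the conclusion:
\begin{itemize}
    \item[(a)] the preimage $\gamma^{-1}(\Dd_r(0))$ is a single open interval $I_r = (a_r, b_r)$ with $a_r < 0 < b_r$;
    \item[(b)] the arc $\gamma([a_r, b_r])$ lies in $\overline{\Dd_r(0)}$ and meets $\p \Dd_r(0)$ only at its two endpoints $\gamma(a_r), \gamma(b_r)$;
    \item[(c)] the arc $\gamma(I_r)$ partitions $\Dd_r(0)$ into exactly two connected components.
\end{itemize}

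For (a), I would combine properness and injectivity of $\gamma$. By continuity, $\gamma^{-1}(\Dd_r(0))$ always contains an open neighborhood of $0$; if $\gamma^{-1}(\overline{\Dd_r(0)})$ failed to reduce to a single connected component around $0$ for arbitrarily small $r$, I could extract $s_n \in \gamma^{-1}(\overline{\Dd_{r_n}(0)})$, with $r_n \downarrow 0$ and $s_n$ outside the connected component of $0$ in the preimage. Properness forces $s_n$ to stay in the compact set $\gamma^{-1}(\overline{\Dd_1(0)})$, so a subsequence converges to some $s_\infty$ with $\gamma(s_\infty) = 0$, and injectivity then forces $s_\infty = 0$. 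This contradicts the fact that, for $n$ large, $s_n$ lies in the continuity-given neighborhood of $0$ that sits entirely inside the connected component of $0$ in the preimage.

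For (b), I would use the one-sided derivative condition $\|\gamma'_\pm(0)\| = 1$ together with smoothness of $\gamma$ on a punctured neighborhood of $0$ (guaranteed by discreteness of the singular set $S$). The expansion $\gamma(t) = t\gamma'_+(0) + o(t)$ as $t \downarrow 0$ yields $\|\gamma(t)\|^2 = t^2 + o(t^2)$, and a matching computation for the derivative of $\|\gamma(t)\|^2$ on $(0,\delta)$ gives strict monotonicity of $t \mapsto \|\gamma(t)\|$ on a small one-sided neighborhood of $0$; the symmetric statement holds on $(-\delta,0)$. Combined with (a), this prevents $\gamma$ from touching $\p \Dd_r(0)$ inside $(a_r, b_r)$ for $r$ small enough, and forces $\gamma^{-1}(\p \Dd_r(0)) \cap [a_r, b_r] = \{a_r, b_r\}$.

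For (c), I would extend $\gamma|_{[a_r, b_r]}$ to a proper injective continuous map $\tgamma : \R \to \R^2$ by prepending and appending two straight rays at $\gamma(a_r)$ and $\gamma(b_r)$, chosen so they point radially outward from $0$ and leave $\overline{\Dd_r(0)}$ immediately without re-entering. Lemma \ref{lem:4} then gives that $\R^2 \setminus \tgamma(\R)$ has exactly two connected components $A_\pm$, each with boundary $\tgamma(\R)$. The sets $A_\pm \cap \Dd_r(0)$ are disjoint, open, and non-empty (a small half-disk attached to an interior point of either of the two arcs of $\p \Dd_r(0) \setminus \{\gamma(a_r), \gamma(b_r)\}$ lies in one of them), and their union is $\Dd_r(0) \setminus \gamma(I_r)$. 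The main obstacle is verifying connectedness of each $A_\pm \cap \Dd_r(0)$: I expect the cleanest route to be a Schoenflies-type flattening that straightens the Jordan loop formed by $\gamma([a_r, b_r])$ together with one of the complementary arcs of $\p \Dd_r(0)$ into the standard loop (a chord plus a semicircle), after which the two half-moons become the obvious components.
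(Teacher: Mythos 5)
You should first note that the paper does not prove this proposition at all: it is quoted verbatim from \cite[Theorem 2.7.11]{T11}, so the only meaningful comparison is with what a complete proof must contain. Measured against that, your plan has a genuine gap concentrated in steps (a) and (b): both argue as though \emph{every} sufficiently small $r$ simply splits, whereas the statement only asserts the existence of \emph{some} $r<R$, and the stronger claim is false for the class of curves allowed by Definition \ref{def:2}. Concretely, in (a) the contradiction requires that $s_n$ eventually lie in the parameter neighborhood of $0$ that $\gamma$ maps into $\Dd_{r_n}(\bfz)$; but that neighborhood has width $\delta_n\to 0$ comparable to $r_n$, while you only know $s_n\to 0$ — in fact $s_n$ is also of order $r_n$, so no contradiction follows. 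In (b), strict monotonicity of $t\mapsto|\gamma(t)-\bfz|$ near $t_0$ would need $\gamma'(t)\to\gamma'_{\pm}(t_0)$, which does not follow from condition (iii): the existence of one-sided derivatives of norm $1$ controls the difference quotient at $t_0$, not the nearby tangent directions, and $\frac{d}{dt}|\gamma(t)-\bfz|^2=2\langle\gamma(t)-\bfz,\gamma'(t)\rangle$ can be negative arbitrarily close to $t_0$.

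To see that this is not a fixable technicality, consider a unit-speed curve that is a straight ray out of $\bfz$ except for smooth injective ``hooks'' of size $d_k$ (during which the tangent briefly points back toward $\bfz$) inserted at distances $2^{-k}$ from $\bfz$, with $d_k$ chosen so that $\sum_{2^{-k}\leq t} d_k = o(t)$. This curve satisfies all of Definition \ref{def:2} (the smallness of the hooks preserves the norm-$1$ one-sided derivative at $t_0$), yet $|\gamma(t)-\bfz|$ fails to be monotone at every scale, and for each $k$ there is a nonempty open interval of radii near $2^{-k}$ for which $\gamma^{-1}(\Dd_r(\bfz))$ is \emph{not} an interval. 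Hence bad radii occur arbitrarily close to $0$, and any correct proof must actively select $r$ so as to avoid them (e.g. by showing the bad levels form a small exceptional set); this selection is precisely the content of the cited theorem and is absent from your argument. Step (c) — closing the arc with one of the two circular arcs, invoking the Jordan curve theorem or Lemma \ref{lem:4}, and a Schoenflies flattening — is the standard and essentially correct way to finish once a good $r$ is in hand, though as written it is a sketch; in any case it rests on (a) and (b), so the proposal as a whole does not establish the proposition.
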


Our goal now is to define unambiguously the connected component lying to the left of a  simple path. We fix a simple path $\gamma$, with range denoted by $\Gamma$. Given $D = \Dd_r(\bfz)$ an open disk centered at a point $\bfz \in \Gamma$, of radius $r$, simply split by $\Gamma$, we write $\gamma^{-1}(D) = (a,b)$ and (using that $\gamma(a), \gamma(b)$ are both on $\p D$)
\begin{equation}
    \gamma(a) = z + r e^{i\alpha}, \qquad \gamma(b) = z + r e^{i\beta}, \qquad \alpha \in (\beta, \beta+2\pi).
\end{equation}
We then define a Jordan curve $\ell : [-1,1] \rightarrow \C$ by:
\begin{equation}\label{eq2-6c}
    \ell(t) = \systeme{\gamma\big(b (1+t)-a t\big), & t \in [-1,0] 
    \\ z + r e^{i(1-t)\beta + it \alpha} & t \in [0,1]}.
\end{equation}
By the Jordan curve theorem, it encloses a bounded region.

\begin{definition}\label{def:3} The left of $\gamma$ in the simply split disk $\Dd$, denoted $L_\Dd$, is the bounded component of $\C \setminus \ell([-1,1])$.
\end{definition}

We let $A_\Dd$ be the connected component of $\C \setminus \Gamma$ that contains $L_\Dd$. A priori $A_\Dd$ depends on the choice of simply split disk. However we have the following result:

\begin{prop}\label{prop:1} If $\Dd_-, \Dd_+$  are two disks with centers on $\Gamma$, simply split by $\gamma$, then $A_{\Dd_-} = A_{\Dd_+}$.
\end{prop}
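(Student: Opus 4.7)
The goal is to prove that $A_\Dd$ is independent of the choice of simply split disk $\Dd$ with center on $\Gamma$. I will do this by a chain argument along $\gamma$, reducing the statement to a local ``orientation'' lemma identifying which side of $\gamma$ the region $L_\Dd$ lies on.

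The local lemma I would prove first is: if $\Dd$ is a simply split disk with $\gamma^{-1}(\Dd) = (a,b)$, and $t^* \in (a,b)$ is a point where $\gamma$ is differentiable, then for all sufficiently small $h > 0$ the perturbation $\gamma(t^*) + h \bm{n}(t^*)$ lies in $L_\Dd$, where $\bm{n}(t^*)$ is the unit vector obtained by rotating $\gamma'(t^*)$ by $90^\circ$ counterclockwise. The reason is that by the construction \eqref{eq2-6c}, $\ell_\Dd$ traverses its arc of $\p \Dd$ counterclockwise; since the counterclockwise orientation of $\p \Dd$ puts $\Dd$ on its left, and $L_\Dd \subset \Dd$, the whole Jordan curve $\ell_\Dd$ is oriented so that $L_\Dd$ is on its left. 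At the smooth parameter $t^*$, ``left'' is precisely the direction $\bm{n}(t^*)$. Rigorously I would compute the winding number of $\ell_\Dd$ around $\gamma(t^*) + h \bm{n}(t^*)$ and verify it equals $+1$, using that near $\gamma(t^*)$ the curve $\ell_\Dd$ is a $C^1$ perturbation of the tangent line through $\gamma(t^*)$.

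Next, I would chain together $\Dd_-$ and $\Dd_+$. Let $\bfz_\pm = \gamma(t_\pm)$, WLOG $t_- \leq t_+$. Using \cite[Theorem 2.7.11]{T11} and compactness of $\gamma([t_-, t_+])$, I cover this arc by a finite sequence of simply split disks $\Dd_0 = \Dd_-, \Dd_1, \ldots, \Dd_N = \Dd_+$ with centers spaced finely along $\gamma$, chosen so that for each $i$, the set $\gamma^{-1}(\Dd_i) \cap \gamma^{-1}(\Dd_{i+1})$ is a non-empty open interval and hence contains a point $t_i^*$ of differentiability of $\gamma$ (since non-smooth points form a discrete closed set by Definition \ref{def:2}(ii)). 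Applying the local lemma to both disks at $t_i^*$ produces a common point $\gamma(t_i^*) + h \bm{n}(t_i^*) \in L_{\Dd_i} \cap L_{\Dd_{i+1}}$ for $h$ small enough (and off $\Gamma$, by properness of $\gamma$ and transversality of $\bm{n}(t_i^*)$ to $\gamma$ at a smooth point). Hence $A_{\Dd_i} = A_{\Dd_{i+1}}$, and iterating gives $A_{\Dd_-} = A_{\Dd_+}$.

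The main obstacle is the orientation claim in the local lemma, asserting that $L_\Dd$ lies on the left of $\ell_\Dd$ at smooth parameters. A direct winding-number argument handles this; the mild subtlety is that $\ell_\Dd$ has corners at $\gamma(a)$ and $\gamma(b)$, so the standard oriented Jordan curve theorem for $C^1$ curves does not apply verbatim. A clean route is to smoothly round these corners inside $\Dd$ (keeping the modification far from the test point $\gamma(t^*) + h \bm{n}(t^*)$) to obtain a $C^1$ Jordan curve $\tilde\ell_\Dd$ enclosing the same region up to a small perturbation, and apply the standard theorem to $\tilde\ell_\Dd$.
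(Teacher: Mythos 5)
Your overall architecture coincides with the paper's: chain finitely many simply split disks along $\gamma([t_-,t_+])$, show that the lefts of consecutive disks intersect, and conclude that all the components $A_{\Dd}$ agree. Where you genuinely diverge is the mechanism for the key step. The paper never uses differentiability of $\gamma$: it proves the monotonicity statement $\Dd\subset\Dd' \Rightarrow L_{\Dd}\subset L_{\Dd'}$ (Lemma \ref{lem:13}(i)) by an index computation on the difference contour $\ell_{\Dd'}\ominus\ell_{\Dd}$, and then gets the intersection property for disks with overlapping preimages by inserting a small common simply split sub-disk centered on $\gamma$ in the overlap. You instead prove a pointwise lemma -- $\gamma(t^*)+h\,\bm{n}(t^*)\in L_\Dd$ for small $h>0$ at parameters $t^*$ where $\gamma$ is differentiable -- and exploit a smooth parameter in the overlap (such parameters exist since the singular set in Definition \ref{def:2} is discrete). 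This is a legitimate alternative and yields a concrete geometric description of the left side; the paper's route is regularity-free and produces a monotonicity statement it reuses. Two side remarks: the ``off $\Gamma$'' worry is vacuous, since $L_\Dd\subset\Dd$ and $\Gamma\cap\Dd=\gamma((a,b))\subset\ell_\Dd$, so $L_\Dd$ never meets $\Gamma$; and the existence of a chain with overlapping consecutive preimage intervals is asserted rather than constructed (the paper extracts a finite subcover, discards nested intervals, orders by infima and uses connectedness), though this part is routine.

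The genuine gap is in the justification of your local lemma. The fact that near $\gamma(t^*)$ the curve $\ell_\Dd$ is a $C^1$ perturbation of its tangent line determines only the \emph{jump} of the index across the curve, $\Ind_{\ell_\Dd}\big(\gamma(t^*)+h\bm{n}(t^*)\big)-\Ind_{\ell_\Dd}\big(\gamma(t^*)-h\bm{n}(t^*)\big)=\pm1$; it cannot determine the value of the index, which is a global quantity, so it cannot by itself show that the left-hand point has index $+1$ rather than $0$. Equivalently, you must pin down the global orientation of $\ell_\Dd$, and your heuristic (``the circular arc is traversed counterclockwise, hence $L_\Dd$ lies to the left'') is precisely the point at issue; rounding the corners at $\gamma(a),\gamma(b)$ does not help, because even for a $C^1$ Jordan curve one still has to decide whether it is positively oriented. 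The missing ingredient is a global normalization, and it is easy to supply: do the jump computation at an interior point $p$ of the circular-arc portion of $\ell_\Dd$ instead. Since $\gamma((a,b))\subset\Dd$ and $p\neq\gamma(a),\gamma(b)$, the curve $\ell_\Dd$ coincides with $\p\Dd$ near $p$; points just outside $\overline{\Dd}$ lie in the unbounded component (index $0$), and since the arc is run counterclockwise the disk side is its left, so points just inside have index $+1$. Hence $\ell_\Dd$ is positively oriented and the index equals $+1$ on all of $L_\Dd$; then your jump argument at $\gamma(t^*)$, where the two sides carry indices in $\{0,+1\}$ differing by $+1$ from right to left, forces $\gamma(t^*)+h\bm{n}(t^*)\in L_\Dd$. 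With that supplement the local lemma, and hence your whole proof, goes through.
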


Proposition \ref{prop:1} allows us to talk unambiguously about the component of $\C \setminus \Gamma$ to the left of $\gamma$. Likewise, we can now talk about the component of $\C \setminus \Gamma$ to the right of $\gamma$, as the other connected component of $\C \setminus \Gamma$. We will need the following lemma:

\begin{lemma}\label{lem:13} Let $\Dd_-, \Dd_+$ be two disks with centers on $\Gamma$, simply split by $\gamma$.
\begin{itemize}
    \item[(i)] If $\Dd_- \subset \Dd_+$, then $L_{\Dd_-} \subset L_{\Dd_+}$.
    \item[(ii)] If $\gamma^{-1}(\Dd_-) \cap \gamma^{-1}(\Dd_+) \neq \emptyset$, then $L_{\Dd_-} \cap L_{\Dd_+} \neq \emptyset$.
\end{itemize}
\end{lemma}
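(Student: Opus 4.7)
My plan reduces Lemma \ref{lem:13} to a preliminary \emph{left-wedge property} for a single disk: if $\Dd$ is simply split by $\gamma$ and $t_0 \in \gamma^{-1}(\Dd) = (a,b)$, then $L_\Dd$ is precisely the connected component of $\Dd \setminus \gamma((a,b))$ lying on the left of $\gamma$'s direction of travel at $\gamma(t_0)$---where ``left'' means the direction of $i \gamma'_+(t_0)$, with $\gamma'_+(t_0)$ the unit right-derivative from Definition \ref{def:2}(iii). Crucially, this ``left'' side depends only on $\gamma$ near $t_0$ and not on the ambient disk $\Dd$, so it is consistent across different simply split disks sharing a common parameter.

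Granting the left-wedge property, part (ii) is immediate. Given $t_0 \in \gamma^{-1}(\Dd_-) \cap \gamma^{-1}(\Dd_+)$, pick a small open neighborhood $N$ of $\gamma(t_0)$ contained in $\Dd_- \cap \Dd_+$ and simply split by $\gamma$. The left half of $N \setminus \gamma$ then lies in both $L_{\Dd_-}$ and $L_{\Dd_+}$, giving the desired nonempty intersection.

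Part (i) follows from (ii) together with a connectedness argument. The hypothesis $\Dd_- \subset \Dd_+$ forces $(a_-, b_-) \subset (a_+, b_+)$ (since $\gamma(t) \in \Dd_- \subset \Dd_+$ for $t \in (a_-, b_-)$), so (ii) provides a common point $p \in L_{\Dd_-} \cap L_{\Dd_+}$. Next, $L_{\Dd_-} \cap \ell_+([-1,1]) = \emptyset$: the arc portion of $\ell_+$ lies on $\p \Dd_+$ and does not meet the open set $L_{\Dd_-} \subset \Dd_- \subset \Dd_+$; and $L_{\Dd_-} \cap \gamma([a_+, b_+]) \subset \Dd_- \cap \gamma([a_+, b_+]) = \gamma((a_-, b_-))$ by injectivity of $\gamma$ together with $\gamma^{-1}(\Dd_-) = (a_-, b_-)$, which is disjoint from $L_{\Dd_-}$ since $L_{\Dd_-}$ is a component of $\Dd_- \setminus \gamma((a_-, b_-))$. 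Since $L_{\Dd_-}$ is connected and disjoint from $\ell_+([-1,1])$, Jordan's theorem for $\ell_+$ places it entirely in one component of $\C \setminus \ell_+([-1,1])$; the presence of $p \in L_{\Dd_+}$ in that component forces it to be $L_{\Dd_+}$.

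The main obstacle is establishing the left-wedge property. My approach uses the explicit formula \eqref{eq2-6c}. First, $\ell \subset \overline{\Dd}$ forces the bounded open set $L_\Dd$ into $\Dd$, and $L_\Dd \cap \gamma((a,b)) = \emptyset$; hence $L_\Dd$ lies in exactly one of the two connected components of $\Dd \setminus \gamma((a,b))$ (a short counting argument shows that the other component lies in the unbounded complement of $\ell$). To identify which component, write $\mu = (\alpha+\beta)/2$ and $\delta = (\alpha - \beta)/2 \in (0,\pi)$: the arc midpoint $z + r e^{i\mu}$ differs from the chord midpoint $z + r e^{i\mu}\cos\delta$ by $r(1 - \cos\delta) e^{i\mu}$, while the chord direction $\gamma(b) - \gamma(a) = -2ir e^{i\mu}\sin\delta$ has its ninety-degree counterclockwise normal also in the direction $e^{i\mu}$. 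Hence the arc sits strictly on the leftward-normal side of the chord from $\gamma(a)$ to $\gamma(b)$. A winding-number calculation for $\ell$ around a test point of the form $\gamma(t_0) + \epsilon \cdot i\gamma'_+(t_0)$ with small $\epsilon > 0$ then yields winding $+1$, confirming that this point---which lies on the left of $\gamma$ at $t_0$---is inside $L_\Dd$; by connectedness of each side of $\gamma((a,b))$ in $\Dd$, this identification extends to every $t_0 \in (a,b)$, proving the left-wedge property.
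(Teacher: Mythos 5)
Your overall architecture is genuinely different from the paper's: the paper proves (i) first, by pure index bookkeeping (for $w \in L_{\Dd_-}$ it writes $\Ind_{\ell_+}(w)-\Ind_{\ell_-}(w) = \Ind_{\ell_+\ominus\ell_-}(w) = 0$ because the difference contour stays away from $\Dd_-$), and then obtains (ii) as a corollary of (i) via a small disk inside $\Dd_-\cap\Dd_+$; you instead prove (ii) directly from a local ``left-wedge'' characterization and then deduce (i) from (ii). Your deduction of (i) from (ii) is correct and clean (the verification that $L_{\Dd_-}$ is disjoint from $\ell_+([-1,1])$ and the connectedness argument both work, granted $L_{\Dd_-}\subset\Dd_-$). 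The genuine gap is in the left-wedge property itself, which carries all the content: the assertion that ``a winding-number calculation for $\ell$ around $\gamma(t_0)+\epsilon\, i\gamma'_+(t_0)$ yields winding $+1$'' is precisely the nontrivial statement, and the chord/arc computation you offer does not deliver it. The test point sits at distance $\epsilon$ from the $\gamma$-portion of $\ell$, which is only an injective piecewise-$C^1$ arc simply splitting $\Dd$ and may oscillate arbitrarily inside the disk; knowing that the circular arc bulges to the counterclockwise-normal side of the chord says nothing about which side of that interior arc carries index $1$ at the parameter $t_0$. Any attempt to straighten $\gamma|_{[a,b]}$ onto the chord must avoid the test point, and whether such a deformation exists is exactly the question of which side the point lies on, so the sketch is close to circular. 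A rigorous version needs the standard but nontrivial fact that a Jordan curve is positively oriented if and only if its bounded component lies to the left of the curve (equivalently, the local index jump across a boundary arc together with an orientation argument) --- machinery of the same order as the lemma, and exactly what the paper's proof avoids by never matching Definition \ref{def:3} against the analytic normal $i\gamma'$.

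A second, smaller problem: as stated, the left-wedge property is false at corner points $t_0\in S$. If the path turns left by an angle close to $\pi$ at $t_0$ (say $\gamma'_-(t_0)=e_1$ and $\gamma'_+(t_0)=e^{i(\pi-\delta)}$ with $\delta$ small), the local left component is the thin wedge of opening $\delta$ between the incoming and outgoing rays, while $\gamma(t_0)+\epsilon\, i\gamma'_+(t_0)$ lies in the complementary (right) component. This is harmless for your proof of (ii), because $\gamma^{-1}(\Dd_-)\cap\gamma^{-1}(\Dd_+)$ is a nonempty open interval and hence contains parameters outside the discrete set $S$, but you must then state the property only for $t_0\notin S$ (or use the bisector of the left wedge at corners); and the closing step ``by connectedness this identification extends to every $t_0\in(a,b)$'' likewise requires a local-constancy argument (a $C^1$ graph argument at smooth parameters, plus a matching argument across corners) that is not supplied.
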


\begin{figure}[t] 
    \centering % Center the figure
    \floatbox[{\capbeside\thisfloatsetup{capbesideposition={right,top},capbesidewidth=0.4\textwidth}}]{figure}[\FBwidth]
    {\caption{The contours $\ell, \ell', \ell''$. The point $w$ is outside $\ell-\ell'$.}} 
    {\includegraphics[width=1\linewidth]{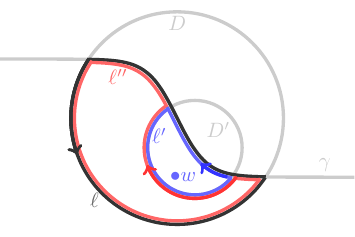}\label{fig:p8}} 
\end{figure}

\begin{proof} (i) Let $\ell_\pm$ be the contour given by \eqref{eq2-6c} for the simply split disk $\Dd_\pm$. Because $L_{\Dd_\pm}$ is the bounded component of $\C \setminus \ell_\pm(\R)$, it has an index characterization:
\begin{equation}\label{eq2-6a}
    L_{\Dd_\pm} = \big\{ w \in \C \setminus \ell([-1,1]) : \ \Ind_{\ell_\pm}(w) \neq 0\big\}, 
\end{equation}
see \cite[Theorem 4.2.8]{T11}. The index in \eqref{eq2-6a} is defined as the line integral
\begin{equation}\label{eq2-6d}
   \Ind_{\ell_\pm}(w) = \dfrac{1}{2\pi i}\oint_{\ell_\pm} \dfrac{dz}{z-w};
\end{equation}
note that $\ell_\pm$ is piecewise smooth so \eqref{eq2-6d} makes sense.

Let $w \in L_{\Dd_-}$ (in particular $\Ind_{\ell_-}(w) \neq 0$ and $w \in \Dd_- \subset \Dd_+$) and define $m = \ell_+ \ominus \ell_-$; see Figure \ref{fig:p8}. The contour $m$ is contained in $\Dd_+ \setminus \Dd_-$. Therefore, $w$ belongs to the unbounded component of $\C \setminus m(\R)$. It follows that $\Ind_m(w) = 0$ -- again by \cite[Theorem 4.2.8]{T11} -- and 
\begin{equation}
    0 = \Ind_m(w) = \Ind_{\ell_+}(w) - \Ind_{\ell_-}(w).
\end{equation}
Hence, $\Ind_{\ell_+}(w) = \Ind_{\ell_-}(w) \neq 0$. Thus $w$ is a point of $\Dd_+$ with non-zero index with respect to $\ell_+$: $w \in L_{\Dd_+}$. This proves $L_{\Dd_-} \subset L_{\Dd_+}$. 

(ii) Assume that  $\gamma^{-1}(\Dd_-) \cap \gamma^{-1}(\Dd_+) \neq\emptyset$ and let $t$ in this set. Then $\bfz = \gamma(t)$ is in the open set $\Dd_- \cap \Dd_+$, so there exists $R > 0$ such that $\Dd_R(\bfz) \subset \Dd_- \cap \Dd_+$. Let $r < R$ such that $\Dd = \Dd_r(\bfz)$ is simply split by $\gamma$. Because $\Dd \subset \Dd_\pm$, we have $L_\Dd \subset R_{\Dd_\pm}$ by Part (i). It follows that $L_{\Dd_-}$ and $L_{\Dd_+}$ intersect.
\end{proof}

\begin{proof}[Proof of Proposition \ref{prop:1}] Write $\Dd_\pm = \Dd_{r_\pm}(\bfz_\pm)$, $\bfz_\pm = \gamma(t_\pm)$. Without loss of generalities $t_- \leq t_+$. Given any $t \in [t_-,t_+]$, let $\Dd_t$ be an open disk centered at $\gamma(t)$ simply split by $\gamma$; without loss of generalities $\Dd_{t_+} = \Dd_+$, $\Dd_{t_-} = \Dd_-$. Let $I_t = \gamma^{-1}(\Dd_t)$, which is an open interval containing $t$. 

We now cover $[t_-,t_+]$ by the sets $I_t$ and pass to a finite covering $\CC_0 \subset \{ I_t : t \in [t_-,t_+]\}$; without loss of generalities $I_{t_-}$ and $I_{t_+}$ are elements of $\CC_0$. We now define a refinement $\CC_1$ of $\CC_0$ as follows. If there exist $I, I' \in \CC_0$ such that  $I \subset I'$, then $\CC_1 = \CC_0 \setminus \{I\}$; otherwise $\CC_1 = \CC_0$. Repeating the procedure on $\CC_1$ and so on, we obtain nested coverings $\dots \subset \CC_2 \subset \CC_1 \subset \CC_0$. Note that for some $n$, $\CC_n = \CC_{n-1}$ because $\CC_0$ is finite and the number of elements decreases by one when $\CC_j \neq \CC_{j-1}$. We define $\CC = \CC_n$. 

We now order the elements of $\CC$ according to their infimum:
\begin{equation}
    \CC = \{ I_0, I_1, \dots, I_N \}, \qquad \inf I_0 < \inf I_1 < \dots  < \inf I_N.
\end{equation}
Without loss of generalities $I_{t_-} \subset I_0$ and $I_{t_+} \subset I_N$. Note that we may not have for any $K$, $\inf I_K = \inf I_{K+1}$: otherwise, $I_K \subset I_{K+1}$ or $I_{K+1} \subset I_K$, which is impossible by our construction of $\CC$. Moreover, for all $K$, $I_K$ and $I_{K+1}$ must intersect: if they do not, 
\begin{equation}
    \sup I_K \leq \inf I_{K+1} < \dots  < \inf I_N,
\end{equation}
which implies that the connected set $[t_-,t_+]$ is covered by the disjoint sets $\bigcup_{k \leq K} I_k$ and $\bigcup_{k > K} I_k$. But both of these sets intersect $[t_-,t_+]$ so this is impossible.

Let $L_k$ be the left of $\gamma$ in the disk $\Dd_{t_k}$ -- see \eqref{eq2-6a}. Because of Lemma \ref{lem:13}(ii) and $I_k \cap I_{k+1} \neq \emptyset$, we have $L_k \cap L_{k+1} \neq \emptyset$. Therefore, the set $L = \bigcup_{k = 0}^N L_k$ is a connected subset of $\C \setminus \Gamma$.

Recall that $I_{t_-} \subset I_0 \in \CC$; let $t_0 \in [t_-,t_+]$ with $I_{t_0} = I_0$: in particular, $\gamma^{-1}(\Dd_{t_-}) \subset \gamma^{-1}(\Dd_{t_0})$, hence $L_{\Dd_-} \subset L_m$ by Lemma \ref{lem:13}(i). In particular $L_{\Dd_-} \subset L$. Likewise $L_{\Dd_+} \subset L$. We deduce that $L$ is a connected subset of $\C \setminus \Gamma$ that intersects the connected components $A_{\Dd_-}$ and $A_{\Dd_+}$ of $\C \setminus \Gamma$, hence $A_{\Dd_-} = A_{\Dd_+}$. This completes the proof.\end{proof}

If one perturbs a path $\gamma_1$ in a compact set to a path $\gamma_2$, then $\C \setminus \gamma_1(\R)$ and $\C \setminus \gamma_2(\R)$ have connected components that coincide outside a compact set (Lemma \ref{lem:5}). Proposition \ref{prop:4} states that these connected components are those that lie on the same side of $\gamma_1, \gamma_2$. We are finally ready for its proof.

\begin{proof}[Proof of Proposition \ref{prop:4}]  Let $V$ be the component of $\C \setminus \Gamma_2$ such that $A_1 \Delta V$ is bounded -- see Lemma \ref{lem:5}. Let $R > 0$ such that $A_2 \Delta V$ and $\Gamma_1 \Delta \Gamma_2$ are contained in $\Dd_R(0)$. Let $\bfz \in \Gamma_1$ with $|z| \geq R+1$; note $\bfz \in \Gamma_2$. 

Let $D = \Dd_r(\bfz)$ be an open disk with radius $r < 1$, simply split by $\gamma_2$ and $L$ be the left of $\gamma_1$ in $D$. Because $\gamma_1$ and $\gamma_2$ coincide in $\Dd_1(\bfz)$, $D$ also is simply split by $\gamma_2$, and the left of $\gamma_2$ in $\Dd_2$ is equal to $L$. Therefore, $L \subset A_2$.

We claim that $L \subset V$. Indeed, $L \subset A_1$ and $L$ does not intersect $\Dd_R(0)$. Because $A_1 \Delta V \subset \Dd_R(0)$, we have $L \subset V$. It follows that $A_2$ and $V$ are two intersecting components of $\C \setminus \Gamma_2$, in particular they are equal. This completes the proof.   
\end{proof}

\section{Intersection numbers are well-defined}\label{app:B}

The goal here is to prove that intersection numbers are well-defined for transverse sets $U,V$ such that $U$ is simple. From Definition \ref{def:5}, if $\p U$ is bounded (that is, it is the range of a simple loop), then $\II_{U, V}$ is clearly well-defined. If $\p U$ is unbounded, then it is the range of a simple path $\gamma$ such that $U$ lies to the left of $\gamma$, and 
\begin{equation}
\II_{U, V} \de \II_+(U,V) - \II_-(U,V), \qquad
\II_\pm(U,V) \de \lim_{t \rightarrow \pm \infty}  \1_V \circ \gamma(t).
\end{equation}

\begin{lemma}\label{lem:9} If $U,V$ are transverse sets such that $U$ is simple, then:
\begin{itemize}
    \item[(i)]  The limits \eqref{eq2-6e} exist;
    \item[(ii)] They do not depend on the choice of $\gamma$ such that $U$ is to the left of $\gamma$. 
\end{itemize}  
\end{lemma}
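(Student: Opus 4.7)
The proof naturally splits into two parts.

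For \textbf{part (i)}, I would use transversality to push $\gamma(t)$ away from $\p V$ as $|t|\to\infty$. Since $\gamma(t)\in\p U$ for all $t$, one has $\Psi_{U,V}(\gamma(t))=1+d_1(\gamma(t),\p V)$, and the equivalent form \eqref{eq-0b} of transversality gives $\Psi_{U,V}(\bfx)\ge |\bfx|^c$ for $|\bfx|$ large. Properness of $\gamma$ then forces $d_1(\gamma(t),\p V)\to\infty$ as $|t|\to\infty$, so for some $T>0$ the half-rays $\gamma([T,\infty))$ and $\gamma((-\infty,-T])$ avoid $\p V$ and therefore sit inside $\R^2\setminus\p V=\Int V\sqcup \Int V^c$. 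Since each half-ray is connected while $\Int V$ and $\Int V^c$ are disjoint open sets, each half-ray lies entirely in $V$ or entirely in $V^c$. Hence $\1_V\circ\gamma$ is constant on each of the two half-rays and the limits $\II_\pm(U,V)$ exist.

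For \textbf{part (ii)}, let $\gamma_1,\gamma_2$ be two simple paths with range $\p U$ and with $U$ to the left of each. Each $\gamma_j$ is continuous, injective and proper, hence a homeomorphism $\R\to\p U$, so $\phi:=\gamma_1^{-1}\circ\gamma_2:\R\to\R$ is a homeomorphism and therefore monotone. I would then argue that $\phi$ is necessarily increasing: if it were decreasing, $\gamma_2$ would traverse $\p U$ in the direction opposite to $\gamma_1$, and I claim this would swap left and right. Concretely, pick any disk $\Dd$ simply split by $\gamma_1$ (and hence by $\gamma_2$, since the simply-split property depends only on the range $\p U$). Writing $\gamma_j^{-1}(\Dd)=(a_j,b_j)$, one has $\gamma_1(a_1)=\gamma_2(b_2)$ and $\gamma_1(b_1)=\gamma_2(a_2)$; the two contours $\ell_1,\ell_2$ built from \eqref{eq2-6c} then traverse the two \emph{opposite} circular arcs of $\p\Dd$ (the ordering condition $\alpha\in(\beta,\beta+2\pi)$ forcing different choices), so the bounded regions they enclose are the two distinct connected components of $\Dd\setminus\p U$. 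By Proposition \ref{prop:1}, this extends globally: the left of $\gamma_1$ equals the right of $\gamma_2$, contradicting the assumption that $U$ lies to the left of both.

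Since $\phi$ is then an increasing homeomorphism of $\R$, $\phi(t)\to\pm\infty$ as $t\to\pm\infty$, so
\[
\lim_{t\to\pm\infty}\1_V\circ\gamma_2(t)=\lim_{s\to\pm\infty}\1_V\circ\gamma_1(s),
\]
establishing (ii). The main obstacle is the orientation step: one must carefully unpack \eqref{eq2-6c} and Definition \ref{def:3} to verify that reversing the parameterization of a simple path interchanges the two sides of $\R^2\setminus\p U$. Once this is granted, the rest is elementary point-set topology (connectedness and properness).
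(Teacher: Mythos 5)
Your proof is correct, and part (i) is the same argument as the paper's (transversality forces $\gamma(t)$ off $\p V$ for $|t|$ large, then connectedness of the half-rays makes $\1_V\circ\gamma$ eventually constant). For part (ii) you take a genuinely different, and in fact more streamlined, route. The paper never introduces the transition map: it characterizes $\II_+(U,V)$ by whether $\gamma(\R^+)\setminus V$ is bounded, reduces (ii) to showing that $\gamma(\R^+)\Delta\tgamma(\R^+)$ is bounded, and derives the contradiction in the opposite case by gluing a hybrid simple path $\alpha$ out of $\gamma$ and the reversed $\tgamma$, then comparing the left of $\alpha$ with the left of $\gamma$ in a disk near $\alpha(1)$ and with the right of $\tgamma$ in a disk near $\alpha(-1)$. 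You instead observe that $\gamma_1$, being continuous, injective and proper, is a homeomorphism onto $\p U$ (proper maps into locally compact Hausdorff spaces are closed), so $\phi=\gamma_1^{-1}\circ\gamma_2$ is a monotone homeomorphism of $\R$; the increasing case gives (ii) immediately, and the decreasing case is excluded by a single simply split disk in which the two contours \eqref{eq2-6c} enclose the two different local components. Your version buys a cleaner dichotomy and avoids the hybrid-curve construction; the paper's version avoids discussing reparametrizations and works directly with the sets $\gamma(\R^\pm)$, which is what it actually uses later. Both proofs rest on the same local fact, asserted in the paper without detail and flagged by you, that reversing the traversal of a simply split disk exchanges the two sides (one can check it by noting that the two contours traverse complementary arcs of $\p\Dd$, so their indices at any point of $\Dd$ sum to $1$, hence cannot both be $\pm 1$).

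Two details you should make explicit to close the argument. First, from ``$L^1_\Dd$ and $L^2_\Dd$ are the two distinct components of $\Dd\setminus\p U$'' you conclude that the global left components of $\gamma_1$ and $\gamma_2$ are distinct; this needs the fact that the two local components lie in \emph{different} connected components of $\R^2\setminus\p U$, which follows from Lemma \ref{lem:4}(ii) (both global components have boundary $\p U$, so $\Dd$ meets both). Second, record the embedding property of $\gamma_1$ mentioned above, since it is what makes $\phi$ well defined and continuous; the paper's own proof uses the analogous fact implicitly when it claims $\gamma^{-1}(\tgamma(\R^+))$ is an interval. With these noted, your argument is complete.
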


\begin{proof} If $\p U$ is bounded, there is nothing to do. So we assume below that $\p U$ is unbounded; in particular $\gamma$ is a simple path.

(i) If $U,V$ are transverse sets, then $\p U \cap \p V$ is bounded.  Since $\gamma$ is proper, there exists $T > 0$ such that for all $t \geq T$, $\gamma(t) \notin \p U \cap \p V$, so $\gamma(t) \notin \p V$. Therefore $\gamma([T,\infty))$ is a connected subset of $\R^2 \setminus \p V$, it is fully contained in $V$ or in $V^c$. We conclude that $\1_V \circ \gamma(t)$ is constant for $t \geq T$, so $\II_+(U,V)$ is well-defined. Likewise, $\II_-(U,V)$ is well-defined. 

(ii) We note that 
\begin{equation}\label{eq2-6j}
    \II_+(U,V) = \systeme{1 & \text{ if $\gamma(\R^+) \setminus V$ is bounded;} \\ 0 & \text{ otherwise.}}
\end{equation}
Let us prove that the RHS of \eqref{eq2-6j} does not depend on the choice of curve $\gamma$ as long as $U$ lies to the left of $\gamma$. If $\tgamma$ is another such simple path, then we claim that $\gamma(\R^+) \Delta \tgamma(\R^+)$ is bounded. In particular, this would prove that the condition $\gamma(\R^+) \cap V$ bounded, arising in \eqref{eq2-6j}, is independent of $\gamma$, and hence that $\II_{U, V}$ does not depend on $\gamma$.

Let us argue by contradiction. Then because $\gamma$ is proper, 
\begin{equation}
    \gamma^{-1} \left( \gamma(\R^+) \Delta \tgamma(\R^+) \right) = \R^+ \Delta \gamma^{-1} \big( \tgamma(\R^+) \big)
\end{equation}
is unbounded.

By continuitiy of $\gamma$ and because $\gamma, \tgamma$ have same range, the set  $\gamma^{-1} \big( \tgamma(\R^+) \big)$ is an interval, so we have $\tgamma(\R^+) = \gamma\big( (-\infty,t_-) \big)$ for some $t_- \in \R$.

We define $\alpha : \R \rightarrow \C$ by:
\begin{equation}
    \alpha(t) = \systeme{ \gamma(t+t_-), \quad t \geq 0, \\ \tgamma(-t), \quad t < 0}
\end{equation}
Because $\gamma\big( [t_-,+\infty) \big)$ and $\tgamma(\R^+) = \gamma\big( (-\infty,t_-) \big)$ do not intersect, the map $\gamma$ is a simple path. Its range is
\begin{equation}
    \alpha(\R) = \gamma\big( [t_-,+\infty) \big) \cup \tgamma(\R^+) = \gamma\big( [t_-,+\infty) \big) \cup \gamma\big( (-\infty,t_-) \big) = \gamma(\R) = \p U.
\end{equation}

Let $W$ be the component of $\C \setminus \p U$ to the left of $\alpha$. Set $\bfz_+ = \alpha(1)$, and $\Dd_+$ an open disk simply split by $\alpha$, centered at $\bfz_+$, such that $\alpha^{-1}(\Dd_+) \subset \gamma^{-1}(\R^+)$ -- we can realize this by taking a small enough radius because $\alpha$ is injective. In $\Dd_+$, $\alpha$ and $\gamma$ are just same-speed affine reparametrization of each other, so the left of $\alpha$ and $\gamma$ in $\Dd_+$ coincide. We deduce that that $U$ and $W$ are two intersecting connected components of $\C \setminus \p U$, so $U=W$.  

Set now $\bfz_- = \alpha(-1)$, and $\Dd_-$ an open disk simply split by $\alpha$, centered at $\bfz_-$, such that $\alpha^{-1}(\Dd_-) \subset \tgamma^{-1}(\R^+)$. In $\Dd_-$, $\alpha$ and $\tgamma$ are just opposite-speed affine reparametrization of each other, so the left of $\alpha$ and the right of $\tgamma$ in $\Dd_-$ coincide. We deduce that $U^\CCC = \Int U^c$ (the right of $\tgamma$) and $W$ intersect, so $U^\CCC=W$: that's a contradiction. This completes the proof.\end{proof}

\section{From discrete to good sets}\label{app:C}

\begin{proof}[Proof of Lemma \ref{lem-9}] In this proof we use the notation $\Ss_r(\bfx)$ for the open square centered at $\bfx = (x_1,x_2)$, of side $2r$ -- the $|\cdot|_\infty$ open ball centered at $\bfx$, of radius $r$:
\begin{equation}
    \Ss_r(\bfx) = (x_1-r,x_1+r) \times (x_2-r,x_2+r).
\end{equation}
We immediately see that $A \cap \Z^2 = \Aa$: if $\bfx \in \Aa$ then $d_1(\bfx,\Z^2 \setminus \Aa) = 1 > 3/4$; conversely if $\bfx \in A \cap \Z^2$ then $d_1(\bfx,\Z^2 \setminus \Aa) > 3/4$ so $\bfx \notin \Z^2 \setminus A$ and $\bfx \in A$. It now remains to show that the conditions (a), (b) and (c) from Definition \ref{def:4} hold. 

1. Note that (c) is immediate: if $\bfx \in \p A$ then $d_1(\bfx,\Z^2 \setminus \Aa) = 3/4$, so $\bfx$ cannot have integer coordinates. We now prove (a): $\p A^\CCC = \p A$. We have:
\begin{equation}
    \p A^\CCC = \ove{A^\CCC} \setminus A^\CCC, \qquad \p A=  \p A^c  =  A^c \setminus A^\CCC.
\end{equation}
Therefore, it suffices to show that $\ove{A^\CCC} = A^c$. The direct inclusion is immediate because $A^c$ is closed. As for the reverse inclusion, we write
\begin{equation}
    A^c = \big\{ \bfx \in \R^2 : d_1(\bfx,\Z^2 \setminus \Aa) \leq 3/4 \big\} = \bigcup_{\bfy \in \Z^2 \setminus \Aa} \ove{\Ss_{3/4}(\bfy)}. 
\end{equation}
Hence, if $\bfx \in A^c$, there exists $\bfy \in \Z^2 \setminus \Aa$ such that $\bfx \in \ove{\Ss_{3/4}(\bfy)}$. Let now $\bfx_n \in \Ss_{3/4}(\bfy)$ with $\bfx_n \rightarrow \bfx$, we have $\bfx_n \in \Ss_{3/4}(\bfy) = \Int \Ss_{3/4}(\bfy) \subset A^\CCC$ and hence $\bfx \in \ove{A^\CCC}$. This proves (a). 

2. We now focus on proving (b): $\p A$ is the disjoint union of uniformly separated simple paths or loops. Let $\Lambda = p + (\Z/2)^2$, $p = (1/4,1/4)$. We call two points $\bfx,\bfy \in \Lambda$ neighbors if $[\bfx,\bfy] \subset \p A$. We shall show the following properties:
\begin{itemize}
    \item[(1)] Any point $\bfx \in \Lambda \cap \p A$ has exactly two neighbors.
    \item[(2)] If $\bfz \in \p A$, there exists $\bfx,\bfy$ neighbors such that $\bfz \in [\bfx,\bfy]$.
\end{itemize}

We first prove (1). Let $\bfx \in \lambda \cap \p A$, let $m \in \Z^2$ closest to $\bfx$. Note that $|\bfx-m|_\infty = 1/4$, hence $m \in \Aa$. Up to a rotation, and a translation, we may assume that $m = (0,0)$ and $\bfx=p$. We now perform a case analysis, depending if the points $(0,1), (1,0)$ and $(1,1)$ are in $\Aa$ or $\Z^2 \setminus \Aa$. There are $2^3 = 8$ cases to consider -- in fact, just $7$ because at least one of these points is in $\Z^2 \setminus \Aa$ as $d_\infty(\bfx,\Z^2 \setminus \Aa) = 3/4$. We proceed through Figure \ref{fig:12}.

\begin{figure}[t]
  \centering
  \includegraphics[width=1\textwidth]{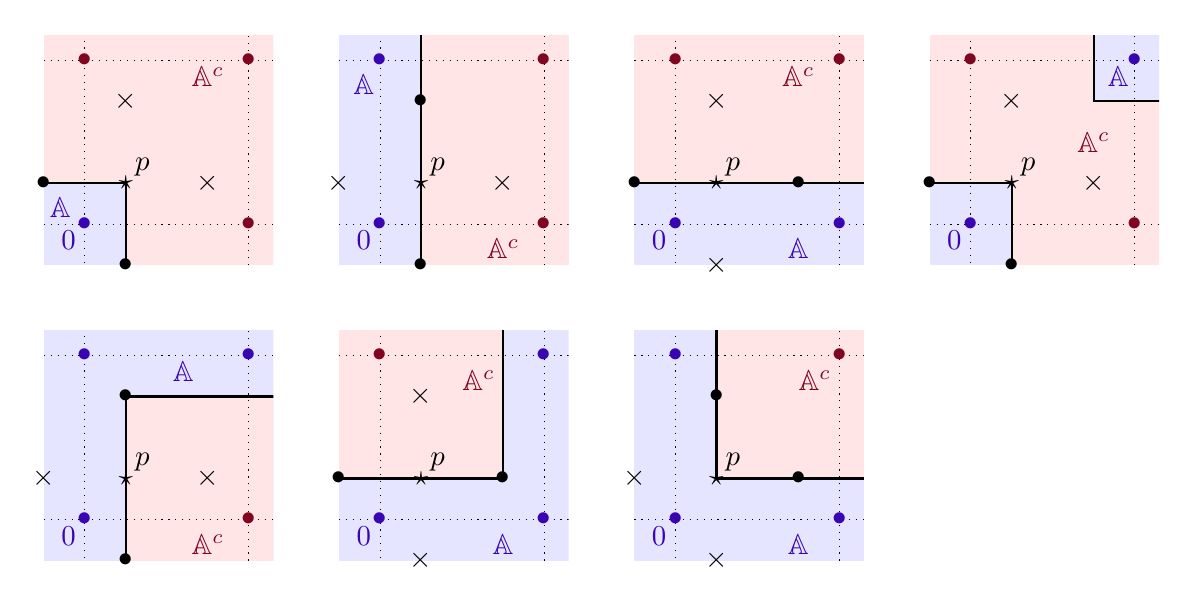}
  \caption{The 7 possible cases. On each picture, the blue bullets are points of $\Aa$ and the red bullets are points of $\Aa^c$. The set $A$ is filled in blue and $A^c$ is filled in red. The point $p$ is at the star. The black line is the boundary of $A$. We labelled each of the 4 points of $\Lambda$ closest to $p$ with a black bullet if it is a neighbor of $p$ and a cross if it is not a neighbor of $p$. In each case, $p$ has exactly two neighbors.}
  \label{fig:12}
\end{figure}

We now prove (2). Assume that $\bfz \in \p A$. If $\bfz \in \Lambda$, then the statement is clear by (1). So let's assume that $\bfz \notin \Lambda$. We note that $A^c$ is the union of closed squares of sides $3/2$, centered at points in $\Z^2 \setminus \Aa$. Each of these squares being union of squares of sides $1/2$, centered at points in $(\Z/2)^2$, we deduce that there are points $w_1, w_2, \dots \in (\Z/2)^2$ such that 
\begin{equation}\label{eq2-7w}
    A^c = \bigcup_{j \in \N} \ove{\Ss_{1/4}(w_j)}.
\end{equation}
In particular, 
\begin{equation}
 \p A = \p (A^c) = \p \bigcup_{j \in \N} \ove{\Ss_{1/2}(w_j)} \subset \bigcup_{j \in \N} \p \ove{\Ss_{1/2}(w_j)} \subset \bigcup_{w \in (\Z/2)^2} \p \ove{\Ss_{1/2}(w)} =  \bigcup_{\substack{\bfx, \bfy \in \Lambda \\ |\bfx-\bfy| = 1/2}} [\bfx,\bfy].
\end{equation}
In particular, there exists two points $\bfx,\bfy \in \Lambda$ such that $\bfz \in [\bfx,\bfy]$. We now show that $[\bfx,\bfy] \subset \p A$. 

If $\bfz \in \Lambda$, then by (1), $[\bfx,\bfy] \subset \p A$. Otherwise, there exists precisely two closed squares $\Ss_0, \Ss_1$ of the form $\ove{\Ss_{1/4}(\bfw)}$, $\bfw \in (\Z/2)^2$, containing $\bfz$. If neither appear in \eqref{eq2-7w}, then $\bfz \in \Int A = A$ -- that's impossible. If both of them appear in \eqref{eq2-7w}, then $\bfz \in A^\CCC$ so $\bfz \notin \p A$ -- that's also impossible. So, one of them must appear in \eqref{eq2-7w} -- say $\Ss_0$ -- and hence the interior of $\Ss_1$ does not intersect $A^c$. It follows that $\Ss_0 \cap \Ss_1 = [\bfx,\bfy]$ is in $\p A^c = \p A$. This completes the proof of (2).

3. Let $\Gamma$ a connected component of $\p A$. Fix $\bfz \in \Gamma$. By (2), there exists $\bfx,\bfy \in \p A$ with $\bfz \in [\bfx,\bfy] \subset \p A$. We now construct a sequence 
\begin{equation}
   \gamma(0) = \bfx, \ \gamma(1) = \bfy, \ \gamma(2), \ \dots
\end{equation}
such that for all $n \in \Z$, $\gamma(n-1)$ and $\gamma(n+1)$ are the two neighbors of $\gamma(n)$. Start by defining $\gamma(2)$ as the neighbor of $\bfy = \gamma(1)$ that is not $\bfx = \gamma(0)$. Then assume we constructed $\gamma(0), \dots, \gamma(n)$ -- in particular $\gamma(n)$ and $\gamma(n-1)$ are neighbors. Define then $\gamma(n+1)$ as the neighbor of $\gamma(n)$ that is not $\gamma(n-1)$. Similarly construct $\gamma(-1), \gamma(-2), \dots$. 

We then extend $\gamma : \Z \rightarrow \R^2$ to $\gamma : \R \rightarrow \R^2$ by:
\begin{equation}
    \gamma(t) = (1-t+n)\gamma(n) + (t-n)\gamma(n+1), \qquad t \in [n,n+1].
\end{equation}
This defines a curve $\gamma$ with $\gamma(\R) \subset \p A$, intersecting $\Gamma$ at $\bfz$. Because $\gamma(\R)$ is connected, we have $\gamma(\R) \subset \Gamma$. Note now that 
\begin{equation}\label{eq2-7u}
    \Gamma = \Gamma_1 \sqcup \Gamma_2, \quad \Gamma_1 \de \big\{\bfx \in \Gamma : d\big(\bfx,\gamma(\R)\big) \leq 1/5 \big\}, \quad \Gamma_2 \de \big\{ \bfx \in \Gamma : d\big(\bfx,\gamma(\R)\big) > 1/5 \big\}.
\end{equation}
Moreover, $\Gamma_1 = \gamma(\R)$. Indeed, assume that $\bfz \in \Gamma$ is such that $d\big(\bfx,\gamma(\R)\big) \leq 1/4$. By (2) above, there exists neighbors $\bfx,\bfy$ such that $\bfz \in [\bfx,\bfy]$. We have either $|\bfx-\bfz| \leq 1/5$ or $|\bfy-\bfz| \leq 1/5$ -- say the latter. Then $d_1(\bfx,\gamma(\R)) < 1/2$. 

Note that $\gamma(\R)$ is the union of segments of the form $[u,v]$, $u,v \in \Lambda$. No such segments can intersect a unit square centered at $\bfx$, unless one of the vertices is $\bfx$. Therefore  $d_1(\bfx,\gamma(\R)) < 1/2$ implies $\bfx \in \gamma(\R)$. Because $\Lambda \cap \gamma(\R) = \gamma(\Z)$, $\bfx = \gamma(n)$ for some $n \in  \Z$. And because $\bfy$ is a neighbor of $\bfx$, $\bfy$ is either $\gamma(n+1)$  or $\gamma(n-1)$, and $\bfz \in [\bfx,\bfy] \subset \gamma(\R)$. So $\Gamma_1 \subset \gamma(\R)$ and (the other inclusion being immediate) $\Gamma_1 = \gamma(\R)$ indeed.

In particular,
\begin{equation}
    \big\{\bfx \in \Gamma : d\big(\bfx,\gamma(\R)\big) = 1/5 \big\} = \emptyset.
\end{equation}
Indeed, if $d\big(\bfx,\gamma(\R)\big) = 1/5$ then $\bfx \in \Gamma_1 = \gamma(\R)$, but then $d\big(\bfx,\gamma(\R)\big) = 0$  -- a contradiction. This implies that
\begin{equation}
    \Gamma_1 = \big\{\bfx \in \Gamma : d\big(\bfx,\gamma(\R)\big) < 1/5 \big\}.
\end{equation}
In particular, $\Gamma_1$ is an open subset of $\Gamma$. So is $\Gamma_2$; therefore \eqref{eq2-7u} is a separation of $\Gamma$ in two disjoint open set, and using connectedness, one of them must be empty. Because $\Gamma_1 \neq \emptyset$, this prompts $\Gamma_2 = \emptyset$, so finally $\Gamma = \gamma(\R)$. 

4. We now justify that $\gamma$ is a simple path or loop. On the regularity side, $\gamma$ has non-zero left and right derivatives at every point (with norm $1$ after arclength parametrization), and is non-differentiable only at integer values. 

Assume now that $\gamma$ is injective. If $\gamma$ was not proper, there would exist $t_k \rightarrow \infty$ with $\gamma(t_k)$ bounded; by definition of $\gamma$, this implies that there be a sequence $n_k$ of distinct integers with $\gamma(n_k) \in B$, $B$ bounded. But $\gamma(n_k) \in \Lambda$ which is discrete, hence such that $\Lambda \cap B$ is finite; this contradicts injectivity. Hence $\gamma$ is proper indeed: this proves that $\gamma$ is a simple path.

Assume now that $\gamma$ is not injective and define:
\begin{equation}
    s_* = \min \big\{ s \in \N : \ \exists n \in \Z, \ \gamma(n+s) = \gamma(n) \big\}
\end{equation}
Note that $s_* \geq 2$; let $n_* \in \Z$ with $\gamma(n_* + s_*) = \gamma(n_*)$. Note that $\gamma(n_*+1)$ and $\gamma(n_*+s_*-1)$ are both neighbors of $\gamma(n_*) = \gamma(n_*+s_*)$; by minimality of $s_*$ they are different. It follows that $\gamma(n_*-1) = \gamma(n_*+s_*-1)$. Because $\gamma(n+1)$ is fully determined by the values of $\gamma(n)$ and $\gamma(n-1)$, we obtain
\begin{equation}
 \forall n \geq n_*, \qquad   \gamma(n+s_*) = \gamma(n).
\end{equation}
Likewise, $\gamma(n_*+1) = \gamma(n_*+s_*+1)$, and by the same argument,
\begin{equation}
 \forall n \leq n_*, \qquad   \gamma(n+s_*) = \gamma(n).
\end{equation}
It follows that $\gamma$ is periodic on $\Z$, hence on $\R$. This implies that $\gamma$ is a simple loop, and completes the proof. \end{proof}

\begin{proof}[Proof of Lemma \ref{lem-13}] Write $B = A^\CCC$. If $A$ is good, then by Definition \ref{def:4}(b) and (c), $\p B = \p A$ is the union of separated ranges of simple paths or loopsthat do not intersect $\Z^2$ . 

Therefore we just have to show that $\p B = \p B^\CCC$. We first show that $\Int \ove{A} = A$. Indeed, $\Int \ove{A}$ contains $A$; assume conversely that there exists $\bfx \in \Int \ove{A} \setminus A$. Then by Definition \ref{def:4}(a) and the identity $A^\CCC = \ove{A}^c$, $\bfx \in \p A = \p A^\CCC = \p \ove{A}^c = \p \ove{A}$, but that's impossible for $\bfx \in \Int \ove{A}$. So $\Int \ove{A} = A$ indeed.

Using again  $A^\CCC = \ove{A}^c$, we have now:
\begin{equation}
    \p B^\CCC = \p \Int (A^\CCC)^c = \p \Int \ove{A} = \p A = \p B.
\end{equation}
This completes the proof.    
\end{proof}

\section{Properties of good sets}\label{app:D}

\begin{proof}[Proof of Lemma \ref{lem:20}] 1. Let $A$ be good, and $B$ a connected component of $A$. We have $\p B \subset \p A$ -- see e.g. \cite[Theorem IV.3.1]{N51}. 

Now consider $\bfz_- \in \p B$. We have $\bfz_- \in \p A$, so $\bfz_-$ belongs to one of the $\Gamma_k$: say $\bfz_- \in \Gamma_1$. Fix now $\bfz_+ \in \Gamma_1$, and $\gamma_1$ a simple path or loop with range $\Gamma_1$. Let $\rho_0 = \inf_{j \neq k} d(\Gamma_j, \Gamma_k) > 0$. We present the proof when $\gamma$ is a simple path; the proof for a simple loop is similar. As in the proof of Proposition \ref{prop:1}, we can construct disks $\Dd_0, \dots, \Dd_N$ such that:
\begin{itemize}
    \item For all $k$, $\Dd_k$ is centered on $\Gamma_1$, is simply split by $\gamma_1$, and has radius less than $\rho_0$;
    \item $\bfz_- \in \Dd_0$ and $\bfz_+ \in \Dd_N$
    \item If $L_k$ denotes the left of $\gamma_1$ in $\Dd_k$ then $L = \bigcup_{k=0}^N L_k$ is connected; and if $R_k$ denotes the right of $\gamma_1$ in $\Dd_k$ then $R = \bigcup_{k=0}^N R_k$ is connected.
\end{itemize}

Fix $n \in \N$, large enough so that $\Dd_{1/n}(\bfz_-) \subset \Dd_0$ and $\Dd_{1/n}(\bfz_+) \subset \Dd_N$. Let $\bfx_n \in \Dd_{1/n}(\bfz_-) \cap B$; $\bfx_n$ exists because $\bfz_- \in \Dd_0 \cap \p B$). Note that $\Dd_{1/n}(\bfz_-) \cap B$ is non-empty open set, in particular it has positive Lebesgue measure; and $\Gamma_1$ has null Lebesgue measure. So without loss of generalities $\bfx_n \notin \Gamma_1$. 

Assume that $\bfx_n$ is to the left of $\gamma_1$ in $\Dd_0$. Let $\bfy_n$ to the left of $\gamma_1$ in $\Dd_{1/n}(\bfz_+)$. Because $\Dd_{1/n}(\bfz_+) \subset \Dd_N$, we have $\bfy_n \in L$, see Lemma \ref{lem:13}(i). Let $\gamma$ a path connecting $\bfx_n$ to $\bfy_n$ in $L$ ($L$ is a connected subset of $\R^2$, so it is path-connected). Then $\gamma$ may not intersect $\Gamma_1$, nor may it intersect other components of $\p A$ because $L \subset \{ \bfx : d(\bfx,\Gamma_1) < \rho_0\}$. So $\gamma$ starts in $B$ and does not cross $\p B$, therefore it remains in $B$. It follows that $\bfy_n \in B$. A similar argument holds if $\bfx_n$ was to the right of $\gamma_1$ in $\Dd_0$. 

Now making $n \rightarrow \infty$ proves that $\bfz_+ \in \ove{B}$. Because $\bfz_+ \in \Gamma_1$, we have $\bfz_+ \notin A$ and a fortiori $\bfz_+ \notin B$, so $\bfz_+ \in  \p B$. This proves $\Gamma_1 \subset \p B$. 

We conclude that for each connected component $A_\ell$ of $A$, there exists $\KK_\ell \subset \N$ such that 
\begin{equation}
    \p A_\ell = \bigsqcup\limits_{k \in \KK_\ell} \p \Gamma_k.
\end{equation}
It remains to show that the sets $\KK_\ell$ partition $\N$.

2. Assume that $\KK_\ell \cap \KK_m \neq \emptyset$; let $k$ in this intersection and $\Dd$ a disk simply by $\Gamma_k$, of radius less than $\rho_0$. Let $L_\Dd$ and $R_\Dd$ be the left and right of $\gamma_1$ in $\Dd$. We note that $L_\Dd$ and $R_\Dd$ are at most $\rho_0$-distant from $\Gamma_1$, and do not intersect $\Gamma_k$, so they do not intersect $\p A$.

We claim that $A_\ell$ contains $L_\Dd$ or $R_\Dd$. Indeed, because $\p A_\ell$ contains $\Gamma_1$ and $\Dd \setminus \Gamma_1 = R_\Dd \sqcup L_\Dd$, $A_\ell$ must intersect one of these two sets, say $L_\Dd$, at a point $\bfx_0$. If $\bfx_1$ is another point of $L_\Dd$, we can connect $\bfx_0$ and $\bfx_1$ by a path $\bfx_s$ in $L_\Dd$ ($L_\Dd$ is an open subset of $\R^2$, so it is path-connected). The path $\bfx_s$ does not intersect $\p A_\ell$ and starts in $A_\ell$, so it ends in $A_\ell$: this proves $L_\Dd \subset A_\ell$. A similar argument implies that $R_\Dd \subset A_m$. It follows that
\begin{equation}
    \Dd \setminus \Gamma_\ell = L_\Dd \sqcup R_\Dd \subset A_\ell \sqcup A_m  \subset A.
\end{equation}
But then $A^\CCC$ may not have boundary $\Gamma_\ell$: this is a contradiction. So $\KK_\ell \cap \KK_m = \emptyset$.

3. We now prove that the  sets $\KK_\ell$ have union $\N$, or equivalently, that
\begin{equation}\label{eq2-7j}
    \p A = \bigcup_{\ell \in \N} \p A_\ell. 
\end{equation}
We already have the reverse inclusion because of \cite[Theorem IV.3.1]{N51}. Now if $\bfx_* \in \p A$, $\bfx_* \in \Gamma_k$ for some $k$; let $\Dd$ a disk simply split by $\Gamma_k$, of radius smaller than $\rho_0$, centered at $\bfx_*$.

The set $A$ contains points in either $L_\Dd$ or $R_\Dd$; say in $L_\Dd$. Therefore, there exists $\ell \in \N$ such that $A_\ell \cap L_\Dd \neq \emptyset$. As in step 2, any path in $L_\Dd$ may not cross $\p A$, so it must be contained in $A_\ell$. This proves $L_\Dd \subset A_\ell$.

Let now $n \geq N$, with $N$ large enough so that $\Dd_n \de \Dd_{1/n}(\bfx_*) \subset \Dd$. By Lemma \ref{lem:13}(i), we have $L_{\Dd_n} \subset L_\Dd \subset A_\ell$; by picking $\bfx_n \in L_{\Dd_n} \subset \Dd_n$, and taking the limit $n \rightarrow \infty$, we deduce that $\bfx_* \in \ove{A_\ell}$. Because $\bfx_* \notin A$, we have $\bfx_* \notin A_\ell$ and hence $\bfx_* \in \p A_\ell$. This completes the proof of \eqref{eq2-7j} and of \eqref{eq2-7k}.  

4. We now justify that connected components of good sets are good. Their boundaries are the union of well-separated traces of simple paths or loops that do not intersect $\Z^2$ because of \eqref{eq2-7k}; these are the conditions (b) and (c) in definition \ref{def:4}. Now if $\bfx \in \p A_\ell$, then $\bfx \in \p A$ and so $\bfx \in \p A^\CCC$. It follows that any neighborhood $\Omega$ of $\bfx$ intersects $A^\CCC$. Now we have:
\begin{equation}
    A^\CCC = \Int \bigcap_{m \in \N} A_m^c \subset  \bigcap_{m \in \N} \Int A_m^c = \bigcap_{m \in \N} A_m^\CCC.
\end{equation}
So $\Omega$ intersects $A_\ell^\CCC$. Because $\Omega$ also intersects $(A_\ell^\CCC)^c = \ove{A_\ell}$ at the point $\bfx$, we have $\bfx \in \p A_\ell^\CCC$. This proves $\p A_\ell \subset \p A^\CCC$; the other inclusion being always satisfied, we obtain the condition (a) in Definition \ref{def:4}.
\end{proof}

\begin{proof}[Proof of Lemma \ref{lem:18b}] By Lemma \ref{lem:20}, it suffices to show that if one of the sets $\AAA_k$ cannot have boundary containing two of the sets $\{ \AAA_\ell : \ell \in \N\}$. We argue by contradiction: (up to relabelling) assume that $\p \AAA_1$ contains $\Gamma_1$ and $\Gamma_2$. 

Let $A_1^\pm, A_2^\pm$ be the connected components of $\C \setminus \Gamma_1, \C \setminus \Gamma_2$, respectively. The set $\AAA_1$ is a connected subset of $\C \setminus \Gamma_1$ so it is contained in $A_1^+$ or $A_1^-$ -- say $\AAA_1 \subset A_1^+$.  Likewise $\AAA_1 \subset A_2^+$. In particular, $\p \AAA_1 \subset \ove{A_1^+} = A_1^+ \cup \Gamma_1$. Because $\Gamma_1$ and $\Gamma_2$ do not intersect, $\Gamma_2 \subset A_1^+$.

Let $\rho = \inf_{j \neq k} d(\Gamma_j, \Gamma_k) > 0$. Fix $\bfx \in \Gamma_1$, and $\bfy \in A_1^-$ with $d(\bfy,\Gamma_1) < \rho$. Note $\bfy \notin \AAA_1$; also $\bfy \notin \AAA_k$ for any $k \geq 2$: indeed, if $\bfy \in \AAA_k$, then because $\bfx \notin \AAA_k$, there would be a point $\bfz \in [\bfx,\bfy] \cap \p \AAA_k$, but that contradicts that $\rho$ is the above infimum. Therefore, $\bfy \in A$. In particular, $A$ and $A_1^-$ intersect. Because $A$ is a connected subset of $\C \setminus \Gamma_1$, $A \subset A_1^-$. 

This yields $\p A \subset \ove{A_1^-} = A_1^- \cup \Gamma_1$. Because $\Gamma_1$ and $\Gamma_2$ do not intersect, $\Gamma_2 \subset A_1^-$. That's a contradiction as $A_1^+ \cap A_1^- = \emptyset$.
\end{proof}

\bibliographystyle{amsxport}
\bibliography{ref.bib}
\end{document}